  \def\fullName{Hannah Amelie Earley}
  \def\shortName{Hannah Earley}
  \def\myEmail{h.earley@damtp.cam.ac.uk}
\begin{document}
\hyphenation{Brown-ian}

\frontmatter
\begingroup

\makeatletter
\def\startcenterpage{%
  \ifcsname orig@twoside@local\endcsname%
    \PackageError{thesis}{Don't nest centerpages! KOMA doesn't like it...}{}%
  \fi%
  \def\orig@twoside{false}%
  \if@twoside\def\orig@twoside{true}\fi%
  \KOMAoptions{twoside=false}%
  \begingroup%
  \let\orig@twoside@local\orig@twoside}
\def\stopcenterpage{%
  \global\let\orig@twoside\orig@twoside@local%
  \endgroup%
  \KOMAoptions{twoside=\orig@twoside}}
\makeatother

\newenvironment{fmpage}[1]{%
\begin{titlepage}%
  \phantomsection%
  \null\vspace{3\baselineskip}%
  \begin{center}\LARGE\textsf{\textbf{%
   #1%
  }}\end{center}%
  \addcontentsline{toc}{section}{#1}%
  \vspace{2\baselineskip}%
}{\end{titlepage}}

\def\fmlistpage{%
  \cleardoubleoddpage%
  \thispagestyle{\titlepagestyle}}

\long\def\thesisDeclaration{
  %%%%%%%%%% decl
  \begin{fmpage}{Declaration}
    This thesis is the result of my own work and includes nothing which is the outcome of work done in collaboration except as declared in the Preface and specified in the text. It is not substantially the same as any that I have submitted, or, is being concurrently submitted for a degree or diploma or other qualification at the University of Cambridge or any other University or similar institution except as declared in the Preface and specified in the text. I further state that no substantial part of my thesis has already been submitted, or, is being concurrently submitted for any such degree, diploma or other qualification at the University of Cambridge or any other University or similar institution except as declared in the Preface and specified in the text. It does not exceed the prescribed word limit for the relevant Degree Committee.

    Variants of the core chapters of this thesis, \Cref{chap:revi,chap:revii,chap:reviii,chap:aleph}, have been posted as preprints~\cite{earley-parsimony-i,earley-parsimony-ii,earley-parsimony-iii,earley-aleph}. This thesis is my own work, though I will exercise nosism throughout.
    This work was supported by the Engineering and Physical Sciences Research Council, project reference 1781682~\cite{wearley-ukri}.

    \vfill
    \begin{flushright}
      \shortName\rlap\footnotemark \\
      December 2020
    \end{flushright}

    \footnotetext{\email<\myEmail>, \orcidFull{0000-0002-6628-2130}}
  \end{fmpage}
}

%%%%%%%%%% deposit
\makeatletter
  \if@hearley@deposit@
    \addtocounter{page}{-2}
    \includepdf[pages={1},offset={3mm 0}]{declaration.pdf}
    \shipout\null
    \addtocounter{page}{1}
  \fi
\makeatother

%%%%%%%%%% titlepage
\startcenterpage
\begin{titlepage}
  \def\camfullstop{\rlap{\kern.5pt\raisebox{1.5pt}{.}}}
  \newcommand\camShield[2][]{\begingroup%
    \newdimen\shieldWidth%
    \shieldWidth=#2\relax%
    \begin{tabular}{c}
      \includegraphics[width=\shieldWidth]{_cam-shield#1.pdf}\\[.5em]
      \includegraphics[width=\shieldWidth]{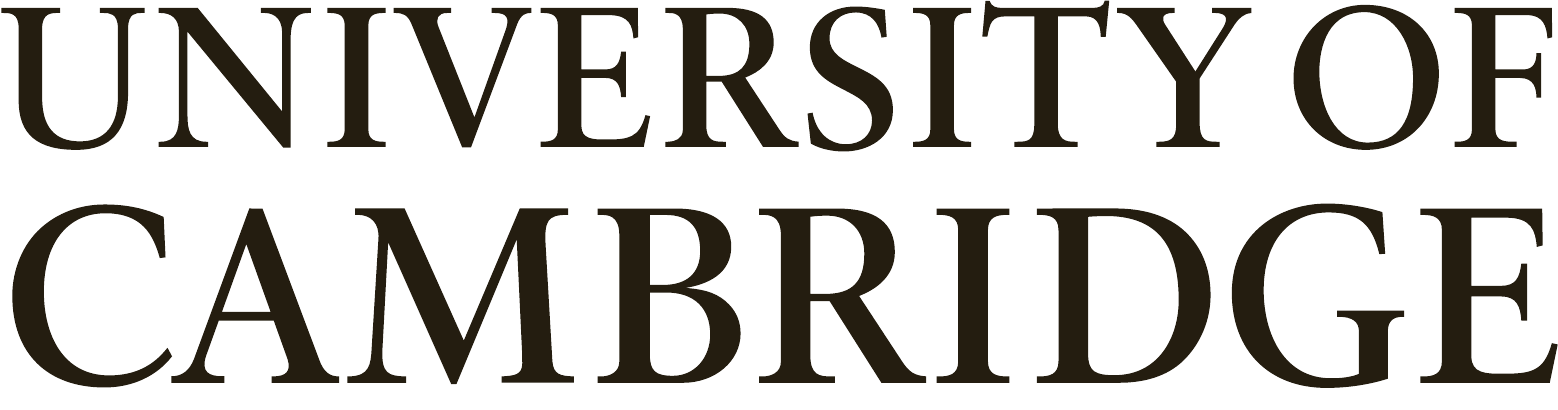}
    \end{tabular}\endgroup}
  \def\camShieldBW#1{\camShield[-bw]{#1}}
  {\centering
  \parindent=0pt

  \textsf{\textbf{\linespread{1.35}
    \Huge\thesisTitle\\[1.5em]
    \Large`\thesisSubtitle'\\[3em]}}
  {\large\fullName}
  \\[1.5em]
  {\large March 2021}
  \vfill
  \textit{\scriptsize This thesis is submitted for the degree of}\\
  {Doctor of Philosophy,}
  \\[1em]
  \textit{\scriptsize and was supervised by}\\
  {Gos Micklem}
  \\[1em]
  \textit{\scriptsize in the}\\
  {Department of Applied Mathematics and Theoretical Physics,}
  \\[1em]
  \textit{\scriptsize and}\\
  {Trinity Hall,}
  \\[1em]
  \textit{\scriptsize of the}\\[.5em]%
  % \camShieldBW{2.5cm}
  \includegraphics[width=2.5cm]{_cam-text.pdf}\camfullstop
  \\[.5em]%
  \includegraphics[width=2.5cm]{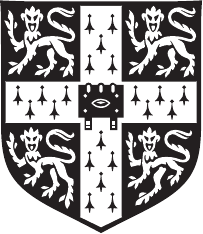}\\
  \vspace{-1cm}}

  % hack for arxiv watermark...
  {\color{white}.}\vspace{-\baselineskip}

\end{titlepage}
\stopcenterpage

\begin{titlepage}
  \centering
  \parindent=0pt
  \null\vfill\vfill
  \textit{\scriptsize This thesis was also advised by}\\
  {Stephen Eglen,}
  \\[1em]
  \textit{\scriptsize and was examined internally by}\\
  {Sebastian Ahnert}
  \\[1em]
  \textit{\scriptsize and externally by}\\
  {Michael P.\ Frank}
  \\[1em]
  \textit{\scriptsize on the}\\
  {30\textsuperscript{th} of April, 2021.}
  \vfill
\end{titlepage}

\makeatletter\if@hearley@deposit@
  \thesisDeclaration
  \begin{titlepage}%
    \phantomsection%
    \null\vspace{3\baselineskip}%
    \enlargethispage{4\baselineskip}%
    \begin{center}\parindent=0pt
      \textsf{\textbf{\linespread{1.35}
        \LARGE\thesisTitle\\[1em]
        \large`\thesisSubtitle'\\[1em]}}
      \shortName\\[2.5em]
    \end{center}%
    \addcontentsline{toc}{section}{Abstract}%
    \thesisAbstract
  \end{titlepage}
\fi\makeatother

%%%%%%%%%% dedication
\begin{titlepage}
  \centering
  \parindent=0pt
  \def\nl{\\[.7em]}
  % use tabular as a hack to center a block of left-aligned text
  % without manuall calculating lots of line widths...
  \null\vfill\textit{\begin{tabular}{l}
    \hspace{10pt}to Ann, \nl
    \hspace{15pt}reunited with Dennis, \nl
    \hspace{0pt}and to Jim, \nl
    \hspace{8pt}who we and Patsy yet miss
  \end{tabular}}\vfill
\end{titlepage}

%%%%%%%%%% acks
\begingroup
  \parindent=0pt
  \def\nl{\\[.7em]}
  \cleardoubleevenpage
  \thispagestyle{empty}
  \phantomsection
  \addcontentsline{toc}{section}{Acknowledgements}%
  \null\vspace{3.5\baselineskip}%
  \def\lineStyle{\textit}%
  \def\ackialign{{l}}%
  \def\lineNo#1{\lineStyle{#1}}%
  \def\linePunc#1#2{\lineStyle{#1#2}}%
  \def\btabu{\begin{tabular}}
  \def\etabu{\end{tabular}}
  \begin{center}
    \makeatletter\if@twoside%
      \def\ackialign{{r}}%
      \def\lineNo#1{\lineStyle{#1}}%
      \def\linePunc#1#2{{\lineStyle{#1}\rlap{\lineStyle{#2}}}\kern0.2ex}%
    \fi\makeatother%
    \expandafter\btabu\ackialign
      \LARGE\textsf{\textbf{Acknowledgements}}\\[2\baselineskip]
      \linePunc{Gos, without whom none of this would be possible};\nl
      \lineNo{your guidance, identical sleep schedule, and tangents}\nl
      \linePunc{have got me through this path ig-noble}.\nl
      \linePunc{P.S. I forgive you for trying to kill me with peanuts}.\nl
      \nl
      \linePunc{Mum and Dad, you gave me life},\nl
      \linePunc{and helped me through my toil and strife},\nl
      \linePunc{not to mention my sisters three},\nl
      \linePunc{whose love and kinship supported me}.\nl
      \nl
      \linePunc{My office-mate Pete(r)},\nl
      \lineNo{I miss our late-night pizza}\nl
      \linePunc{and basketball sessions};\nl
      \linePunc{may MIT sustain your passions}.\nl
      \nl
      \linePunc{Swaraj, though you moved to Oxford},\nl
      \linePunc{your dependability never wavered};\nl
      \linePunc{your friendship is immeasuarable},\nl
      \linePunc{and willingness to read my drafts inexplicable}.
    \etabu
  \end{center}\vfill%
  \clearpage
  \thispagestyle{empty}
  \null\vspace{3.5\baselineskip}%
  \begin{center}
    \expandafter\btabu\ackialign
      \LARGE\textsf{\textbf{Acknowledgements}}\\[2\baselineskip]
      \linePunc{Georgeos, in but a year of your acquaintance},\nl
      \linePunc{fast friends we have become};\nl
      \linePunc{though I have exploited it for articular comments},\nl
      \linePunc{fast friends I hope we go on}.\nl
      \nl
      \linePunc{Marianne, who accompanied me through much of my journey};\nl
      \linePunc{even though our relationship may not have survived the PhD},\nl
      \linePunc{I may not have survived the PhD if not for you},\nl
      \linePunc{for which you have my gratitude}.\nl
      \nl
      \linePunc{My office-mate subsequent Jiaee},\nl
      \linePunc{who cyclically accompanied me to Ely}:\nl
      \linePunc{about the distance, you and Eleanor lied to me},\nl
      \linePunc{nonetheless, around your upbeat attitude is fun to be}.\nl
      \nl
      \linePunc{Finally, these acknowledgements would be incomplete},\nl
      \lineNo{wihout mentioning the bodies whose funding replete}\nl
      \lineNo{upon me bestowed; DAMTP, Trinity Hall}\nl
      \linePunc{and EPSRC, I thank you all}.
    \etabu
  \end{center}\vfill%
\endgroup

\makeatletter\if@hearley@deposit@\else
  \thesisDeclaration
  \begin{fmpage}{Abstract}\thesisAbstract\end{fmpage}
\fi\makeatother

%%%%%%%%%% tables
\cleardoubleoddpage%
  \phantomsection%
  \addcontentsline{toc}{section}{Table of Contents}%
  \tableofcontents
\fmlistpage\listoffigures
\fmlistpage\listoflistings
% \fmlistpage\listoftables

\endgroup

\mainmatter
\renewcommand{\thechapter}{\Roman{chapter}}
\begingroup
\chapter{Introduction}
\label{chap:intro}

  \def\langreview#1{\subpara{(#1)}}

In recent years, unconventional forms of computing ranging from molecular computers made out of DNA to quantum computers have started to be realised. Not only that, but they are becoming increasingly sophisticated and have a lot of potential to influence the future of computing. In this work, we will look at the intersection of molecular computing and another interesting class of unconventional computer: that of reversible computers. 

\begin{wrapfigure}{O}{.5\linewidth}
  \centering
  \includegraphics[width=\linewidth]{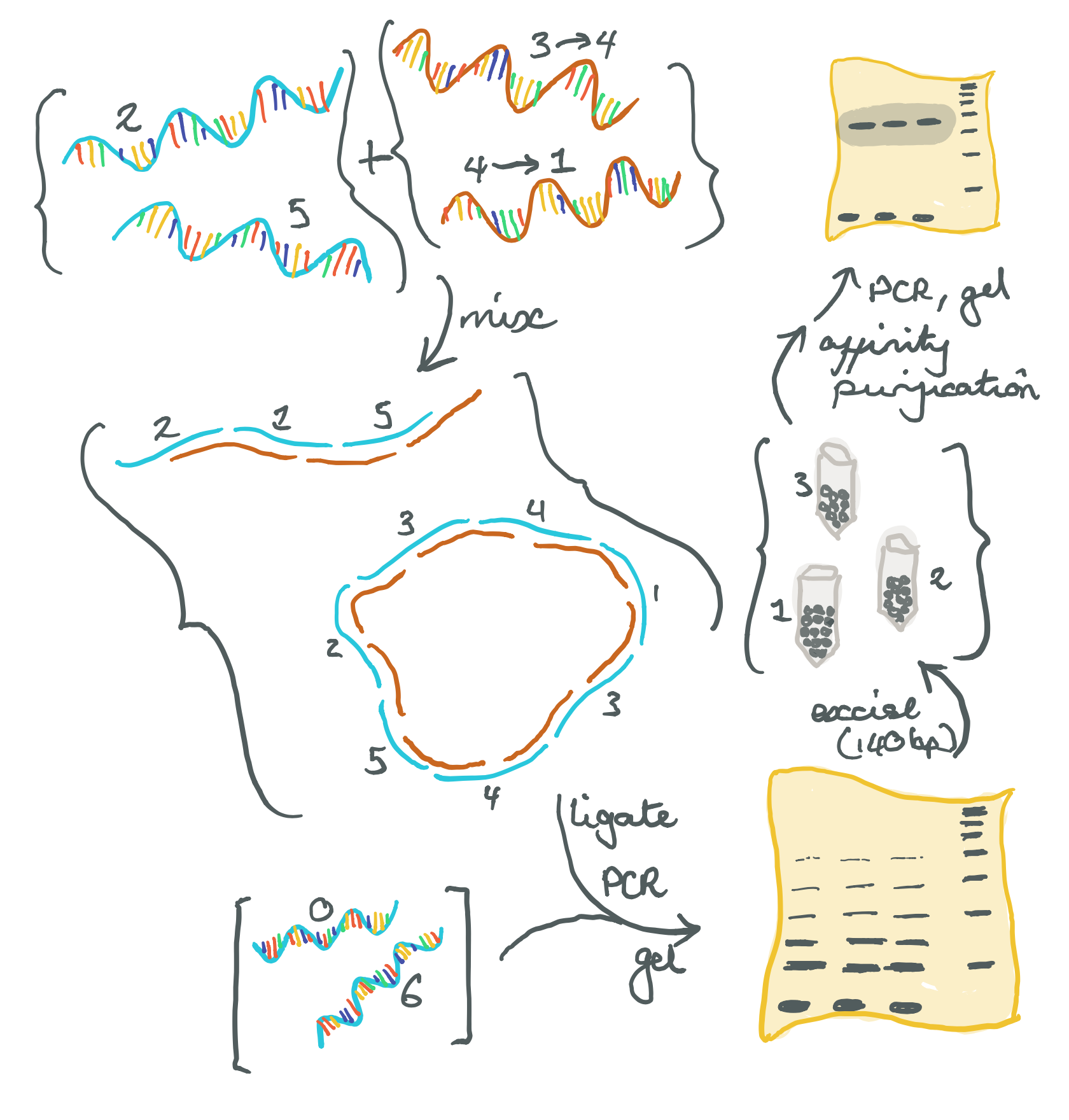}
  \caption{An illustration of Adleman's DNA scheme for solving the Hamiltonian path problem.}
  \label{fig:hampath}
\end{wrapfigure}
\para{Molecular Computation} The field of Molecular Computation began in 1994 with Len Adleman's proof-of-concept use of DNA~\cite{adleman} in the solution of the classic Hamiltonian path problem, as illustrated in \Cref{fig:hampath}. DNA and other nucleic acids proved an apt choice for building molecular/chemical computers, due to its information bearing capacity and the strong preference that single strands show for binding to complementary sequences, not to mention the low cost with which it can be synthesised and analysed as well as the large body of knowledge available from decades of biochemical research. Inspired by Adleman's paper, a community of researchers designing and building DNA computers arose in concert with the DNAX (and shortly thereafter, the complementary FNANOX) series of conferences\footnote{\url{https://isnsce.org/conferences/}}.

DNA, or deoxyribonucleic acid, is a family of polymeric molecules that is employed by all known organisms to store their genomic information, amongst other uses. Each molecule can be characterised by its sequence: a string of the monomers adenine (\dna{A}), cytosine (\dna{C}), guanine (\dna{G}) and thymine (\dna{T}). These monomers are each bound to a deoxyribose sugar molecule, which are joined together by phosphodiester bonds. Whilst this ability to store information through the arbitrary combination of these monomers is advantageous, the key to its success is the special hydrogen-bonding properties that these monomers possess. When two DNA strands are aligned adjacently, if adenine and thymine are opposite each other they will be perfectly positioned to form two favourable hydrogen bonds; similarly, if cytosine and guanine are opposite each other they will be perfectly positioned to form three favourable hydrogen bonds. Whilst other combinations may also form hydrogen bonds, the positions are suboptimal and thermodynamically unfavourable when compared to the pairs \dna{A-T} and \dna{C-G}. In nature, DNA is found almost exclusively in double-stranded form, whereby each strand has a sequence complementary to the other: if one strand has sequence \dna{GGCACAACGT} then the other will have sequence \dna{CCGTGTTGCA}. These complementary strands then \emph{hybridise} to one another. In fact, this is an oversimplification: in addition to their information content, DNA molecules have an intrinsic orientation, conventionally notated via the free 5' phosphoryl and 3' hydroxy groups on the terminal ribose sugars. DNA strands bind anti-parallel, and so the sequences should be written \dna{{5'}-GGCACAACGT-{3'}} and \dna{{5'}-ACGTTGTGCC-{3'}}. Schematically, this looks like
\begin{center}
\begin{tikzpicture}
  \node at  (0,1) (A0) {\dna{5'}};  \node at  (0,0)  (B0) {\dna{3'}};
  \node at  (1,1) (A1) {\dna{G}};   \node at  (1,0)  (B1) {\dna{C}}; \draw  (A1) --  (B1);
  \node at  (2,1) (A2) {\dna{G}};   \node at  (2,0)  (B2) {\dna{C}}; \draw  (A2) --  (B2);
  \node at  (3,1) (A3) {\dna{C}};   \node at  (3,0)  (B3) {\dna{G}}; \draw  (A3) --  (B3);
  \node at  (4,1) (A4) {\dna{A}};   \node at  (4,0)  (B4) {\dna{T}}; \draw  (A4) --  (B4);
  \node at  (5,1) (A5) {\dna{C}};   \node at  (5,0)  (B5) {\dna{G}}; \draw  (A5) --  (B5);
  \node at  (6,1) (A6) {\dna{A}};   \node at  (6,0)  (B6) {\dna{T}}; \draw  (A6) --  (B6);
  \node at  (7,1) (A7) {\dna{A}};   \node at  (7,0)  (B7) {\dna{T}}; \draw  (A7) --  (B7);
  \node at  (8,1) (A8) {\dna{C}};   \node at  (8,0)  (B8) {\dna{G}}; \draw  (A8) --  (B8);
  \node at  (9,1) (A9) {\dna{G}};   \node at  (9,0)  (B9) {\dna{C}}; \draw  (A9) --  (B9);
  \node at (10,1) (A10) {\dna{T}};  \node at (10,0) (B10) {\dna{A}}; \draw (A10) -- (B10);
  \node at (11,1) (A11) {\dna{3'}}; \node at (11,0) (B11) {\dna{5'}};
  \draw (A0) -- (A1) -- (A2) -- (A3) -- (A4) -- (A5) -- (A6) -- (A7) -- (A8) -- (A9) -- (A10) -- (A11);
  \draw (B0) -- (B1) -- (B2) -- (B3) -- (B4) -- (B5) -- (B6) -- (B7) -- (B8) -- (B9) -- (B10) -- (B11);
\end{tikzpicture}
\end{center}
but in reality the strands twist together into a double helix, the structure of which was found by Rosalind Franklin, James Watson and Francis Crick~\cite{wc-dna}. Eponymously, the characteristic complementary `base pairing' is known as Watson-Crick base pairing. The DNA computing motifs designed by Adleman and others make use of this Watson-Crick base pairing, but exploit partial hybridisation and the dynamics and interactions of single-stranded DNA.

\begin{figure}
  \begin{subfigure}{\linewidth}
    \centering
    \includegraphics[width=.9\linewidth]{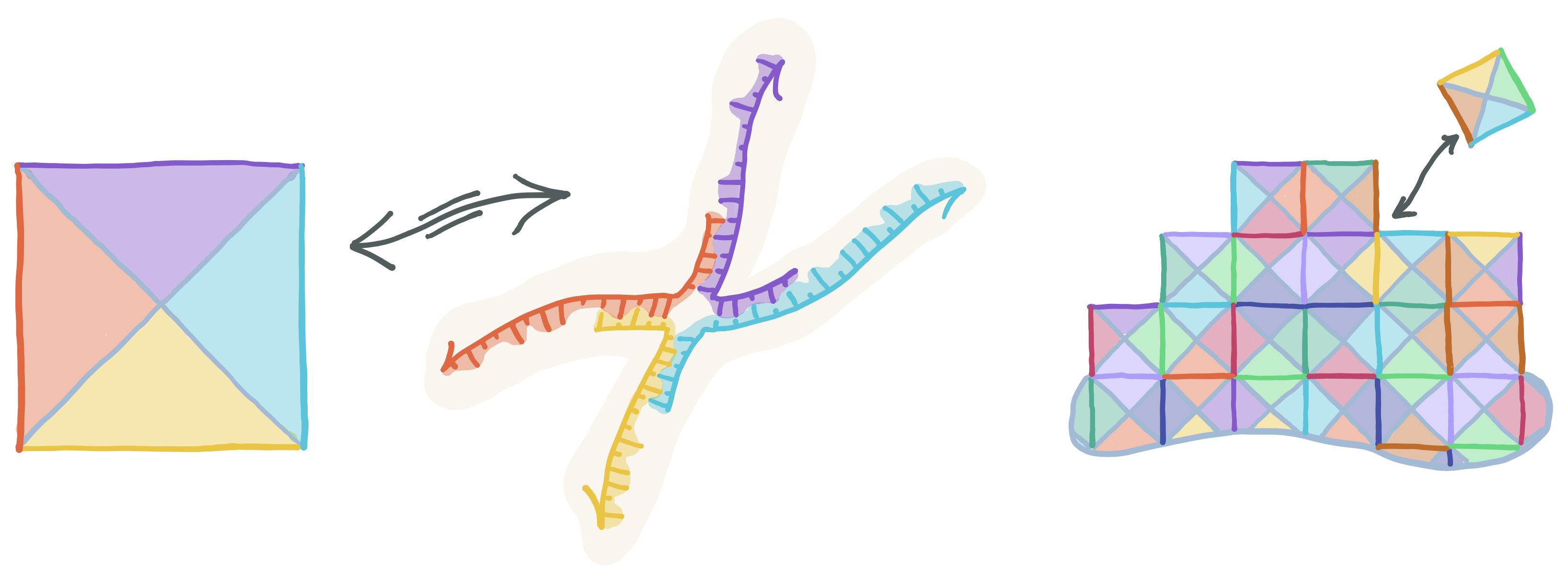}
    \caption{The canonical embodiment of the Tile Assembly Model: in the abstract, square `tiles' are designed with four sticky edges. The edges are assigned colours such that only edges of the same colour will stick to one another. By choosing the tile set, different patterns can be assembled, both deterministic and non-deterministic. In fact, it was shown by Hao Wang in the `60s~\cite{wang-tiles} that appropriately chosen tile sets can simulate a Turing Machine, and so arbitrary computation is possible. During his PhD, Erik Winfree showed how to realise these tilesets using DNA~\cite{winfree-tam}, per the illustration above. Although in practice the tiles adopt a non-square shape, the same general properties of Wang Tiles are observed. In fact, not only is the TAM Turing complete, but there even exists a universal tile set that can simulate any other~\cite{tam-univ}.}
    \label{fig:dna-principle-tam}
  \end{subfigure}
  \begin{subfigure}{\linewidth}
    \centering
    \includegraphics[width=.9\linewidth]{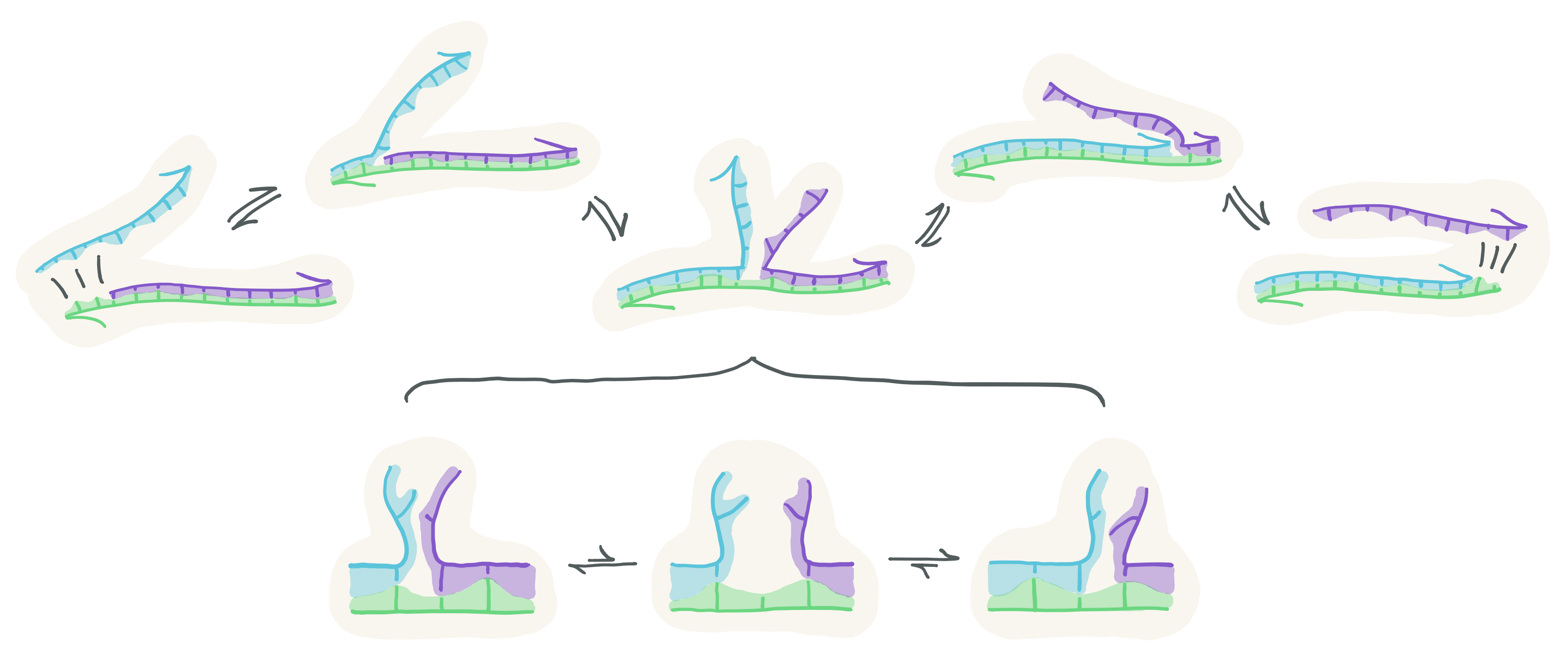}
    \caption{DNA-strand displacement schemes rely on the primitive illustrated here of branch migration, in which an incumbent strand $B$ (purple) is displaced by an incident strand $A$ (blue). We start with $B$ hybridised to a gate strand (green), except that the gate strand is slightly longer and has an exposed single-stranded `toehold' domain. This toehold is complementary to part of $A$, whereby $A$ is reversibly recruited to the gate strand. Moreover, the remainder of $A$ is complementary to much of the gate strand that $B$ is bound to, and so can compete with $A$ for hybridisation. In a thermal context, then, $B$ may transiently `fray' away from the gate strand at its ends, providing an opportunity for $A$ to partially hybridise in its place. A random walk ensues, in which the junction between $A$ and $B$ may migrate in either direction with roughly equal chance. If the branch fully migrates to the other end, then $B$ is left bound by only a short region: another toehold. As with $A$, $B$ can then reversibly attach or detach from the gate strand. If $A$ was slightly longer and bound the entirety of the gate strand, then the reaction could be (and routinely \emph{is}) made effectively irreversible. Sophisticated cascades can then be prepared whereby $B$ may interact with downstream gates, or where a gate may bind multiple strands. Other sophisticated behaviours can be realised, for example, by the use of common or orthogonal toeholds and other sequence domains. The culmination of this is that, up to a tunable probability of error, DSD supports universal computation~\cite{winfree-dna-crn-univ}.}
    \label{fig:dna-principle-dsd}
  \end{subfigure}
  \caption{Illustrations of the principles underlying two common schemes of DNA computation.}
  \label{fig:dna-principle}
\end{figure}

Whilst Adleman's proof of concept required manual actuation of each step of the computation, and was limited to NP-complete problems, schemes demonstrating universal computation and autonomous operation were soon developed. An example from the first of the DNAX series is the proposed DNA and restriction enzyme implementation of a Turing Machine of \textcite{rothemund-tm}. Whilst universal in principle, it was still not autonomous and has never been experimentally realised due to its complexity. Nevertheless, successful approaches and motifs employing both DNA and enzymes have been developed, such as the PEN toolbox of Yannick Rondelez's group~\cite{pen-toolbox} and the PER reaction of Jocelyn Kishi~\cite{per}. Arguably the most popular approaches in the field are enzyme-free approaches. Two famous examples are given by the Tile-Assembly Model (TAM), introduced by Erik Winfree~\cite{winfree-tam}, and DNA-strand displacement (DSD) schemes, first popularised by Georg Seelig et al~\cite{dsd}. These are illustrated in \Cref{fig:dna-principle}.

Some impressive feats have been achieved, from neural networks~\cite{qian-nn,cherry-nn} to cargo sorting robots~\cite{cargo-sorting} to reprogrammable iterative binary logic circuits~\cite{woods-21}. Nevertheless, DNA computing is not without its challenges. For example, DSD systems often exhibit spurious `leak' reactions, and optimisation and tricks are required to mitigate this~\cite{leakless-dsd}. Moreover, traditional DSD systems occupy the entire reaction volume, which can therefore only perform a single computation. Strictly speaking, one could perform multiple computations by using orthogonal DNA sequences such that a set of DNA species performing one computation doesn't interact with another performing a different computation; however, toeholds typically can't be much longer than 6 nucleotides in length to ensure reversibility of association, limiting the set of distinct toeholds to $4^6=4096$. Worse still, it transpires that DNA's sequence specificity is not absolute, and so only a fraction of this design space is useful: in fact, as a single mismatch or two does not prevent hybridisation but does affect the rate of hybridisation, it is often exploited for tuning kinetics~\cite{dsd-kin-i,dsd-kin-ii,dsd-kin-iii,dsd-kin-iv}. We have not even mentioned the influence of palindromic sequences and other motifs manifesting non-trivial secondary structures. This seriously limits the complexity that can be realised in these systems, because the CRNs that are compiled into DSD schemes (using, e.g., \tool{VisualDSD}~\cite{visual-dsd}) do not make use of complexation of CRN species, and therefore each sub-module in a program requires a fresh set of CRN species with their own unique DNA sequences---even if the sub-module is identical to another, just used in a different place in the program. Additionally, these DNA computers are analogue which significantly limits their computational complexity if one wants to be certain of the result~\cite{crn-probab}, although if one is comfortable with slightly less certainty then better complexity is possible~\cite{winfree-dna-crn-univ}. They are also very slow, taking on the order of several hours or even days to progress.
The \emph{tour-de-force} examples of neural networks~\cite{qian-nn} and (4-bit) square rooting~\cite{qian-sqrt}, for example, took on the order of \emph{10 hours} to converge to an answer.

If instead one makes use of localisation, as with the TAM, then much of these issues with DSD can be avoided~\cite{chatterjee-localised}: localising DSD species to a surface allows `circuits' to be constructed which are fast, modular, and can reuse sequence domains. The problem is that constructing these is substantially more difficult than free-solution DSD, and that gate strands are `consumed' during computation making these one-shot reactions. Free-solution DSD \emph{also} consumes its gate strands, but at least in this case it is possible (if non-trivial) to continually supply fresh gate and fuel species and remove waste complexes. The TAM does not get off completely unscathed either, and methods to suppress errors to mitigate its share of problems have also been developed~\cite{tam-err}.

This is not to say that computing with molecules such as DNA is intrinsically flawed, but the difficulties in overcoming these issues has left many in the field pessimistic about initial beliefs that DNA could be used to perform arbitrary computation in an effective manner. Instead, the general consensus is that DNA computing may be best used to perform small logical calculations as part of a synthetic biological system. My view is more optimistic, in that I believe that an as-yet undiscovered DNA computing motif exists that will solve all these problems simultaneously. Whilst perhaps the existence of such a panacea is wishful thinking, I believe that the abstract molecular schemes discussed in \Cref{app:seq-klona,sec:ex1} lend evidence to this view, and developing these further will be the subject of future work.

\para{Reversible Computation}
\label{app:rev-comp}

\begin{figure}[p]
  \centering
  \includegraphics[width=.7\textwidth]{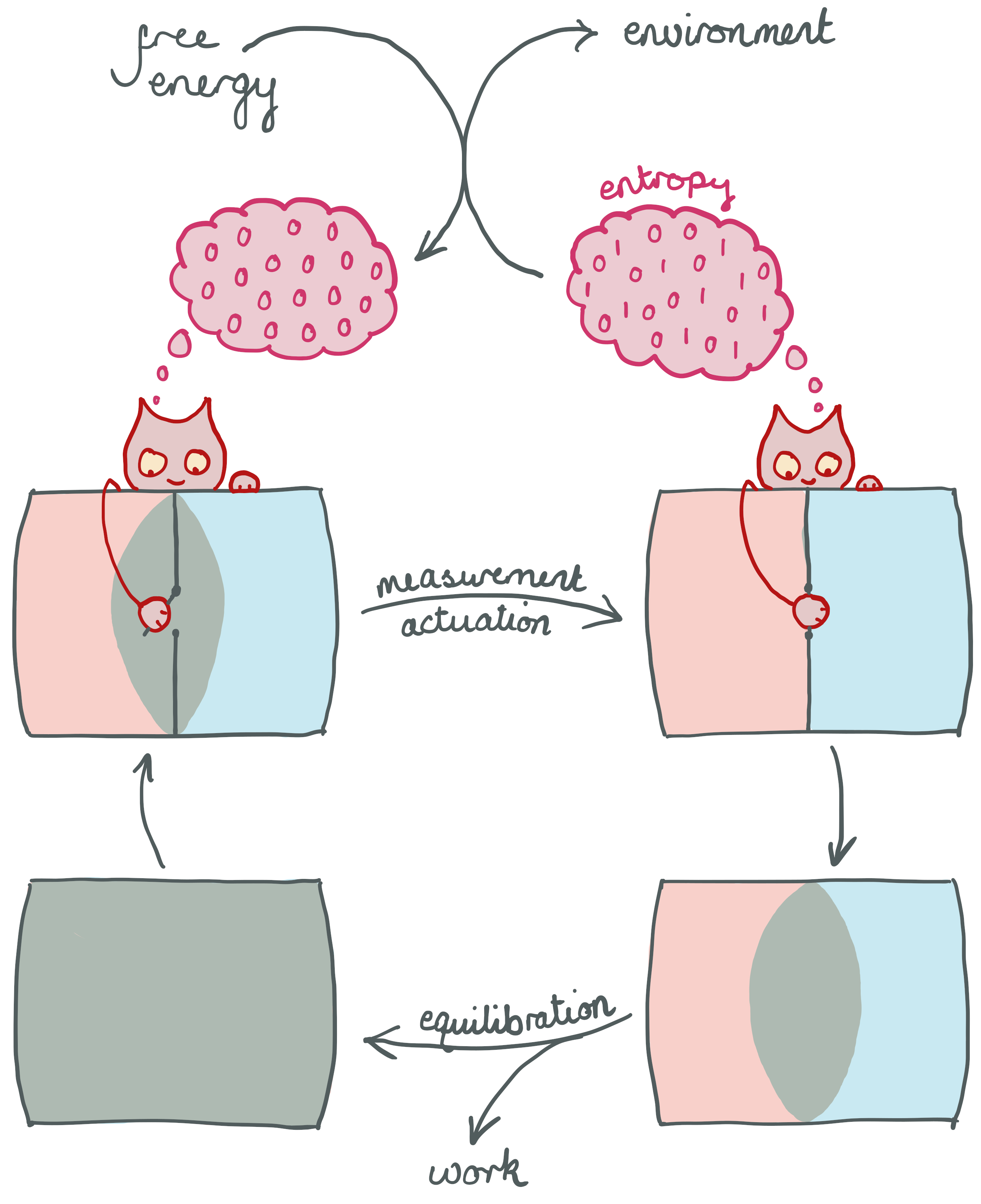}
  \caption[Maxwell's D{\ae}mon (1867).]{Maxwell's D{\ae}mon (1867) is a thought experiment which, if valid, would offer a method to violate the second law of thermodynamics. Consider a box divided in two; to the left of the divider is a red gas, and to the right a blue gas. If the divider is removed, then mechanical work can be extracted as the gases mix and equilibrate. Now suppose the divider is re-inserted, and endowed with a small window whose hinge is frictionless and can therefore be opened and closed without energetic cost. Imagine that a d{\ae}mon or other such entity, with particularly keen eyesight and nimble fingers, is positioned at the window. If she observes a red particle in the right compartment moving towards the window, or a blue particle in the left compartment moving towards the window, she briefly opens it to allow the particle's passage; otherwise she leaves the window closed. Over time, this process will bring the box back to its initial state, thereby apparently reducing its entropy.\captionnl The problem is that this process requires the d{\ae}mon to learn information about the state of the gas particles. In effect, the entropy has simply moved from the state of the particles to the state of the d{\ae}mon's memory. No entropy reduction has occurred. Moreover, eventually her memory will be filled up, and she will likely want to `erase' it to continue her task. Forgetting all this information will correspond to moving the entropy into the state of another system. If this entropy now manifests as heat, then Landauer showed that discarding a quantity of information $\Delta I$ requires the production of heat $\Delta Q\ge kT\Delta I$ where $k$ is Boltzmann's constant, $T$ the temperature, and with equality only in the limit of a thermodynamically reversible process (i.e.\ taking infinite time).}
  \label{fig:daemon}
\end{figure}
\afterpage{\clearpage}

As mentioned, the other class of unconventional computer we shall consider is that of reversible computers. Reversible computers impose the requirement that all state transitions be invertible, and therefore conserve information. The reason for this restriction is that the laws of physics, as far as is currently known\footnote{This includes quantum mechanics; the Schrödinger equation specifies that the time evolution operator is unitary and hence invertible. Furthermore, it seems that even the supposedly irreversible process of wavefunction collapse is in fact a reversible one of progressive decoherence, as espoused by the Many-Worlds Interpretation~\cite{mwi}.}, are fundamentally reversible. In contrast, conventional computers are irreversible in the sense that they do not conserve information and that it is not generally possible to deterministically wind back the state of computation. This has surprising consequences: from a thermodynamic perspective, it transpires that this loss of information manifests as a commensurate increase in entropy, whilst from a quantum perspective, the loss of information of some quantum state necessarily results in decoherence of any quantum state entangled with said state. This quantum mechanical behaviour is not of great concern in classical computers, but is untenable for quantum computers and hence they must necessarily be programmed reversibly.

\begin{wrapfigure}{o}{.5\linewidth}
  \centering
  \begin{subfigure}{\linewidth}%
    \centering%
    \includegraphics[width=3.2cm]{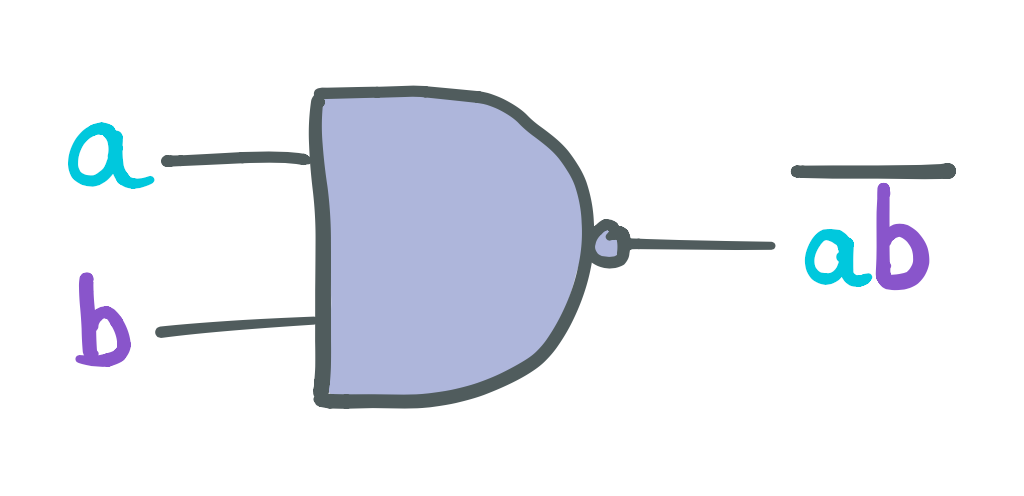}%
    \includegraphics[width=3.2cm]{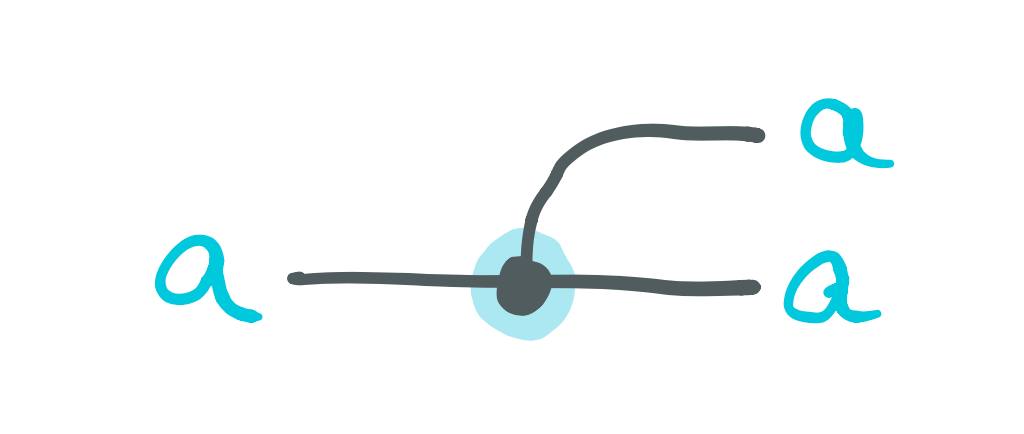}
    \caption{\gate{NAND} and \gate{FANOUT} gates}
  \end{subfigure}
  \begin{subfigure}{\linewidth}%
    \centering%
    \includegraphics[width=2cm]{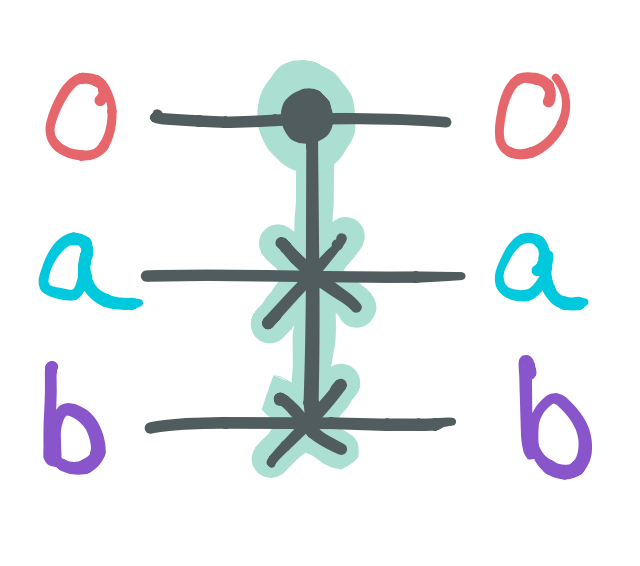}%
    \includegraphics[width=2cm]{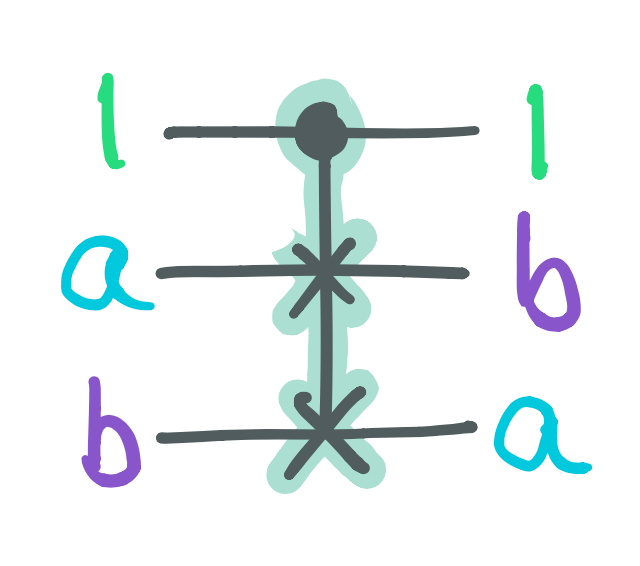}%
    \kern.2cm%
    \includegraphics[width=3.2cm]{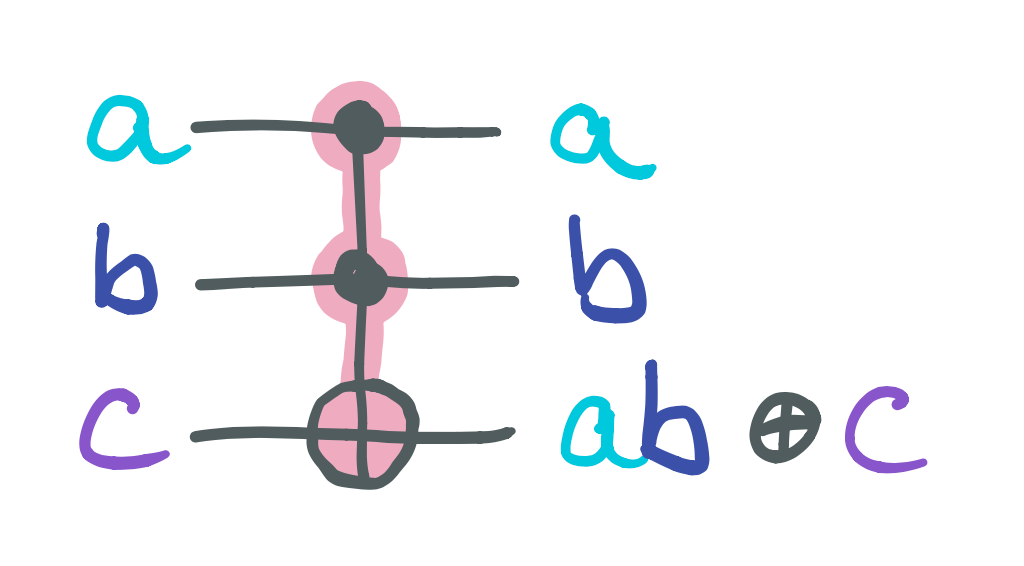}
    \caption{Fredkin and Toffoli gates}
  \end{subfigure}
  \begin{subfigure}{\linewidth}%
    \centering%
    \includegraphics[width=4cm]{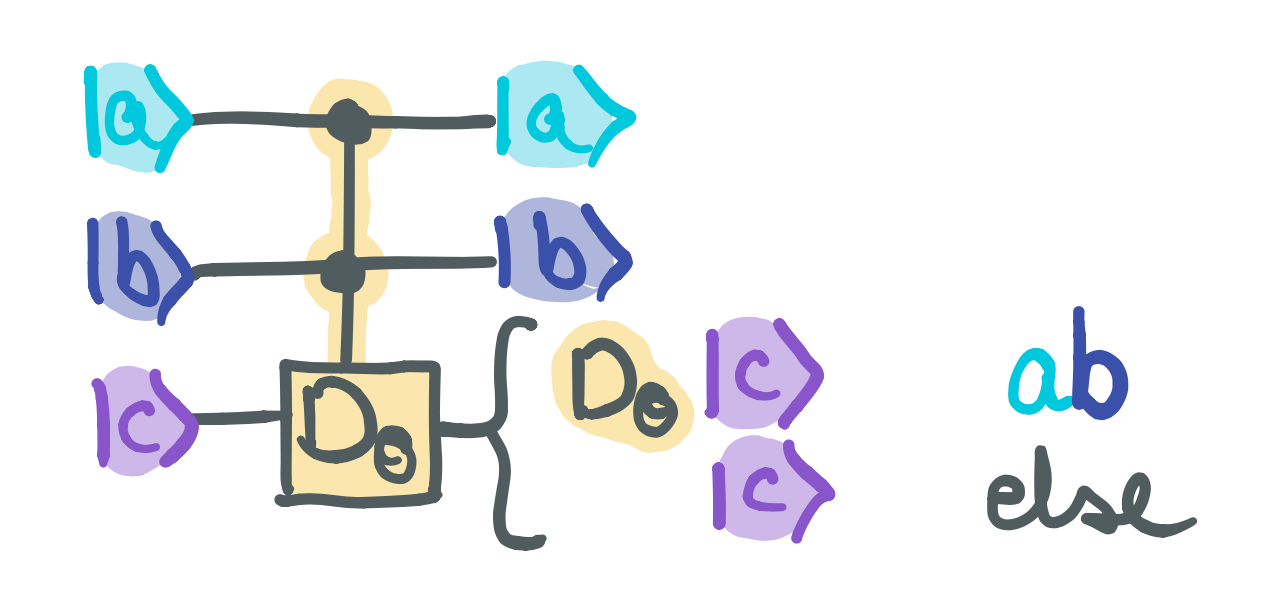}%
    \kern.2cm%
    \raisebox{1cm}{\begin{minipage}{3cm}\footnotesize\begin{align*}
      {\mathrm D}_\theta &= \begin{pmatrix}
        i\cos\theta & \sin\theta \\ \sin\theta & i\cos\theta
      \end{pmatrix}
    \end{align*}\end{minipage}}
    \caption{Deutsch Gate}
  \end{subfigure}
  \caption[Three sets of universal gates for Boolean logic, both classical and quantum.]{Three sets of universal gates. (a)~Two gates which can be used to synthesise any Boolean function $\mathbb B^m\to\mathbb B^n$. (b)~Two reversible logic gates, \emph{each} of which can simulate \gate{NAND} and \gate{FANOUT} gates given suitable ancillary bits (inputs fixed to either 0 or 1). (c)~An example gate that is universal for quantum computation: that is, any quantum circuit can be constructed from a series of Deutsch gates (parameterised by the angle $\theta$).}
  \label{fig:universal-gates}
\end{wrapfigure}

Expanding more on the entropic manifestation of information loss, simulating the ability to discard information in a reversible setting touches deeply on the connection between thermodynamics and information theory, and was formalised by \textcites{szilard-engine,landauer-limit} in the twentieth century in order to resolve Maxwell's eponymous thought experiment, the Maxwell D{\ae}mon (\Cref{fig:daemon}). For every bit of information discarded, a commensurate entropy increase of at least $k\log 2$ is produced elsewhere, where $k$ is Boltzmann's constant. Physically, the information is encoded into environmental degrees of freedom, and this information promptly decorrelates with the original computational system. This decorrelation is effectively irreversible, and hence the information becomes entropic. More generally, whenever a quantity of information $I\,\text{bits}$ is discarded, an entropy increase of $\Delta S\ge kI\log 2$ manifests, with equality only in the limit of thermodynamic reversibility in which the discarding process takes infinitely long. If this entropy were allowed to remain in the computational system, it would eventually accumulate to such an extent that it would interfere with the well ordered operation of the computational mechanism. As a visceral example, should the entropy take the form of heat, $\Delta Q=T\Delta S$ where $T$ is the system temperature, the system will eventually transmute into a form incompatible with its function---such as melting or turning into a plasma---as the temperature becomes too great. Even if the system is physically heat-resistant, the increased entropy will result in errors so frequent that almost all the computational capacity is directed to error correction, slowing productive computation to a crawl.

At room temperature, this heat generation will be on the order of $\sim\SI{e-21}{\joule}\sim\SI{e-2}{\electronvolt}$ per bit discarded. Conventional computers typically expend a factor of around \num{e8} times this much for reasons relating to reliability, speed, and the relatively large size of their transistors in comparison to the atomic scale. It is somewhat challenging to precisely quantify this overhead as the information processing capacity of a modern consumer processor is difficult to ascertain, due to the extensive use of pipelining, vectorisation and multiple-instruction dispatch. Nevertheless, we can obtain a reasonable estimate. arm's processors are well known for emphasising power efficiency, and are used extensively in mobile electronics as well as some laptops and desktops (notably, three of Apple's new device lineup). Considering arm's A-78 processor micro-architecture~\cite{arm-a78}, we find numbers of around \SI{1}{\watt\per{core}}, with each core running at a clock speed of \SI{3}{\giga\hertz} and executing up to \SI{6}{(macro)instructions} per clock cycle with a width of up to \SI{128}{\bit}. Taken together, these give an energy dissipation of roughly \SI{1.5e-13}{\joule\per\bit}. In contrast, at a temperature of around \SI{300}{\kelvin}, Landauer's bound gives just \SI{2.9e-21}{\joule\per\bit}, a \num{5e7} overhead for the arm processor. Intel's chips have thermal design powers typically exceeding \SI{100}{\watt}, but have compensatingly greater vectorisation widths of some instructions, such that a similar Landauer overhead can be achieved when fully exploiting the processor's capabilities.

The origin of irreversibility in conventional computers can be subtle, but it is in fact ubiquitous. This may be considered to ultimately originate in early mathematical models of computations such as the Turing Machine and Lambda Calculus in which overwriting and discarding, respectively, are implicit to the primitive operations of the models.  At the lowest levels of abstraction of modern computers, semiconductor logic gates and buses are operated in an irreversible manner: somewhat oversimplifying, traditional CMOS-style logic dissipates energy every time the transistor states switch, and during each clock cycle an equilibration process occurs in which the nodes of the circuit are
connected to bus lanes at various reference voltages and any transient voltage disparity is eliminated. Both of these processes are capable of discarding information in the system, although exactly where in these processes the information erasure should be attributed is a non-trivial matter. In fact it is possible to drive CMOS-style logic gates \emph{adiabatically}~\cite{crl,sl-crl,s2lal}, achieving asymptotically zero dissipation in the limit of infinitely slow switching. This is assuming that the instructions are logically invertible, perhaps subject to certain pre- and post-conditions in which case it is sometimes referred to as \emph{conditionally reversible}. Reintroducing irreversibility is possible, but dissipation of information should be performed separately to maintain adiabaticity in the operation of the logic gates. The idea of using reversible logic gates in irreversible computation may seem counter-productive, but it is a useful potential way to make substantial gains in energy efficiency and to get closer to the Landauer bound.

At higher levels of abstraction potential sources of information erasure include the obvious, such as overwriting/mutating a variable, to the less obvious such as variables going out of scope or ignoring a function's return value. Even more subtly, any iteration or recursion which does not maintain a record of its initiation condition as well as its termination condition is in fact irreversible, as shall later become clear.

With such ubiquitous application of information-destructive primitives, it is hard to see how algorithms can be rewritten reversibly. Indeed, whilst Landauer could envisage what logically reversible logic gates might look like, even describing~\cite{landauer-limit} what would eventually become known as the Toffoli or \gate{CCNOT} gate (see \Cref{fig:universal-gates}), he came to the conclusion that one would end up having to manually keep track of the lost information thus rapidly filling the computer's memory with useless `garbage' or `history' data. 12 years later, Charles Bennett---often considered the `Father of Reversible Computing'---developed a remarkably simple algorithm~\cite{bennett-tm} (\Cref{fig:bennett-algo1}) for embedding any irreversible function $x\mapsto f(x)$ as the reversible injection $x\leftrightarrow(x,f(x))$ without producing any net garbage. Later, he extended this algorithm~\cite{bennett-pebbling} to reduce the amount of interim garbage data produced via a recursive application of the original algorithm, the general principle of which is illustrated in \Cref{fig:bennett-pebble}.

\begin{figure}
  \begin{subfigure}[b]{.5\linewidth}
    \centering
    \includegraphics[width=\linewidth]{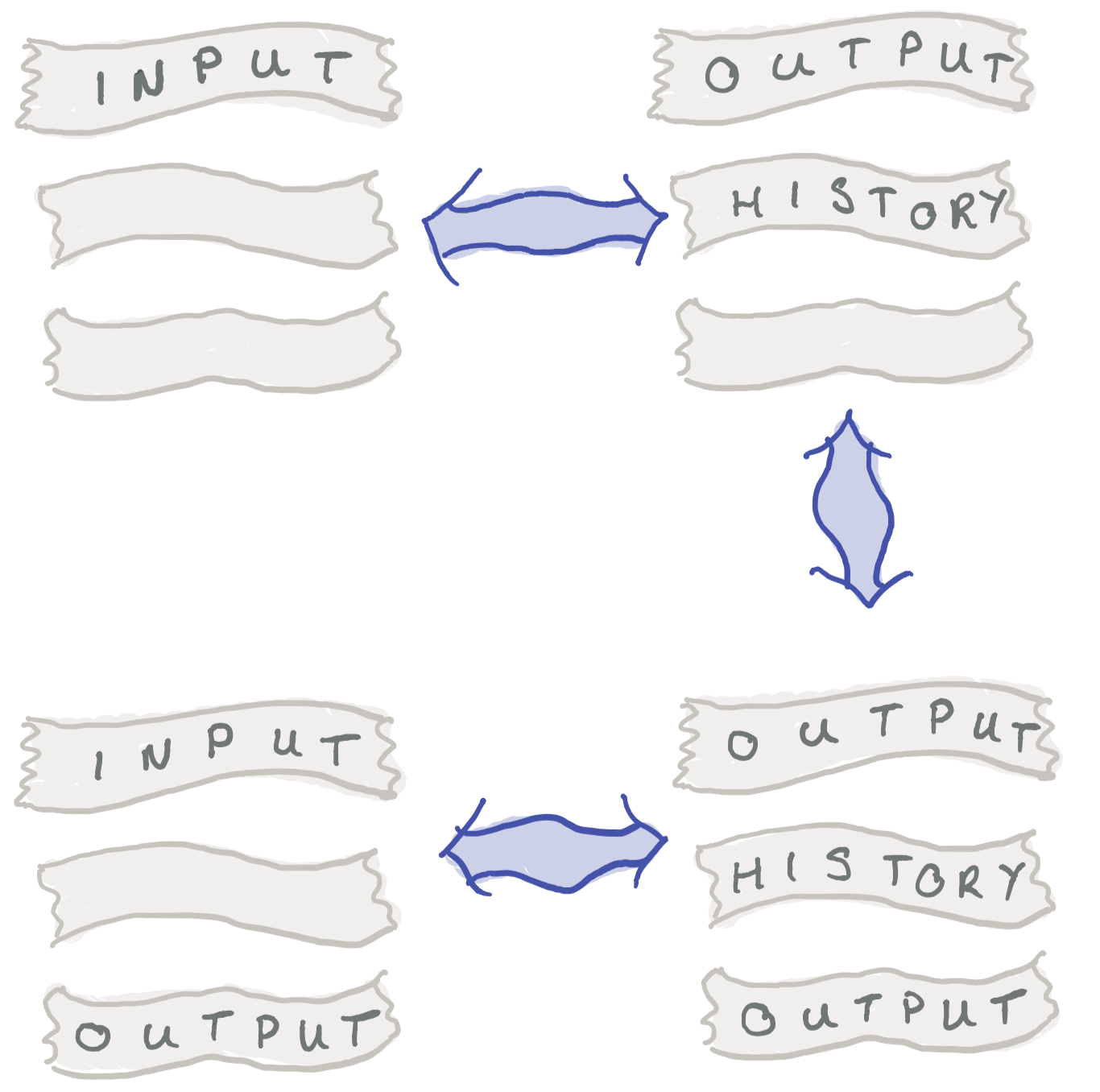}
    \caption{}
    \label{fig:bennett-algo1}
  \end{subfigure}%
  \begin{subfigure}[b]{.5\linewidth}
    \centering
    \includegraphics[width=\linewidth]{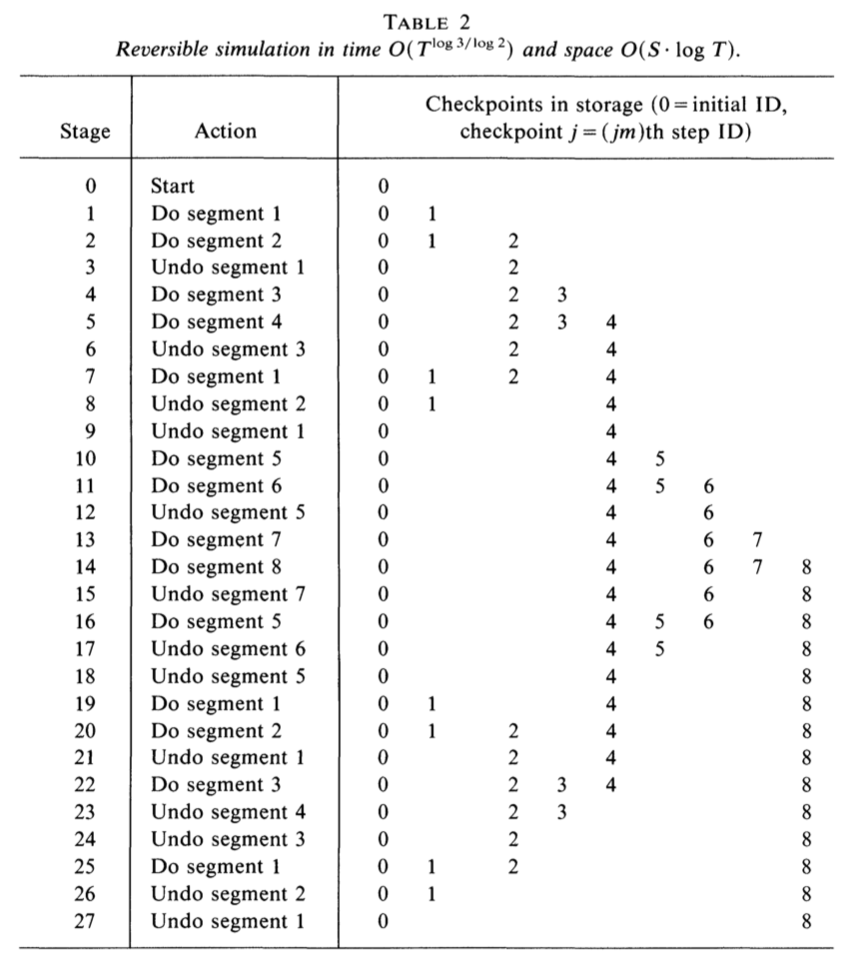}
    \caption{}
    \label{fig:bennett-pebble}
  \end{subfigure}
  \caption[Two of Bennett's algorithms for reversibly simulating irreversible programs.]{Two of Bennett's algorithms for reversibly simulating irreversible programs.
  \capnl(a) Bennett's initial algorithm~\cite{bennett-tm} for reversibly simulating an irreversible program with no net garbage data, except for a retained copy of the original input.
  In the same paper, Bennett described a reversibilised analogue of the Turing Machine (TM), the \emph{Reversible Turing Machine} (RTM), whose every operation is logically invertible and which could simulate any Turing Machine by placing the garbage data on a separate tape instead of erasing it.
  Bennett's algorithm makes use of the reversibility of this process: after first running the simulated TM forwards, we peek at the output and make a copy onto a third tape; we then forget about the third tape and reverse the simulation on the first two tapes, consuming the garbage and output and yielding the original input.
  Copying can be performed reversibly by, for example, using the output as a \emph{one-time pad} to encrypt the contents of the third tape (if the RTM alphabet is binary, this is just the \gate{XOR} operation).
  If the third tape is empty, then this will result in a copy; if it is not, then the result will generally be garbled.
  \capnl(b) A reproduced figure of Bennett's `pebbling' algorithm~\cite{bennett-pebbling}, illustrating the general principle. The pebbling algorithm runs the algorithm of \Cref{fig:bennett-algo1} partially forwards and backwards, making and un-making partial copies of the intermediate states of the computation. This yields a spatial overhead logarithmic in the total computation time, instead of linear as with the original algorithm. The overhead in the time complexity is more significant however, but Bennett presented even more sophisticated algorithms to render the exponent arbitrarily small.}\label{fig:bennett-algos}
\end{figure}

\begin{figure}[!t]
  \long\def\fredkinEx#1{\begin{subfigure}{.222\linewidth}
    \centering\includegraphics[width=.9\linewidth]%
      {gate-fredkin-bb-#1.png}
   \end{subfigure}}
  \centering
  \begin{minipage}[m]{.7\linewidth}\centering
    \includegraphics[width=.8\linewidth]{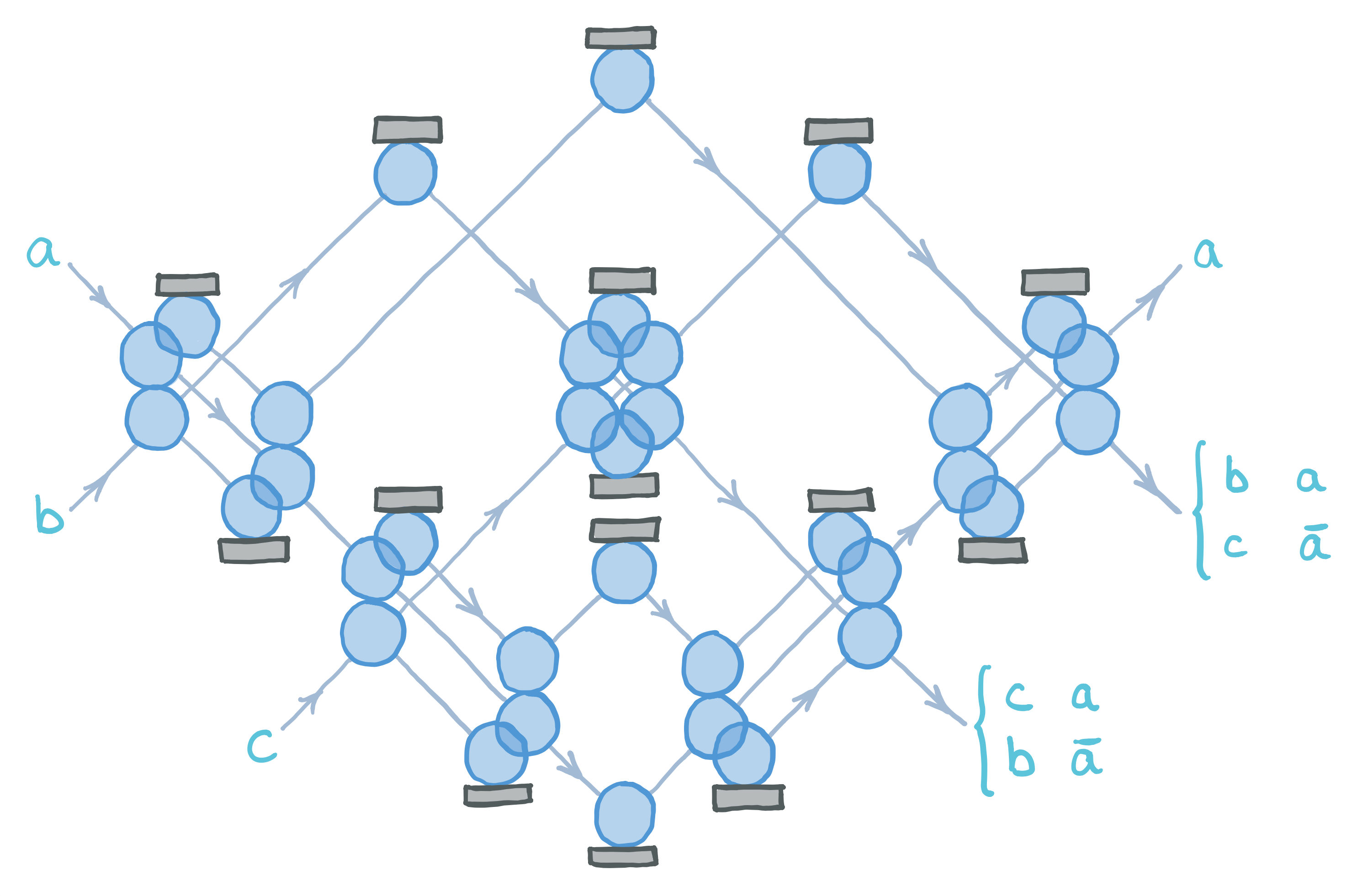}
  \end{minipage}%
  \hspace{.3cm}%
  \fbox{\begin{minipage}[m]{.2\linewidth}
      \includegraphics[width=\linewidth]{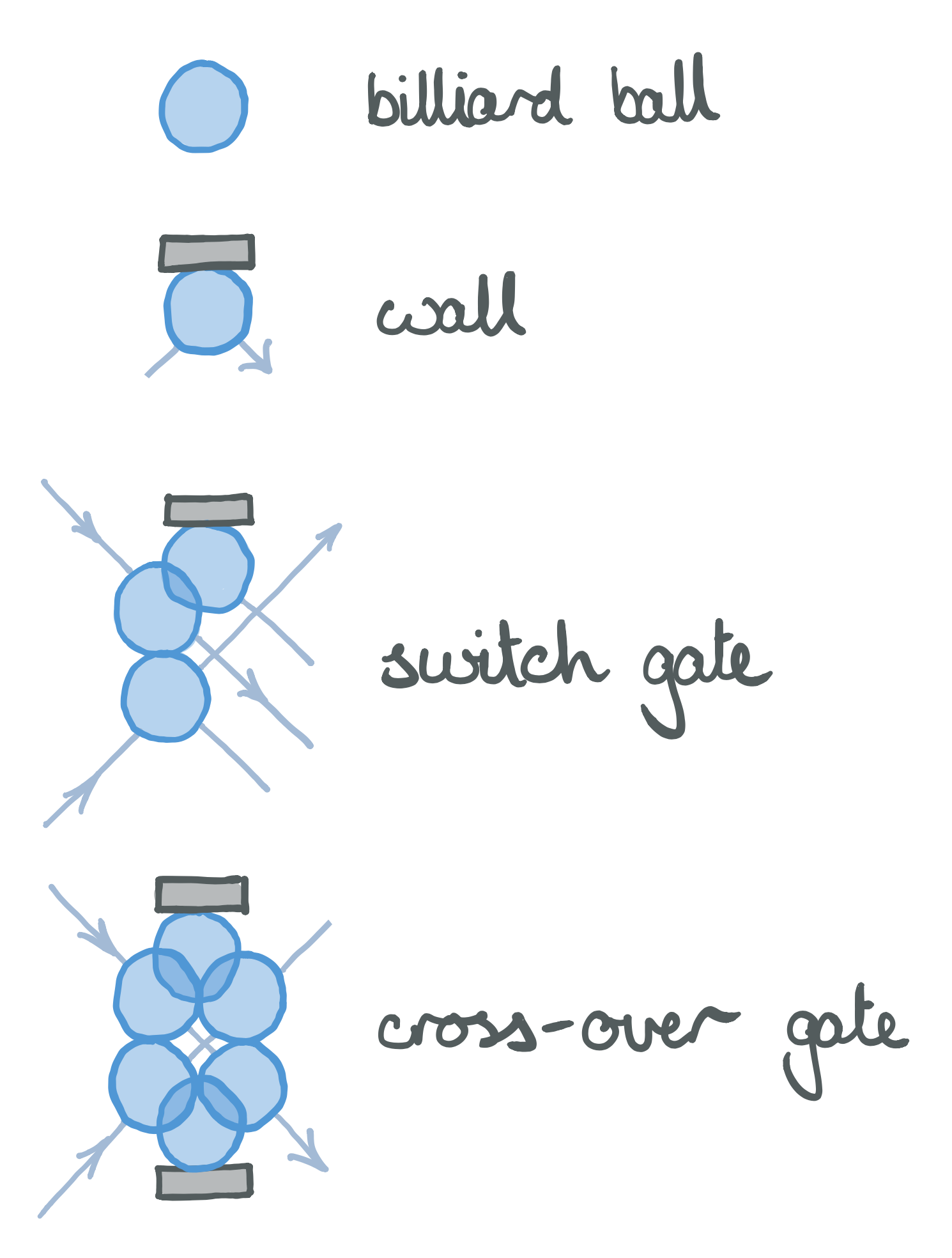}%
    \end{minipage}}
  \caption[A billiard-ball implementation of a Fredkin gate due to Ressler.]{A billiard-ball implementation of a Fredkin gate due to \textcite{ressler}. As primitives, it makes use of eight `switch' gates, one `cross-over' gate (middle), and five lone walls for simple redirection. The implementation differs slightly from the canonical Fredkin gate in that the second and third outputs are swapped. Note that not all the ball positions depicted necessarily occur for each possible combination of inputs; an input is given by the presence or absence of a ball for each of $a$, $b$ and $c$, and these balls are synchronised in that at any given time they are aligned vertically, as shown in the below gate instantiations:}
  \label{fig:gate-fredkin-bb}
  \vspace{1em}
  \begin{minipage}{\linewidth}
    \fredkinEx{000}\fredkinEx{001}\fredkinEx{010}\fredkinEx{011}\hfil\\[.5em]
    \null\hspace{.111\linewidth}\fredkinEx{100}\fredkinEx{101}\fredkinEx{110}\fredkinEx{111}
  \end{minipage}
  \vspace{-1em}
\end{figure}

With the viability of reversible computing proven, the stage was then set for more faithful programming of reversible computers to be demonstrated beyond simple injective embedding of irreversible programs. Whilst Bennett did introduce the idea of a Reversible Turing Machine, arguably the first description of a reversible computer that might be physically realisable was given by \textcite{fredkin-conlog}. Fredkin and Toffoli introduced conservative logic, in which the number of 1 bits (and optionally also the number of 0 bits) is conserved; they proceeded to show that this logic could be implemented by a system of ideal elastic `billiard balls', projected across an ideal frictionless surface decorated with ideally rigid walls. The motivation for this construction was that it is precisely the sort of system considered by classical mechanics. If one could build such a billiard-ball computer, then aside from the initial kinetic energy supplied to the system, there would be no need to supply additional energy nor any dissipation; moreover, one could recover the kinetic energy at the end for amortised-free computation. To illustrate the principles of billiard-ball computation, Andrew Ressler's implementation of a Fredkin gate is shown in \Cref{fig:gate-fredkin-bb}. Going further, Ressler designed an entire reversible computer processor---complete with arithmetic logic unit---within the billiard-ball scheme for his Master's thesis~\cite{ressler}, demonstrating that not only were reversible computers theoretically as capable as irreversible ones from a Turing universality perspective, but that they were also practically as capable.

Unfortunately, Bennett's analysis of the billiard-ball model showed it to be impracticable as a fully dissipationless model of reversible computation:
\enlargethispage{\baselineskip}\begin{quote}
  ``Even if classical balls could be shot with perfect accuracy into a perfect apparatus, fluctuating tidal forces from turbulence in the atmospheres of nearby stars would be enough to randomise their motion within a few hundred collisions. Needless to say, the trajectory would be spoiled much sooner if stronger nearby noise sources (e.g., thermal radiation and conduction) were not eliminated.''
  \signed{\textrm{ --- \textcite{bennett-rev}}}
\end{quote}
We shall have more to say about this in \Cref{chap:revi}, but since then a number of adiabatic designs of classical reversible computers have been developed and experimentally verified, not to mention the recent progress in building quantum computers. We highlight the \emph{Pendulum Instruction-Set Architecture} (\pisa), first introduced by \textcite{pendulum} and later refined by \textcite{frank-thesis,pendulum2}. This architecture was successfully realised adiabatically with semiconductors by employing the `Split-Level Charge Recovery Logic' (SCRL) motif~\cite{sl-crl}, as shown in \Cref{fig:pendulum}. Adiabatic reversible computers achieve far lower dissipation than irreversible computers, but still dissipate energy at a rate inversely proportional to the time per operation. Questions of whether or not lower energy dissipation, or even the `holy grail' of fully dissipationless computation, are achievable have continued to be asked\footnote{Such as at the recent 2020 \emph{Physics and Engineering Issues in Adiabatic/Reversible Classical Computing Visioning Workshop}.}, and we believe we have answered these in the negative in \Cref{chap:revi}.

\begin{figure}[htbp]
  \begin{subfigure}[b]{.33\linewidth}
    \centering
    \includegraphics[width=.95\textwidth]{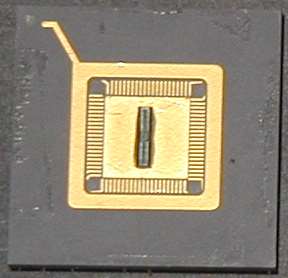}
    \caption{\archTick}
  \end{subfigure}%
  \begin{subfigure}[b]{.33\linewidth}
    \centering
    \includegraphics[width=.95\textwidth]{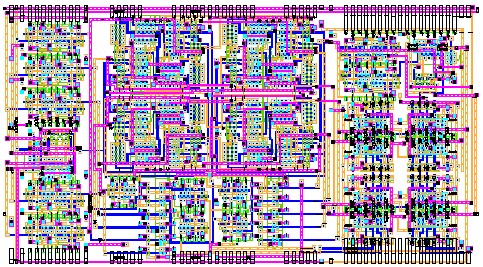}\\[.2em]
    \includegraphics[width=.95\textwidth]{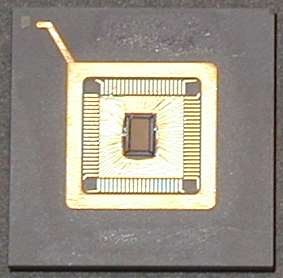}%
    \caption{\archFT\ and its unit cell}
  \end{subfigure}%
  \begin{subfigure}[b]{.33\linewidth}
    \centering
    \includegraphics[width=.95\textwidth]{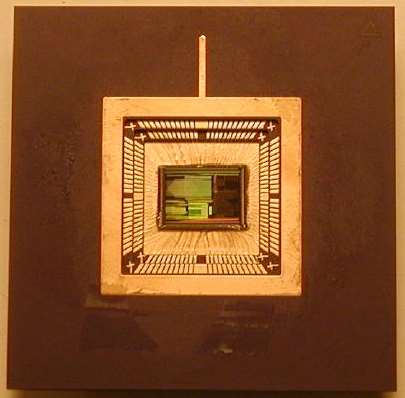}
    \caption{\pendulum}
  \end{subfigure}
  \caption[Three iterations of experimentally realised reversible processor.]{%
    Three iterations of experimentally realised reversible processor between 1996 and 1999 by Frank, Ammer, Love, Rixner and Vieri.
    \archTick\ (1996) was an 8-bit subset of \pisa, but was non-adiabatic.
    \archFT\ (1996) was an adiabatic programmable gate array implementing the \emph{Billiard Ball-Model Cellular Automaton} (BBMCA) of \textcite{bbmca}.
    \pendulum\ (1999~\cite{pendulum2}) was the culmination of Vieri's work for his graduate thesis, and a 12-bit implementation of \pisa.}
  \label{fig:pendulum}
\end{figure}

\begin{wrapfigure}{O}{.5\linewidth}
  \centering
  \includegraphics[width=\linewidth]{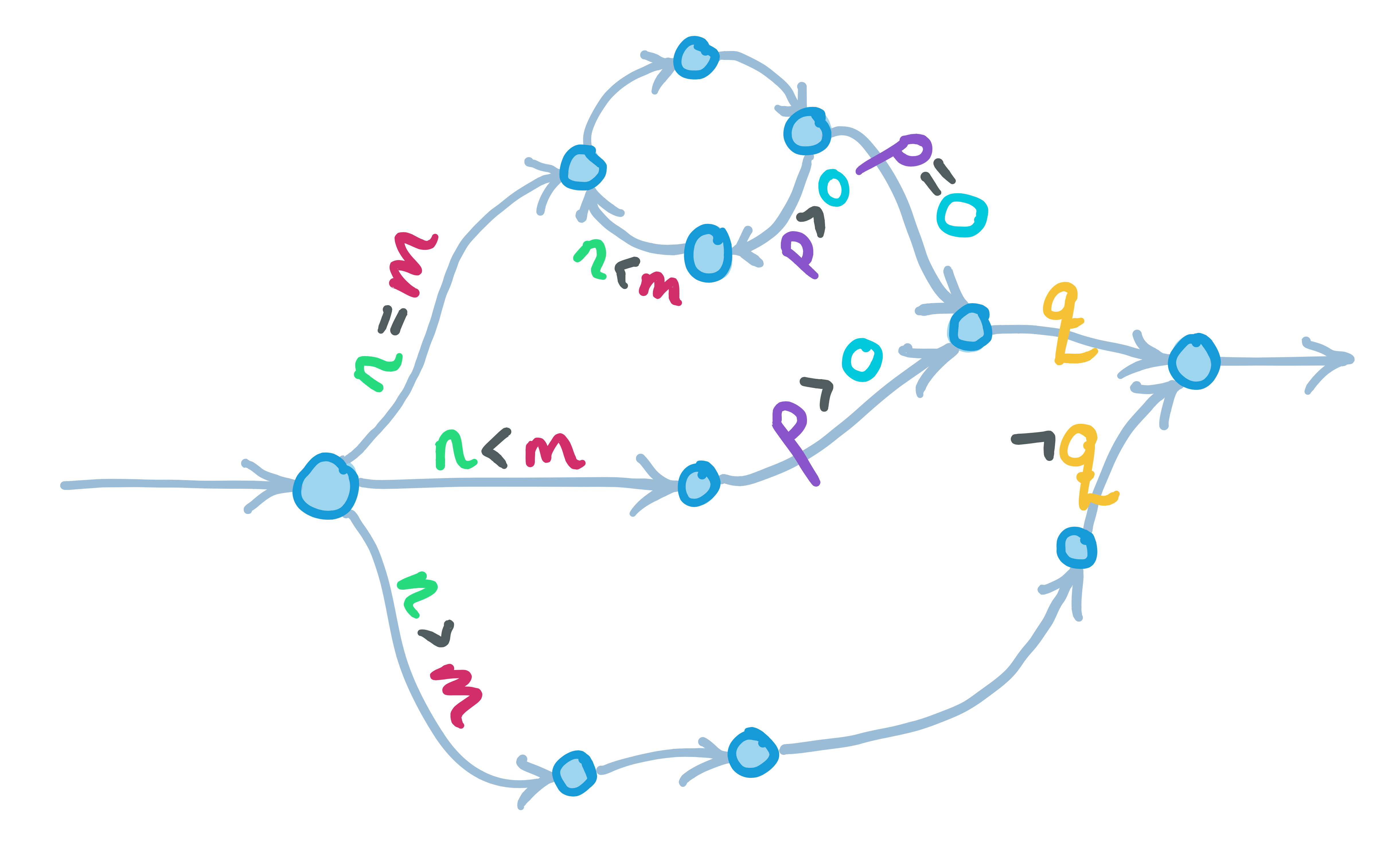}
  \caption[An illustration of the principles of reversible control flow.]{An illustration of the principles of reversible control flow in the form an abstract state transition diagram. The meanings of the variables are not important. Arrow direction specifies the `forwards' direction of the program, and can be reversed to obtain the reversed program.}
  \label{fig:rev-control-flow}
\end{wrapfigure}
\para{Programming Reversibly}

With such ubiquitous application of information-destructive primitives, it is hard to see how algorithms and programs can be rewritten reversibly. In \Cref{chap:aleph} we shall introduce a model of reversible computation appropriate for Brownian computers such as the molecular computers discussed earlier, in addition to many examples illustrating the principles of programming reversibly. Learning to do so requires a paradigm shift in one's thought process, in much the same way as an imperative/procedural programmer undergoes when learning a functional programming language for the first time. Reversibility as a concept is orthogonal to other language paradigms such as imperative/procedural, functional, and declarative, and indeed examples of reversible languages exist for each of these (we believe the \textAleph-calculus and \alethe, introduced in \Cref{chap:aleph}, constitute the first example of a reversible declarative language).

To program reversibly takes some care.
At a high level, one must ensure that each block (whether a function, a subroutine, a loop, a conditional branch, a statement, etc) is information preserving:
that is, there must be a bijective map between the set of \emph{legal} inputs and \emph{legal} outputs.
We say legal, because it is possible for some inputs to not map to outputs and vice-versa, instead producing a logical error during execution or entering an infinite loop.
For example, consider Bennett's embedding protocol applied to the factorial function.
This would produce a reversible program performing the map $(n)\leftrightarrow(n,n!)$, e.g.\ $(4)\leftrightarrow(4,24)$.
If, however, one attempted to feed in the output $(4,30)$ and run this program backwards then an error\footnote{%
  If we assume this is implemented on an RTM, then the error occurs when reversing the copy step:
  tapes 1 and 3 are initialised to 4 and 30 respectively, then the irreversible factorial TM is run forwards on tapes 1 and 2 to obtain 24 and some entropic garbage;
  the RTM then attempts to consume the copy of 24 on tape 3, but $30\neq24$ and so the machine will either stall or its state may become corrupt and meaningless.}\ 
would occur because there is no input mapping to this output.
Once one has ensured that a block is information preserving, the next step is to try to decompose it into smaller information-preserving blocks, just as with regular programming.
Expertise with this comes with practice, and my experience with reversible programming leads me to believe that it is possible to become just as proficient with reversible programming as with irreversible.
Whilst programming reversibly is to be preferred where practicable, a useful `trick' for implementing a difficult bijection $f$ can be achieved by first writing irreversible programs for $f:x\mapsto y$ and $f^{-1}:y\mapsto x$, embedding these using Bennett's algorithm as $f':x\leftrightarrow(x,y)$ and $(f^{-1})':y\leftrightarrow(x,y)$, and then constructing the composition $((f^{-1})')^{-1} \circ f' : x \leftrightarrow y$.

\begin{figure}[t]
  \begin{subfigure}{\linewidth}
    \centering
    \includegraphics[width=.8\linewidth]{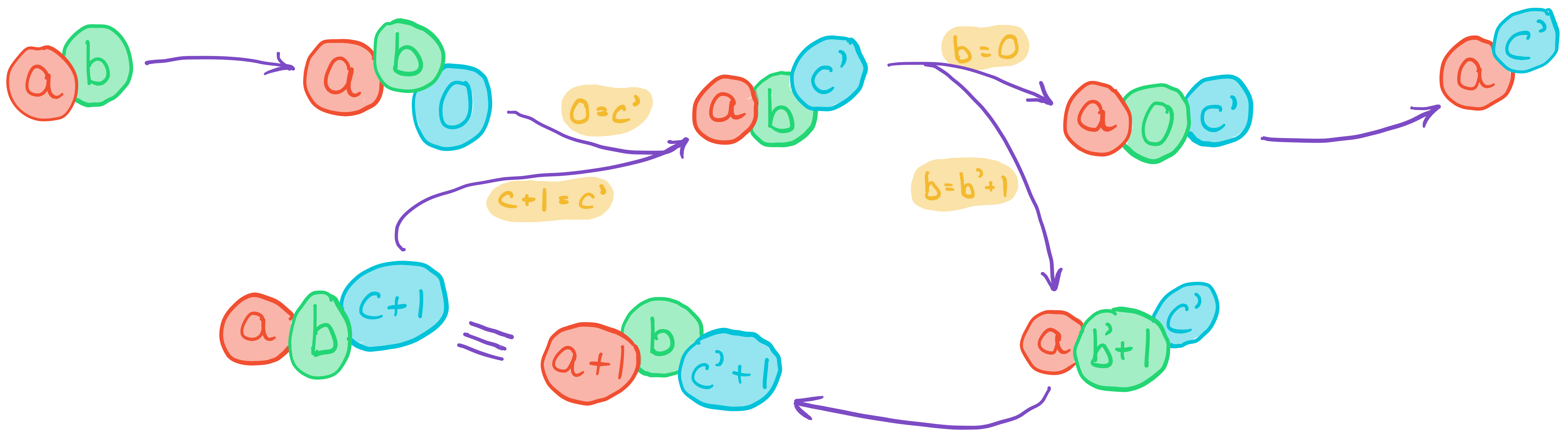}
    \caption{The logical state transition diagram for the reversible addition. This is a reversible embedding of the Peano-axiomatic definition of natural number addition, and therefore assumes as primitives that we can increment/decrement and check whether or not a number is zero.}
  \end{subfigure}
  \begin{subfigure}{\linewidth}
    \def\cond#1#2{%
      \overset{c=#1}{%
        \underset{\smash{\raisebox{0.1ex}{$\scriptstyle b=#2$}}}%
        \longleftrightarrow}}
    {\small\begin{align*}
      (3,4) \leftrightarrow (3,4,0) \cond{0=0}{4\neq0} (4,3,1) \cond{1\neq0}{3\neq0} (5,2,2) \cond{2\neq0}{2\neq0} (6,1,3) \cond{3\neq0}{1\neq0} (7,0,4) \cond{4\neq0}{0=0} (7,0,4) \leftrightarrow (7,4)
    \end{align*}}\vspace{-1em}
    \caption{An example execution path of the program shown in (a). Seen in the forward direction, we are adding $3$ and $4$ to get $7$; seen in the reverse, we are subtracting $4$ from $7$ to get $3$.}
  \end{subfigure}
  \caption[An example of the reversible control flow for a possible reversible implementation of addition of two natural numbers.]{%
    An example of the reversible control flow for a possible reversible implementation of addition of two natural numbers, $a$ and $b$.
    In order to be reversible, there must be a bijection between legal inputs and legal outputs and one such way is to retain one of the inputs---here $b$.
    This differs from the Bennett-style embedding, which would retain both inputs i.e.\ $(a,b)\leftrightarrow(a,b,a+b)$, and by being more parsimonious in our use of information we obtain the converse operation---subtraction---for free.}
  \label{fig:add-control-flow}
\end{figure}
Arguably the most challenging concepts to learn are those of reversible control flow: conditional branching and loops.
Provided the information used to discriminate the different cases in a branch is available before and after, then this branching is clearly reversible. 
The problem comes when merging these branches later: in irreversible languages this merging of control flow is implicit, but in a reversible language we must ensure that we are able to tell from which branch we came immediately after merging and so merging must be made explicit.
In fact, reversibility provides an elegantly symmetric solution to this without introducing a new primitive for merging.
The idea is that merging is simply branching in reverse, and so one reuses the branching primitive and supplies a set of orthogonal conditions that must be satisfied by each branch of the computation before control is merged.
If the information used to perform the original branch is preserved, then the original conditions can be reused; it is not necessary that the final conditions be the same as the initial, however, and this can often be used for good effect when data is transformed to a different---but provably isomorphic---representation.
A loop is similar to a two-branch conditional, except that instead of the execution direction on both sides being in the same direction, they are in opposite directions in order to produce a loop.
The consequence is that effectively the entry to a loop is a merge of control flow: one must know whether a loop iteration is being entered for the first time, or as a continuation, so that on reversal it is possible to `un-enter' the loop.
The termination condition remains the same as in the irreversible case: at the end of each iteration, it is checked whether we should continue looping or exit the loop.
These principles are illustrated abstractly in \Cref{fig:rev-control-flow}, more concretely in \Cref{fig:add-control-flow}, and more extensively in \Cref{chap:aleph}.

\enlargethispage{2\baselineskip}
We conclude with a brief review\footnote{%
  In some code snippets, notation of some arithmetic expressions and operators has been slightly modified where it is thought that the medium of mathematical typesetting provides greater clarity than the \tool{ASCII} source.}\ 
of some (classical) reversible programming languages. 

\langreview\janus To our knowledge, the first such programming language---\janus---was designed by \textcite{janus} in 1982 for a Caltech class project, and later formalised by \textcite{janus-analysis}. It is an imperative procedural language, and is named after the Roman god associated with duality and time. \Cref{lst:revex-janus} illustrates most of its syntax through three increasingly sophisticated examples, or see \Cref{dfn:janus-syntax} for a more detailed exposition.

\begin{listing}[htbp]
    \def\proc{\kw{procedure}}
  \def\If{\kw{if}}
  \def\Then{\kw{then}}
  \def\Else{\kw{else}}
  \def\Fi{\kw{fi}}
  \def\From{\kw{from}}
  \def\Do{\kw{do}}
  \def\Loop{\kw{loop}}
  \def\Until{\kw{until}}
  \def\call{\kw{call}}
  \def\swap{\Leftrightarrow}
\vspace{-\baselineskip}
\begin{sublisting}[t]{.27\linewidth}%
  \small\begin{align*}
    & n ~ x[2] \\[.5em]
    & \proc ~ \ident{fib} \\
    &\qquad \If ~ {n=0} \\
    &\qquad \Then ~ {x_0 \inpl+ 1} \\
    &\qquad \phantom\Then ~ {x_1 \inpl+ 1} \\
    &\qquad \Else ~ {n \inpl- 1} \\
    &\qquad \phantom\Else ~ \call ~ \ident{fib} \\
    &\qquad \phantom\Else ~ {x_0 \inpl+ x_1} \\
    &\qquad \phantom\Else ~ {x_0 \swap x_1} \\
    &\qquad \Fi ~ {x_0 = x_1}
  \end{align*}%
\end{sublisting}%
\begin{sublisting}[t]{.32\linewidth}%
  \small\begin{align*}
    & n ~ m ~ k \\[.5em]
    & \proc ~ \ident{fac} \\
    &\qquad m \inpl+ 1 \\
    &\qquad \From ~ m = 1 \\
    &\qquad \Loop ~ m \swap k \\
    &\qquad \phantom\Loop ~ \From ~ m = 0 \\
    &\qquad \phantom\Loop ~ \Loop ~ m \inpl+ n \\
    &\qquad \phantom{\Loop ~ \Loop} ~ k \inpl- 1 \\
    &\qquad \phantom\Loop ~ \Until ~ k = 0 \\
    &\qquad \phantom\Loop ~ n \inpl- 1 \\
    &\qquad \Until ~ n = 1 \\
    &\qquad n \inpl- 1 \\
    &\qquad m \swap n
  \end{align*}%
\end{sublisting}%
\begin{sublisting}[t]{.41\linewidth}%
  \small\begin{align*}
    & n ~ \ell[99] ~ \sigma[99] ~ i ~ j \\[.5em]
    & \proc ~ \ident{sort} \\
    &\qquad \From ~ {i=0} \\
    &\qquad \Loop ~ {j \inpl+ n-2} \\
    &\qquad \phantom{\Loop} ~ \From ~ {j=n-2} \\
    &\qquad \phantom{\Loop} ~ \Loop ~ \If ~ \ell_j > \ell_{j+1} \\
    &\qquad \phantom{\Loop ~ \Loop} ~ \Then ~ \ell_j \swap \ell_{j+1} \\
    &\qquad \phantom{\Loop ~ \Loop ~ \Then} ~ \sigma_j \swap \sigma_{j+1} \\
    &\qquad \phantom{\Loop ~ \Loop} ~ \Fi ~ \sigma_j > \sigma_{j+1} \\
    &\qquad \phantom{\Loop ~ \Loop} ~ j \inpl- 1 \\
    &\qquad \phantom{\Loop} ~ \Until ~ j = i-1 \\
    &\qquad \phantom{\Loop} ~ j \inpl- i-1 \\
    &\qquad \phantom{\Loop} ~ i \inpl+ 1 \\
    &\qquad \Until ~ i = n-1 \\
    &\qquad i \inpl- n-1
  \end{align*}%
\end{sublisting}
  \caption{Three select examples of \janus\ programs.}
  \label{lst:revex-janus}
\end{listing}

The first example (due to \textcite{janus-analysis}), $\ident{fib}$, is a recursive procedure which computes the $n^{\text{th}}$ and $n+1^{\text{th}}$ \emph{Fibonacci} numbers into $x_0$ and $x_1$ respectively, consuming $n$ in the process.
The second (due to us), $\ident{fac}$, is an iterative procedure that replaces $n$ with $n!$ (providing $n>0$), using $k$ as a `scratch' variables for implementing $m\leftrightarrow m\cdot n$ and $m$ to hold the burgeoning factorial as $n$ is consumed.
The third (due to \textcite{janus}), $\ident{sort}$, performs a bubble sort on the list $\ell$ of length $n<100$, and sets $\sigma$ to the permutation mapping between the original and sorted lists (as a precondition, $\sigma$ must be set to the identity permutation $(1\,2\,\cdots\,99)$).
The absence of dynamic-length arrays or memory allocation (and hence use of 99 as a fixed buffer size in the above) shows a limitation of \janus, though the augmentation of \janus\ to support such features would be a straightforward extension.

\begin{0dfn}[Syntax and semantics of \janus]\label{dfn:janus-syntax}
A program consists of a series of (global) variable declarations and procedure declarations.
Variables are either machine-precision integers, initialised to 0, or fixed-length arrays thereof as indicated by the square brackets.
Variables are modified by in-place updates, or can be swapped with $\Leftrightarrow$.
There are three possible in-place operators, $\inpl+$, $\inpl-$ and $\inpl\oplus$ (in-place bit-wise \gate{XOR}), and the right-hand side of a statement can be any compound expression making use of
  the unary operators $-$ (arithmetic negation), $\neg$ (bit-wise negation);
  the arithmetical binary operators $+$, $\times$, $-$, $/$ (quotient), $\%$ (remainder);
  the bit-wise binary operators $\oplus$, $\land$, $\vee$;
  or the Boolean binary operators $<$, $\le$, $>$, $\ge$, $=$, $\neq$
  where \textsc{False} is represented by 0 and \textsc{True} by $-1$.
These expressions need not be information-preserving, as the semantics make use of a Bennett embedding: computing in the forward direction the value, performing the modification, and then uncomputing the value.
As a result, it is illegal for the variable being modified to appear in the expression modifying it.

Procedures can be called in their forward direction by the aptly-named $\kw{call}$ statement, and in their reverse direction by $\kw{uncall}$.
\janus\ also supports conditionals and looping, per the aforementioned principles.
Conditionals are represented by $\kw{if}~p$ $[\kw{then}~\sigma]$ $[\kw{else}~\sigma']$ $\kw{fi}~q$.
They consist of a branch and unbranch condition, respectively $p$ and $q$, and optional statements to be executed in the \textsc{True} ($\kw{then}$) and \textsc{False} ($\kw{else}$) cases.
If $p$ evaluates to \textsc{True} before the branch, then $q$ \emph{must} evaluate to \textsc{True} afterwards, and similarly for the converse, or else a run-time error occurs.
Loops are similarly defined, represented by $\kw{from}~p$ $[\kw{do}~\sigma]$ $[\kw{loop}~\sigma']$ $\kw{until}~q$.
Upon reaching a loop, $p$ must evaluate to \textsc{True}. The statement(s) $\sigma$, if any, are then executed. Then $q$ is tested; if it evaluates to \textsc{False} then the statement(s) $\sigma'$ are executed, else the loop exits. Otherwise, $p$ is evaluated once again (and must this time evaluate to \textsc{False}), and execution continues from $\sigma$.
Not described here are the rudimentary I/O operations and \janus\ run-time system.
\end{0dfn}

\langreview\psilisp A little while later, in 1992, Henry Baker introduced two \lisp\ variants: \linlisp~\cite{linear-lisp} and \psilisp~\cite{psi-lisp}.
The former employs \emph{linear logic}~\cite{linlog}, in which every value must be consumed precisely once (usage optional), in order to do away with expensive garbage collection in functional languages.
This idea is seeing a resurgence, with support in recent updates to languages like \haskell~\cite{linear-hs}.
Whilst linear logic shares some properties with reversible computation, indeed quantum logic is often contextualised as a subset of linear logic, it is not restrictive enough to guarantee reversibility.
  Baker's etymologically-palindromic \psilisp,
however, does guarantee injectivity of its primitives and is therefore reversible.
It is itself a variant of \linlisp\ appropriately modified.
The most conspicuous modification is that of the $\kw{if}$-expression.
Baker introduces the merge condition in an asymmetrical way, as the return value is not necessarily named: the first two sub-expressions are, respectively, a Boolean \emph{expression} indicating which branch to take, and a Boolean \emph{predicate} to be applied to the result.
These predicates are handled specially, in comparison to the rest of the language, in that they are permitted to be non-invertible.
As with expressions in \janus, these predicates are run forwards to select a branch, and then backwards to erase the garbage.
Unfortunately the paper is somewhat lacking in precise detail of the language design, and so it is not clear how one defines new predicates (other than the mention of an implicitly instantiated history stack created when these predicates are evaluated), nor is it clear what happens if a predicate is called outside of such a scope.

\begin{listing}[htbp]
  \begin{minipage}{\linewidth}\begin{align*}
    & (\kw{defun} ~ \ident{fact} ~ (n) \\
    &\quad (\kw{assert} ~ (\Ident{and} ~ (\ident{integerp} ~ n) ~ ({>} ~ n ~ 0))) \\
    &\quad (\kw{if} ~ (\ident{onep} ~ n) ~ \#'\!\ident{onep} \\
    &\quad \phantom{(\kw{if}} ~ n \\
    &\quad \phantom{(\kw{if}} ~ (\times ~ n ~ (\ident{fact} ~ ({1-} ~\, n)))))
  \end{align*}\end{minipage}
  \caption{The (erroneous) reference implementation of the factorial function in \psilisp}
  \label{lst:revex-psilisp-orig}
\end{listing}

To showcase the syntax and semantics of \psilisp, \textcite{psi-lisp} provides a reference implementation of the factorial function (reproduced in \Cref{lst:revex-psilisp-orig}).
Unfortunately, this definition is \emph{not} reversible as it is not possible, given output 24, to mechanically invert the expression $(\times ~ n ~ (\ident{fact} ~ ({1-} ~~ n)))$ to obtain $n=4$.
In fact this is not a failure of the reversibility of \psilisp, but rather a failure to adhere to the principles of linearity: $n$ is used twice in this expression!
Moreover, the presumed definition of $\times$ is not invertible or possible.
This demonstrates the ease with which one can introduce a bug when programming reversibly, although were an interpreter/compiler for \psilisp\ available it would surely not accept the proposed program.

\begin{listing}[hbtp]
  \begin{minipage}{\linewidth}\begin{align*}
    & (\kw{defun} ~ \ident{first-onep} ~ (\ident{car}~\cdot~\ident{cdr}) ~ (\kw{discard}~\ident{cdr}) ~ (\ident{onep} ~ \ident{car})) \\
    & (\kw{defun} ~ \ident{fac} ~ (n ~ d) \\
    &\quad (\kw{assert} ~ (\Ident{and} ~ (\ident{integerp} ~ n) ~ ({>} ~ n ~ 0))) \\
    &\quad (\kw{if} ~ (\ident{onep} ~ n) ~ \#'\!\ident{first-onep} \\
    &\quad \phantom{(\kw{if}} ~ (\Ident{list}~n~d) \\
    &\quad \phantom{(\kw{if}} ~ (\kw{let}~((n'~d')~(\ident{fac}~({1-} ~~ n)~({1+} ~\, d)) \\
    &\quad \phantom{(\kw{if} ~ (\kw{let}~(}(n''~d'') ~ (\times~n'~d')) \\
    &\quad \phantom{(\kw{if} ~ (\kw{let}} ~ (\Ident{list}~{n''}~({1-}~\,d''))))) \\
    & (\kw{defun}~\ident{fact}~(n)~(\kw{let}~((m~1)~(\ident{fac}~n~1))~m))
  \end{align*}\end{minipage}
  \caption{A corrected implementation of the factorial function in \psilisp}
  \label{lst:revex-psilisp-fixed}
\end{listing}

In \Cref{lst:revex-psilisp-fixed}, we attempt to provide a corrected \psilisp\ implementation of the factorial.
This implementation is an iterative loop disguised in recursive form, as it does not appear that tail-recursion is possible in a reversible context due to the loss of information pertaining to recursion depth.
The original paper is somewhat light on details, and so this definition is somewhat speculative, particularly with regard to user-defined predicates: here we define an auxiliary function $\ident{first-onep}$ which returns whether its first argument is 1 or not, and assume the existence of a $\kw{discard}$ primitive to access the implicit history stack.
We also (reasonably) assume an invertible implementation of $\times$ in which $(\times~x~y)$ evaluates to $(\Ident{list}~xy~y)$, although the linearity restrictions in \psilisp\ mean we must refer to the output $y$ by a different name than the input $y$.
The parameter $d$ tracks recursion depth, increasing from an initial value of $1$ to a maximum of $n_0$, whilst $n$ is simultaneously decremented from $n_0$ to 1.
As we retrace our steps, we multiply $n$ by the depth counter $d$, and then decrement $d$, such that when we finally return from $\ident{fac}$ the depth counter has been restored to its initial value (e.g.\ 1).
The wrapper function $\ident{fact}$ handles supplying and consuming this depth counter, and assumes that \psilisp\ permits pattern matching on constants or some other way to consume a known constant.

\begingroup
  \def\dest{\psi'\!}
  \def\src{\psi}
  \let\alphas\cursivealphaS
  \def\fracprod{\mathbin{\times\kern-0.5ex/}}
  \def\access#1#2{({#1}~{\underline{~~}}~{#2})}
  \long\def\rfor#1=#2to #3 #4{(\kw{for}~{#1}={#2}~\kw{to}~{#3}~#4)}
  \long\def\defsub#1(#2)#3{(\kw{defsub}~\ident{#1}~(#2)~#3)}

\langreview\rlang Another language with \lisp-like syntax, \rlang\ (``yar''), was introduced in 1997 by \textcite{frank-r}. However, unlike \psilisp, the language is procedural and \langc-like---using the \lisp-like syntax for convenience of implementation (which is written in \comlisp). To the best of our knowledge, this is the first, and perhaps only, (classical) language for which a compiler targeting a reversible processor exists; namely, the aforementioned \emph{Pendulum Instruction-Set Architecture} (\pisa). Indeed, the example program in \Cref{lst:revex-r} was compiled to \pisa\ assembly and executed on an emulator (though, to our knowledge, not on a physical \pisa\ implementation such as \pendulum).

\begin{listing}[htbp]
  \begin{minipage}{\linewidth}\begin{align*}
    &\defsub pfunc (\dest~\src~i~\alphas~\varepsilon) {\\
    &\quad (\access{\dest}{i} \inpl+ (\access\alphas i \fracprod \access\psi i))\\
    &\quad (\access{\dest}{i} \inpl- (\varepsilon \fracprod \access\psi{((i+1)\land127)}))\\
    &\quad (\access{\dest}{i} \inpl- (\varepsilon \fracprod \access\psi{((i-1)\land127)}))} \\[.5em]
    &\defsub schstep (\psi_R~\psi_I~\alphas~\varepsilon) {\\
    &\quad \rfor i = 0 to 127 {%\\&\qquad
      (\kw{call}~\ident{pfunc}~{\psi_R}~{\psi_I}~{i}~{\alphas}~{\varepsilon})}\\
    &\quad \rfor i = 0 to 127 {%\\&\qquad
      (\kw{rcall}~\ident{pfunc}~{\psi_I}~{\psi_R}~{i}~{\alphas}~{\varepsilon})}}
  \end{align*}\end{minipage}
  \caption{A \rlang\ implementation of the discretised one-dimensional Schrödinger wave equation}
  \label{lst:revex-r}
\end{listing}

Semantically, \rlang\ is not dissimilar to \janus, but it does support local variable scoping which makes it more powerful as a structured programming language.
On the other hand, it does not (as of 1999) possess the ability to have different conditions on entry and exit for conditionals and loops, nor does it appear to support unbounded $\kw{for}$ loops.
Nonetheless, such features can be simulated and so this is not a fundamental limitation of the language.

The example in \Cref{lst:revex-r} begs some further discussion.
The Schrödinger equation was first discretised in a reversible way by \textcite{schro-rev}.
It has the remarkable property that, although its update rule is in some sense inexact, it nevertheless conserves amplitude (observed by Fredkin and proved by Feynman).
To understand the example, we first briefly sketch how the Schrödinger equation,
  $i\hbar\pdv{t}\psi = \hat H\psi$
which describes the time evolution of the wavefunction $\psi$ under the action of the Hamiltonian $\hat H$, may be discretised.
This equation has the formal solution $\psi(t)=\exp(\smash{-i\hat H t/\hbar})\psi(0)$; this is formal in the sense that $\hat H$ is an operator, which in common formulations takes the form of a Hermitian matrix or a partial differential operator, and so some care must be taken when manipulating it.
Supposing we wish to use a timestep $\delta t$, we can write $\psi(t+\delta t)=\exp(\smash{-i\hat H\delta t/\hbar})\psi(t)$ for any value of $t$.
It may be more convenient to supply the program with $\hat U\equiv \exp(\smash{-i\hat H\delta t/\hbar})$, having also pre-discretised $\psi$ such that $\hat U$ is a matrix and $\psi$ a vector, in which case the simulation can be performed exactly and reversibly\footnote{%
  It may be observed that matrix multiplication for non-singular $\hat U$ (which $\hat U$ is, being unitary) may be performed reversibly using Gaussian-elimination.
  Specifically, applying the row reduction algorithm to $[\hat U^{-1}|\mathbb 1|\vec\psi]$ yields $[\mathbb 1|\hat U|\hat U\vec\psi]$ where $\mathbb 1$ is the identity matrix.
  As $\hat U$ is unitary, computing its inverse is trivial (the inverse of a unitary matrix is its conjugate transpose).}.
Here, however, the program receives $\hat H$.
Specifically, we are considering a one-dimensional cyclic potential well in which $\hat H=-\frac{\hbar^2}{2m}\pdv[2]{x}+V$, and the value of the potential $V$ is supplied at each discretised point.
Expanding $\hat U$ for small $\delta t$ and discretising space, we find
$$\psi(x,t+\delta t)-\psi(x,t)=i\frac{\hbar\delta t}{2m\delta x^2}[\psi(x+\delta x,t)-2\psi(x,t)+\psi(x-\delta x,t)]-i\frac{V(x)\delta t}{\hbar}\psi(x).$$
Here the approaches of Fredkin and Frank differ; Frank identifies that the real part of the wavefunction at the next timestep is only informed by the imaginary part of the wavefunction at the previous and vice-versa, in order to use this directly.
In contrast, Fredkin seeks a more symmetric approach, transforming this to a form $\psi(x,t+\delta t)-\psi(x,t-\delta t)$: in this form the reversible implementation is more obvious without needing the observation of real-imaginary orthogonality, but one must retain copies of the wavefunction at two timepoints instead of the one in Frank's version.
With the discretised Schrödinger equation thus derived, Frank's program is a simple transcription of this (after performing the routine algebra to discriminate the real and imaginary parts) with the array $\alphas$ corresponding to $V\delta t/\hbar$ and $\varepsilon$ to the coefficient $\hbar\delta t/m\delta x^2$ (twice the phase rotation per time-step due to the particle's kinetic energy).
It also assumes space has been discretised 128-fold.
Finally we comment on the operator $\underline{~~}$ which indexes into an array, and the operator $\fracprod$, peculiar to \rlang, which returns the upper 32 bits of the 64-bit product of two 32-bit signed integers (the purpose being to implement the rounded product of an integer with a fixed point number in $[-1,1)$).

\endgroup
\begingroup
  \def\pname{\ident}
  \def\call#1(#2)#3 {\pname{#1}(#2)\pname{#3}}
  \long\def\defnn#1#2#3#4#5{\pname{#1}(#2)\,\{#5\}\,(#3)\pname{#4}}
  \long\def\defn#1(#2)#3 #4{\defnn{#1}{#2}{#2}{#3}{#4}}
  \def\v{}
  \def\nl{\\[.5\baselineskip]}

\langreview\kayak Perhaps marking the debut of reversible programming into public perception, Ben Rudiak-Gould introduced \kayak\ in 2002 as an entry~\cite{kayak} for the second instance of the Esoteric Awards (`\emph{Essies}'). The \emph{Essies} are an annual competition in design and use of \emph{esoteric} programming languages. Esoteric programming languages are not targeted for real-world use but instead are experiments in designing languages that may, for example, be intentionally difficult to program in or are very minimal whilst remaining Turing complete. The most well-known esoteric language is \Brainfuck~(\brainfuck), created in 1993 by \textcite{brainfuck}. As well as being known for being hard to program in (though certainly not the most difficult, the honour of which arguably goes to \malbolge~\cite{malbolge}), its syntactic minimalism and semantic simplicity make \brainfuck\ attractive for proving Turing-completeness of new languages and indeed the original distribution of \kayak\ included a \brainfuck-to-\kayak\ compiler (as does our language, \alethe, introduced in \Cref{chap:aleph}). 

\begin{0dfn}[Syntax and semantics of \kayak]\label{dfn:kayak-syntax}
  \kayak\ has very few primitives.
  A program is a series of procedures, each consisting of a bipartite name (e.g.\ $\pname{swap}()\pname{paws}$), an input and output set of $|$-delimited variables (these sets should be of equal length), and a body.
  Procedure bodies can contain one of four `commands': variable invocation, temporary register negation, conditional execution and procedure invocation.
  Variables are infinite stacks of bits, and invoking a variable will pop the top-most bit and place it in the `temporary' register if this register is empty, otherwise it will empty the temporary register onto the top of the named stack.
  The temporary register is local to procedure bodies and conditional blocks.
  Variables not named in the inputs/outputs are also local to procedure bodies (but not conditionals), and are initialised to an infinite number of 0s (and should be restored as such upon procedure exit).
  If the temporary register is occupied, then the $|$ command will negate its bit.
  The $[{}\cdots{}]$ command is a conditional block, which will run the nested commands if and only if the temporary register is occupied and set to 1.
  Finally, a procedure can be invoked by calling its bipartite name with some variables; for example, $\call swap(\v x|\v y)paws $ will swap the contents of the stacks $\v x$ and $\v y$.
  Procedures can be run in reverse by reversing their names; for example, a factorial calculated using the program in \Cref{lst:revex-kayak} can be un-calculated by calling $\pname{lairo}(n!)\pname{tcaf}$.
  Of course, if a name is palindromic then its reverse is inaccessible, and so it is recommend that only self-inverse procedures have palindromic names.
  A fascinating property of this syntax is that inverting a program is as simple as reversing the characters of its file and mirroring the characters $\{([<>])\}$.
\end{0dfn}

  \begin{listing}[hbtp]
    \begingroup
    \begin{minipage}[t]{.6\linewidth}\begin{align*}
        & \defn swap(\v a|\v b)paws {} \nl
        & \defn add(\v a|\v b)bus {\v a[~\call add({\v a|\v b})bus ~\v z|\v b~]\v a} \nl
        &\defn less-than(\v a|\v b|\v p)or-equal {\v a\\
          &\quad \phantom|[b[~\call less-than(\v a|\v b|\v p)or-equal ~]b] 
          ~|~ [~\v p|\v p~]\\
        &|\v a} \nl
        &\defn remquo(\v n|\v d|\v q)ddalum {\\
          &\quad \call less-than(\v d|\v n|\v p)or-equal \\
          &\quad p[~\call sub(\v n|\v d)dda ~\call remquo(\v n|\v d|\v q)ddalum ~]q \\
        &}
    \end{align*}\end{minipage}%
    \begin{minipage}[t]{.4\linewidth}\begin{align*}
        & \defn fact(\v n|\v d)orial' {\v n[\v n[ \\
          &\quad \v z|\v n ~ \v z|\v d \\
          &\quad \call fact(\v n|\v d)orial' \\
          &\quad \call muladd(\v n|\v d|\v z)ouqmer \\
          &\quad \call swap(\v n|\v z)paws \\
          &\quad \v d|\v z ~ \v n|\v z ~ \v n|\v z \\
        &]\v n]\v n} \nl
        &\defn fact(\v n)orial {\v z|\v d \\
          &\quad \call fact(\v n|\v d)orial' \\
        & \v d|\v z}
    \end{align*}\end{minipage}
    \endgroup
    \caption{Our \kayak\ implementation of the factorial function}
    \label{lst:revex-kayak}
  \end{listing}
  \begin{listing}[hbtp]
    \begingroup
      \begin{minipage}{\linewidth}\begin{align*}
        & \defn s(\v a|\v b). {} \\
        & \defn t(\v a|\v b). {\v a[\call t({\v a|\v b}). ~\v z|\v b]\v a} \\
        &\defn u(\v a|\v b|\v p). {\v a[b[\call u(\v a|\v b|\v p). ]b] 
          ~|~ [\v p|\v p]|\v a} \\
        &\defn v(\v n|\v d|\v q). {\call u(\v d|\v n|\v p). ~p[\call .(\v n|\v d)t ~\call v(\v n|\v d|\v q). ]q} \\
        & \defn w(\v n|\v d). {\v n[\v n[ \v z|\v n ~ \v z|\v d ~
          \call w(\v n|\v d). ~ \call .(\v n|\v d|\v z)v ~
          \call s(\v n|\v z). ~ \v d|\v z ~ \v n|\v z ~ \v n|\v z ]\v n]\v n} \\
        &\defn x(\v n). {\v z|\v d~\call w(\v n|\v d). ~\v d|\v z}
      \end{align*}\end{minipage}
    \endgroup
    \caption{Our \kayak\ implementation of the factorial function, `golfed' and obfuscated}
    \label{lst:revex-kayak-golfed}
  \end{listing}

\kayak\ seems inspired by \brainfuck\ in terms of its syntactic obscurity, as may be gleaned from \Cref{dfn:kayak-syntax}.
Nevertheless, it is possible to write useful programs in it as we demonstrate in \Cref{lst:revex-kayak} with an implementation of the factorial function.
In fact, the algorithm employed is much the same as that we used in \Cref{lst:revex-psilisp-fixed}.
Moreover, we also provide\footnote{\url{https://github.com/hannah-earley/kayak/blob/master/arith.kayak}} convenience routines for the reading and writing of base-10 \tool{ASCII} representations of numbers, thus showing that the presence of modular abstractions---even in a language as cryptic as \kayak---permits arbitrarily complex programs to be readily written.
In keeping with its obscure intention, we also provide a `golfed' variant in \Cref{lst:revex-kayak-golfed}.

\endgroup
\begingroup
  \def\treq{\overset{\scriptscriptstyle\triangle}=}
  \long\def\defn#1 #2={\ident{#1}~{#2} \treq}
  \long\def\case#1of{\kw{case}~{#1}~\kw{of}}
  \long\def\llet#1= #2 #3in{\kw{let}~{#1}=\ident{#2}~#3~\kw{in}~}
  \def\tuple<#1>{{\langle #1\rangle}}
  \def\Z{\ident Z}
  \def\S#1{\ident S(#1)}
  \def\dup<#1>{\lfloor\tuple<#1>\rfloor}

\langreview\langYAG Whilst \psilisp\ went some way towards the construction of a functional reversible programming language, its incomplete definition leaves it unclear whether this was achieved---not to mention that \lisp-esque languages are not deemed to be `pure' functional languages.
In 2012, \textcite{yag-lang} introduced a language---hereafter referred to as \langYAG---which seems to meet the criterion of being a functional language.
The example of computing Fibonacci numbers from the paper is reproduced in \Cref{lst:revex-yag}, and the semantics are summarised in \Cref{dfn:yag-syntax}.

\begin{listing}[htbp]
  \begin{minipage}{\linewidth}\begin{align*}
    & \defn plus \tuple<x,y> = \case y of \\
    &\phantom{\defn plus \tuple<x,y> ={}}\quad \rlap\Z\phantom{\S u} \rightarrow \dup<x> \\
    &\phantom{\defn plus \tuple<x,y> ={}}\quad \S u \rightarrow \llet \tuple<x',u'> = plus \tuple<x,u> in \tuple<x',\S{u'}> \\[.5em]
    & \defn fib n = \case n of \\
    &\phantom{\defn fib n ={}}\quad \rlap\Z\phantom{\S m} \rightarrow \langle\S\Z, \S\Z\rangle \\
    &\phantom{\defn fib n ={}}\quad \S m \rightarrow \llet \tuple<x,y> = fib n in\\
    &\phantom{\defn fib n ={}\quad\S m \rightarrow{}}\llet z = plus \tuple<y,x> in z
  \end{align*}\end{minipage}
  \caption{A \langYAG\ implementation of the Fibonacci numbers}\label{lst:revex-yag}
\end{listing}

\begin{0dfn}[Syntax and semantics of \langYAG]\label{dfn:yag-syntax}
  Like most functional languages, \langYAG\ is expression oriented.
  A primary difference, however, is that function application may only occur in $\kw{let}$ and $\kw{rlet}$ expressions; this increased indirection helps enforce the reversibility of the semantics.
  These expressions take the form $\kw{(r)let}~p=f~q~\kw{in}~e$ where $p$ and $q$ are what we shall call\footnote{In the original paper, these are called left-expressions by analogy to the common understanding of left-expressions in language syntaxes, however the occurence of \emph{left}-expressions on the \emph{right}-hand side of the equals sign is somewhat confusing.} \emph{patterns}, $f$ is a function name, and $e$ an expression in which to substitute the bindings of $p$.
  The $\kw{let}$ form runs $f$ forwards with input $q$, and the $\kw{rlet}$ runs $f$ backwards with output $q$.
  A pattern is either a variable, e.g.\ $x$, a constructor applied to some number of sub-patterns, e.g.\ the Peano numbers would be encoded inductively by the forms $\Z\equiv\Z()$ and $\S n$, or the special `duplication' pattern $\dup<x>$ (where $\tuple<x>$, $\tuple<x,y>$, etc are syntactic sugar for the tuple constructors $\ident{Tuple}_1(x)$, $\ident{Tuple}_2(x,y)$, etc). The `duplication' pattern is unique\footnote{Duplication and equality testing as concepts are of course not unique to \langYAG, but the particular place this concept holds in the syntax and semantics \emph{is} a seemingly novel feature.} to \langYAG; for a full explanation, the paper~\cite{yag-lang} is the best source, but essentially $\dup<x>$ reduces to $\tuple<x,x>$ whilst $\dup<x,y>$ reduces to $\tuple<x>$ if $x=y$, else to $\tuple<x,y>$, and thus by matching on these two forms via a $\kw{case}$ expression, equality can be tested (and exploited to un-copy in a safe way).
  The remaining expression form is given by $\case p of$ $\{q_i \rightarrow e_i\}_{i=1}^m$, where the body defines some number of branches $m$.
  Here also \langYAG\ takes a slightly different approach from other languages in this space.
  Whilst most languages require that the potential matching patterns $q_i$ be orthogonal, in \langYAG\ this is not the case; instead, the first pattern $q_j$ matching the input $p$ selects the branch taken.
  Similarly, once the branch returns an output from $e_j$, this must be unique \emph{only amongst the expressions $i=1\ldots j$} in contrast to most other languages where it must be unique amongst \emph{all} the expressions.
  Finally, a program is a series of function definitions of the form $f~p\treq e$.
\end{0dfn}

\endgroup
\begingroup
  \def\type{\mathrel{\ensuremath{:\kern0.11ex:}}}
  \def\nat{\mathbb{N}}
  \long\def\body#1{\left|~\begingroup%
    \arraycolsep=0pt%
    \def\arraystretch{1.2}%
    \begin{array}{l>{{}\leftrightarrow{}}l}%
      #1%
    \end{array}%
  \endgroup\right.}
  \long\def\defn#1::#2<->#3.#4;{\ident{#1}\!\type{#2}\leftrightarrow{#3}\\[-0.3em]&\body{#4}}
  \long\def\Defn#1::#2<->#3::#4.#5;{\defn{#1}::{#2}<->{{#3}\type\!\ident{#4}}.{#5};}
  \long\def\Where#1{\\[-0.3em]&\kern0.7em\begin{aligned}\kw{where}~#1\end{aligned}}
  \long\def\where#1::#2;{\Where{&\labl{#1}\type{#2}}}
  \def\labl#1{\ident{#1}\!} %#1{\textit{\textbf{#1\!}}}
  \def\Labl#1.{\labl{#1}\mathbin{\$}}
  \def\iter{\Labl iter .}
  \def\Z{\Ident{Z}}
  \def\S#1{\Ident{S}~#1}
  \def\Left{\Ident{InL}}
  \def\Right{\Ident{InR}}
  \def\bind#1~#2:#3 {\ident{#1}~~\tilde{~}{#2}{:}\ident{#3}~}
  \def\trace{\bind{trace}~f:}

\langreview\theseus

\begin{listing}[htb!]
  \begin{minipage}[b]{.47\linewidth}\begin{align*}
    &\kw{type}~a+b = \Left~a ~|~ \Right~b \\
    &\kw{type}~a\times b = (a,b) \\
    &\kw{type}~\nat = \Z ~|~ \S\nat \\[.5em]
    &\defn trace :: {f{:}(a+b\leftrightarrow a+c)} \rightarrow b <-> c.
      \comment{\cmtSLopen\ definition omitted} ;\\[.5em]
    &\defn addSub :: \nat+\nat <-> \nat+\nat.
      \Left~(\S n) & \Left~n \\
      \Left~\Z & \Right~\Z \\
      \Right~n & \Right~(\S n); \\[.5em]
    &\Defn add$_1$ :: \nat <-> \nat :: sub$_1$.
      n & \trace addSub n; \\[.5em]
    &\Defn add :: \nat\times\nat <-> \nat\times\nat :: sub.
      a,b & \iter a,\Z,b \\
      \iter \S a, a', b & \iter a, \S a', \ident{add$_1$}\,b \\
      \iter \Z, a, b & a, b;
    \where iter :: \nat\times\nat\times\nat;
  \end{align*}\end{minipage}%
  \begin{minipage}[b]{.53\linewidth}\begin{align*}
    &\Defn cantorUnpair :: \nat <-> \nat\times\nat :: cantorPair.
      n & \iter n, (\Z, \Z) \\
      \iter \S n, (\Z, b) & \iter n, (\S b, \Z) \\
      \iter \S n, (\S a, b) & \iter n, (a, \S b) \\
      \iter \Z, (a, b) & (a, b);
    \where iter :: \nat\times(\nat\times\nat); \\[.616em]
    &\defn fib$'$ :: (\nat\times\nat) + \nat <-> (\nat\times\nat) + (\nat\times\nat).
      \Right~n & \iter (n, (\Z,\Z)) \\
      \iter (\S n, (a, b)) & \Labl iter' . (n, \ident{add}\,(b, a)) \\
      \iter (\Z, (a, b)) & \Right~(\ident{add$_1$}~a, \ident{add$_1$}~b) \\
      \Labl iter' . (n, (a, b)) & \iter (n, (a, \S b)) \\
      \Left~(n, (\S a)) & \iter (n, (\S a, \Z)) \\
      \Left~(n, \Z) & \Left~(\ident{cantorUnpair}~n);
    \Where{%
      &\labl{iter}\,\type\nat\times(\nat\times\nat)\\
      &\labl{iter'}\,\type\nat\times(\nat\times\nat)} \\[.616em]
    &\defn fib :: \nat <-> \nat\times\nat.
      n & \trace fib' n;
  \end{align*}\end{minipage}
  \caption{A \theseus\ implementation of the Fibonacci numbers}\label{lst:revex-theseus}
\end{listing}

The last language we will review is \theseus, whose name is inspired by the legendary \emph{Ship of Theseus} in that its computation `[replaces] values by apparently equal values'.
Based on the family of point-free\footnote{%
  Point-free languages do not support explicit variable binding, instead modelling computation as `flow through pipelines'; a point-free approach can often be used to elegant effect, and in languages supporting a mixed approach such as \haskell\ this is encouraged.
  Honourable mentions in this space are the languages \lang{Inv} and \lang{Fun} of \textcite{inv-lang}, whose motivation was in developing a language for bi-directional transformations.
  Bi-directional transformations are another thread in the field of reversible computation, useful for such applications as editors in which there is an isomorphism between the (editable) document description and the resultant document view.%
} combinator calculi $\Pi$ introduced by Roshan James for his graduate thesis~\cite{pi-calc}, \textcite{theseus} introduced an equivalent higher level language.
The motivation for $\Pi$ is to provide a Category-Theoretic underpinning of reversible programming: specifically, $\Pi^0$ corresponds to dagger symmetric traced bimonoidal categories.
Category Theory has long been a source of fruitful cross-fertilisation between the realms of mathematics and computer science, with monadic effects in \haskell\ a prime example.
The typical direction is to attempt to find a category that models an existing programming language, such as \ctgry{Hask}\footnote{%
  The familiar reader will object that \ctgry{Hask} is not, strictly speaking, a category due to the interaction of the primitive $\ident{seq}$ with bottom values.%
}
for \haskell; \theseus\ goes the other direction, deriving a language from a category.
The objective of this is as an aid to reasoning about categorical constructions rather than to produce a useful language as such, nevertheless a reference interpreter exists\footnote{Mirrored at \url{https://github.com/hannah-earley/theseus}}.

\begin{0dfn}[Syntax and semantics of \theseus]\label{dfn:theseus-syntax}
  \theseus\ is based on type isomorphisms, and has the strongest and most developed type system of those reviewed here.
  A \theseus\ program is a series of type definitions and isomorphism (`map') definitions, and its syntax largely resembles a subset of \haskell's.
  A map is defined by a name, a type isomorphism, an optional name for its inverse, and a series of pattern clauses.
  Unlike in \haskell, \emph{either} side of a pattern clause may invoke an isomorphism; if an invocation occurs on a `target' side then it will be called and its return value substituted, as expected, if it occurs on the `matching' side, however, then its inverse will be called and the result then matched.
  There are two further extensions to maps.
  Firstly, maps may be parameterised in order to simulate higher-order isomorphisms (this is just a simulation, as maps are not first-class values in \theseus).
  Secondly, maps may have typed `labels', syntactically represented by $\labl{label}\,\mathbin\$ v$ where $v$ is its appropriately typed argument.
  The function of labels is as a sort of `go-to' operator.
  Implicitly a map with type $a\leftrightarrow b$ and label types $\ell_1$ and $\ell_2$ is transformed to a map $(\ell_1+\ell_2)+a\leftrightarrow (\ell_1+\ell_2)+b$.
  Then, the map is called repeatedly (`traced') until it returns a value of type $b$ and can thus be used to implement looping.
  For example, if such a labelled map $f$ yields a loop of 3 iterations, then we would call the modified map $f'$ four times on $\Right~x$ and then match on the output, i.e.\ $\Right~y=f^4(\Right~x)$.
  A core difference between this language and the others reviewed here is that the clauses of a map must not only be orthogonal, but also \emph{exhaustive}.
  This complicates definitions of even something as simple as adding 1 to a natural number, because the output of $\ident{add$_1$}$ cannot be 0; to circumvent this, explicit divergence of these cases must be employed, i.e.\ introducing an infinite loop.
\end{0dfn}

The semantics of \theseus\ are summarised in \Cref{dfn:theseus-syntax} and illustrated in the example implementation of Fibonacci numbers in \Cref{lst:revex-theseus}.
In \Cref{lst:revex-theseus}, the definitions of $\ident{trace}$, $\ident{addSub}$, $\ident{add$_1$}$ and $\ident{add}$ are due to \textcite{theseus} whilst the remaining definitions, $\ident{cantorUnpair}$, $\ident{fib'}$ and $\ident{fib}$, are due to us\footnote{%
  The original paper also supplies a definition of $\ident{fib}$, but as a Bennett-esque embedding.}.
The aforementioned exhaustivity requirement makes this substantially more complicated than, e.g., that in \Cref{lst:revex-yag} because we must manufacture an exhaustive handling of the error condition. 
In $\ident{fib}$ this is very prominent because the subset of $\nat\times\nat$ corresponding to valid Fibonacci pairs has vanishing measure.
Additionally, we attempt to reduce the extraneous cases by internally operating on $(F_n-1,F_{n+1}-1)$.
James and Sabry do suggest at the end of their paper that it may be possible to extend \theseus\ with more conventional error handling, which we believe would be the more optimal choice for a programming language.
Explicit divergence has a number of downsides in our opinion: from a programmer perspective, coding up divergence conditions is quite cumbersome\footnote{We would estimate it occupied $95\%$ of our time in constructing the $\ident{fib}$ example!}, and there is no `correct' choice of how to do so as these cases are meaningless, thus making it more of an art than a science.
Furthermore, from a user's perspective this divergence behaviour is sub-optimal as one cannot distinguish a long-running/intentionally non-halting computation from an error condition or bug, not to mention the excessive resultant CPU utilisation.
Of course, these objections are besides the point of \theseus, but would serve as a small modification with disproportionate benefit for adoption of \theseus\ as a viable typed functional language for reversible computation.

\endgroup

We will revisit the matter of reversible programming languages in \Cref{chap:aleph}, but for the next three chapters we shall return to the physics of reversible computation.

\endgroup

\begingroup
\pdfsuppresswarningpagegroup=1

\begin{chapter-summary}

  In this chapter we will consider the maximum sustained computational performance---in terms of operations per second---that can be extracted from a given region of space, under fairly general assumptions. Namely, we assume the known laws of physics, and that one will need to supply the system with energy over time in order to keep it going. We show, similarly to early work by \textcite{frank-thesis}, that for any realisable computer reversible computers are strictly better than irreversible computers at any size. We also derive universal scaling laws describing just how much better a reversible computer could be compared to an irreversible computer, proving that the adiabatic \emph{Time-Proportionally Reversible Architectures} (TPRAs) of Frank are the best possible, and suggest a path to achieving this bound with molecular computers.
  
  To illustrate these results, summarised in \Cref{fig:scaling}, suppose we wish to build the most powerful computer we can within some spherical region of space, of radius $R$. If the computer is irreversible, such as conventional silicon-based processors, then it is found that essentially only the surface of this sphere can be used for computation and the interior must be empty or inert. This clearly limits the rate of computation proportional to $4\pi R^2$. The reason for this restriction is thermodynamic, arising from constraints on both supply of power and rejection of heat. If $R$ is very big, such that the computer threatens to collapse into a black hole, then it is found that the computational rate can now only scale proportional to $R$.
  
  For a reversible computer the situation is substantially improved. In principle, a reversible computer can operate without producing an increase in entropy and without requiring any input of energy, and so the rate of computation could scale with the volume. Unfortunately in practice this is not possible, and a more careful consideration of the system's entropy is required. The rate of computation is found to scale proportional to  $\frac{4}{\sqrt{3}}\pi R^{5/2}$, geometrically between the area and volume. For large $R$, general relativistic effects become significant and this falls to the more restrictive bound $R^{3/2}$. For even larger $R$ this eventually falls to proportional to $R$, coinciding with the irreversible computer. The reason for this additional threshold between scaling laws is that the sub-relativistic reversible computers---where gravitational effects on spacetime are negligible---are not taking full computational advantage of their volume due to thermodynamic constraints; as the system gets larger beyond the collapse threshold (when the geometry transitions to a thick shell rather than a sphere) the thermodynamic constraints gradually relax, allowing a scaling that `temporarily' exceeds the limit value of $R$.
  
  Therefore we see that at almost all scales, reversible computers substantially outperform irreversible ones, whilst at extremely small---typically sub-nanometer---and large---order of the visible universe in size---scales they coincide.

\end{chapter-summary}

\chapter{Limits on Computational Rates in Physical Systems}
\label{chap:revi}

\begin{figure}[h!]
  \centering
  \includegraphics[width=.9\textwidth]{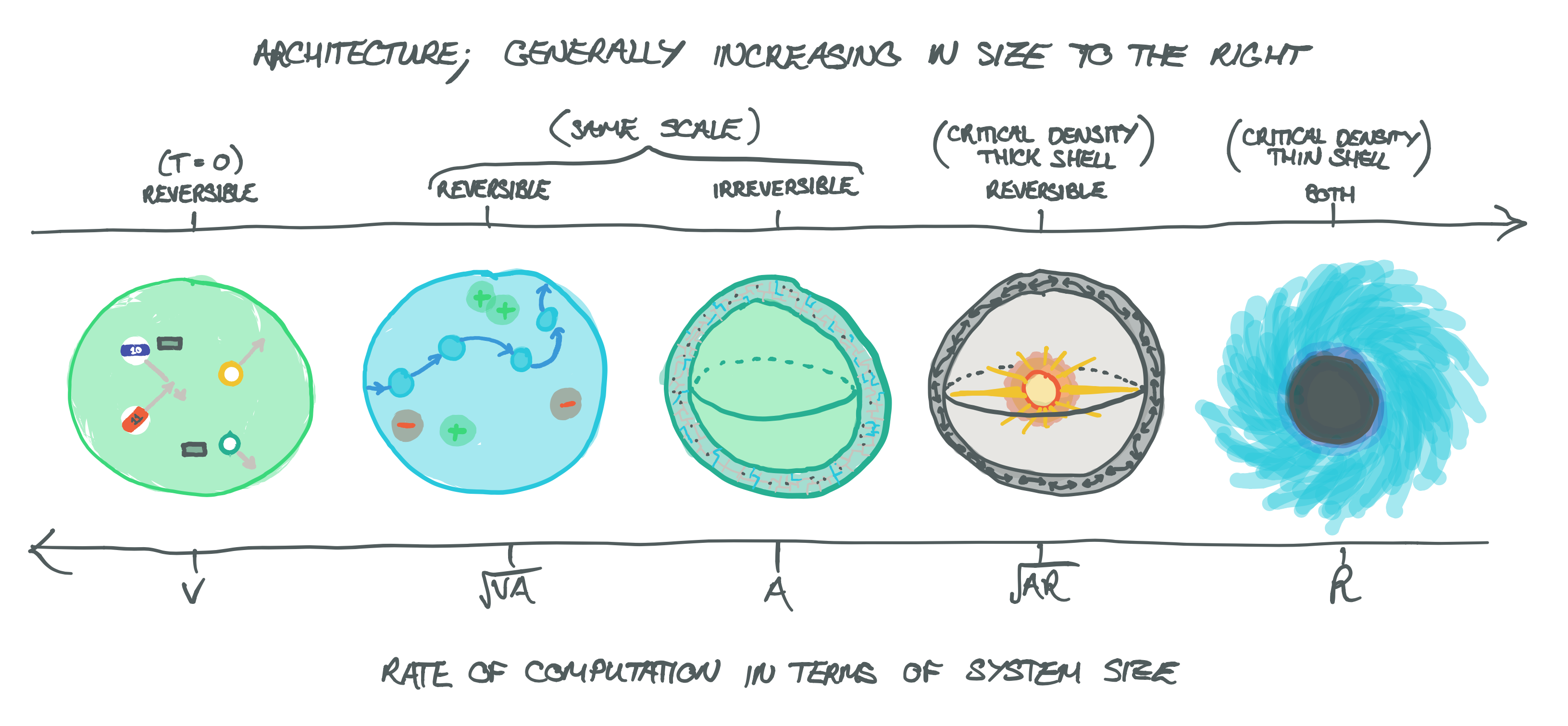}
  \captionsetup{singlelinecheck=off,type=figure}
  
    % 'hack' to set the tabularx width appropriately...
    \newlength\figprefixlen\settowidth{\figprefixlen}{Figure~99:~}
    
  \caption[An illustration of the primary results of this chapter.]{%
    An illustration of the primary results of this chapter. Consider a spherical region of computational matter with radius~$R$, convex surface area~$A$ and enclosed volume~$V$. The expressions indicate how the computational rate of each architecture scales, proportionally. From left to right, these regimes are:%
    \\[1em]%
    \begin{tabularx}{.99\textwidth-\figprefixlen}{rX}%
        $V:$ & reversible computers at absolute zero; \\%
        $\sqrt{AV}:$ & reversible computers at finite temperature; \\%
        $A:$ & irreversible/canonical computers; \\%
        $\sqrt{RA}:$ & critical density (thick shell on the cusp of gravitational collapse); \\%
        $R:$ & critical density (thin shell on the cusp of gravitational collapse).%
    \end{tabularx}%
  }
  \label{fig:scaling}
\end{figure}

\section{Introduction}
\label{sec:intro}

\begin{figure}[tb!]
  \centering
  \begin{subfigure}[b]{.45\linewidth}\centering
    \includegraphics[width=.9\linewidth]{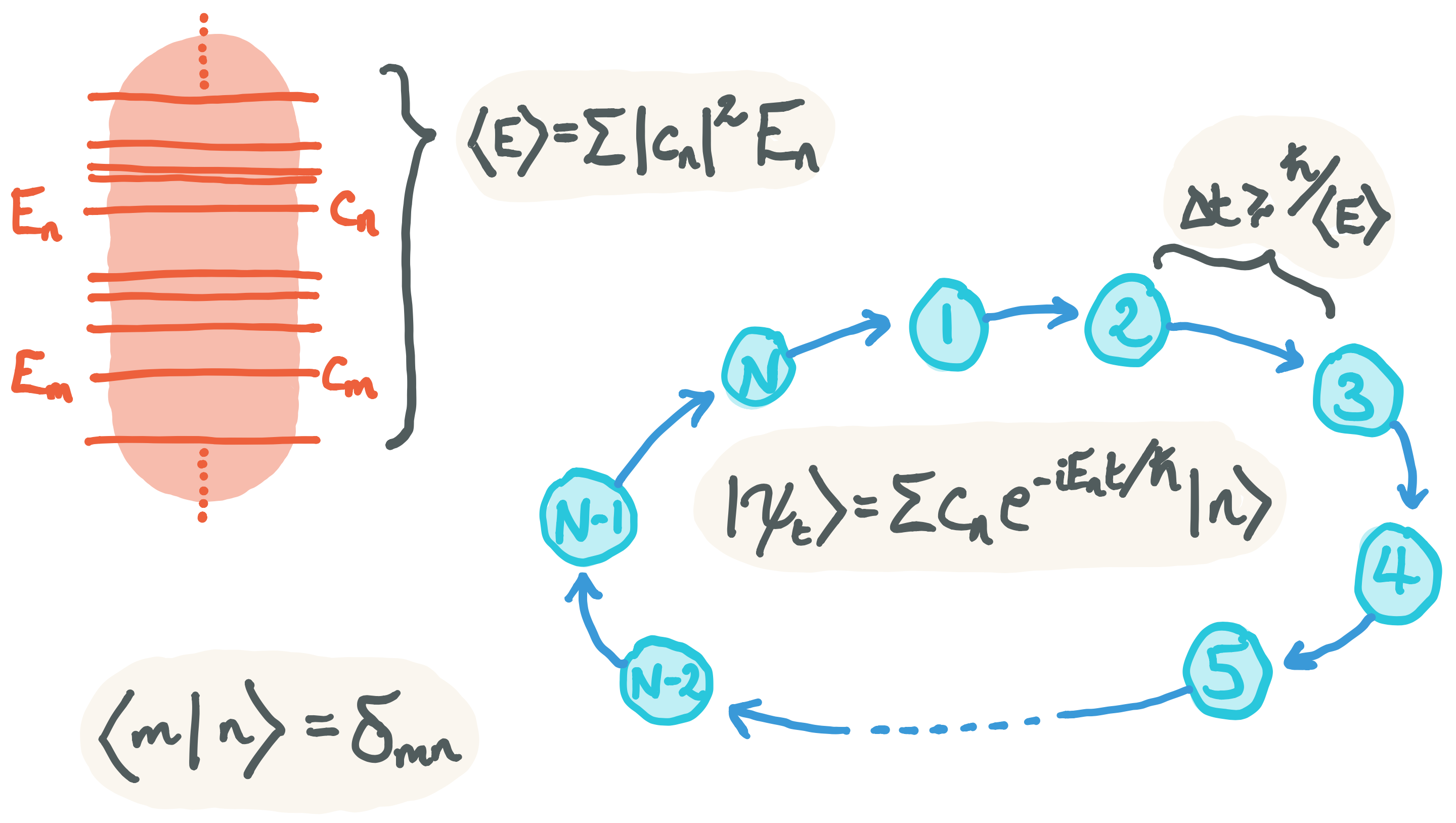}
    \vspace{1cm}
    \caption{}\label{fig:constraint-qm}
  \end{subfigure}%
  \begin{subfigure}[b]{.45\linewidth}\centering
    \includegraphics[width=.9\linewidth]{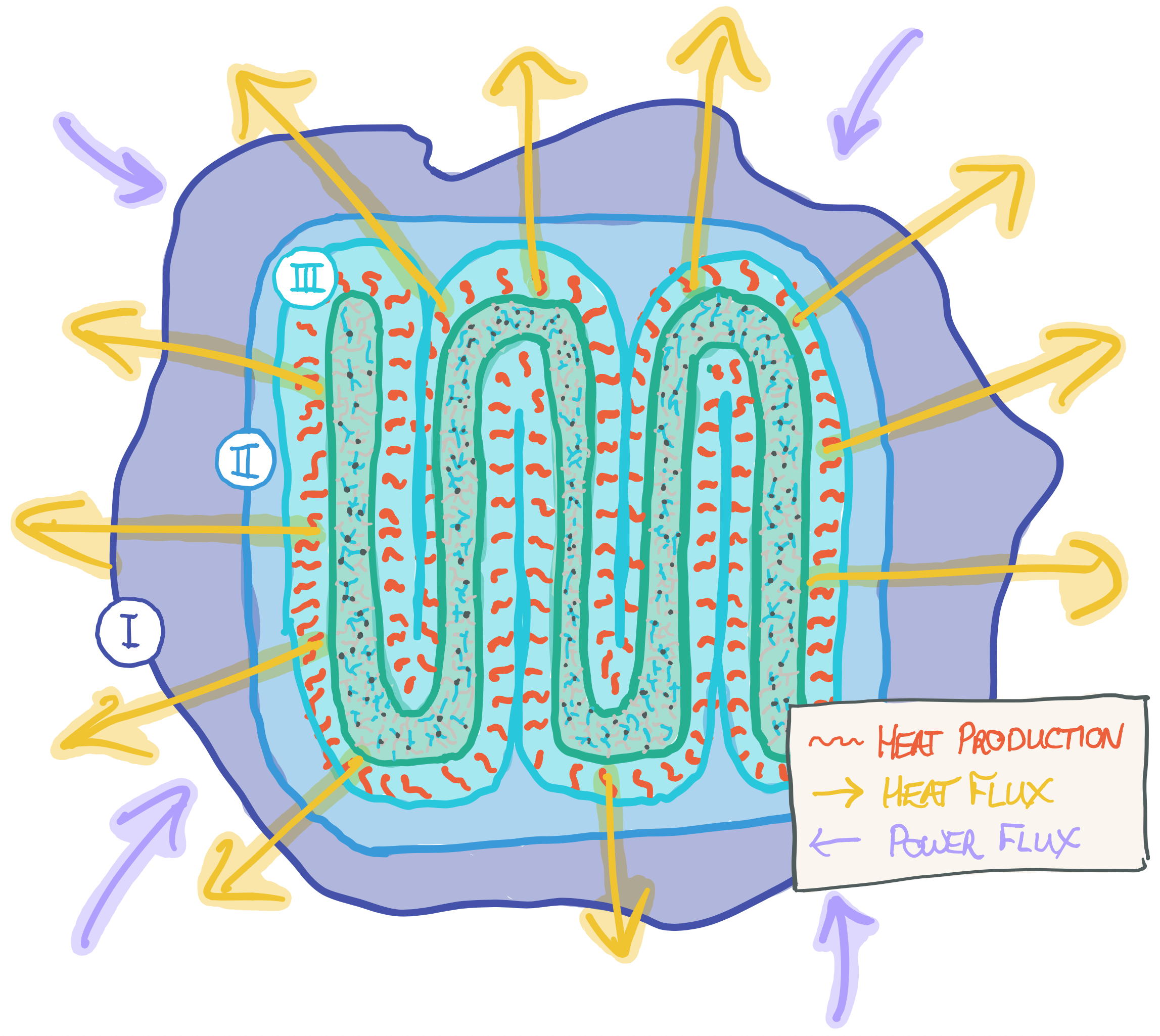}
    \caption{}\label{fig:constraint-td}
  \end{subfigure}
  \caption[An illustration of the three dominant physical constraints that apply to computational systems.]{An illustration of the three dominant physical constraints that apply to computational systems. Constraints (a) and (b) are described in \Cref{sec:intro} and (c) in \Cref{sec:gr}. (a) The quantum mechanical constraint bounding the minimum time for a state transition for a system of given average mass-energy $\evqty{E}$ where $E_0=0$. This illustration includes example energy levels of the eigenstates and a cyclic sequence of orthogonal states $\{1,\ldots,N\}$ which are each a superposition of the eigenstates, with the same average energy $\evqty{E}$. (b) A combination geometric-thermodynamic constraint. Thermodynamically, each computational operation generates entropy (generally heat) and this is bounded from below in the case of irreversible computation, but still non-zero in the reversible case. Geometrically, we can only dissipate this entropy at a rate scaling with the convex bounding surface, (II). The surface (III), whilst larger than (II), is not useful as the entropy flux must still pass through surface (II). Surface (I) is also larger, but the flux must first pass through surface (II). The green region is the computational system proper. At steady state there is an amortised balance between the entropy dissipation flux and input power flux. (continued on next page)}\label{fig:constraint}
\end{figure}
\begin{figure}[tb!]\ContinuedFloat
  \centering
  \begin{subfigure}[b]{.9\linewidth}\centering
    % .02/.02 to align bottom of fig captions
    % .0333/.0333 to align top of fig captions
    \vspace{.02\linewidth}
    \includegraphics[width=.9\linewidth]{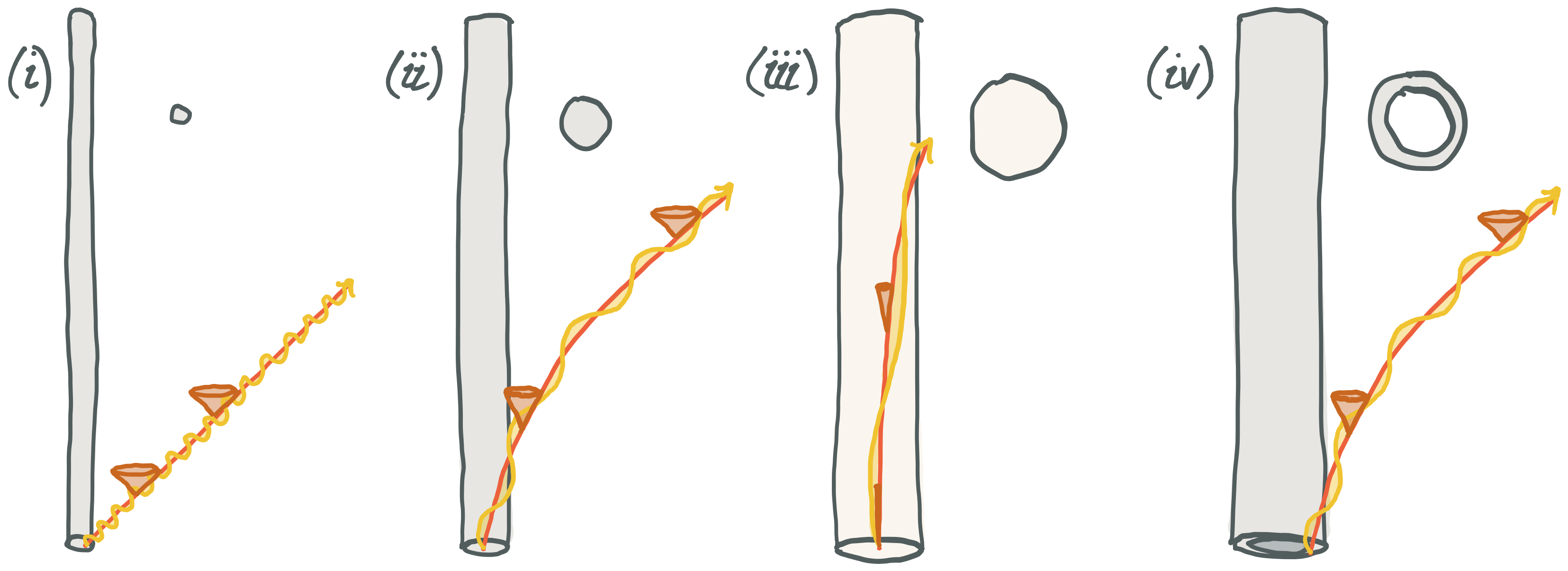}
    \vspace{.02\linewidth}
    \caption{}\label{fig:constraint-gr}
  \end{subfigure}
  \caption[]{(continued) (c) The (general) relativistic constraint which informs the optimal mass distribution of the system. Each example computational system is shown as a space time diagram, with the horizontal axis corresponding to space and the vertical to time. The above-right circles are cross-sections showing the mass distribution. Cones are light-cones showing the local causal structure and indicating spacetime curvature due to the stress-energy tensor. The emanating light waves correspond to the output of the computer. For small computers such as (i), general relativistic effects are negligible. For larger computers such as (ii), space is noticeably curved within the vicinity of the system and so the frequency of emitted photons is red-shifted, meaning that the computer appears slower to a distant observer. For computers on the cusp of gravitational collapse such as (iii)---even if the density is lowered to keep the Schwarzschild radius small, as indicated by the lighter shade of the system---light takes a very long time to escape, diverging to infinity as the system radius approaches $\frac98$ of the Schwarzschild radius.
  The factor $\frac98$ comes from the Chandrasekhar limit~\cite{schwarz-lims} and is explained in \Cref{sec:gr}.
  In fact, this is an ideal case; in practice the computational rate will vanish and the system will spontaneously collapse at even larger radii depending on the heat capacity ratio.
  To avoid this while keeping the same total mass and radius, the system should be reconfigured into a spherical shell of the same density as (i,ii), as depicted in (iv).
  The geometry can then be optimised to minimise the effect of time dilation to at most a threefold slowdown, thus retaining our qualitative power-law scaling.}
\end{figure}

The ubiquity of computer technology---and the increasing demands set upon it by intensive algorithms from fields such as machine learning, physics simulations, and cloud computing, among others---render the question of computer performance of considerable interest. Perhaps the most well known observation on computer performance was published by \textcite{moore-law}, who noticed the trend of microchip component density doubling every 18 months, later dubbed `Moore's law' in his honour. This law seemed to apply to many quantities in computing, including clock speed, \textit{FLOPS}\footnote{\textit{FLOPS}, or FLoating point Operations Per Second, is a common measure of supercomputer performance.} of the top 500 supercomputers combined, and inverse storage cost. A popular debate arose over when, if ever, Moore's law would stagnate; for clock speed, this point has come and gone, with consumer processor clock speeds frozen around \SIrange{3}{4}{\giga\hertz}. 

Whilst it is not surprising that contemporary technology may pose limits on computational performance, it is pertinent to ask whether the laws of physics impose hard bounds on performance. Indeed, our knowledge of quantum physics, thermodynamics and relativity reveals the answer to be affirmative, as hinted at in the introduction chapter. For a brief overview of some of these constraints see, for example, Lloyd's analysis~\cite{lloyd-ultimate}.

In this chapter, we investigate how these bounds vary as a given computational system is scaled up. The results we find apply, with different constants of proportionality, for any given computational architecture. 
A computer constructed from a mix of architectures can be treated as a linear combination, and therefore these scaling results apply to any computer.
This builds upon the work of \textcite{frank-thesis}, proving that his notion of adiabatic \emph{Time-Proportionally Reversible Architectures} (TPRAs) are the best possible, confirming the relevant scaling results in the mesoscopic regime (as well as analysing scaling in very small and very large regimes in more detail), and excluding the possibility of architectures of lower dissipation. Moreover, the detailed calculations herein yield specific constants of proportionality for broad classes of reversible computational architectures.

There are many metrics by which one may wish to measure computational performance. In this chapter, we shall investigate `speed' as defined by the rate of state transitions the computer is able to execute. As we are more concerned with the general form of scaling law rather than the specific constants for a specific architecture, the exact definition of `state' and `transition' are not too important. In general the most `obvious' definitions can be assumed, for example a conventional silicon-based computer would have as its states each possible value of all its registers, memory and any attached storage device, and as its transitions a single machine instruction\footnote{To be specific, we refer to the combination of a single op-code and its parameters.}. More rigorous definitions can be found in quantum mechanics, wherein a state would be an eigenstate of the computational basis Hamiltonian, and a transition would correspond to the complete evolution of the state between orthogonal eigenstates.

Other performance metrics of interest might concern synchronisation between distinct computational elements within a parallel system, and interaction with some arbitrary non-equilibrium system such as a supply of additional memory resources. These shall be covered in \Cref{chap:revii,chap:reviii}, respectively.

\para{Quantum Constraint on Computer Performance}

As a first approach to bounding computational performance, we turn to quantum mechanics. \textcite{bremermann-limit} gave a back of the envelope calculation in 1962, subsequently refined and elaborated upon by \textcite{margolus-levitin} and summarised in \Cref{fig:constraint-qm}, to show that a system with energy $E$ can change state at a maximum rate of $\nu\le E/h_P$ where $h_P$ is Planck's constant. Assuming this energy is due primarily to rest mass, this gives $\nu\le\SI{1.36e50}{\hertz\per\kilogram}$. Restricting our attention to a specific architecture as defined earlier, we assume a fixed density---or at least that the density is bounded from above---such that $\nu\le V\rho c^2/h_P$. This immediately implies that the maximum rate of computation scales with volume. 

\para{Thermodynamic and Geometric Constraint on Computer Performance}

This quantum limit assumes a closed computational system, requiring no external power source.
As we saw earlier this is not possible for conventional computing architectures, being irreversible and hence subject to the Landauer bound~\cite{landauer-limit,szilard-engine}.
To recall, the Landauer bound quantifies the increase in entropy that occurs whenever information is irreversibly discarded into the environment (as must occur when trying to overwrite data or otherwise reset some arbitrary state to a known state).
Explicitly, whenever a quantity of information $I\,\text{bits}$ is discarded, an entropy increase of $\Delta S\ge kI\log 2$ manifests where $k$ is Boltzmann's constant, with equality only in the limit of thermodynamic reversibility in which the discarding process takes an infinite period of time.

In order to sustain computation then, it is necessary to remove the additional entropy at the same amortised rate as its production. In the case of heat, this is achieved by cooling the system; this process can be generalised to other forms of entropy, such as disorder in a spin bath~\cite{landauer-spin}, but we shall restrict out attention to heat for conceptual convenience. We must thus enact a flow of work into the system and heat out in order to sustain computation, but here we encounter a geometric constraint. This energy flow applies at every bounding surface, assuming that the heat is ultimately rejected to infinity. Consider a convex bounding surface of area $A$, and assume our technology of choice is capable of transporting energy up to a certain flux $\phi$, then the maximal power that can be exchanged with the system is given by $P\le\phi A$. As has been established, however, the rate of heat generation scales with the rate of computational transitions for an irreversible computer, and therefore the rate of computation is ultimately bounded by the system's convex surface area, rather than its volume as contended by the quantum mechanical limit. See \Cref{fig:constraint-td} for an illustration.

This is far more restrictive than the volumetric upper bound derived earlier, and suggests that for large computers only a subvolume equivalent to the outermost shell of the volume can perform useful computation with the remaining bulk relegated to dormant inactivity\footnote{The remaining bulk need not be entirely devoid of purpose. It may, for example, be used for storage and memory. It must however have negligible entropy generation, such as that resulting from structural deterioration due to cosmic rays, spontaneous tautomerisation, etc.}. That is, we take the more restrictive of the two bounds, which for small systems is the volumetric bound and for larger (irreversible) systems is this areametric bound.

\para{Ballistic Computation}

Does this areametric thermodynamic bound render the volumetric quantum bound inaccessible? A trivial exception is found at small scales, where the surface area to volume ratio becomes so large that the volumetric bound is in fact more restrictive than the areametric bound, and thus takes priority. To more robustly improve on the rate of computation, however, we must reduce the entropic cost associated with a computational transition.
As was elaborated upon in the introduction chapter, this necessitates conserving information during computation by using the concept of \emph{reversible} computation.
Whilst the utility of reversible computation was initially controversial, with Landauer initially~\cite{landauer-limit} believing it to not be useful, it was soon shown by \textcite{bennett-tm} that reversible computation was viable.
Moreover, he showed how any irreversible program could be efficiently simulated on such a reversible computer without excessive memory overhead, thus proving that reversible computing was equipotent with conventional computing. For a brief exposition of reversible programming, refer back to \Cref{chap:intro}.

An idealised reversible computer would produce no entropy in its transitions.
\textcite{fredkin-conlog} described such an idealised system, a physical model of reversible computing that could operate without being actively powered.
This model consisted of a frictionless table upon which hard elastic billiard balls were projected.
The balls would bounce off of each other and some strategically placed walls, with the precise configuration describing the resultant computation.
As a testament to its capability, Fredkin's student, \textcite{ressler}, proposed the design of a fully programmable billiard ball computer, complete with arithmetic logic unit.
See \Cref{fig:gate-fredkin-bb} for an illustration of the general principles of the billiard ball model.
Given that such a computer would be powered solely by its initial kinetic energy, the volumetric quantum bound would be attainable. Unfortunately, \textcite{bennett-rev} calculated that such a design would ultimately be infeasible as even the most distant and subtle influences would be sufficient to rapidly and completely thermalise the motions of the system.
Such influences could in principle be suppressed by error correction mechanisms, but error correction corresponds to the discarding of excess entropy, and would seem to reconstitute the very issues we were trying to avoid. 

Our last refuge against unwarranted entropic influences is in quantum ground state condensates. Phenomena such as superfluidity and superconductivity arise in a sufficiently cooled system, wherein a macroscopic fraction of particles are found to inhabit their ground state, manifesting macroscopic quantum behaviours and a subsystem with vanishing entropy and temperature. The utility of such systems is manifold, but unfortunately it is doubtful that they can be exploited for dissipationless computation; the reason is that, in order to enact transitions to orthogonal quantum states, the Schrödinger equation requires us to prepare a superposition of (distinct) energy states which then rotates under the action of the Hamiltonian. The presence of non-ground eigenstates will necessarily result in a renewed vulnerability to thermodynamic perturbations. Nevertheless, such cold temperatures are not altogether useless, as unwanted fluctuations would still be significantly suppressed.

\para{Brownian Dynamics}

The fluctuations and dissipation that result from these thermodynamic perturbations lead to diffusion of the distribution in phase space, and hence loss of absolute control over it. Consequently there will be some level of unpredictability to the state of the system and its (generalised) velocities. In the limit of negligible control, the dynamics are (mostly) dominated by the noise source; this regime is referred to variously as Brownian or Langevin dynamics. In \Cref{sec:crn} it shall be shown how to leverage the minimal extant control over the system to make net progress, and to even robustly surpass irreversible computers in performance.

\para{Summary}

Whilst fully dissipationless computation is all but impossible, it is still possible in principle to tune the system so as to bring the entropic transition cost as close to zero as desired. In exchange, the rate of computation may itself be diminished. In the chapter that follows, this compromise is evaluated across the range of known physics---from quantum to classical, non-relativistic to relativistic---covering the full spectrum of feasible computational architectures. In so doing, a universal scaling limit is discovered, exceeding that of irreversible computers yet still falling short of the volumetric quantum bound.

\section{Quantum Ballistic Architectures and the Quantum Zeno Effect}
\label{sec:qze}

We first consider a quantum architecture. A viable quantum architecture must be ballistic in the sense of following an exact prescribed trajectory in phase space, as to do otherwise would lead to decoherence undermining the computational state. Of course we have established that a ballistic quantum system cannot truly be isolated from entropic effects; in particular, the third law of thermodynamics precludes lowering the temperature of any system to absolute zero, and so we must incorporate a heat bath into our analysis.

Suppose the ballistic Hamiltonian is given by $H_0$ and the perturbative effect of the heat bath by $V(t)$, such that the true Hamiltonian is $H=H_0+V$. Further, let the initial density be $\rho$ and introduce the projector $P$ corresponding to the subspace of valid states, such that $P\rho=\rho P=\rho$. To ensure processive computation, we must therefore periodically correct errors introduced by the perturbation. The cost of such error correction will be bounded from below by the entropy increase due to $V$, which can be expressed in terms of the probability of an error $\delta p$ corresponding to the erroneous subspace $P^\perp=1-P$. 

If corrections are to be made at intervals $\delta t$, then this error will be found to be $\delta p(t)=\tr[P^\perp \rho(t+\delta t)]$.
For a time-dependent Hamiltonian, the quantum Liouville equation tells us that $\dot\rho=i[\rho,H/\hbar_P]$ where $\hbar_P$ is Planck's constant.
We can expand $\rho$ as
\begin{align*}
  \rho(t+\delta t) &= \rho + \delta t \pdv{\rho}{t} + \tfrac12\delta t^2 \pdv[2]{\rho}{t} + \bigO{\delta t^3} \\
  &= \rho + i[\rho, \varepsilon] +
    \tfrac12( i\delta t[\rho,\dot\varepsilon] - (\varepsilon^2\rho - 2\varepsilon\rho\varepsilon + \rho\varepsilon^2)) + \bigO{\delta t^3}
\end{align*}
where we have omitted explicit time dependence for brevity and written $\varepsilon\equiv H\delta t/\hbar_P$. We then find
\begin{align*}
  \delta p &= \tr[P^\perp (\rho + i[\rho,\varepsilon + \tfrac12\delta t\dot\varepsilon] - \tfrac12(\varepsilon^2\rho - 2\varepsilon\rho\varepsilon + \rho\varepsilon^2))] + \bigO{\varepsilon^3} \\
    &= \tr[P^\perp \varepsilon\rho\varepsilon] + \bigO{\varepsilon^3}
       \equiv \tr[\rho\varepsilon P^\perp \varepsilon] + \bigO{\varepsilon^3}
\end{align*}
where the last line uses the fact that $P^\perp\rho=\rho P^\perp=0$. Note that if the $\rho\varepsilon P^\perp\varepsilon$ term vanishes, then so do all higher terms, and therefore this term is always the leading order approximant.

In order to evaluate this trace, we first write $H_0$ and $V$ as block matrices in the basis $(P,P^\perp)$, yielding
\begin{align*}
  H_0 &= \begin{pmatrix}
    h_{00} & 0 \\ 0 & h_{11}
  \end{pmatrix}\,, &
  V &= \begin{pmatrix}
    v_{00} & v_{01} \\ v_{10} & v_{11}
  \end{pmatrix}\,.
\end{align*}
Using the idempotence of projectors, i.e.\ $P^\perp\equiv P^\perp P^\perp$, and the fact $\rho\equiv P\rho$, we can see that $\rho HP^\perp H\equiv \rho VP^\perp V=\rho(V^2-VPV)$. Therefore, we have
\begin{align*}
  \delta p &= \qty\Big(\frac{\delta t}{\hbar_P})^2 (\evQty{V^2} - \evQty{VPV})\,.
\end{align*}
The first thing to notice about this expression is that it varies with the square of $\delta t$. This is characteristic of the Quantum Zeno Effect (QZE~\cite{qze})---that is, in the limit of constant measurement (i.e.\ $\delta t\to 0$), we can freeze dynamical evolution. Here, we are performing a partial measurement by projecting onto the subspace of either valid or erroneous computational states, thus allowing computational evolution to continue. Of course, the Zeno rate itself is subject to the Margolus-Levitin limit~\cite{margolus-levitin} and so such constant measurement is not possible.

To proceed in further determining $\delta p$, and thereby $\delta h$, it is important to elaborate on certain architectural details. Margolus and Levitin's analysis shows that one gets the same $E/h_P$ total rate regardless of whether one considers transitions of the system as a whole (`serial') or the combined transitions of a system partitioned into subsystems (`parallel'). For a physically realistic system, measuring large subsystems as a whole is impractical and so a more fine-grained architecture is likely preferable. Nevertheless, we shall proceed generally, finding the result is independent of this detail.

Let the subsystems be indexed by $i\in\mathcal I$. For a fully serial system, $|\mathcal I|=1$. We allow each subsystem to evolve independently, assuming that interaction events are negligibly frequent relative to our Zeno corrections. Specialising our above result for $\delta p$, we have
\begin{align*}
  \delta p_i &= \qty\Big(\frac{\delta t_i}{\hbar_P})^2 (\evQty{V_i^2}-\evQty{V_iP_iV_i})
\end{align*}
The strength of the perturbations $V_i$ will be controlled by an effective temperature for that subsystem, $T_i$. Suppose the subsystem has $n_i$ degrees of freedom\footnote{Excluding any frozen out by low temperatures.}, hereafter referred to as \textit{computational primitives}. By the equipartition theorem, the energies of each primitive will be individually perturbed. For an uncorrelated perturbation, we expect that $\overline{\evQty{V}}=0$ where $\overline x$ is the time-average value of $x$. Therefore, $\overline{\evQty{V^2}}\equiv\var_{\evQty{\bar\cdot}} V$. The expected aggregate perturbation over all primitives, then, is $n_i (k T_i)^2$ assuming $V$ is Gaussian and where $k\equiv k_B$ is Boltzmann's constant.

As for the second term, $VPV$, we expect that on average $V$ will mix (near-)degenerate states indiscriminately. $VPV$ represents the probability that the perturbation stays within the intended computational subspace. The state space can be divided into disjoint computational subspaces, within which the dynamics perform a reversible cycle over computational states. Let the total state space have cardinality $\Omega_i$, and the number of microstates associated with each subspace be $\omega_{ij}$ where $j$ indexes the different subspaces such that $\Omega_i=\sum_j\omega_{ij}$. Assuming similar energy distributions between subspaces, the chance of remaining within the original subspace is $\omega_{ij}/\Omega_i$. We further approximate this by $\omega_i/\Omega_i$ where $\omega_i=\evQty{\omega_{ij}}_j$ is the average subspace cardinality. Equivalently, we can define $\Omega_i/\omega_i \equiv W_i$, the number of distinct programs that the subsystem is capable of executing. Therefore,
\begin{align*}
  \overline{\delta p_i} &= n_i \qty\Big(\delta t_i \frac{k T_i}{\hbar_P})^2 (1-\sfrac1{W_i})
\end{align*}
In fact, we may wish to correct the second term errors in which we jump within the same subspace, as this may interfere with synchronisation between different subsystems (or it may be difficult to ascertain whether we have jumped within the same subspace). In this case, the expression reduces to
\begin{align*}
  \overline{\delta p_i} &= n_i \qty\Big(\delta t_i \frac{k T_i}{\hbar_P})^2 (1-\sfrac1{\Omega_i})
\end{align*}
In any case, $1-\sfrac1{W_i}$ is always finite and $\in[\tfrac12,1)$ for any useful subsystem; that is, a useful subsystem should have $W_i\ge2$. For composite/non-primitive subsystems, we would like the number of useful programs to scale exponentially with the number of primitives, i.e.\ $W_i=g_i^{n_i}$ for some $g_i$ (typically on the order of unity but greater than 1). For a composite subsystem of even moderate size, this exponential scaling will render the $1-\sfrac1{W_i}$ term effectively unity.

We are now able to determine the entropy increase between Zeno cycles. The indiscriminate mixing of the perturbation means that, in the event of an error, the entropy attains its maximum value. This yields the following expression for the information entropy\footnote{We use $H$ to refer to the information theoretical entropy, connected to the thermodynamical entropy as $S=k_BH$. Furthermore, we use the lowercase quantity $h$ to refer to the entropy of a subsystem or particle.},
\begin{align*}
  \delta h_i &= [-(1-\delta p_i)\log(1-\delta p_i) - \delta p_i\log\delta p_i
      + (1-\delta p_i)\log\omega_i + \delta p_i\log\Omega_i] - [\log\omega_i] \\
      &= \delta p_i (1 + \log\tfrac{\Omega_i}{\omega_i} - \log\delta p_i) - \bigO{\delta p_i^2} \\
      &\equiv \delta p_i (\log W_i + 1 - \log\delta p_i) - \bigO{\delta p_i^2} \\
      &\ge n_i\delta p_i \log g_i - \bigO{\delta p_i^2}\,.
\end{align*}
This expression assumes ignorance of states within the current computational subspace; again, we may wish to correct these errors too, in which case we may simply substitute $\omega_i=1$. This would effectively lead to taking $W_i=\Omega_i=g_i'^{n_i}$ where $g_i'\ge g_i$, and so the inequality remains valid.

Putting these together, we get
\begin{align*}
  \dot h_i &\ge \frac{n_i^2}{r_{Z,i}} \underbrace{\zeta_i\qty\Big(\frac{kT_i}{\hbar_P})^2 \log g_i}_{\equiv\gamma_i}
\end{align*}
where $r_{Z,i} \equiv 1/\delta t_i$ is the Zeno measurement rate for the subsystem and $\zeta_i=1-\sfrac1{W_i}\in[\tfrac12,1)$. The aggregate rate of entropy generation is given by $\dot H = \sum_i \dot h_i$ and is subject to the constraint $k\hat T\dot H\le P$ where $\hat T$ is the system temperature that governs the Landauer bound and $P=\phi A$ is the heat dissipation power, proportional to the system's surface area. It may seem surprising that $\dot h_i\propto n_i^2$. In fact, this only applies for sufficiently small $\delta p_i$; when $\delta p_i$ becomes significantly large, $\dot h_i$ approaches its maximum of $n_i r_{Z,i} \log g_i$.

We wish to maximise the computational rate, subject to this constrained entropy production. Per Margolus and Levitin, this rate depends on the combined energy of the computational primitives. If the average energy per primitive in subsystem $i$ is $\varepsilon_i$, then the energy available for computation is $E_C=\sum_i \varepsilon_i n_i$ and the computational rate is subject to $R_C\le E_C/h_P$.

Introducing the Lagrangian multiplier $\alpha$, we wish to maximise $\Lambda$ with respect to $n_i$;
\begin{align*}
  \Lambda = \sum_i\varepsilon_i n_i - \frac{1}{2\alpha} \sum_i\frac{n_i^2}{r_{Z,i}} \gamma_i 
  \qquad\implies\qquad
  0 = \varepsilon_i - \frac1\alpha \frac{n_i}{r_{Z,i}} \gamma_i\,.
\end{align*}
This gives $\dot h_i \ge \alpha \varepsilon_i n_i$ and so
\begin{align*}
  \dot H \ge \alpha E_C\,;
\end{align*}
solving for $\alpha$ and summing over $i$,
\begin{align*}
  \alpha r_{Z,i} &= \frac{n_i \gamma_i}{\varepsilon_i} \equiv \frac{n_i\varepsilon_i \gamma_i}{E_C\varepsilon_i^2} E_C \\
  \alpha R_Z &= E_C \evqty\Big{\frac{\gamma_i}{\varepsilon_i^2}}_{E_{C,i}}\,.
\end{align*}
where the average is taken over the computational energy distribution and $R_Z$ gives the net Zeno rate of the system as a whole. Substituting into the $\dot H$ constraint and making use of the fact that $R_Z\le E_Z/h_P$,
\begin{align*}
  \frac{P}{k\hat T} \ge \dot H &\ge \frac{E_C^2}{R_Z} \evqty\Big{\frac{\gamma}{\varepsilon^2}} \\
  \frac{P}{k\hat T}\frac{E_Z}{h_P} \evqty\Big{\frac{\gamma}{\varepsilon^2}}^{-1} &\ge E_C^2\,
\end{align*}
where $\hat T$ can be called the `Szilard' temperature: the temperature of the system in which entropy is generated and must be subsequently erased.
Finally, we obtain an expression for the net computational rate $R_C$,
\begin{align*}
  R_C &\le \sqrt{\frac{P}{k\hat T} \frac{E_Z}{h_P} \evqty\bigg{\zeta_i\qty\bigg(\frac{2\pi kT}{\varepsilon})^2 \log g}^{-1}}
  = \frac1{2\pi} \evqty\bigg{\frac{\zeta_i\log g}{\beta^2\varepsilon^2}}^{-1/2} \sqrt{\frac{P}{k\hat T}\frac{E_Z}{h_P}}\,.
\end{align*}
Now, as the upper bound of $P$ is proportional to the bounding surface area, and the upper bound of $E_Z$ is proportional to the system's volume, we get the scaling law
\begin{align*}
  R_C &\lesssim \sqrt{AV} \sim V^{5/6}\,,
\end{align*}
consistent with the scaling law for adiabatic architectures found by \textcite{frank-thesis} and showing it to be an upper bound.

We note that \textcite{levitin-toffoli-full} derive a related expression for the rate of energy dissipation/heat production within a quantum harmonic oscillator (QHO),
\begin{align*}
  P &= \frac{\varepsilon h_P R_C^2}{(1-\varepsilon)N}\,,
\end{align*}
where $\varepsilon$ is the probability of error per state transition and $N$ is the number of QHO energy levels. Given that their analysis assumes maximal $R_C$, however, this simplifies to
\begin{align*}
  P &= \frac{\varepsilon\Delta E}{1-\varepsilon} R_C
\end{align*}
where $\Delta E$ is the separation of energy levels. Given that $\varepsilon$ is assumed constant in the analysis, this then yields an areametric bound on $R_C$. Our analysis permits $\varepsilon$ to vary, and its optimum value as a function of system parameters is obtained. We also remain general in the systems studied, and thus confirm the conjecture that for a fixed error probability the energy-dissipation rate increases quadratically with the rate of computation.

We also note the recent results\footnote{The author gratefully acknowledges Mike Frank for pointing them towards this body of work.} of \textcite{pidaparthi-lent} on the Landau-Zener effect (LZE)~\cite{lze1,lze2}. In a closed quantum system, with no thermal coupling to the external environment, the Landau-Zener effect predicts that the energy dissipation decays \emph{exponentially} with the switching time of the system (the inverse of the computation rate). That this is non-zero even in the absence of an external environment shows the difficulty in achieving ballistic computation even in `idealised' conditions, but it certainly points towards a route to achieving near-ballistic computation in regimes in which thermal coupling is negligible. As expected, once thermal coupling is re-introduced to the system Pidaparthi and Lent show how the adiabatic behaviour is recovered. Moreover, their numerical results indicate that the LZE does not appear to be a practicable approach to achieving super-adiabaticity, except perhaps in a very narrow region of system conditions and size. Moreover, the results of Pidaparthi and Lent show the occurrence of a local minimum of dissipation with respect to switching time that would lead to engineering difficulties in increasing computer size for this parameter range. In any case, our asymptotic results stand, although it would be instructive to characterise the parameter range in which the LZE is exploitable and in which the aforementioned engineering difficulties arise.

\section{Classical Architectures I: A General Lagrangian Approach}

If use of the QZE is not possible, perhaps because of high temperatures or an excessive Zeno rate, then Fermi's golden rule applies, giving $\delta p\propto\delta t$. Consequently $\dot h$---and thus $\dot H$---is subject to a finite lower bound. We find $\dot h\ge n\dot p\log g$, yielding $\dot H\ge E_C\evqty{\dot p\log g/\varepsilon}$. Combining with our constraint $\dot H\le\hat\beta P$, the inequality in $R_C$ becomes
\begin{align*}
  R_C &\le \evqty{\frac{\dot p\log g}{\varepsilon/h_P}}^{-1}\frac{P}{k\hat T}
\end{align*}
where $h_P$ here is Planck's constant, and so the scaling law falls to $R_C\lesssim A\sim V^{2/3}$. That is, any ballistic system not making use of the QZE will be subject to the same areametric limit as irreversible computers.

The breakdown of the QZE corresponds to an increase in thermal coupling. By abandoning the desire to maintain well defined quantum states, we instead enter the incoherent regime of the classical realm wherein quantum effects are significantly suppressed. Proceeding generally, we adopt a Lagrangian formalism in order to remain noncommittal in our choice of coordinates $\vec q$.

We introduce a reasonably general Lagrangian, $\mathcal L=T-V$,
\begin{align*}
  \mathcal L &= [\tfrac12 c_{ij}\dot q_i\dot q_j] - [V + w_i\dot q_i + \mathcal O(\dot{\vec q}{\,}^3)] \equiv \tfrac12\dot q^\top C\dot q - V - W\dot q + \mathcal O(\dot q^3)
\end{align*}
where the coefficients $c$, $V$, $w$...\ may depend continuously on the coordinates $\vec q$, and Einstein notation is used for repeated indices. We have rewritten the Lagrangian in matrix form for concision, where $C$ is a matrix, $V$ a scalar, $W$ a row vector, and $q$ a column vector. We assume the system is (effectively) closed, self-contained, or otherwise independent of its environment, and thus $\mathcal L$ will not formally depend on the time parameter. In addition, this property implies invariance under global translation of any of its coordinates and thus conservation of all associated momenta. We obtain the Hamiltonian $\mathcal H$ and generalised momenta thus
\begin{align*}
  \mathcal H &= \pdv{\mathcal L}{\dot q_i}\dot q_i - \mathcal L = \tfrac12\dot q^\top C\dot q + V + \mathcal O(\dot q^3) \\
  p_i &= \pdv{\mathcal L}{\dot q_i} = C\dot q - W + \mathcal O(\dot q^3)
\end{align*}
Notice that the terms of our potential linearly dependent on the velocities drop out of the Hamiltonian. Those dependent on the cube or higher remain, but will turn out to be negligible at the velocities optimal for computation (and are typically not physically relevant in any case).

For brief collisions, the change in coordinates is negligible and so the spatiotemporal variance in coefficients can be neglected. In this case, we should conserve the Hamiltonian and momenta. We introduce superscripts to the coordinates, $q_i^{(n)}$, to represent different particles. We also introduce notation for the velocities before, $u$, after, $v$, and their change, $\Delta u=v-u$. For a collision between particles $m$ and $n$,
\begin{align*}
  \Delta\mathcal H &= [\tfrac12 v^{(m)\top}C^{(m)}v^{(m)} - \tfrac12 u^{(m)\top}C^{(m)}u^{(m)}] + [\tfrac12 v^{(n)\top}C^{(n)}v^{(n)} - \tfrac12 u^{(n)\top}C^{(n)}u^{(n)}] \\
    &= \tfrac12 (u^{(m)} + v^{(m)})^\top C^{(m)}\Delta u^{(m)} + \tfrac12 (u^{(n)} + v^{(n)})^\top C^{(n)}\Delta u^{(n)} = 0 \\
  \Delta p &= C^{(m)}\Delta u^{(m)} + C^{(n)}\Delta u^{(n)} = 0\,,
\end{align*}
where we use the fact that $C$ is symmetric. Substituting $\Delta p$ into $\Delta\mathcal H$, we find,
\begin{align*}
  2\Delta\mathcal H^{(m,n)} &= [(u^{(m)}+v^{(m)})-(u^{(n)}+v^{(n)})]^\top C^{(m)}\Delta u^{(m)} \\
  0 &= [(2u^{(m)}+\Delta u^{(m)})-(2u^{(n)}+\Delta u^{(n)})]^\top C^{(m)}\Delta u^{(m)} \\
    &= [2(u^{(m)}-u^{(n)}) + (1+C^{(n)-1}C^{(m)})\Delta u^{(m)}]^\top C^{(m)}\Delta u^{(m)} \\
    &= [2(u^{(m)}-u^{(n)}) + (C^{(m)-1}+C^{(n)-1})C^{(m)}\Delta u^{(m)}]^\top C^{(m)}\Delta u^{(m)} \\
    &\equiv [2(u^{(m)}-u^{(n)}) + \mu^{(m,n)-1}C^{(m)}\Delta u^{(m)}]^\top C^{(m)}\Delta u^{(m)} \\
    &= [2\underbrace{\mu^{(m,n)+\sfrac12}(u^{(m)}-u^{(n)})}_{-y} + \underbrace{\mu^{(m,n)-\sfrac12}C^{(m)}\Delta u^{(m)}}_{x}]^\top \mu^{(m,n)-\sfrac12}C^{(m)}\Delta u^{(m)}\,,
\end{align*}
where $\mu$ is the generalised reduced mass of the $(m,n)$ system. We can solve for $x$, and thereby $\Delta u^{(m)}$, by rewriting thus,
\begin{align*}
  |x-y|^2 &= |y|^2\,.
\end{align*}
This form reflects the fact that we lack sufficient constraints to unambiguously solve for the post-collision picture. In Cartesian coordinates for point particles, this manifests as an unknown direction of motion afterwards. This is usually resolved by introducing the constraint that momenta perpendicular to the normal vector between the colliding bodies are unchanged. In our generalised coordinate system, the appropriate constraint is unspecified. The general solution is given by
\begin{align*}
  x &= y+|y|\hat n = |y|(\hat y + \hat n)
\end{align*}
where $\hat n$ is a unit vector of unknown direction. We now find an expression for the change in kinetic energy of our particle of interest, $(m)$,
\begin{align*}
  \Delta\mathcal H^{(m)} &= (u^{(m)} + \tfrac12\Delta u^{(m)})^\top C^{(m)}\Delta u^{(m)} \\
    &= u^{(m)\top} \mu^{(m,n)\sfrac12}|y|(\hat y+\hat n) + \bigOO{\Delta u^{(m)2}} \\
    &= \eta|u^{(m)\top}\mu^{(m,n)}(u^{(m)}-u^{(n)})| + \bigOO{\Delta u^{(m)2}}
\end{align*}
where we have introduced a factor $\eta\in[-2,2]$ to take into account our uncertainty in the final direction, and where we have assumed $|\Delta u|\ll u$. Being more careful, we can determine $\Delta u^{(m)}$ thus,
\begin{align*}
  \Delta u^{(m)} &= C^{(m)-1} \mu^{(m,n)+\sfrac12} |\mu^{(m,n)+\sfrac12}(u^{(n)}-u^{(m)})| (\hat y+\hat n) \\
  |\Delta u^{(m)}| &= \eta' |C^{(m)-1} \mu^{(m,n)}(u^{(n)}-u^{(m)})| \\
    &= \eta' |(C^{(m)}+C^{(n)})^{-1} (C^{(n)} u^{(n)} + C^{(m)} u^{(m)}) - u^{(m)}| \\
    &= \eta' |\tilde u^{(m,n)} - u^{(m)}|
\end{align*}
where $\tilde u^{(m,n)}$ is the average of prior velocities, weighted by their generalised masses $C^{(\cdot)}$, and $\eta'\in[0,2]$ is another uncertainty factor related to $\eta$. When $u^{(m)}$ is large compared to $u^{(n)}$, $(u^{(m)}+\Delta u^{(m)})\approx u^{(m)}$. When it is small, $(u^{(m)}+\Delta u^{(m)})\approx u^{(n)}$. Thus a better approximation to $\Delta\mathcal H$ is given by
\begin{align*}
  \Delta\mathcal H^{(m)} &\approx \eta|\tilde u^{(m,n)\top}\mu^{(m,n)}(u^{(m)}-u^{(n)})|\,.
\end{align*}

$\Delta\mathcal H$ gives the energy lost from the computational particle $(m)$ to some environmental particle $(n)$, and thus is the amount of energy that must be supplied to the computational system and dissipated from the thermal system. In order to proceed we must determine the rate of this energy transfer; if we assume that the kinetic dynamics of the system are significantly faster than the computational transitions, as suggested by our inability to sufficiently control the quantum state, then we can assume negligible extant correlations between the positions and velocities of different particles. Thus we can employ a kinetic theory approach to determine this rate.

We first determine the mean free path $\ell$, the average distance a particle travels between collisions. Given the vanishing correlations, we are able to treat collisions between different species of particle separately. For spherically symmetric particles, we find $\pi (r^{(\alpha)} + r^{(\beta)})^2 \ell^{(\alpha;\beta)} = 1/n^{(\beta)}$ for the mean free path of $\alpha$ particles between collisions with $\beta$ particles, where $n^{(\beta)}=N^{(\beta)}/V^{(\beta)}$ is the number density of $\beta$ particles. More generally, we can replace $\pi(r^{(\alpha)}+r^{(\beta)})2$ with $A^{(\alpha,\beta)}$, the effective collision cross-section for arbitrarily shaped particles averaged over impact orientation. We also need to determine the average relative speed between $\alpha$ and $\beta$ particles,
\begin{align*}
  \bar v_{\text{rel}}^{(\alpha,\beta)2} &= \evQty{(v^{(\alpha)} - v^{(\beta)})^2} \\
    &= \evQty{v^{(\alpha)2}} + \evQty{v^{(\beta)2}} - 2\evQty{v^{(\alpha)} \cdot v^{(\beta)}} \,,\\
  \bar v_{\text{rel}}^{(\alpha,\beta)} &= \sqrt{\bar v^{(\alpha)2} + \bar v^{(\beta)2}} \,,\\
  \nu^{(\alpha;\beta)} &= \bar v_{\text{rel}}^{(\alpha,\beta)} / \ell^{(\alpha;\beta)} \\
    &= A^{(\alpha,\beta)}n^{(\beta)}\sqrt{\bar v^{(\alpha)2} + \bar v^{(\beta)2}}\,,
\end{align*}
where we have used the fact that the velocities are uncorrelated to cancel the cross term. Summing over these rates for each class, we can find the rate of loss of kinetic energy for $\alpha$ particles,
\begin{align*}
  \dot{\mathcal H}^{(\alpha)} &= \sum_\beta \eta^{(\alpha;\beta)} |\tilde u^{(\alpha,\beta)\top} \mu^{(\alpha,\beta)} (u^{(\beta)} - u^{(\alpha)})| A^{(\alpha,\beta)}n^{(\beta)} \sqrt{\bar u^{(\alpha)2} + \bar u^{(\beta)2}}\,.
\end{align*}
When $\alpha$ particles are heavy and fast, this reduces to
\begin{align*}
  \dot{\mathcal H}^{(\alpha)} &= \sum_\beta \eta^{(\alpha;\beta)} |u^{(\alpha)\top} C^{(\beta)} u^{(\alpha)}| A^{(\alpha)}n^{(\beta)} \bar u^{(\alpha)} \\
    &= \sum_\beta \eta^{(\alpha;\beta)} \rho^{(\beta)} A^{(\alpha)} \bar u^{(\alpha)3}
\end{align*}
where $\rho=n|C|$ is the generalised density. Notice that this takes the same form as the equation for hydrodynamic drag, taking the drag coefficient to be $2\eta$. Assuming there exists some correspondence between the generalised velocity $u^{(\alpha)}$ and the rate of computation, we find
\begin{align*}
  P &\ge \sum_\alpha N^{(\alpha)} A^{(\alpha)}\qty\bigg(\ell^{(\alpha)}\frac{R^{(\alpha)}}{N^{(\alpha)}})^3 \sum_\beta \eta^{(\alpha;\beta)}\rho^{(\beta)}
\end{align*}
where $\alpha$ ranges over computational particles and $\ell$ is the characteristic generalised displacement corresponding to a single computational transition. To maximise the net computational rate $R=\sum_\alpha R^{(\alpha)}$, we find that we need to let $\bar u^{(\alpha)}\to 0$, but this violates our assumption that $\alpha$ particles are fast. For finite velocity computational particles, we thus find that our net computational rate scales as $R\lesssim A\sim V^{2/3}$.

In the limit of slow $\alpha$ particles, $|u^{(\alpha)}|\ll|u^{(\beta)}|$, $\dot{\mathcal H}$ instead reduces to
\begin{align*}
  \dot{\mathcal H}^{(\alpha)} &= \sum_\beta \eta^{(\alpha;\beta)} |\tilde u^{(\alpha,\beta)\top}C^{(\beta)}u^{(\alpha)}|A^{(\alpha)}n^{(\beta)}\bar u^{(\beta)}
\end{align*}
where we have taken $\evQty{u^{(\beta)}-u^{(\alpha)}}=\bar u^{(\alpha)}$. If the generalised mass of the $\alpha$ particles is not significantly heavier than the $\beta$ particles, then $\tilde u^{(a,b)}$ will have a strong dependence on $u^{(\beta)}$ and this will lead to the same scaling limit of $R\lesssim A$. In order to improve on this, we must let the $\alpha$ particles be significantly heavier. In this limit, $\tilde u^{(\alpha,\beta)}\approx u^{(\alpha)}$ and we get
\begin{align*}
  \dot{\mathcal H}^{(\alpha)} &= \sum_\beta \eta^{(\alpha;\beta)} |u^{(\alpha)\top}C^{(\beta)}u^{(\alpha)}|A^{(\alpha)}n^{(\beta)}\bar u^{(\beta)} \,,\\
  P &\ge \sum_\alpha N^{(\alpha)}A^{(\alpha)}\qty\bigg(\ell^{(\alpha)}\frac{R^{(\alpha)}}{N^{(\alpha)}})^2 \sum_\beta \eta^{(\alpha;\beta)} \rho^{(\beta)}\bar u^{(\beta)} \\
    &\ge \sum_\alpha \frac{R^{(\alpha)2}}{N^{(\alpha)}} \underbrace{A^{(\alpha)}\ell^{(\alpha)2} \sum_\beta \eta^{(\alpha;\beta)} \rho^{(\beta)}\bar u^{(\beta)}}_{\gamma^{(\alpha)}}\,.
\end{align*}
To proceed, we maximise the rate subject to this power constraint,
\begin{align*}
  \Lambda = \sum_\alpha R^{(\alpha)} - \frac{1}{2\lambda}\sum_\alpha \frac{R^{(\alpha)2}\gamma^{(\alpha)}}{N^{(\alpha)}}
  \qquad\implies\qquad
  0 = 1 - \frac{R^{(\alpha)}\gamma^{(\alpha)}}{\lambda N^{(\alpha)}}
\end{align*}\begin{align*}
  \frac R\lambda &= \sum_\alpha \frac{n^{(\alpha)}}{N\gamma^{(\alpha)}}N = N\evqty\bigg{\frac1{\gamma^{(\alpha)}}} \\
  P &\ge \sum_\alpha \lambda R^{(\alpha)} = \lambda R = \frac{R^2}{N}\evqty\bigg{\frac1{\gamma^{(\alpha)}}}^{-1} \\
  R &\le \sqrt{PN\evqty\bigg{\frac1\gamma}}
\end{align*}
where $N=\sum_\alpha N^{(\alpha)}$ is the total number of computational particles. To maximise the computational rate $R$, then, we let $N$ scale with the volume of the system, leading once again to the scaling law
\begin{align*}
  R \lesssim \sqrt{AV} \sim V^{5/6}\,.
\end{align*}

\section{Classical Architectures II: Brownian Machines}
\label{sec:crn}

Whilst the previous two sections should suffice to cover all physical computational systems, the high mass-low speed limit of the classical system is not ideal from an engineering perspective. For improved practicality, we reformulate this limit in terms of an abstract chemical reaction network (CRN) near equilibrium, and show that we obtain the same result. This result is thus more robust in terms of its attainability.

\paragraph{Entropy generation rate}
We first seek a general expression for the rate of entropy generation for a chemical reaction network. Consider any such dynamical system consisting of a set of species $\{C_j:j\}$ and a set of reversible reactions $\Gamma = \{ \nu_{ij}X_j \longleftrightarrow_{\gamma_i} \nu_{ij}'X_j : i \}$, where we sum over $j$ and the $\nu_{ij}$ are stoichiometries. By the convergence theorem for reversible Markov systems, the system will converge to a unique steady state/equilibrium distribution for any initial conditions. As the reactions are reversible, the steady state will also satisfy detailed balance such that the forward and backward rates coincide separately for each reaction, i.e.\ $\smfwd R_i=\smbwd R_i$, where $R$ is a rate of the whole system rather than per unit volume.

Let us look for a quantity $H$ that measures progress towards equilibrium and satisfies the following properties:
\begin{enumerate}
  \item $H$ is a state function of the system\footnote{More accurately, on the joint state of an ensemble of iid systems.}, depending only on its instantaneous description;
  \item $H$ increases monotonically with time;
  \item $H$ is additive, i.e.\ $H(\cup_iV_i)=\sum_iH(V_i)$ for any set of disjoint regions $V_i$.
\end{enumerate}
By properties 1 and 2 and the convergence theorem, $H$ will approach a unique maximum at equilibrium. Microscopically, reactions occur discretely and so the rate of change of $H$ can be written
\begin{align*}
  \dot H_i &= \fwd R_i\fwd{\Delta h}_i+\bwd R_i\bwd{\Delta h}_i = (\fwd R_i-\bwd R_i)(h_{\text{reactants}}-h_{\text{products}})
\end{align*}
for any reaction $i$, where we have used reversibility to identify $\smfwd{\Delta h}_i=-\smbwd{\Delta h}_i=h_{\text{reactants}}-h_{\text{products}}$ and where $h(\sum_j\nu_{ij}X_j)$ is the entropy of the region associated with precisely $\nu_{ij}$ particles of each species $X_j$. In order to satisfy property 2, the sign of $\smfwd{\Delta h}_i$ must equal that of $\smfwd R_i-\smbwd R_i$.

Now, consider fixing $h_{\text{products}}$ and $\smbwd R_i$, and varying $\smfwd R_i$; this is possible unless the reaction is trivial with $\nu_{ij}=\nu_{ij}'$ for all species $X_j$, i.e.\ it does nothing. To maintain property 2, it must therefore be the case that $\smfwd{\Delta h}_i=h(\smfwd R_i)-h(\smbwd R_i)$. To satisfy property 3, we require $h(\alpha\smfwd R_i)-h(\alpha\smbwd R_i)=h(\smfwd R_i)-h(\smbwd R_i)$ where $\alpha$ is some arbitrary scaling factor of the system. We can therefore infer the functional form $h(x)=\log_b ax$ for some constants $a$ and $b$.

We have therefore derived an entropy-like quantity, unique up to choice of logarithmic unit via $b$ and entropy at $T=0$ via $a$. Without loss of generality we choose $a=1$ and $b=e$, and therefore find that
\begin{align*}
  \dot H &= \sum_i (\fwd R_i - \bwd R_i)\log\frac{\fwd R_i}{\bwd R_i} \equiv 2\sum_i R_i\beta_i\arctanh \beta_i
\end{align*}
where $\beta_i=(\smfwd R_i-\smbwd R_i)/(\smfwd R_i+\smbwd R_i)$ is the effective bias of reaction $i$ and $R_i=\smfwd R_i+\smbwd R_i$ is its gross reaction rate.

\paragraph{Entropy generation rate for elementary chemical reactions}

We now calculate our entropy quantity for elementary chemical reactions and compare against the expected value. The rates for elementary chemical reactions are given by collision theory as
\begin{align*}
  \fwd r_i &= \fwd k_i \prod_j[X_j]^{\fwd\nu_{ij}} &
  \bwd r_i &= \bwd k_i \prod_j[X_j]^{\bwd\nu_{ij}}
\end{align*}
where $\smfwd k_i$ and $\smbwd k_i$ are rate constants and $r$ are rates per unit volume. Assuming no inter-particle interactions, the canonical entropy is given by the Sackur-Tetrode equation which can be obtained simply by considering the spatial distribution of the particles along with any other degrees of freedom. For species $X_i$, consider an arbitrary volume $V_i$ available to it; if the thermal volume of the particles is $\mathcal V_i=\Lambda^3$ where $\Lambda$ is the thermal \emph{de Broglie} wavelength, then there will be $V_i/\mathcal V_i$ loci available to the particle, and so the associated entropy within the volume will be
\begin{align*}
  N_ih(X_i) &= N_i \log\frac{V_i}{\mathcal V_i} - \underbrace{\log N_i!}_{\mathclap{\text{Gibb's factor}}} + N_i(\varepsilon_i-1) \\
  h(X_i) &= \varepsilon_i - \log[X_i]\mathcal V_i + \bigO{\frac{\log N_i}{N_i}}
\end{align*}
where $N_i=[X_i]V_i$ is the number of particles in the volume, and $[X_i]$ the concentration. We have also taken into account indistinguishability of the particles with the Gibb's factor, and allowed for additional degrees of freedom with the $\varepsilon_i$ term. Strictly speaking the spatial distribution should be calculated combinatorially via the multinomial coefficient
\begin{align*}
  \log\frac{(V/\mathcal V)!}{((V/\mathcal V - \sum N_i))!\prod N_i!}
\end{align*}
where we have assumed all the particles inhabit the same region and have the same wavelength. Applying Stirling's approximation and assuming that $\sum N_i\ll V/\mathcal V$, this reduces to $\sum N_i(1-\log[X_i]\mathcal V)$ however, so our simplistic derivation is valid.

Therefore we find the canonical entropy change due to reaction $i$ is given by
\begin{align*}
  \Delta h_i &= \sum_j(\bwd\nu_{ij}-\fwd\nu_{ij})h(X_j)
     = \sum_j(\bwd\nu_{ij}-\fwd\nu_{ij})(\varepsilon_j-\log\mathcal V_j) - \sum_j(\bwd\nu_{ij}-\fwd\nu_{ij})\log[X_j]
\end{align*}
using a slight abuse of notation (the logarithmands are not unitless, but their combination is). Comparing with our quantity $\Delta h=\log\smfwd r/\smbwd r$, we find
\begin{align*}
  \Delta h_i &= \log\frac{\fwd k_i}{\bwd k_i} - \sum_j(\bwd\nu_{ij}-\fwd\nu_{ij})\log[X_j] \\
  \frac{\fwd k_i}{\bwd k_i} &= \qty\bigg(\prod_j\mathcal V_j^{\fwd\nu_{ij}-\bwd\nu_{ij}})\qty\bigg(\exp\sum_j(\fwd\nu_{ij}-\bwd\nu_{ij})\varepsilon_j)
\end{align*}
which should be compared to the Arrhenius equation. Indeed, the $\mathcal V_i$ terms confer the appropriate units to the pre-exponential factor and, by expanding the enthalpy term in terms of $k_BT$ as $\varepsilon_i=\varepsilon_i^{(0)}+\frac1{k_BT}\varepsilon_i^{(1)}+\bigOO{\frac1{k_BT}}^2$, we can group the roughly constant terms of the enthalpy into the pre-exponential factor, leaving the exponential term in the form $e^{-\Delta E/k_BT}$.

\paragraph{Maximising performance in Brownian machines}

We wish to maximise the net reaction rate of the system by selecting the biases $\beta_i$ of each individual reaction, subject to bounded total entropy production. If $R=\sum R_i$ is the total gross reaction rate, then the total net rate is $R_c=\sum R_i\beta_i=R\evqty{\beta}$ where the expectation value is with respect to the fractional contribution of each reaction, i.e.\ weighted by $R_i/R$. We denote this by $R_c$ assuming that each reaction contributes to computational progress; if this is not true then $R_c$ will be less than this value. The entropy rate is similarly given by $\dot H = 2R\evqty{\beta\arctanh\beta}$. We use a Lagrange multiplier approach as usual,
\begin{align*}
  \Lambda &= R\evqty{\beta} - \lambda R\evqty{\beta\arctanh\beta} &
  &\implies&
  R_i &= \lambda R_i\partial_{\beta_i}\beta_i\arctanh\beta_i;
\end{align*}
that is, the optimum is achieved by setting all the $\beta_i$ equal to a constant, $\beta$. This gives $\dot H = 2R\beta\arctanh\beta=\frac{2R_c^2}{R} + \bigOO{\frac{R_c^4}{R^3}}$ and hence
\begin{align*}
  R_c \le \sqrt{\frac{PR}{2kT}}\,,
\end{align*}
leading once again to the scaling limit $R_c\lesssim\sqrt{AV}\sim V^{5/6}$, as the power $P$ scales with area and the gross computation rate $R$ with volume. Here we have presented a constructive proof of this scaling limit, as any valid reversible CRN implementation will yield this scaling limit.

\section{Relativistic Effects at Scale}
\label{sec:gr}

At different scales, from the microscopic to the cosmic, different constraints predominate in the analysis of maximum computation rate. For macroscopic systems at worldly scales, the aforementioned thermodynamic constraints are most directly relevant, yielding an upper bound of $R_C\lesssim\sqrt{AV}$. At sufficiently small scales, however, the surface area to volume ratio will be great enough that this thermodynamic constraint will no longer be limiting, with the volumetric Margolus-Levitin bound instead dominating. This reflects the fact that there is simply insufficient energy enclosed in the system for the resultant rate of computation to saturate the heat dissipation capacity at the boundary. If one were to use a denser computational architecture, the computational capacity of the region could be increased to the point that the power-flux bound is saturated, and thus recovering the more restrictive $\sqrt{AV}$ bound.

It transpires that there are two further regimes. As our systems approach cosmic scales, the threat of gravitational collapse becomes pertinent. In order to avoid this fate, we must lower the average density such that the system's radius always exceeds its Schwarzschild radius\footnote{In reality, the threshold radius for gravitational collapse exceeds the Schwarzschild radius by a factor not less than $\tfrac98$. The reasons for this shall be discussed in due course.}. As the Schwarzschild radius is proportional to the mass of the system, this implies that the computational rate of such large systems varies \textit{linearly} in radius. In fact, there is an intermediate regime: as we have not been utilising the full computational potential of our mass due to the thermodynamic constraint, we can gradually increase said utilisation whilst simultaneously reducing our overall density until the Margolus-Levitin limit is attained, at which point we default to the linear regime. This intermediate regime is thus still described by the $R_C\lesssim\sqrt{PM}$ limit, which in this post-Schwarzschild realm equates to $R_C\lesssim V^{1/2}$. In summary, the scaling laws of these four regimes go as $V$, $V^{5/6}$, $V^{1/2}$ and $V^{1/3}$, in order of increasing scale.

A more detailed analysis reveals that these large systems are subject to a further consideration. So far we have assumed a Galilean invariance worldview. At small and medium scales, this approximation is very good. At larger scales, however, relativistic effects threaten to reduce our overall computation rate via time dilation, and at even larger scales they can constrain the amount of mass we can fit within a volume lest the system undergo gravitational collapse as mentioned earlier. Special relativistic time dilation is easily avoided by minimising relative motion within the computational parts of the system (this implies that message packets will have reduced computational capacity compared to the static computational background, but message packets can generally be assumed static anyway).

Gravitational time dilation is a more serious issue, which cannot be eluded at large scales. In the case of a hypothetical supermassive computational system, local computation proceeds unabated, but distant users interfacing with the system will observe slower than expected operation, as depicted in \Cref{fig:constraint-gr}. To calculate this slowdown factor, we shall proceed by modelling the system as spherically symmetric. This is not unreasonable at these scales, where a body's self-gravitation makes maintaining other geometries unstable and impractical. We will not consider rotational systems which could allow for an ellipsoidal shape, as the requisite angular velocities would almost certainly abrogate nearly all computational progress solely due to special relativistic effects.

The metric\footnote{We use a $(+---)$ signature.} of an isotropic and spherically symmetric body in hydrostatic equilibrium can be obtained via the \emph{Tolman-Oppenheimer-Volkoff} (TOV) equation~\cite{tov-eqn},
\begin{equation}\label{eqn:tov-canonical}
\begin{split}
  \dv{P}{r} &= -\frac{Gm}{r^2}\rho\qty\Big(1+\frac{P}{\rho c^2})\qty\Big(1+\frac{4\pi r^3 P}{mc^2})\qty\Big(1-\frac{2Gm}{rc^2})^{-1} \\
  \dv{\nu}{r} &= -\qty\Big(\frac{2}{P+\rho c^2})\dv{P}{r} \\
      &= \frac{2Gm}{r^2c^2}\qty\Big(1+\frac{4\pi r^3P}{mc^2})\qty\Big(1-\frac{2Gm}{rc^2})^{-1} \\
  \dd s^2 &= e^\nu c^2 \dd t^2 - \qty\Big(1-\frac{2Gm}{rc^2})^{-1}\dd r^2 - r^2 \dd\Omega^2\,,
\end{split}
\end{equation}
where $P$ is the pressure, $m(r)$ is the mass enclosed by the concentric spherical shell of radius $r$, and $\rho$ is the local mass-energy density. All masses are as observed by a distant observer.

The rate of dynamics of a point $P$ as observed from a point $Q$ is well known to be
\begin{align*}
  \frac{R(Q)}{R(P)} &= \sqrt{\frac{g_{00}(P)}{g_{00}(Q)}}\,,
\end{align*}
where $g_{\mu\nu}$ is the metric tensor. We assume our distant observer to be moving slowly (relative to the computer) in approximately flat space, and thus that $g_{00}(Q)=1$. This yields a slowdown factor of $f(P)=\sqrt{g_{00}(P)}$. The total slowdown for an extended body is hence found to be
\begin{align*}
  f &= \frac1M \int_V \dd V \rho \sqrt{g_{00}} \\
    &= \frac1M \int_0^{r_1} \dd r \dv{m}{r} e^{\nu/2}\,, & \text{(spherically symmetric)}
\end{align*}
where $M$ is the total mass and $r_1$ is the least upper bound of its radial extent. 

In order to maximise $f$, we must clearly maximise $\nu$ throughout the body. It can be seen that $\dv{\nu}{r}\ge 0$, and in fact within solid regions where $\rho>0$ this inequality is strict. Furthermore, in empty space the TOV equation breaks down; instead, we use our Schwarzschild matching conditions to find
\begin{align*}
  \dv{\nu}{r} &= \frac{2Gm}{r^2c^2}\qty\Big(1-\frac{2Gm}{rc^2})^{-1} \\ \Delta\nu &= \Delta\log\qty\Big(1-\frac{2Gm}{rc^2})\,,
\end{align*}
which shows that this inequality is always strict whenever $m>0$. $\nu$ at the surface is fixed by the Schwarzschild metric, in accordance with Birkhoff's theorem, and thus for a fixed extent our only control over $\nu$ is the distribution $m$ between the surface and core. Approaching the core from the surface, $\nu$ strictly decreases until $m$ vanishes. Furthermore $\dd\nu/\dd r$ is at least as great as in empty space, being strictly greater in massive regions. Therefore it is desirable to concentrate mass towards the surface.

We assume that the density of our architecture of choice is bounded from above, thus limiting the degree to which we can concentrate mass towards the surface. The optimum is then achieved by a thick shell of limiting density from the surface inwards. The two extreme cases of a thin shell and a solid sphere can be treated exactly, but an arbitrarily thick spherical shell must be treated numerically.

\para{Solid Sphere}

A solid sphere coincides with the Schwarzschild geometry. The usual Schwarzschild metric corresponds to the exterior of the object, instead we must use the interior solution which can be derived from the TOV equation~(\ref{eqn:tov-canonical}), yielding
\begin{align*}
  \sqrt{g_{00}} &= \frac32\sqrt{1-\frac{r_s}{r_1}} - \frac12\sqrt{1-\qty\Big(\frac{r}{r_1})^2\frac{r_s}{r_1}}
\end{align*}
where $r_s=2GM/c^2$ is the Schwarzschild radius. Before computing $f$, note that $g_{00}$ vanishes when $r_s=\tfrac89r_1$. This limit is in fact of great importance; whilst an object of size $r_1=\tfrac98 r_s$ would exceed the Schwarzschild radius and be presumed stable against gravitational collapse, this is not the case. At this point, such an object is unstable towards spontaneous oscillations, its core pressure diverges, and it readily collapses to a black hole~\cite{schwarz-lim-main}. Furthermore, depending on its heat capacity ratio $\gamma$, a gaseous hydrostatic sphere may be unstable at even larger radii as tabulated by \textcite{schwarz-lims}. This $r_1\ge\tfrac98r_s$ threshold is thus a stronger bound on the size of massive objects. Integrating $\sqrt{g_{00}}$ over the mass of this solid sphere, we find
\begin{align*}
  f_{\text{solid}} &= \frac{3}{16 v_s}( (1+6v_s)\sqrt{1-v_s} - v_s^{-1/2}\arcsin\sqrt{v_s} )
\end{align*}
where we have introduced the normalised coordinate $v=r/r_1$. At the $v_s=\tfrac89$ threshold, we get $f_{\text{solid}}\approx 0.1699$.

\para{Thin Shell}

For a thin shell, we first identify that the pressure on the outer boundary vanishes. Let the inner radius be $r_0$ and the thickness be $\delta r=r_1-r_0$ such that $M\approx 4\pi\rho r_1^2\delta r$ and $\delta v\ll 1$. We also introduce the unitless variable $u=P/\rho c^2$,
\begin{align*}
  \dv{u}{r} &= -\frac{4\pi\rho rG}{c^2}(1+u)\qty\Big(\frac{r-r_0}{r} + u)\qty\Big(1-\frac{2Gm}{rc^2})^{-1} \\
  \dv{u}{v} &= -\frac{4\pi\rho rr_1 G}{c^2}(1+u)\qty\Big(\frac{r-r_0}{r} + u)\qty\Big(1-\frac{2GM}{c^2}\frac{r-r_0}{r\delta r})^{-1} \\
    &\approx -\frac{v_s}{2\delta v}(u+1)(u+\beta\delta v)(1-\beta v_s)^{-1} \\
  \Delta u &\approx -\tfrac12v_s(u+1)(u+\beta\delta v)(1-\beta v_s)^{-1}\,,
\end{align*}
where $\beta = \frac{r-r_0}{r_1-r_0} \in[0,1]$. Integrating from the surface inwards, we have $u_1=0$ and $\beta=1$, yielding
\begin{align*}
  u_0 &= \frac{\delta v}{2} \frac{v_s}{1-v_s} + \bigO{\delta v^2}\,.
\end{align*}
Therefore in the limit $\delta v\to 0$, the pressure throughout the thin shell vanishes. As a result, we also have that $\nu$ is constant for all $v\in[0,1]$, taking the value $1-v_s$ and giving
\begin{align*}
  f_{\text{thin}} &= \sqrt{1-v_s}\,.
\end{align*}
At the threshold, we get $f_{\text{thin}}=\tfrac13$.

\para{Thick Shell}

For general $\delta v$, we integrate numerically. We rewrite the TOV equations in normalised and unitless form thus
\begin{align*}
  \dv{\log g}{v} &= \frac{vv_s}{2}\frac{\mu+3u}{\mu_1-v^2v_s\mu}\,, &
  \dv{u}{v} &= -(1+u)\dv{\log g}{v}\,, &
  \dv{f}{v} &= -\frac{3v^2g}{\mu_1}\,,
\end{align*}
where $\mu=1-(v_0/v)^3$, $\mu_1=1-v_0^3$ and $g\equiv \sqrt{g_{00}}$. We integrate the system from $v=1$ down to $v=v_0$, with initial conditions $u_1=0$, $g_1=\sqrt{1-v_s}$ and $f_1=0$. Our slowdown factor is given by $f_0$, and explicit results are given in the following section.

%%%%%%%%%%%%%%%%%%%%%%%%%%%%%
\newlength{\limexsize}
\setlength{\limexsize}{4.5cm}
\newcommand*{
  \begin{subfigure}[t]{\limexsize}
    \centering
    \input{fig-lims-sub-}
    \caption{}\label{fig:lims-}
  \end{subfigure}}[1]{
  \begin{subfigure}[t]{\limexsize}
    \centering
    \input{fig-lims-sub-#1}
    \caption{}\label{fig:lims-#1}
  \end{subfigure}}
\newcommand*{\begin{subfigure}[t]{5.5cm}
  \small% GNUPLOT: LaTeX picture with Postscript
\begingroup
  \makeatletter
  \providecommand\color[2][]{%
    \GenericError{(gnuplot) \space\space\space\@spaces}{%
      Package color not loaded in conjunction with
      terminal option `colourtext'%
    }{See the gnuplot documentation for explanation.%
    }{Either use 'blacktext' in gnuplot or load the package
      color.sty in LaTeX.}%
    \renewcommand\color[2][]{}%
  }%
  \providecommand\includegraphics[2][]{%
    \GenericError{(gnuplot) \space\space\space\@spaces}{%
      Package graphicx or graphics not loaded%
    }{See the gnuplot documentation for explanation.%
    }{The gnuplot epslatex terminal needs graphicx.sty or graphics.sty.}%
    \renewcommand\includegraphics[2][]{}%
  }%
  \providecommand\rotatebox[2]{#2}%
  \@ifundefined{ifGPcolor}{%
    \newif\ifGPcolor
    \GPcolortrue
  }{}%
  \@ifundefined{ifGPblacktext}{%
    \newif\ifGPblacktext
    \GPblacktexttrue
  }{}%
  % define a \g@addto@macro without @ in the name:
  \let\gplgaddtomacro\g@addto@macro
  % define empty templates for all commands taking text:
  \gdef\gplbacktext{}%
  \gdef\gplfronttext{}%
  \makeatother
  \ifGPblacktext
    % no textcolor at all
    \def\colorrgb#1{}%
    \def\colorgray#1{}%
  \else
    % gray or color?
    \ifGPcolor
      \def\colorrgb#1{\color[rgb]{#1}}%
      \def\colorgray#1{\color[gray]{#1}}%
      \expandafter\def\csname LTw\endcsname{\color{white}}%
      \expandafter\def\csname LTb\endcsname{\color{black}}%
      \expandafter\def\csname LTa\endcsname{\color{black}}%
      \expandafter\def\csname LT0\endcsname{\color[rgb]{1,0,0}}%
      \expandafter\def\csname LT1\endcsname{\color[rgb]{0,1,0}}%
      \expandafter\def\csname LT2\endcsname{\color[rgb]{0,0,1}}%
      \expandafter\def\csname LT3\endcsname{\color[rgb]{1,0,1}}%
      \expandafter\def\csname LT4\endcsname{\color[rgb]{0,1,1}}%
      \expandafter\def\csname LT5\endcsname{\color[rgb]{1,1,0}}%
      \expandafter\def\csname LT6\endcsname{\color[rgb]{0,0,0}}%
      \expandafter\def\csname LT7\endcsname{\color[rgb]{1,0.3,0}}%
      \expandafter\def\csname LT8\endcsname{\color[rgb]{0.5,0.5,0.5}}%
    \else
      % gray
      \def\colorrgb#1{\color{black}}%
      \def\colorgray#1{\color[gray]{#1}}%
      \expandafter\def\csname LTw\endcsname{\color{white}}%
      \expandafter\def\csname LTb\endcsname{\color{black}}%
      \expandafter\def\csname LTa\endcsname{\color{black}}%
      \expandafter\def\csname LT0\endcsname{\color{black}}%
      \expandafter\def\csname LT1\endcsname{\color{black}}%
      \expandafter\def\csname LT2\endcsname{\color{black}}%
      \expandafter\def\csname LT3\endcsname{\color{black}}%
      \expandafter\def\csname LT4\endcsname{\color{black}}%
      \expandafter\def\csname LT5\endcsname{\color{black}}%
      \expandafter\def\csname LT6\endcsname{\color{black}}%
      \expandafter\def\csname LT7\endcsname{\color{black}}%
      \expandafter\def\csname LT8\endcsname{\color{black}}%
    \fi
  \fi
    \setlength{\unitlength}{0.0500bp}%
    \ifx\gptboxheight\undefined%
      \newlength{\gptboxheight}%
      \newlength{\gptboxwidth}%
      \newsavebox{\gptboxtext}%
    \fi%
    \setlength{\fboxrule}{0.5pt}%
    \setlength{\fboxsep}{1pt}%
    \definecolor{tbcol}{rgb}{1,1,1}%
\begin{picture}(2820.00,5100.00)%
    \gplgaddtomacro\gplbacktext{%
    }%
    \gplgaddtomacro\gplfronttext{%
      \csname LTb\endcsname%%
      \put(1903,4948){\makebox(0,0)[r]{\strut{}Quantum~~}}%
      \csname LTb\endcsname%%
      \put(1903,4686){\makebox(0,0)[r]{\strut{}Thermo (Rev.)~~}}%
      \csname LTb\endcsname%%
      \put(1903,4424){\makebox(0,0)[r]{\strut{}Thermo (Irrev.)~~}}%
      \csname LTb\endcsname%%
      \put(1903,4162){\makebox(0,0)[r]{\strut{}Grav. Threshold.~~}}%
    }%
    \gplbacktext
    \put(0,0){\includegraphics[width={141.00bp},height={255.00bp}]{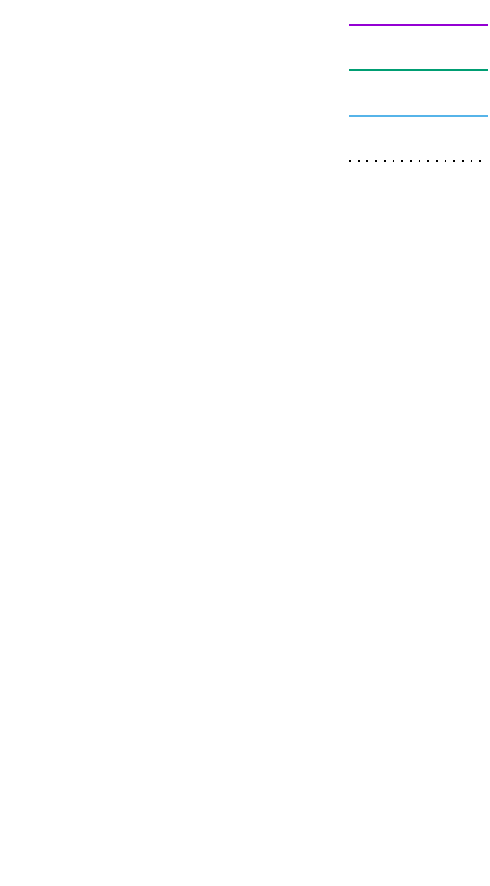}}%
    \gplfronttext
  \end{picture}%
\endgroup

\end{subfigure}}{\begin{subfigure}[t]{5.5cm}
  \small% GNUPLOT: LaTeX picture with Postscript
\begingroup
  \makeatletter
  \providecommand\color[2][]{%
    \GenericError{(gnuplot) \space\space\space\@spaces}{%
      Package color not loaded in conjunction with
      terminal option `colourtext'%
    }{See the gnuplot documentation for explanation.%
    }{Either use 'blacktext' in gnuplot or load the package
      color.sty in LaTeX.}%
    \renewcommand\color[2][]{}%
  }%
  \providecommand\includegraphics[2][]{%
    \GenericError{(gnuplot) \space\space\space\@spaces}{%
      Package graphicx or graphics not loaded%
    }{See the gnuplot documentation for explanation.%
    }{The gnuplot epslatex terminal needs graphicx.sty or graphics.sty.}%
    \renewcommand\includegraphics[2][]{}%
  }%
  \providecommand\rotatebox[2]{#2}%
  \@ifundefined{ifGPcolor}{%
    \newif\ifGPcolor
    \GPcolortrue
  }{}%
  \@ifundefined{ifGPblacktext}{%
    \newif\ifGPblacktext
    \GPblacktexttrue
  }{}%
  % define a \g@addto@macro without @ in the name:
  \let\gplgaddtomacro\g@addto@macro
  % define empty templates for all commands taking text:
  \gdef\gplbacktext{}%
  \gdef\gplfronttext{}%
  \makeatother
  \ifGPblacktext
    % no textcolor at all
    \def\colorrgb#1{}%
    \def\colorgray#1{}%
  \else
    % gray or color?
    \ifGPcolor
      \def\colorrgb#1{\color[rgb]{#1}}%
      \def\colorgray#1{\color[gray]{#1}}%
      \expandafter\def\csname LTw\endcsname{\color{white}}%
      \expandafter\def\csname LTb\endcsname{\color{black}}%
      \expandafter\def\csname LTa\endcsname{\color{black}}%
      \expandafter\def\csname LT0\endcsname{\color[rgb]{1,0,0}}%
      \expandafter\def\csname LT1\endcsname{\color[rgb]{0,1,0}}%
      \expandafter\def\csname LT2\endcsname{\color[rgb]{0,0,1}}%
      \expandafter\def\csname LT3\endcsname{\color[rgb]{1,0,1}}%
      \expandafter\def\csname LT4\endcsname{\color[rgb]{0,1,1}}%
      \expandafter\def\csname LT5\endcsname{\color[rgb]{1,1,0}}%
      \expandafter\def\csname LT6\endcsname{\color[rgb]{0,0,0}}%
      \expandafter\def\csname LT7\endcsname{\color[rgb]{1,0.3,0}}%
      \expandafter\def\csname LT8\endcsname{\color[rgb]{0.5,0.5,0.5}}%
    \else
      % gray
      \def\colorrgb#1{\color{black}}%
      \def\colorgray#1{\color[gray]{#1}}%
      \expandafter\def\csname LTw\endcsname{\color{white}}%
      \expandafter\def\csname LTb\endcsname{\color{black}}%
      \expandafter\def\csname LTa\endcsname{\color{black}}%
      \expandafter\def\csname LT0\endcsname{\color{black}}%
      \expandafter\def\csname LT1\endcsname{\color{black}}%
      \expandafter\def\csname LT2\endcsname{\color{black}}%
      \expandafter\def\csname LT3\endcsname{\color{black}}%
      \expandafter\def\csname LT4\endcsname{\color{black}}%
      \expandafter\def\csname LT5\endcsname{\color{black}}%
      \expandafter\def\csname LT6\endcsname{\color{black}}%
      \expandafter\def\csname LT7\endcsname{\color{black}}%
      \expandafter\def\csname LT8\endcsname{\color{black}}%
    \fi
  \fi
    \setlength{\unitlength}{0.0500bp}%
    \ifx\gptboxheight\undefined%
      \newlength{\gptboxheight}%
      \newlength{\gptboxwidth}%
      \newsavebox{\gptboxtext}%
    \fi%
    \setlength{\fboxrule}{0.5pt}%
    \setlength{\fboxsep}{1pt}%
    \definecolor{tbcol}{rgb}{1,1,1}%
\begin{picture}(2820.00,5100.00)%
    \gplgaddtomacro\gplbacktext{%
    }%
    \gplgaddtomacro\gplfronttext{%
      \csname LTb\endcsname%%
      \put(1903,4948){\makebox(0,0)[r]{\strut{}Quantum~~}}%
      \csname LTb\endcsname%%
      \put(1903,4686){\makebox(0,0)[r]{\strut{}Thermo (Rev.)~~}}%
      \csname LTb\endcsname%%
      \put(1903,4424){\makebox(0,0)[r]{\strut{}Thermo (Irrev.)~~}}%
      \csname LTb\endcsname%%
      \put(1903,4162){\makebox(0,0)[r]{\strut{}Grav. Threshold.~~}}%
    }%
    \gplbacktext
    \put(0,0){\includegraphics[width={141.00bp},height={255.00bp}]{lims-sub-key}}%
    \gplfronttext
  \end{picture}%
\endgroup

\end{subfigure}}

\doublepage{%
\begin{figure}[p!]
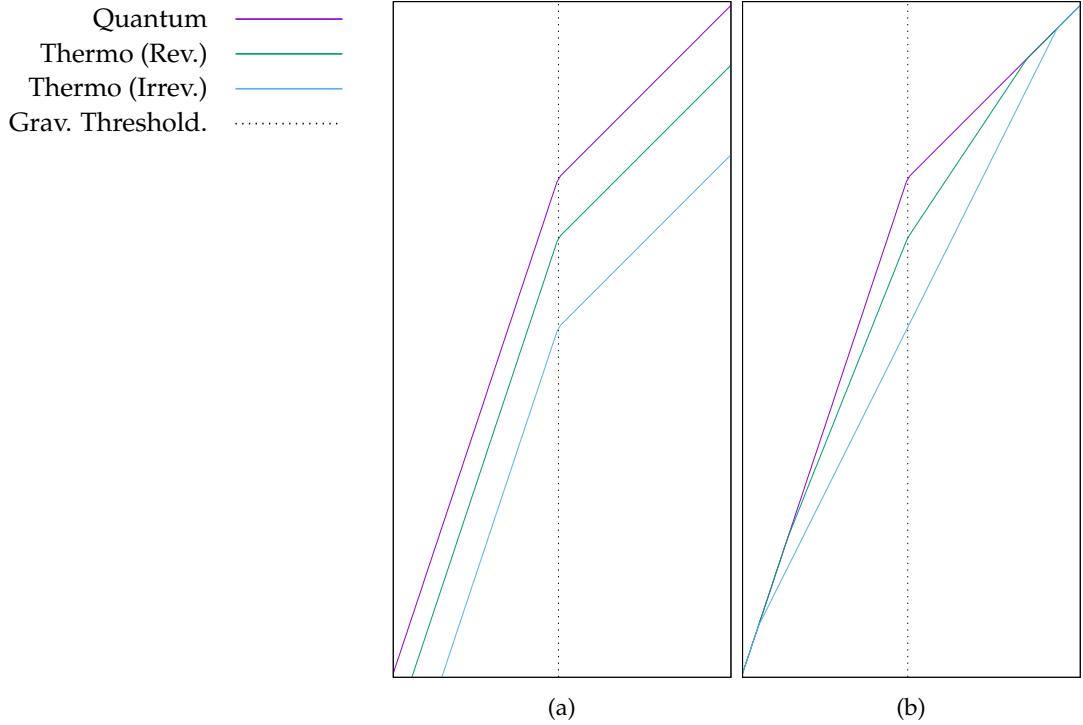

  \centering
  \parbox{\linewidth}{\hfill\begin{subfigure}[t]{5.5cm}
  \small\input{fig-lims-sub-key}
\end{subfigure}
  \begin{subfigure}[t]{\limexsize}
    \centering
    \input{fig-lims-sub-1}
    \caption{}\label{fig:lims-1}
  \end{subfigure}
  \begin{subfigure}[t]{\limexsize}
    \centering
    \input{fig-lims-sub-2}
    \caption{}\label{fig:lims-2}
  \end{subfigure}}
  \caption[A series of idealised examples of the relevant bounds on computation rate.]{%
    A series of idealised examples of the relevant bounds on computation rate.
    In all plots, the vertical axis represents rate of computation $\nu$ in arbitrary logarithmic units, and the horizontal axis the radius $r$ of the system in arbitrary logarithmic units.
    The axes have the same scaling, so that a 45\textdegree\ slope corresponds to a $\nu\propto r$ power law.
    \capnl
    (a) If power supply and heat/entropy dissipation is not limiting, then the rate of computation is proportional to the mass of the system.
    In these examples, the `technology' is fixed so that the computational systems have uniform density.
    Therefore we expect a power law $\nu\propto r^3$.
    Up to a certain threshold radius this is indeed what we see.
    The \emph{Quantum} line corresponds to the Margolus-Levitin quantum constraint on computational performance, $\nu\le mc^2/h_P$.
    The \emph{Thermo (Rev.)} line takes into account that a realistic computer might not achieve this upper bound.
    The \emph{Thermo (Irrev.)} line takes into account that the computational components of an irreversible computer can be no faster than those of a reversible one, and typically they will be slower as it takes time to erase the excess information.
    Often this erasure will be via a damping process.
    At the threshold (marked with a vertical dotted line), the system is on the cusp of gravitational collapse as its radius coincides with its Schwarzschild radius.
    Beyond this point, the mass scales linearly with radius, hence the power laws all fall to $\nu\propto r$ beyond this point.
    \capnl
    (b) In this example we assume the reversible and irreversible architectures saturate the quantum bound at the microscopic level.
    A realistic computer will be constrained by the rate at which free energy can be supplied and entropy removed.
    This scales with $r^2$, and for an irreversible computer this leads to a power law $\nu\propto r^2$ because the rate of entropy generation and free consumption per unit computational mass is bounded from below.
    At sufficiently small scales, the quantum bound is more limiting and so the power law switches to $\nu\propto r^3$; similarly, at sufficiently large scales the quantum bound is again more limiting and the power law switches to $\nu\propto r$.
    The case for reversible computers is more complicated, as analysed in this chapter.
    The power laws are $\nu\propto r^{5/2}$ below the threshold and $\nu\propto r^{3/2}$ above.
  (continued on next page)
}\label{fig:lims}
\end{figure}}{%
\begin{figure}[p!]\ContinuedFloat
  \centering
  \parbox{\linewidth}{\null
  \begin{subfigure}[t]{\limexsize}
    \centering
    \input{fig-lims-sub-3}
    \caption{}\label{fig:lims-3}
  \end{subfigure}
  \begin{subfigure}[t]{\limexsize}
    \centering
    \input{fig-lims-sub-4}
    \caption{}\label{fig:lims-4}
  \end{subfigure}
  \begin{subfigure}[t]{\limexsize}
    \centering
    \input{fig-lims-sub-5}
    \caption{}\label{fig:lims-5}
  \end{subfigure}\hfil}
  \caption[]{%
  (continued)
    The reversible architecture will always be at least as fast as the irreversible architecture.
    \capnl
    (c) This example combines the differing rates of proportionality of (a) with the thermodynamic constraints of (b).
    The dashed lines show the case where thermodynamics is not limiting.
    \capnl
    (d) This example corresponds to (c) except that the density has been increased by a factor 100.
    The dashed lines are the same as the solid lines of (c).
    Below the threshold, the Margolus-Levitin limit and reversible computers are faster (by a factor 100 and a factor 10, respectively).
    The irreversible computer is only faster in the thermodynamically non-limiting range, otherwise the thermodynamic constraint eliminates the benefit of additional computational matter.
    A second consequence of the increased density is that the Schwarzschild threshold is reached sooner.
    Nevertheless, for all architectures the computational rate is at least as great as for the lower density case.
    Therefore increasing density is always helpful at small scales, and not unhelpful at large scales.
    \capnl
    (e) This example corresponds to (b) except that gravitational time dilation has been taken into account$^\ast$.
    The dashed lines are the same as the solid lines of (b).
    The orange asymptotically-vertical dashed line shows what happens when the geometry of the computer is maintained as a sphere of uniform density, even as the density is reduced to avoid gravitational collapse.
    Namely, gravitational time dilation abrogates all computational performance from the perspective of a distant observer.
    The solid lines optimise the geometry of the system to maximise the observed computational performance from the perspective of a distant observer.
    Whilst these are reduced in comparison to (b), it is only by a constant factor of at most 3.
  }
  ~\\\parbox{\textwidth}{\hangindent=0.3cm $^\ast$\scriptsize Time dilation corrections were computed by integrating the thick shell system of differential equations using \tool{dopri5}, and the shell thickness was optimised to maximise the rate using a golden section search. The code was written in \haskell\ and is available, complete with integration and optimisation routines, at \url{https://github.com/hannah-earley/revcomp-relativistic-limits}.}
\end{figure}}

%%%%%%%%%%%%%%%%%%%%%%%%%%%%%

\section{Variance in Performance at Different Scales}

In order to calculate the maximum performance of a system at different sizes, we compute the optimum speed within each constraint---thermodynamic and Margolus-Levitin---with relativistic corrections, and then pick the least of these upper bounds. To apply the relativistic constraint, we substitute $M=\tfrac12 r_s c^2 / G$ for mass where $r_s$ is the Schwarzschild radius, and then multiply by our slowdown factor $f$. For better parametericity, we actually use $r_s=v_s \ell$ where $\ell$ is the system's radius/linear dimension. This yields the two bounds,
\begin{align*}
  R_{\text{thermo.}} &= \alpha \sqrt{\ell^3 v_s} f(v_s,v_0)\,, &
  R_{\text{marg.lev.}} &= \beta \ell v_s f(v_s,v_0)\,,
\end{align*}
where $\alpha$ and $\beta$ are constants of proportionality that depend on the system architecture. Notice that there is in fact some choice available in system geometry; we can pick any constant density thick shell solution, with parameters $(v_0,v_s)$. As we increase $v_s$, the factor $f$ will decrease, and so there will be an optimum pair $(v_0,v_s)$ maximising the given bound. The maximum density constraint is equivalent to setting a maximum value to the mass and hence $v_s$,
\begin{align*}
  v_s &= \frac1\ell \frac{2GM}{c^2} = \ell^2\frac{8\pi\rho G}{3c^2}(1-v_0^3) \\
    &\le v_m \equiv \ell^2\frac{8\pi\rho G}{3c^2} \\
    &\le \frac89\,,
\end{align*}
where the last constraint is to avoid gravitational collapse. For a given $v_m$ then, we have $v_0=\sqrt[3]{1-v_s/v_m}$ and we maximise the two $R$ bounds in $0<v_s\le\min(\tfrac89,v_m)$.

An illustrative example of these regimes is depicted in \Cref{fig:lims}. Notice how the solid sphere, though maximising usable computational matter and thus its local computational rate, has its observable computational rate abrogated as it approaches the point of gravitational self-collapse. This demonstrates how reducing the available matter can enable the observable computational rate to increase, though the difference between the dashed and solid lines shows that there remains a relativistic penalty in the form of gravitational time dilation. We also see that an irreversible system underperforms with respect to a reversible system across a large range of length scales, otherwise matching it at the extremes. The exact range again depends on system architecture specifics, and for some architectures the irreversible system may give transient superior performance at certain scales.

\section{Discussion}\label{sec:revi-disc}

\para{Additional Entropic Sources}

The preceding analysis has been restricted to dissipating entropy arising in the course of the computation itself. However, a physical computer will be subject to other sources of entropy in addition to this. As mentioned in the introduction, Bennett calculated that the influence of turbulence in the atmospheres of nearby stars would be sufficient to thermalise a billiard ball computer within a few hundred collisions. Thus, in general, we must aim to shield our computer from such external influences, for example by error correction procedures. Fortunately, the influences of such external sources may only interact with our system via the same boundary through which we dissipate computational entropy, and as such their rate is at most proportional to this surface area\footnote{Even though such external influences may indeed permeate the entire volume, such as gravitational waves or neutrinos, the \emph{rate} of incidence of these interaction will necessarily scale areametrically. Moreover, gravitational waves and neutrinos only penetrate the entire volume because they are so weakly interacting. A more strongly coupled interaction such as ultraviolet light or particulate matter will primarily affect only the `skin' of the system, and thus illustrates how such external influences are genuinely only areametric in strength.}. Thus, as long as the external sources are not overwhelmingly strong we should be able to suppress them at all scales with equal ease.

A more challenging source of errors and entropy comes from within. Assuming a non-vanishing system temperature, the entire system volume will generate thermodynamic fluctuations at a commensurate rate.
These fluctuations can manifest in a variety of ways, perhaps the most damaging of which are nuclear transitions.
Some nuclear transitions, such as radioactive decay leading to the production of damaging radicals, are not of great concern over long timescales as the proportion of unstable isotopes will decay to zero exponentially fast.
More problematic is the tendency of all nuclei to decay to the baryonic ground state\footnotemark, \ce{^62Ni}~\cite{nickel62} (either by fission or fusion); except for systems composed purely of \ce{^62Ni}, these spontaneous transitions must be reversed in order to maintain the intended structure.
\footnotetext{In fact, the ground state of nuclear matter depends on density, and at higher densities the equilibrium state may be as high as \ce{^118Kr}; beyond this, various forms of degenerate matter may dominate~\cite{baryon-ground-state}.}
Every event leading to a change in computational state or requiring repair invokes an entropic cost, and given that these events occur with a rate proportional to the computationally active volume, this ultimately recovers an areametric bound to our computational rate,
\begin{align*}
  V &\le \frac{P}{kT}\frac1{\dot\eta_{\text{int.fluc.}}}\,.
\end{align*}

This does not necessarily render our reversible computation performance gains unattainable, however. Providing $\dot\eta_{\text{int.fluc.}}$ is sufficiently small, we can outperform irreversible computers at scales up to
\begin{align*}
  \ell &\le \sqrt{\frac{P}{kT}\frac1{\dot\eta_{\text{int.fluc.}}}\frac1{\delta r}}
\end{align*}
where $\delta r$ is the thickness of the irreversible computational shell that can be supported by such an architecture. If this threshold size is sufficiently large as to coincide with the Schwarzschild threshold of \Cref{fig:lims} then such a reversible architecture would in fact outperform its irreversible analogue at all scales. Furthermore, it is likely that $\dot\eta_{\text{int.fluc.}}$ is smaller than it would be in an irreversible system as they have one fewer source of fluctuations to combat; namely, irreversible computers expend significant quantities of energy to ensure unidirectional operation, and any deviation from this is a potentially fatal error. In contrast, a reversible computer is intrinsically robust to such `errors', particularly the Brownian architecture described, which actively exploits this. Expending less energy has the added advantage of permitting lower operating temperatures, further reducing $\dot\eta_{\text{int.fluc.}}$.

\para{Mixed Architectures}

Because of the generality of the $R\lesssim\sqrt{AV}$ scaling law, systems of maximal computational rate can be constructed using any desired mix of architectures in order to meet the requirements of a given problem. Providing entropy and input power can be efficiently transported between the surface and the computational bulk, the entropy generation and power constraints superpose linearly. Thus in principle it is possible to have a mix of quantum and Brownian architectures within a single system. A caveat is that these two architectures may need to operate at different temperatures, and a temperature gradient introduces an additional source of entropy. Thus, the area of the boundary between these subsystems must be at most proportional to the system's bounding area in order that the entropy generated can be effectively countered.

This principle is more general: any non-equilibrium inhomogeneity in the system's structure will result in entropy generation. To quantify this, we assume that such inhomogeneities equilibrate via an uncorrelated diffusive process, and we proceed via a Fokker-Planck approach.

\let\grad\nabla
The Fokker-Planck equation (see \Cref{sec:meth-cont} for a more detailed exploration) describes the evolution of a probability distribution in space due to stochastic dynamics. Specifically, it pertains to the influence of a Langevin force with rapidly decaying correlations in time. In this limit, Brownian particles will exhibit a combination of drift and diffusion depending on the properties of the system. The evolution of the probability distribution $\varphi$ is found~\cite{risken} to obey\footnote{In \textcite{risken}, the convention is $\dot\varphi=-\partial_i\mu_i\varphi+\partial_i\partial_jD_{ij}\varphi$. We use a different convention here for convenience, wherein $\mu_i'=\mu_i-\partial_jD_{ij}$.}
\begin{align*}
  \dot\varphi &= -\grad\cdot [ \mu\varphi - D\grad\varphi ]
\end{align*}
where $\mu$ is a drift vector and $D$ is a diffusion matrix. We also identify the probability current $\mu\varphi-D\grad\varphi$. In order to establish the rate of entropy generation, we must first determine the steady state distribution. We make the assumption that the steady state probability current is everywhere zero, i.e.\ there are no persistent current flows or vortices, and therefore find that
\begin{align*}
  \frac{\nabla\varphi_0}{\varphi_0} &= D^{-1}\mu\,.
\end{align*}

Now we obtain an expression for the entropy of the system. As there are many particles, we can reinterpret $\varphi$ as a concentration of particles (as the Fokker-Planck equation does not impose any normalisation), and so each particle will have an associated entropy of $1+\varepsilon-\log\lambda$ where $\lambda\propto\varphi$. This yields intensional entropy $\eta=\varphi(1+\varepsilon-\log\lambda)$ and $\dot\eta=\dot\varphi(\varepsilon-\log\lambda)$. At equilibrium, $\dot\eta$ cannot have a leading order linear term in $\varphi$, and so $\varepsilon=\log\lambda_0$ which gives
\begin{align*}
  \eta &= \varphi\qty\Big(1-\log\frac{\varphi}{\varphi_0}) \\
  H &= \int \dd V\, \varphi\qty\Big(1-\log\frac{\varphi}{\varphi_0}) \\
  \dot H &= -\int \dd V\, \dot\varphi\log\frac{\varphi}{\varphi_0}\,.
\end{align*}

We now substitute the Fokker-Planck equation for $\dot\varphi$, obtaining
\begin{align*}
  \dot H &= \int \dd S \cdot (\mu\varphi - D\grad\varphi) \log\frac{\varphi}{\varphi_0} - \int \dd V\, (\mu\varphi - D\grad\varphi) \cdot\grad\log\frac{\varphi}{\varphi_0}  \\
    &= \int\dd V\, (D\grad\varphi-\mu\varphi)\cdot\grad\log\frac{\varphi}{\varphi_0} \\
    &= \int\dd V\, D\varphi\qty\Big(\frac{\grad\varphi}\varphi - D^{-1}\mu)\cdot\grad\log\frac{\varphi}{\varphi_0} \\
    &= \int\dd V\, D\varphi\qty\Big(\frac{\grad\varphi}\varphi - \frac{\grad\varphi_0}{\varphi_0})\cdot\grad\log\frac{\varphi}{\varphi_0} \\
    &= \int\dd V\, \varphi\qty\Big(D\grad\log\frac{\varphi}{\varphi_0})\cdot\grad\log\frac{\varphi}{\varphi_0} \\
  \dot H &= \sum_i N_i \evqty\Big{\,\qty\Big| D_i^{1/2} \grad\log\frac{\varphi_i}{\varphi_{i,0}}|^2}\,,
\end{align*}
where $N_i$ is the number of particles of diffusive species $i$. In the second line we have assumed that the probability current across the system boundary vanishes and in the last line we have generalised to multiple diffusive species.

The consequence is that a system may only sustain the use of spatial variation in its architecture if at least one of the following three conditions holds for each such species,
\begin{enumerate}
  \item The region of variation scales with the system's surface area, such as if the gradient is localised to a hemispheric plane,
  \item The diffusion rate $D$ is vanishingly small,
  \item The average relative strength of the gradient, $\grad\log(\varphi/\varphi_0)$, has scaling no greater than $\ell^{-1/2}\sim V^{-1/6}$.
\end{enumerate}
These conditions may be violated only to the extent that another condition is exploited to achieve a commensurate reduction in the entropy rate.

\para{Modes of Computation}

The quantum Zeno architecture is inherently processive, meaning that every transition it makes is useful. This is in contrast to the Brownian architecture, in which only a vanishingly small fraction of transitions lead to a net advance in computational state. There are a number of consequences to this distinction that make the quantum architecture preferable.

Firstly, a quantum architecture equivalent to a given Brownian architecture will only use $N_{\text{QZE.}}\sim\sqrt{AV}$ active computational elements compared to $N_{\text{Brown.}}\sim V$, which significantly reduces the risk of errors due to internal fluctuation. In addition, the properties of the QZE mean that the fluctuations within the computational subvolume of the quantum architecture can be rectified at all scales in contrast to the classical system. Of course, fluctuations in the remaining bulk of the volume are still problematic, but their rectification is less critical.

Secondly, whilst the two systems would have the same net rate of computation, the computational elements of the quantum system operate at the same maximum speed regardless of scale, and the degree of parallelism can be tuned between maximally parallel ($N\sim\sqrt{AV}$) and fully serial ($N=1$). This means that the quantum architecture can operate equally well for parallel and serial problems. One caveat is that the maximal attainable parallelism of the quantum system, $\sim\sqrt{AV}$, is lower than for the Brownian, $\sim V$. In practice this is not too problematic as parallel problems can be simulated serially, and because the total computation rate is independent of the degree of parallelism, the quantum system experiences no penalty for this choice. For the Brownian system, however, even a problem maximally exploiting the available parallelism will be limited by the time for each element to perform a single net operation. In addition, whilst the quantum system has fewer computational foci than the Brownian system, the remaining bulk can be used for `cold' computation such as data storage.

Thirdly, almost all parallel problems will involve inter-element communication and synchronisation. As will be discussed in \Cref{chap:revii}, these processes present significant challenges in the Brownian architecture that in the worst case would throttle the system down to $R\sim A$. This arises because synchronisation processes map to constrictions in phase space, and Brownian diffusion through such a constriction is very slow. As the quantum architecture evolves processively, it is less subject to this effect and thus its computational elements can communicate much more freely.

Fourthly, the quantum architecture permits quantum computation whereas the Brownian architecture is unlikely to be capable of supporting quantum computation. The reason for this is that the average time for each computational transition in the Brownian system is typically large, increasing as the system size does, whereas the decoherence timescale for a quantum state is fixed and typically small. Furthermore, the decoherence timescale is likely significantly smaller than for the QZE architecture as the Brownian system probably operates at a higher temperature. Altogether, these factors make it incredibly unlikely that a quantum state can be reliably maintained through a quantum computation within such an environment.

Despite all these disadvantages, the Brownian architecture is perhaps of more interest as it is ostensibly easier and cheaper to construct with current technology. The most exciting avenue for this may lie in the fields of biological and molecular computing, in which molecules such as DNA are constructed in such a way that their resulting dynamics encode a computational process~\cites{winfree-tam,cardelli-dsd}. Chemical systems are a natural substrate for Brownian dynamics, and the manipulation of DNA is becoming ever cheaper and more sophisticated. With improvements in synthetic biology, we may even be able to readily prepare self-repairing DNA computers in the near future.

We finish by discussing the timestep for a net computational transition in the Brownian architecture. This timestep is controlled by the biases, $b_i$. To maximise the system's computation rate, we take $b_i=\beta\propto1/\sqrt\ell$. If, at operating reactant concentrations, the time for any computational step (forwards or backwards) is $t_0$, then the net timestep is given by 
\begin{align*}
  \tau &= \frac{t_0}{\beta} \propto \ell^{1/2} \sim V^{1/6}
\end{align*}
which can be seen to get larger as the system gets larger. Fortunately this scaling is sublinear, and so depending on the available power, the achievable bias may be significant. For example, assuming a \SI{500}{\watt\per\meter\squared} radiative capacity and a \SI{1}{\meter} radius system consisting of fairly conservative computational particles of size $(\SI{10}{\nano\meter})^3$ operating at a gross rate of \SI{1}{\hertz}, a bias as high as $\beta\sim0.4$ is possible.

Another approach to managing modes of computation in a Brownian computer is to institute a hierarchy of bias levels at different concentrations. For example, a viable system may consist of the following tuples $\sim(N,b;R_C)$,
\begin{align*}
  \{ (V,\ell^{-1/2};V^{5/6}),~ (V^{14/15},\ell^{-2/5};V^{4/5}),~ (V^{5/6},\ell^{-1/4};V^{3/4}),~ (V^{2/3},1;V^{2/3}) \}
\end{align*}
and computations requiring faster steps may use higher biases at the expense of less computational capacity. Additionally, the highest bias level admits irreversible computation, and so we may enclose our computer with a thin irreversible shell, whilst the internal bulk consists of reversible computations at various bias levels. Going further, we could have a hot inner Brownian core, a cold outer quantum core, and an enveloping irreversible silicon shell. Additionally, a hierarchical Brownian system need not employ spatial variation; by using different species for each bias, the subsystems can be mixed homogeneously.

\section{Conclusion}

We have shown that, subject to reasonable geometric constraints on power delivery and heat dissipation, the rate of computation of a given convex region is subject to a universal bound, scaling as $R_C\lesssim\sqrt{AV}\sim V^{5/6}$, and that this can only be achieved with reversible computation. For irreversible computation, this bound falls to $A\sim V^{2/3}$. The scaling laws persist at all practical scales, only breaking down when the system size approaches the Schwarzschild scale. For typical densities ($\sim\SI{1000}{\kilo\gram\per\meter\cubed}$), this threshold scale is $\SI{4e11}{\meter}\approx\SI{2.7}{AU}$.
Recall that we are considering systems of fixed uniform density; building and maintaining a structure of this density at such sizes may not be practical without some combination of high tensile-strength materials, orbital motion\footnote{Orbital motion near the Schwarzschild regime is not sufficient. For such an object, stable circular orbits only exist for $r>\frac32r_s$ at which point the orbital velocity approaches the speed of light.}, and propulsion.
Beyond this scale, the maximum computational rate is attained by localising mass into as thin a shell as possible, reminiscent of megastructures such as Dyson spheres from the world of science fiction.
At very small scales, the rate scales with $V$ as the surface area to volume ratio is negligible. The scaling then falls to $V^{5/6}$, to $V^{1/2}$, and finally to $V^{1/3}$ as the system increases in size. At the Schwarzschild regime and beyond, the computation rate is suppressed by a factor of order unity due to gravitational time dilation.

This analysis has assumed that each computational element acts independently. In our next two chapters we shall investigate the constraints affecting cooperative reversible architectures, in particular the thermodynamic cost of synchronisation processes, such as communication and resource sharing. These costs turn out to be quite significant, even prohibitive.

\pdfsuppresswarningpagegroup=0
\endgroup

\begingroup
\pdfsuppresswarningpagegroup=1
  \def\rate{\operatorname\lambda}

\begin{chapter-summary}

  In \Cref{chap:revi}, the limits on the sustained performance of large reversible computers were investigated and found to scale as $\sqrt{AV}$ where $A$ is the convex bounding surface area of the system and $V$ its internal volume, verifying and strengthening a result of \textcite{frank-thesis}, compared to $A$ for an irreversible computer.
  This analysis neglected, however, to consider interactions between subunits of these computers. In particular, a large computer will be composed of many small computational subunits. Each of these subunits will perform independent computation, but for useful programs one often wishes to combine the results of these independent computations. To do so, the subunits must communicate---they must synchronise their individual states. In an irreversible computer with ample free energy density, this is a non-issue. In contrast, when free energy is limiting, the hidden entropic cost of synchronisation is made manifest and must be considered.

  To illustrate, consider two reversible Brownian computers which would like to communicate with one another. A reversible Brownian computer is free to wander back and forth along its configuration space, the set of possible computational states it can take, and we can illustrate this by the analogy of two wagons on intersecting train tracks as shown in \Cref{fig:cover-ii}. In order to synchronise their states, and progress past the intersection, both wagons need to arrive simultaneously. If one wagon arrives before the other, however, then it is almost certain to wander backwards instead of waiting for the other to arrive. If the probability of moving forward is $p$ and backward is $q$, then the system is said to have `computational bias' $b=p-q\ll1$, and the probability that a wagon is at the intersection point is $b/p$. The rate of synchronisation can then be found approximately by the probability that both are there simultaneously, $\sim b^2\lll1$.

  Already the rate of independent computation for each subunit is vanishingly small in the limit of large system sizes, and so synchronisation and communication events in Brownian computers are seen to progress even more slowly and appear to `freeze out' as free energy gets sparser and sparser.
  In this chapter, we show that there is indeed no way to design a better system for synchronisation of Brownian computers, evaluating the `time penalty' for a synchronisation event across a broad range of possible designs.
  We also briefly consider the applicability to non-Brownian reversible computers, conjecturing that all reversible computers whose computational elements evolve independently are subject to the same penalty.
  We conclude with a discussion of design approaches and mitigations to circumvent the synchronisation time penalty in those reversible computers subject to such a penalty.

\end{chapter-summary}

  \chapter{Performance Trade-offs for Communicating Reversible Computers}
  \label{chap:revii}

\begin{figure}[h!]
  \centering
  \includegraphics[width=.4\textwidth]{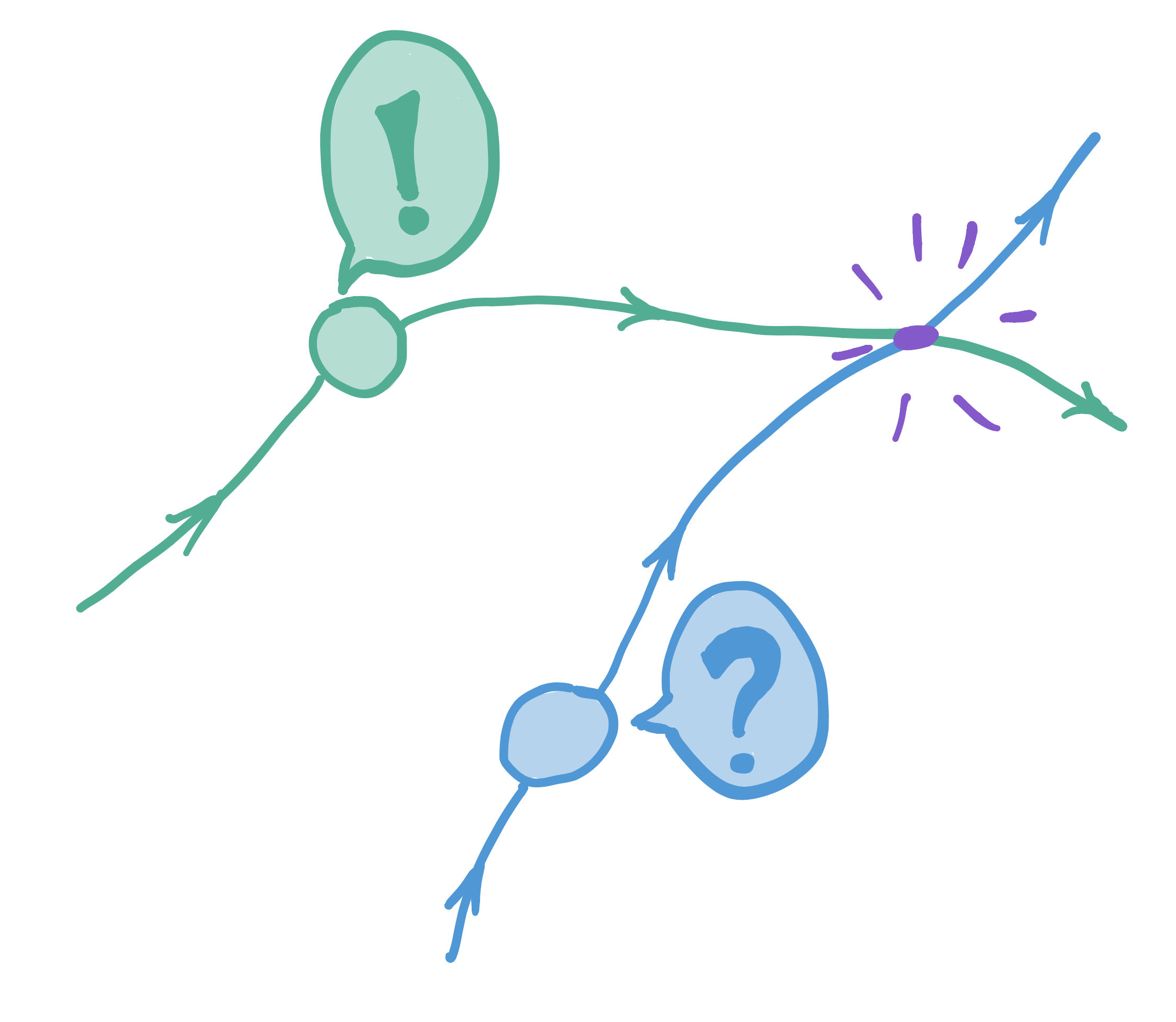}%
  \includegraphics[width=.4\textwidth]{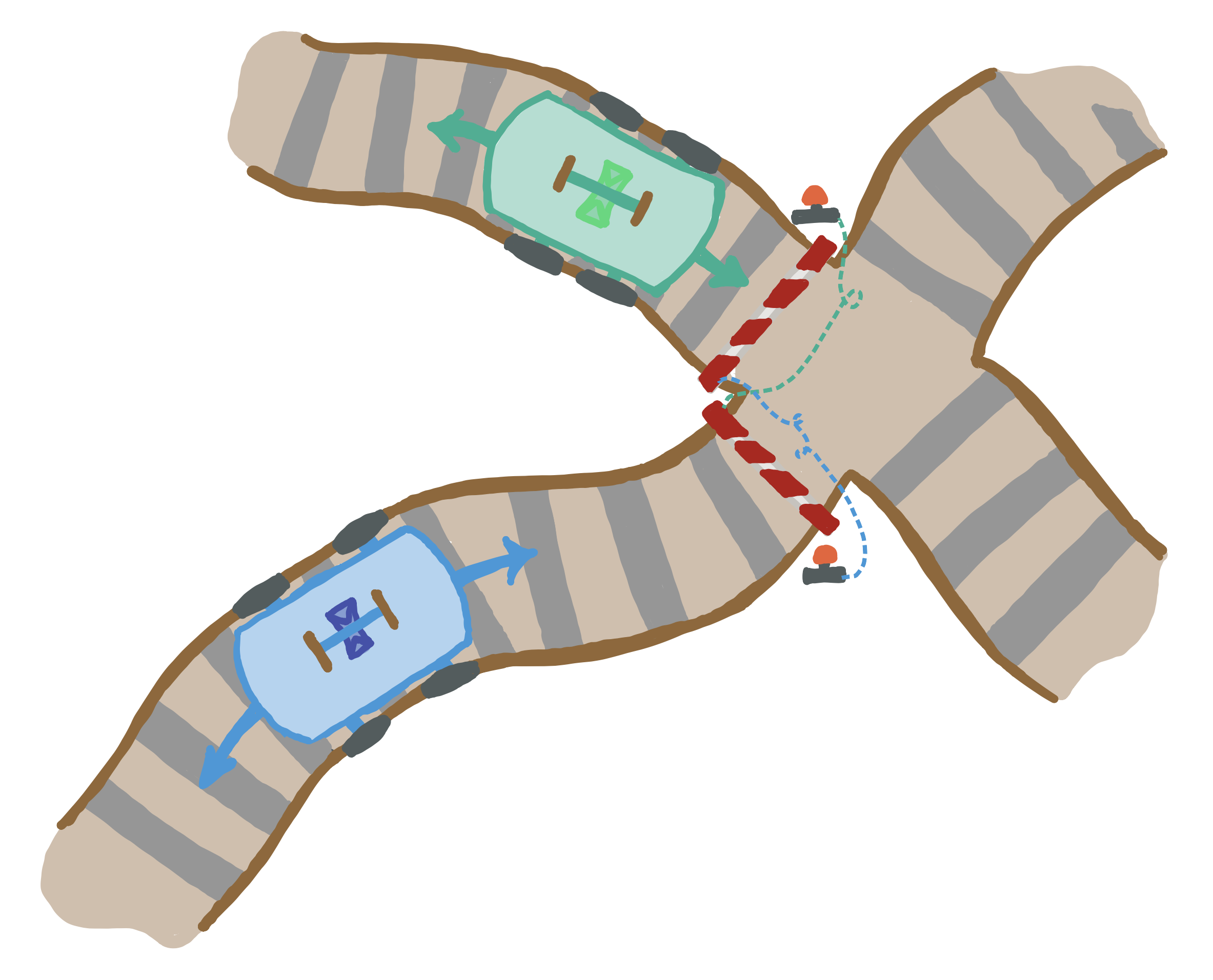}
    
  \caption[A cartoon representation of synchronisation interactions between reversible Brownian computational particles.]{This chapter concerns synchronisation interactions between reversible Brownian computational particles, which can walk backwards and forwards through phase space. This is illustrated here as two wagons on intersecting train tracks, where there is a barrier at the intersection. Each wagon controls the other's barrier, and so both must be present simultaneously in order to pass through the intersection.}
  \label{fig:cover-ii}
\end{figure}

\section{Introduction}

In \Cref{chap:revi}, we investigated how the laws of physics constrain the maximum rate of sustained computation within a given region of space.
In fact, this performance measure of operations per unit time is not the only one of interest.
As we are considering computers of arbitrary size, interaction between different subvolumes will often be of considerable import.
In this chapter and \Cref{chap:reviii}, we evaluate performance within this context.
\Cref{chap:revii} concerns the statistical dynamics of communication between two or more distinct computational units, whilst \Cref{chap:reviii} concerns the statistical dynamics of computational units interacting with resources or chemical species, of which there may be a variable number of identical units or particles.
Within these analyses we shall work with Brownian computers as our canonical computational system.
Nevertheless, we believe our results are more general, and in the conclusion of this chapter we conjecture upon the classes of reversible computational systems to which our results do and do not apply.

To proceed we must define what we mean by a `Brownian computer'.
We already introduced a notion of Brownian computer in \Cref{sec:crn} in which we used the formalism of Chemical Reaction Networks (CRNs), but the analyses of \Cref{chap:revii,chap:reviii} shall require some elaboration.
First, it will be helpful to introduce some terminology to distinguish the unique properties of computational particles from those of more traditional chemical species:

\begin{dfn}[Mona and Klona]\label{dfn:mona-klona}
  A chemical system is generally comprised of a number of species.
  Each of these species is present at some count; that is, there is some number of particles of this species in the system, and these are \emph{identical} copies of each other.
  The study of chemical thermodynamics and statistical mechanics then builds atop this.
  In contrast, chemical and Brownian computational particles are typically \emph{unique} within the system: each particle represents the state of some computer, and each evolves independently of one another.
  It may be that occasionally two or more such particles are transiently identical, but this is a coincidental occasion.
  An alternative instance of effectively unique computational particles arises in fixed lattices of simple elements, such as in \emph{amorphous} computers.
  Here, the elements at each lattice point may not be globally unique, but as they are uniquely addressable by their lattice location they take on much the same properties.
  We refer to particles which are generally uniquely distinguishable or addressable \emph{mona} (sg.\ \emph{monon}), after the Greek for unique, \emph{\gr μοναδικός}.
  Similarly, we will call indistinct particles belonging to a species \emph{klona} (sg.\ \emph{klonon}).
\end{dfn}

To get a feel for mona, consider the example of a \SI{150}{\milli\liter} solution containing a \SI{1}{\milli\molar} concentration of random nucleic acids of length 80 nucleotides.
As explained in the introduction, DNA molecules are polymers formed from an arbitrary sequence of four nucleotide monomers.
Random nucleic acids are easily obtained from DNA synthesis companies.
There are $4^{80}\approx\num{1.5e48}$ possible distinct 80-mers (nucleic acid of length 80), and the birthday paradox says that we need $\num{1.4e24}$ of these random molecules before it is likely that there are duplicates present.
Our \SI{150}{\milli\liter}, \SI{1}{\milli\molar} solution contains just shy of $\num{e20}$ such molecules, and it can be calculated that the probability that there are any duplicates in the solution is less than 1 in 350 million.
Therefore, it is almost certain that every nucleic acid in said solution is a monon.

We can therefore say that \Cref{chap:revii} studies mona-mona interactions, and \Cref{chap:reviii} mona-klona interactions.
We do not cover klona-klona interactions as these are well understood within the domains of chemistry and statistical physics.

Informally, Brownian computers are the sorts of systems embodied by the examples of the DNA-Strand Displacement (DSD) and Tile-Assembly Model (TAM) computer paradigms illustrated in the introduction and shown in \Cref{fig:dna-principle}.
Those DNA computers using DSD as their mechanism have a klonal representation of their computational state, whereas those using the TAM have a monal representation.
This is made clearer in \Cref{dfn:brownian} below:

\begin{dfn}[Brownian Computers]\label{dfn:brownian}
  A Brownian computer is a region of space (typically bounded or otherwise confined) containing particles.
  The definitions of particles and space are not important, and they may be complex macromolecules or simple elementary particles or even abstract entities in an abstract topological space.
  These particles are generally static in isolation, but when two or more particles collide they may alter their states during the interaction, in much the same way as a molecule of \ce{ATP} and a molecule of \ce{H2O} may collide, react, and beget the products \ce{ADP}, \ce{P_{i}} and \ce{H+} (particle number need not be conserved).
  These interactions must be \emph{reversible}, such that the products may collide and reform the reactants.
  To enable these interactions, particles must be endowed with velocities, and these velocities are assumed to follow a thermal distribution.

  A Brownian computer is characterised by being able to interpret the current state of the particles as a computational state:
    perhaps some particles are `computers' in their own right, and thus the evolution of their state over time is isomorphic to some model of computation, or perhaps the joint state of a number of particles must be considered.
  The former, wherein the computational particles are mona, is more powerful, whereas the latter approach, representing computational state with klona, typically results in the entire reaction volume being dedicated to a single program and thus limits the computational power.
  Multiple programs can be run if their components are orthogonal in the sense that they don't interact with components of a different program, but designing orthogonal klona is challenging and limited.
  An alternative mitigation approach is to use multiple bounded sub-volumes containing just the klona for that particular program;
    these sub-volumes may then be considered as abstract `particles' or mona, as can the combination of orthogonal sub-klona (forming a virtual sub-volume, if you will).
  Computational mona can be driven forward in computational phase space by a \emph{computational bias} (\Cref{dfn:bias}, below).
\end{dfn}

\begin{dfn}[Computational Bias]\label{dfn:bias}
  Recall from \Cref{chap:revi} that in a Brownian computer, transitions are mediated by systems of klona representing `positive' and `negative' bias.
  Physically, such klona act as a source of free energy and a real world example is given by $\ce{ATP}$ and $\ce{ADP + P_{\textrm i}}$ in biochemical systems.
  In the simplest case, there are $\oplus$ and $\ominus$ klona such that
    $\forall n.~\ce{$\ket{n}$ + $\oplus$ <=> $\ket{n+1}$ + $\ominus$}$
  where $\ket{n}$ is the $n^{\text{th}}$ state of some monon, and this is positively biased when the concentration of $\oplus$ exceeds that of $\ominus$.
  We define the bias in this case to be $b=([\oplus]-[\ominus])/([\oplus]+[\ominus])$.
\end{dfn}

To take full advantage of these molecular computers we need to let them communicate and interact.
Each interaction event will be between specific mona.
If these mona diffuse freely throughout the volume, then communicating with a specific monon will be impracticable except in the case of very small (effective) volumes. On average, the time between collisions of $n$ such mona in a system of volume $V$ will scale as $\sim\bigOO{V^{n-1}}$. Moreover, we shall see that a significant problem in reversible communication is that even after mona meet they can go backwards in phase space at which point we must wait for them to collide again; clearly in the freely diffusive case this is untenable. To improve upon this scaling, we are forced to introduce a lattice-like structure into the system. This lattice should be endowed with a coordinate system and the coordinates should be logically chosen so that it is `easy' to deterministically compute a unique path between any pair of reachable coordinates. Interacting mona should bind to this lattice, and couple their movement along it to their computational bias (\Cref{dfn:bias}). In so doing, translocation along the lattice becomes just another part of the monon's computational state---the information content of the monon that may be transformed in the course of computation.

\doublepage{%
\begin{figure}[!p]
  \centering
  \begin{subfigure}{.58\linewidth}
    \includegraphics[width=.9\linewidth]{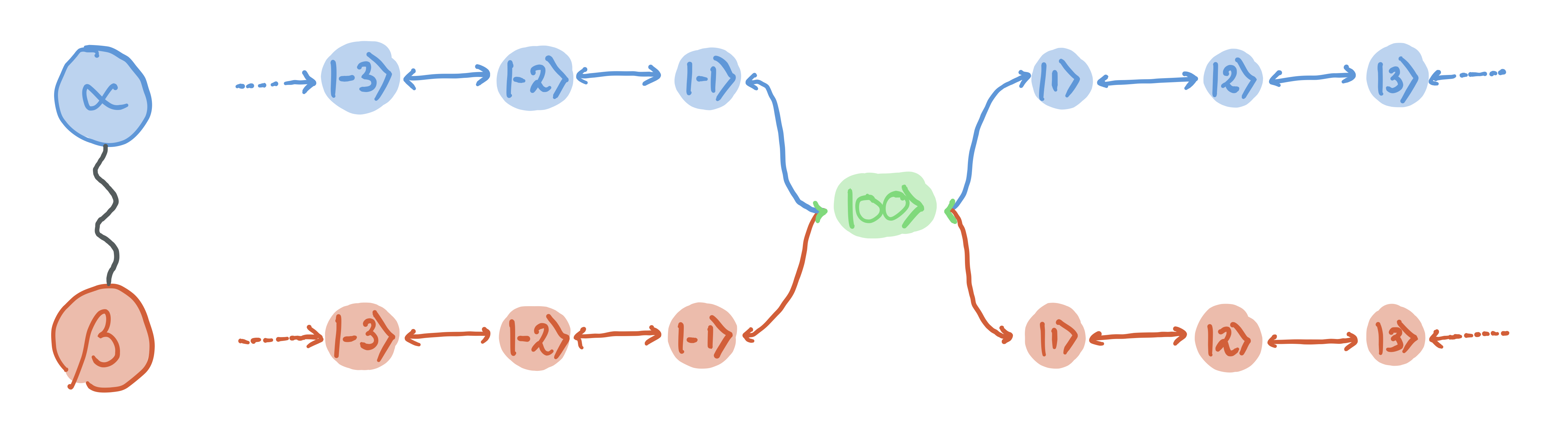}
    \caption{}\label{fig:synch-juxt}
  \end{subfigure}

  \begin{subfigure}{.207\linewidth} %2786
    \includegraphics[width=.97\linewidth]{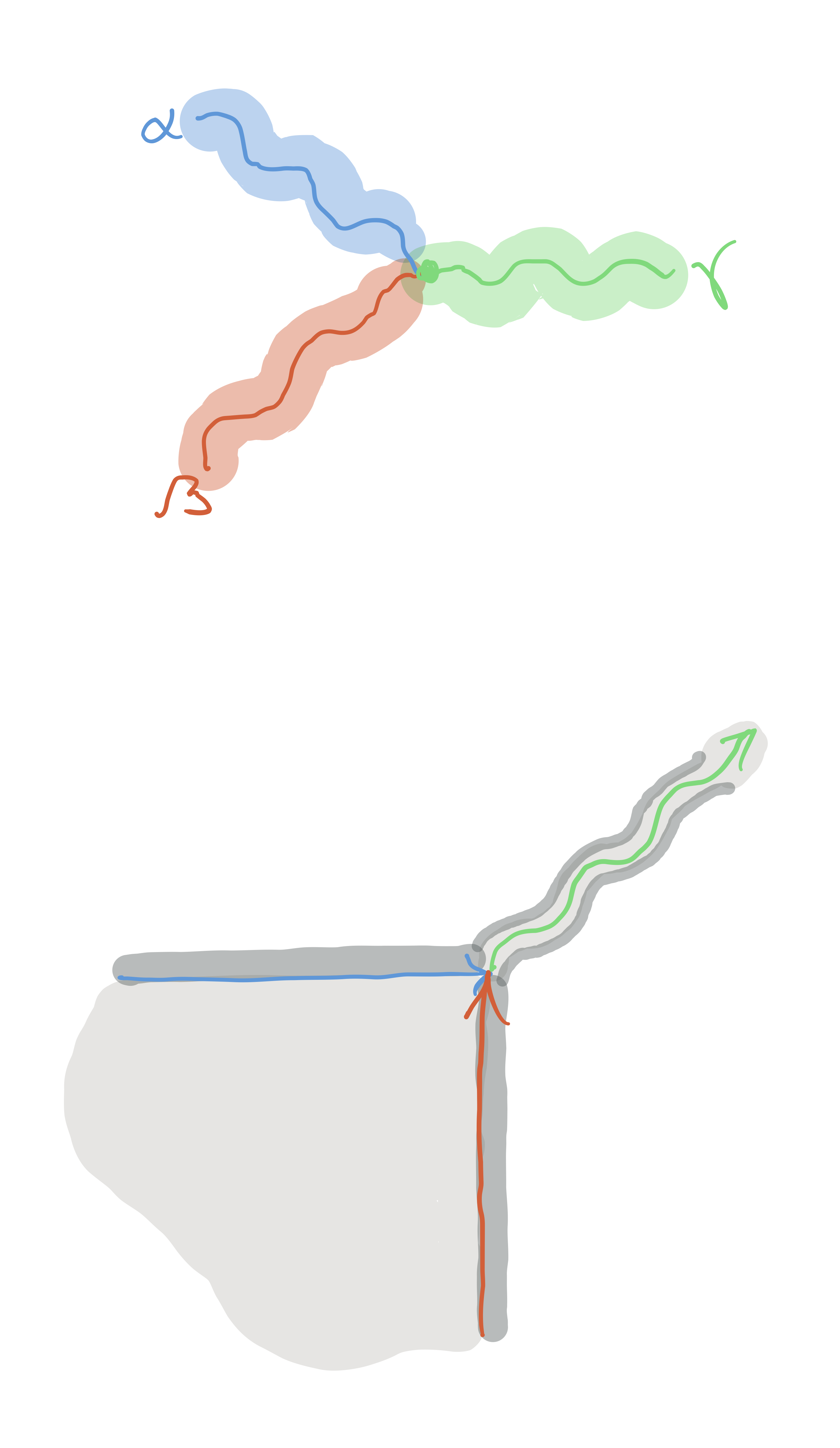}
    \caption{}\label{fig:synch-simple-1}
  \end{subfigure}%
  \begin{subfigure}{.228\linewidth} %3067
    \includegraphics[width=.97\linewidth]{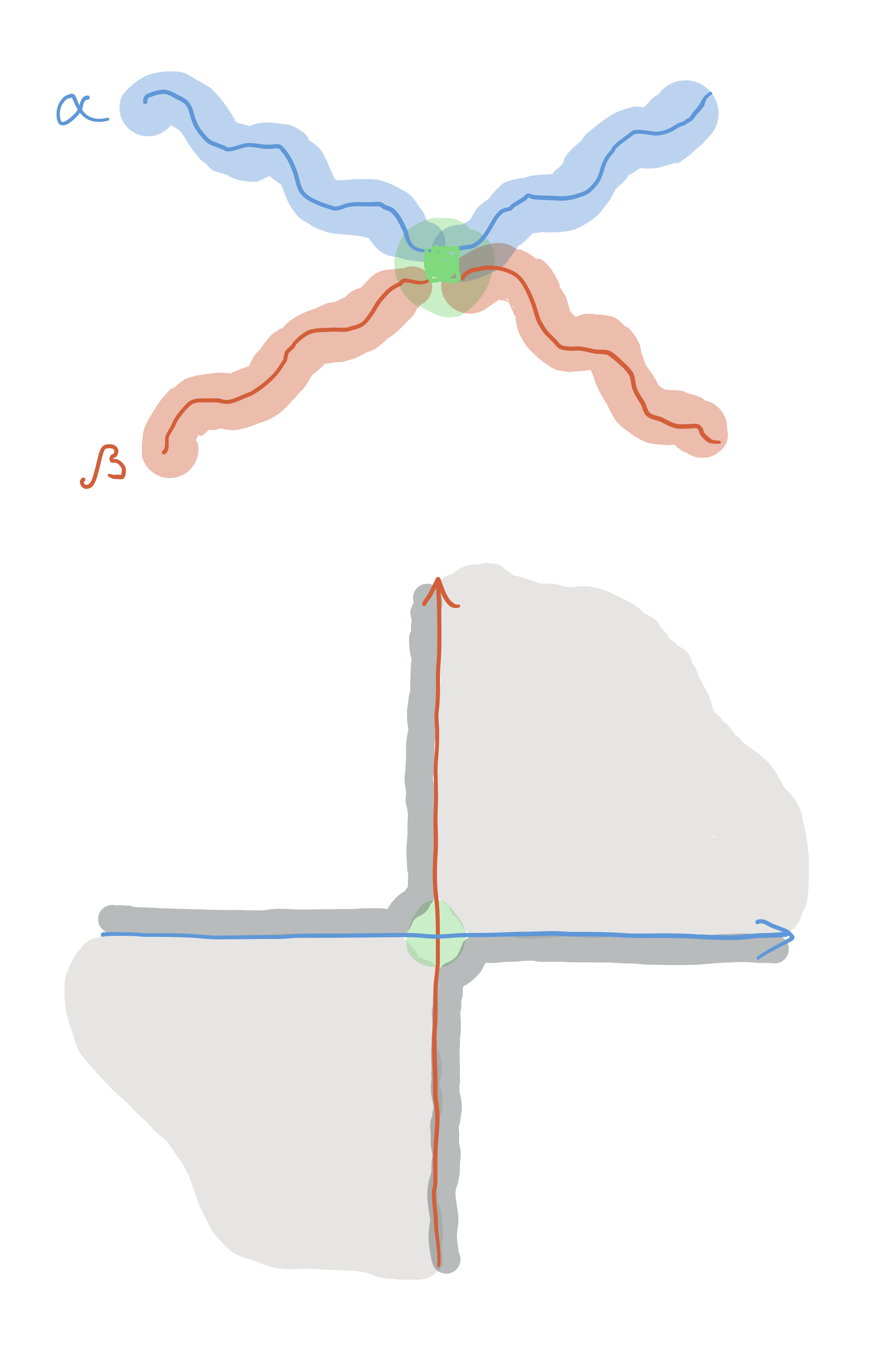}
    \caption{}\label{fig:synch-simple-2}
  \end{subfigure}%
  \begin{subfigure}{.284\linewidth} %3823
    \includegraphics[width=.97\linewidth]{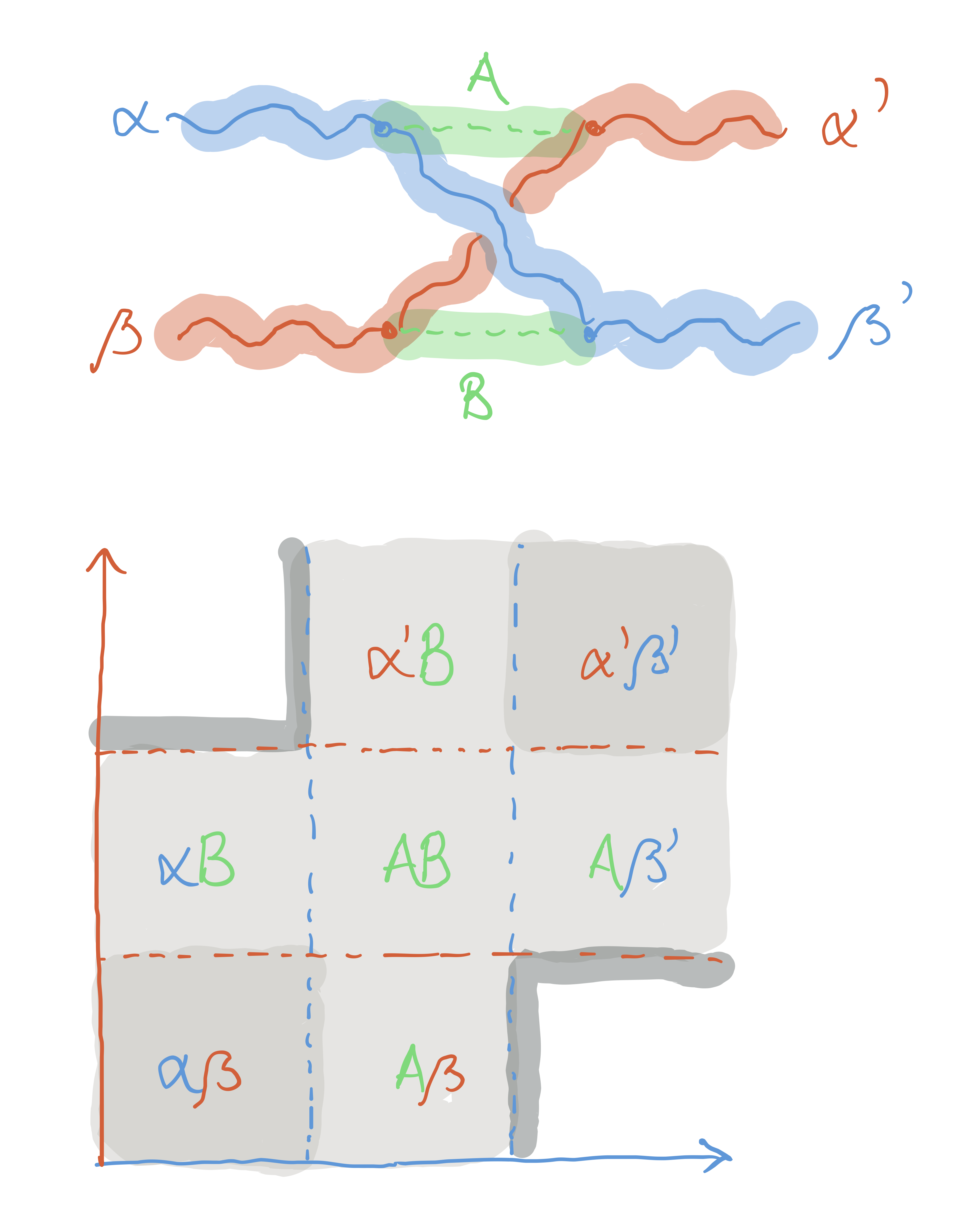}
    \caption{}\label{fig:synch-wide-1}
  \end{subfigure}%
  \begin{subfigure}{.260\linewidth} %3502
    \includegraphics[width=.97\linewidth]{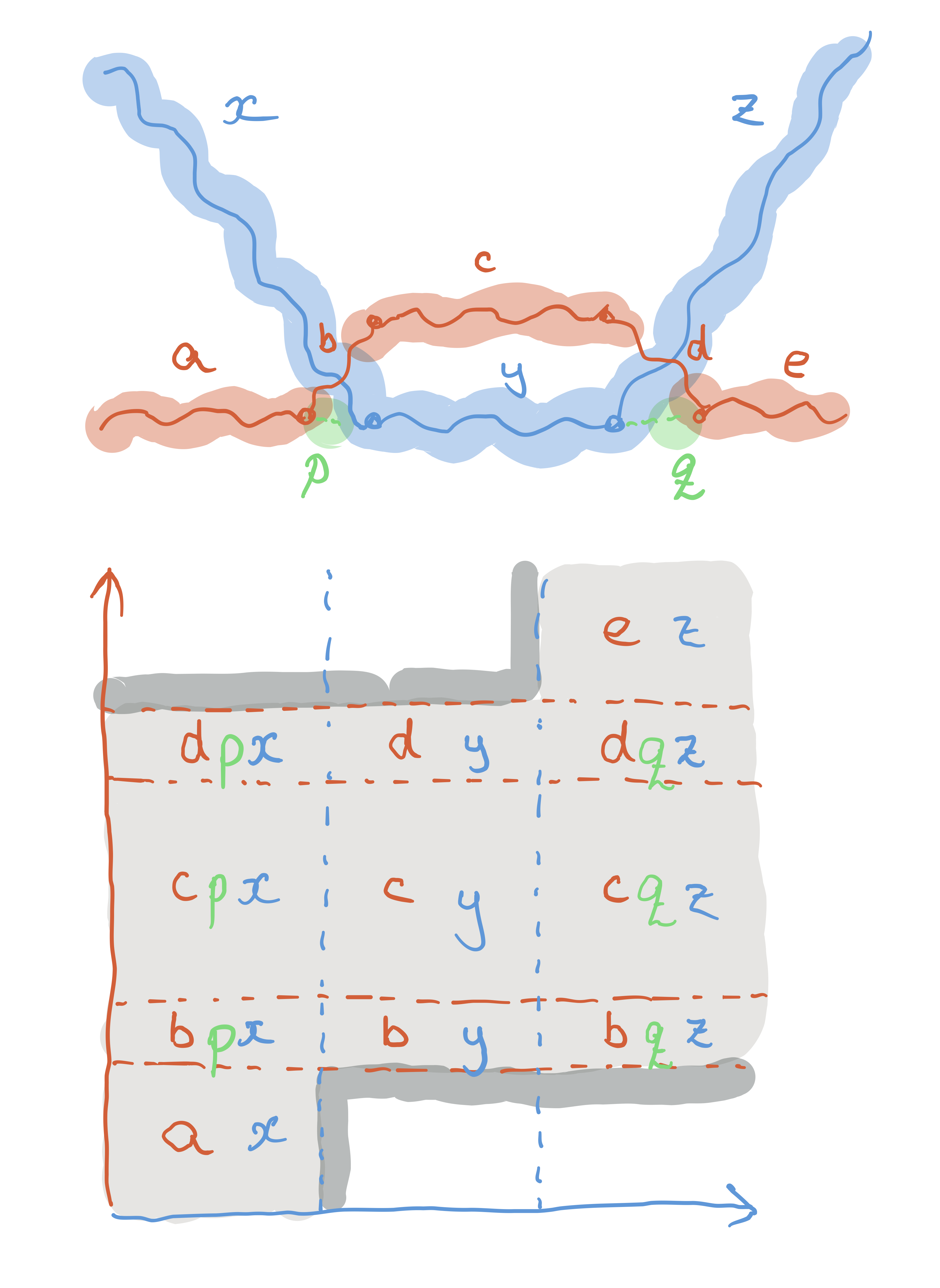}
    \caption{}\label{fig:synch-wide-2}
  \end{subfigure}

  \begin{subfigure}{.316\linewidth}
    \includegraphics[width=.97\linewidth]{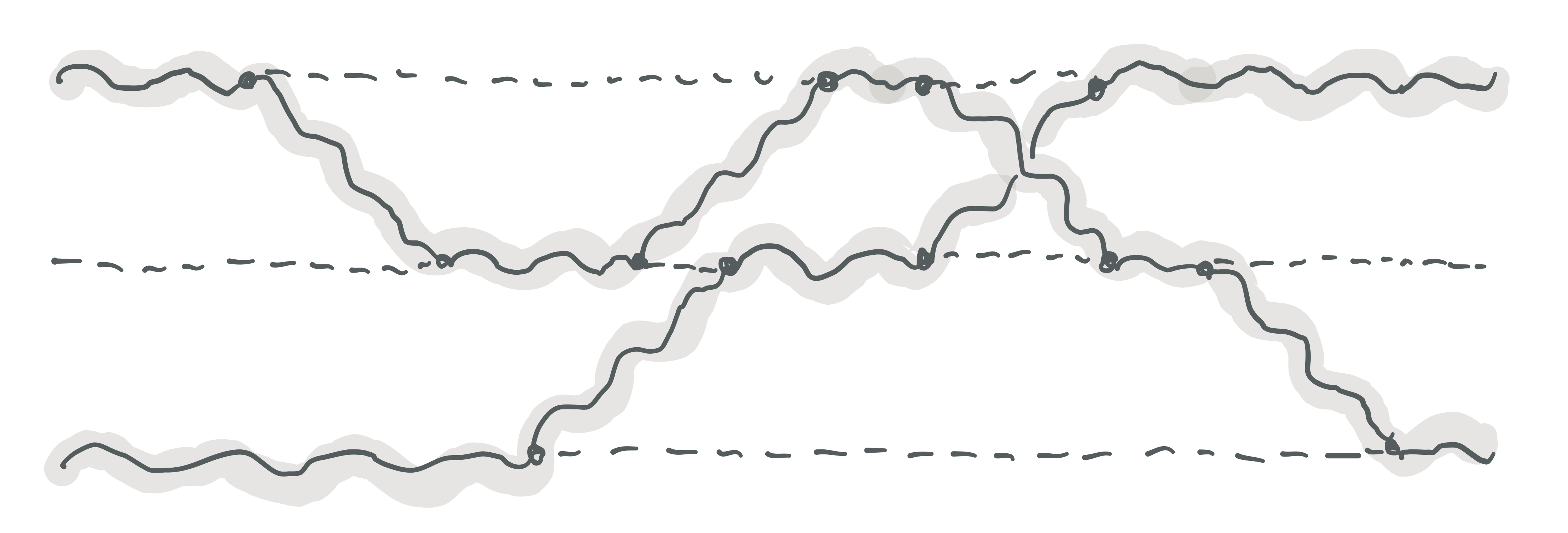}
    \caption{}\label{fig:synch-multi-1}
  \end{subfigure}%
  \begin{subfigure}{.240\linewidth}
    \includegraphics[width=.97\linewidth]{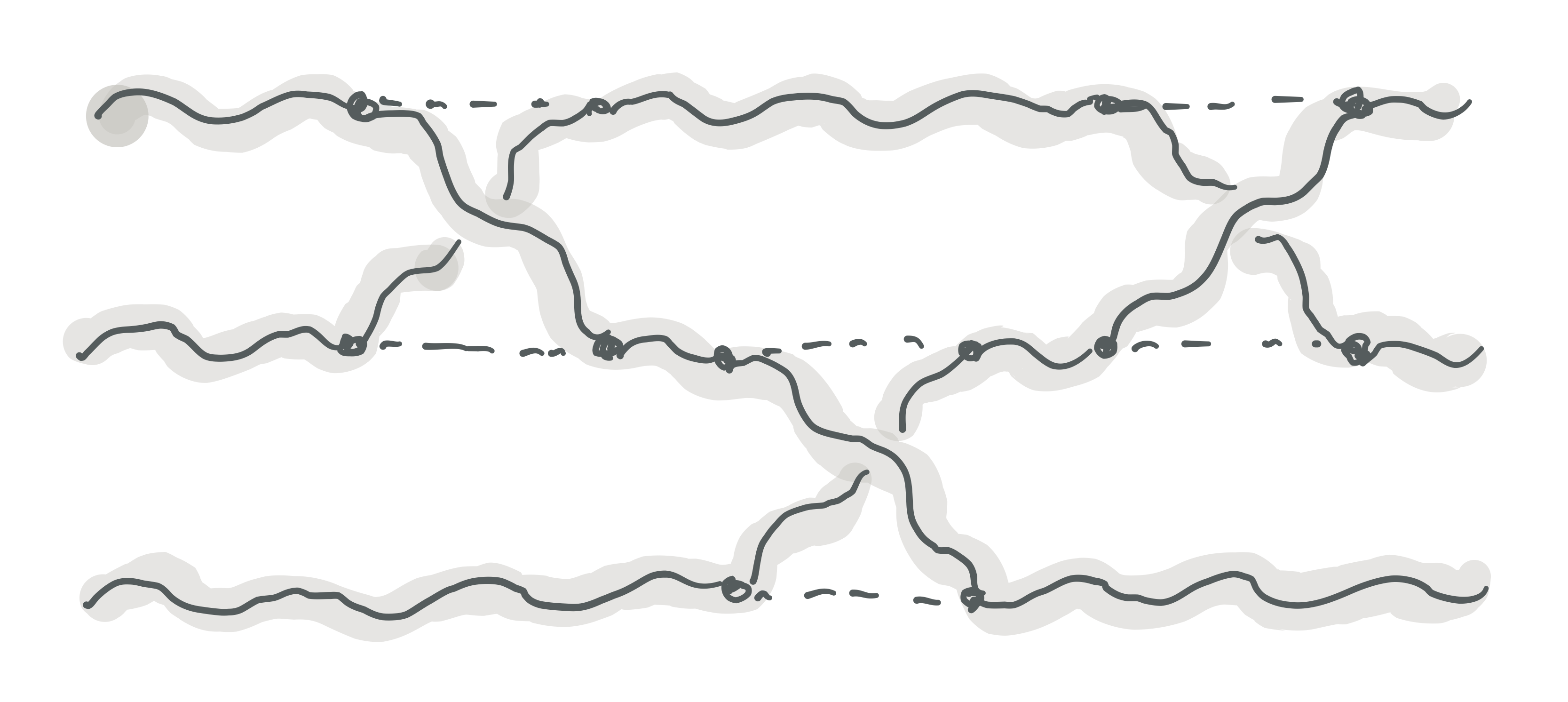}
    \caption{}\label{fig:synch-multi-2}
  \end{subfigure}

  \caption[A range of examples of \emph{constrictive} synchronisation problems, represented in different ways.]{A range of examples of \emph{constrictive} synchronisation problems, represented in different ways. Figure (a) shows a state diagram for two bound mona $\alpha$ and $\beta$; each monon is free to walk back and forth along their state space ($\mathbb{Z}$) independently except for at their \emph{synchronisation point}, $\ket{00}$, through which they must pass together. Figures (b--e) show a range of increasingly complex examples from two perspectives; the top diagrams are abstract state diagrams reminiscent of Feynman diagrams, whilst the bottom diagrams show the joint phase space of the mona. Dashed lines in state space, or `artefacts', represent static/non-evolving mona; for example, in (d) monon $\alpha$ leaves an artefact $A$---which may contain some informational payload or data packet---that is then absorbed by monon $\beta$, and vice-versa for $B$. More specifically, an artefact is a packet of data left on the lattice by a computational monon after it departs, ready to be absorbed by another computational monon.\capnl%
  Figure (b) is an asymmetric fusion/fission synchronisation, depending on the direction of time. Figure (c) is symmetric, and equivalent to (a). Figure (d) shows an approach to `widen the constriction point' of (c) using artefacts, thus making synchronisation easier. Figure (e) shows how two synchronisations close in phase space lead to an apparent `tunnel'. Finally, figures (f--g) show some more complicated examples suggestive of what a real reversible communicating system might look like.}\label{fig:synch-ex-1}
\end{figure}}{%
\begin{figure}[!p]
  \centering
  \vspace{-1.5em}
  \begin{subfigure}{\linewidth}
    \small\longce
    \begin{align*}
      \ce{ X${}_{m}$ &<=>> X${}_{m+1}$ } && \ctrRule{compute${}^\ast$} & \ce{ X${}_{-0}^{m}$ + Y${}_{-0}^{n}$ &<=>> X${}_{+0}^{m}$ + Y${}_{+0}^{n}$ } && \ctrRule{sync} \\
      \ce{ X${}_{-1}$ &<=>> X${}_{-0}^{0}$ } && \ctrRule{sync--begin} & \ce{ X${}_{+0}^{0}$ &<=>> X${}_{+1}$ } && \ctrRule{sync--end} \\
      \ce{ X${}_{-0}^{n}$ &<=>> X${}_{-0}^{n+1}$ } && \ctrRule{sync--up} & \ce{ X${}_{+0}^{n+1}$ &<=>> X${}_{+0}^{n}$ } && \ctrRule{sync--down}
    \end{align*}
    \caption{Example recessive `reaction scheme'. N.B.\ In the \ctrRule{compute} rule, $m\le-2$ or $m\ge1$.}
  \end{subfigure}
  \begin{subfigure}{\linewidth}
    \small
    \begin{gather*}
      \left.\begin{array}{r}
        \cdots~\ce{ A${}_{-3}$ <=>> A${}_{-2}$ <=>> A${}_{-1}$ <=>> A${}_{-0}^{0}$ <=>> A${}_{-0}^{1}$ <=>> A${}_{-0}^{2}$ <=>> A${}_{-0}^{3}$ } \\
        \cdots~\ce{ B${}_{-3}$ <=>> B${}_{-2}$ <=>> B${}_{-1}$ <=>> B${}_{-0}^{0}$ }
      \end{array}\right\} \\[0.5em]
        \ce{ A${}_{-0}^{3}$ + B${}_{-0}^{0}$ <=>> A${}_{+0}^{3}$ + B${}_{+0}^{0}$ } \\[0.5em]
      \left\{\begin{array}{l}
        \ce{ A${}_{+0}^{3}$ <=>> A${}_{+0}^{2}$ <=>> A${}_{+0}^{1}$ <=>> A${}_{+0}^{0}$ <=>> A${}_{+1}$ <=>> A${}_{+2}$ <=>> A${}_{+3}$ }~\cdots \\
        \ce{ B${}_{+0}^{0}$ <=>> B${}_{+1}$ <=>> B${}_{+2}$ <=>> B${}_{+3}$ }~\cdots
      \end{array}\right.
    \end{gather*}
    \caption{An example use of the above reaction scheme between two mona, \ce{A} and \ce{B}.}
  \end{subfigure}\\[1em]
  \begin{subfigure}{.2\linewidth}
    \includegraphics[width=.97\linewidth]{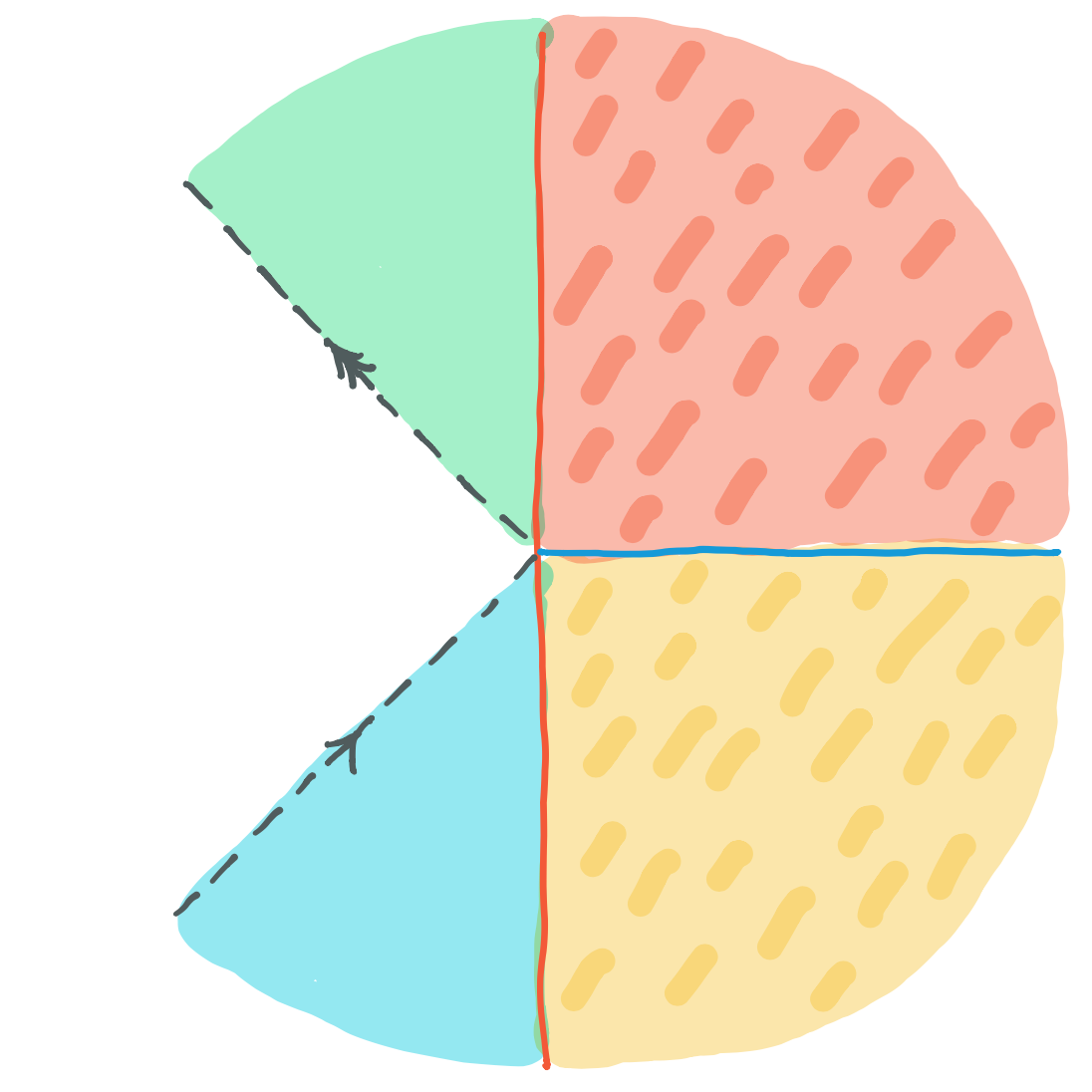}
    \caption{}\label{fig:synch-rec-1}
  \end{subfigure}~~%
  \begin{subfigure}{.2\linewidth}
    \includegraphics[width=.97\linewidth]{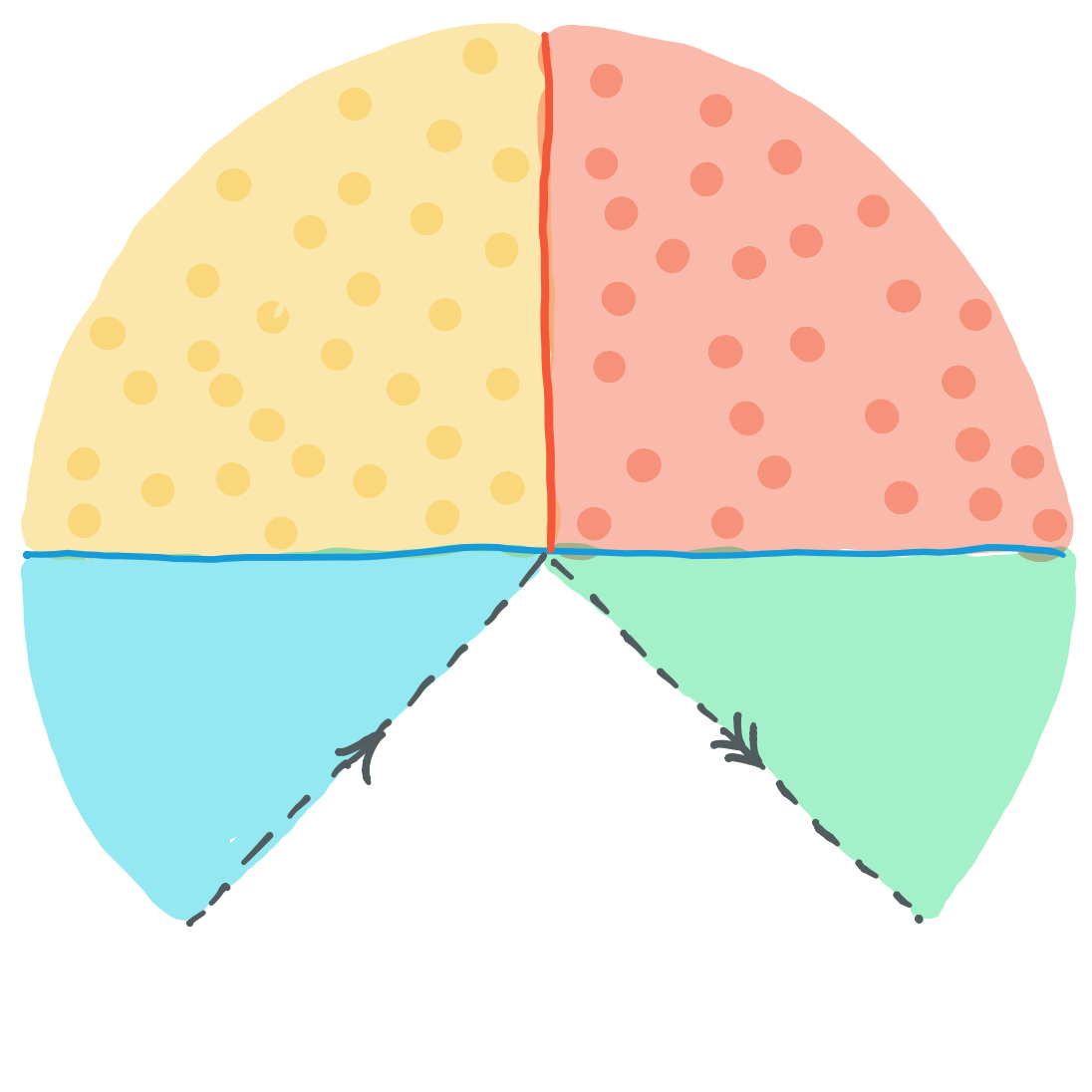}
    \caption{}\label{fig:synch-rec-2}
  \end{subfigure}~~%
  \begin{subfigure}{.2\linewidth}
    \includegraphics[width=.97\linewidth]{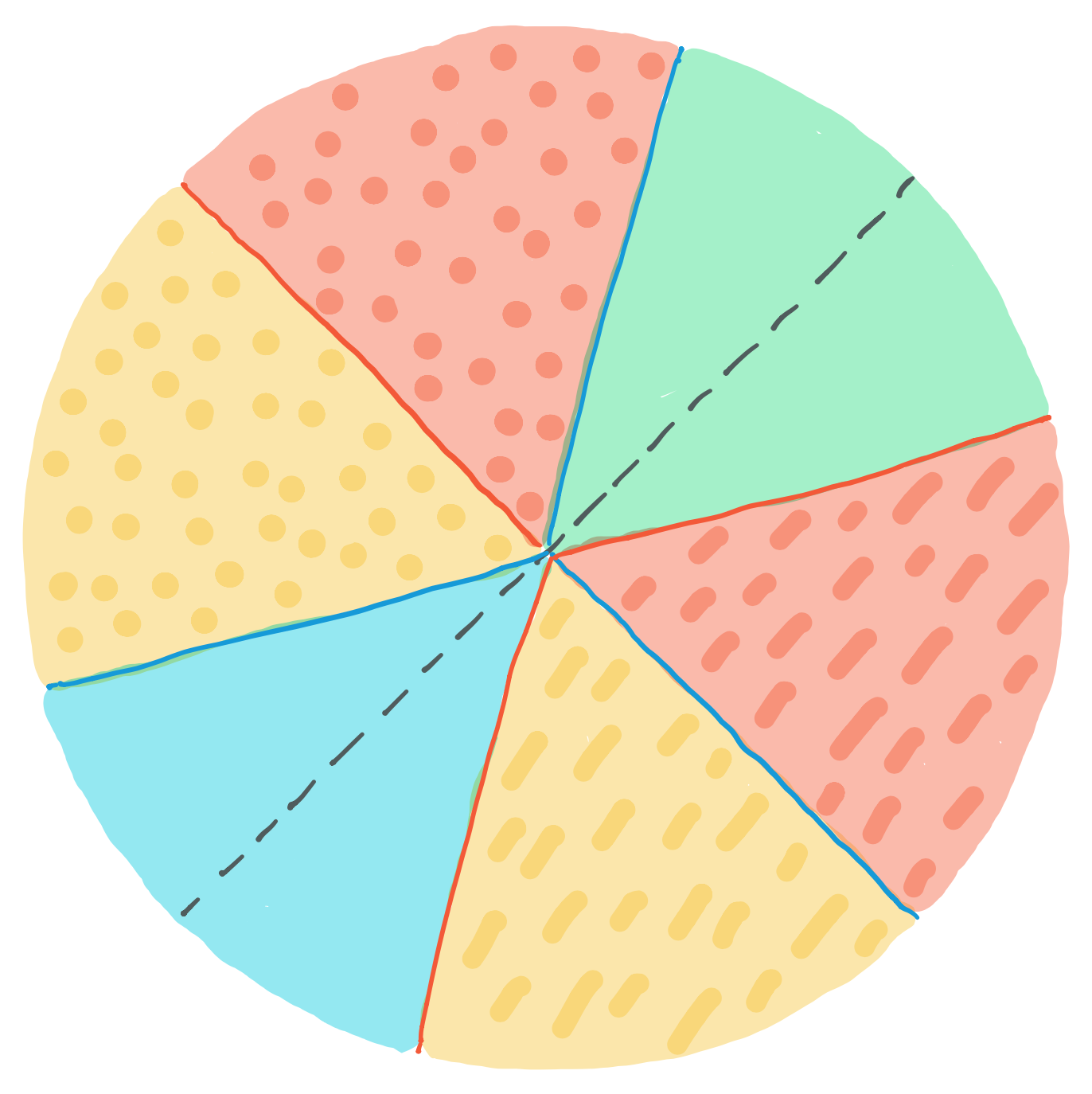}
    \caption{}\label{fig:synch-rec-domain}
  \end{subfigure}

  \caption[An example of a recessive synchronisation problem.]{%
    An example of a recessive synchronisation problem.
    In a recessive system, there can be no disallowed transitions.
    This means that if a monon \ce{X} reaches the synchronisation point prematurely, then it must continue to do `work' whilst waiting to synchronise.
    The simplest approach is for \ce{X} to start counting the net number of steps that it has been waiting to synchronise.
    When the other monon \ce{Y} arrives the synchronisation can complete, but the counter must be subsequently consumed.
    (a) An example reaction scheme implementing recessive synchronisation.
    For computation prior to synchronisation the states are labelled $\ldots, \ce{X_{-3}}, \ce{X_{-2}}, \ce{X_{-1}}$, and post synchronisation they are labelled $\ce{X_{+1}}, \ce{X_{+2}}, \ce{X_{+3}}, \ldots$.
    Transitions between these states are implemented by the rule \ctrRule{compute}.
    A special \ce{X${}_{\pm0}^{n}$} state is used for the synchronisation-competent state, where $\pm$ represents whether or not synchronisation has yet occurred and $n$ the number of steps counted.
    The four rules \ctrRule{sync--begin,up,down,end} implement the counter.
    Finally the rule \ctrRule{sync} implements the synchronisation interaction proper.
    This reaction should be assumed much faster than the counter reactions so that only one monon counts.
    (b) An example instantiation of the reaction scheme between two mona, \ce{A} and \ce{B}.
    Here, \ce{A} arrives first and counts for 3 steps.
    After synchronising, it counts back down.
    A key feature of this system is that it does not retain a memory of $n=3$; upon reversal, a higher or lower value of $n$ may be reached or it may even be the case that \ce{B} counts up instead of \ce{A}.
    This effectively sidesteps the issue of `erasing' the value of $n$.
    (c,d) Regions of phase space corresponding to the cases of \ce{A} (resp.\ \ce{B}) reaching the synchronisation point first.
    The horizontal axis corresponds to the states of \ce{A} and the vertical to those of \ce{B}.
    Therefore, in (c), the yellow quadrant corresponds to \ce{A} counting up whilst awaiting \ce{B}'s arrival, and the orange quadrant to it counting down.
    The green quadrant corresponds to both \ce{A} and \ce{B} having resumed independent computation.
    In (d) the yellow quadrant corresponds to \ce{B} counting up and the orange to it counting down.
    Note that the bias switches direction; in (c), the bias points NE in the blue and yellow quadrants, and NW in the orange and green.
    In (d) the bias points NE in the blue and yellow quadrants, and SE in the orange and green.
    (e) A representation of the entire phase space, consisting of the six `quadrants' stitched together.
  }\label{fig:synch-ex-2}
\end{figure}}

In order for two or more mona to communicate, or more generally to synchronise their joint state, we must arrange for their states to coincide at some point. Moreover, no monon will be able to proceed past this synchronisation point without the other(s). In the irreversible limit of computing, this problem is inconsequential as each computational entity can simply wait for the others. In reality, `waiting' is not a physically reversible process: If the dynamics were reversed it would not be possible for the entity to `know' how long to wait before going backwards. In an irreversible system, waiting can be simulated by raising the entropy of the environment; for example, the system could continually `forget' how long it has been since it arrived by resetting some internal state using the Landauer-Szilard principle~\cite{szilard-engine,landauer-limit}, or the system could transfer its generalised computational `momentum' to some other computational entity. In the reversible computers we are considering, there is a dearth of free energy\footnote{Free energy is the ability of the system to drive an entropically unfavourable reaction, by increasing the entropy of the environment at least as much as the reaction reduces the local entropy.}\ as we are distributing the free energy supplied at the computer's surface throughout its entire volume, and so these approaches are not generally possible.

Instead, the simplest approach is to let the mona `bounce off' the synchronisation point until the other arrives, relying on the (weak) computational bias to drive them towards this. That is, upon reaching the synchronisation point the monon cannot proceed and has a high chance of walking backwards, but the weak bias will keep it within a loose vicinity of the synchronisation point. Some initial estimates (\Cref{sec:constrict-initial}) will show that, in addition to the time for each monon to reach the synchronisation point independently, there is a `penalty' term. For computational bias $b$, each net computational step for lone mona takes time $(b\lambda)^{-1}$ where $\lambda$ is the rate at which mona collide with bias klona. If each participating monon starts at a phase distance $n_i$ from the synchronisation point, then the expected time for all to have briefly reached the synchronisation point is $\tau_0=(b\lambda)^{-1}\max_i n_i$. The penalty depends on the joint probability distribution of the mona, and how often they are all found at the synchronisation point. Analysing this in detail turns out to be surprisingly difficult, but we will succeed in showing a range of approximate, empirical and exact results, confirming the initial estimate of a penalty $\Delta\tau\gtrsim (b^d\lambda)^{-1}$ for $d$ mona. As $b\ll1$ generally, this will be very large compared to the timescales of independent computation by individual mona, and thus potentially rendering reversible communication impracticable.

In this chapter, we shall study this problem in detail for general shapes of synchronisation phase spaces and illustrate with examples of the forms of phase space that would be commonly found in reversibly communicating mona. We divide approaches to synchronisation in two: \Cref{sec:constrict} will concern the `constrictive' case in which there is a constriction in phase space, as illustrated in \Cref{fig:synch-ex-1}. \Cref{sec:recessive} will concern the `recessive' case in which the phase space is not restricted, and therefore other methods must be used to synchronise the joint state of the mona as illustrated in \Cref{fig:synch-ex-2}. In fact, there are two more related cases, dilatory and processive, which are in a sense complementary to the constrictive and recessive cases respectively. In the dilatory case, the phase space widens towards the synchronisation point rather than shrinking, whilst in the processive case the boundary of the post-synchronisation subspace (green quadrant in \Cref{fig:synch-ex-2}) is concave rather than convex. 
In both of these, the penalty will be negligibly small, zero, or even negative as the phase coordinate is pushed entropically towards larger regions of phase space. Nevertheless, they are not desirable because such asymmetry implies an increase in entropy and hence irreversibility. We can therefore make a stronger and more accurate statement: rather than forbidding dilatory cases, we require synchronisation phase spaces to be symmetric in the synchronisation point or subspace.

\section{Methodology}\label{sec:methodology}

\para{Isolated Mona}

To formalise the problem statement, we begin by considering computation by isolated mona in more detail. A monon's evolution is best understood by considering the state space of the computation it is simulating as depicted in \Cref{fig:state-space}. Ignoring any possible intermediate states, there will be an isomorphism between the states of the monon and the states of its underlying computation. As was made apparent in the previous chapter, we will want to drive computation forward by coupling the monon's state transitions to some non-equilibrium system of klona, which we term a bias source. Each transition may be coupled to a different bias source, and the strengths of the bias sources may vary over time. The definition of bias sources and their strengths is recalled in \Cref{dfn:bias}. Where possible, we will remain general in our analysis; however, \Cref{sec:crn} showed that optimal performance is achieved for bias constant in time and uniform in space, and our illustrative examples will reflect this fact.

\begin{figure}%[p]
  \def\rmsp{\nobreak\hspace{-0.1ex}}
  \centering
  \begin{subfigure}{\linewidth}
    \centering
    \includegraphics[width=.85\linewidth]{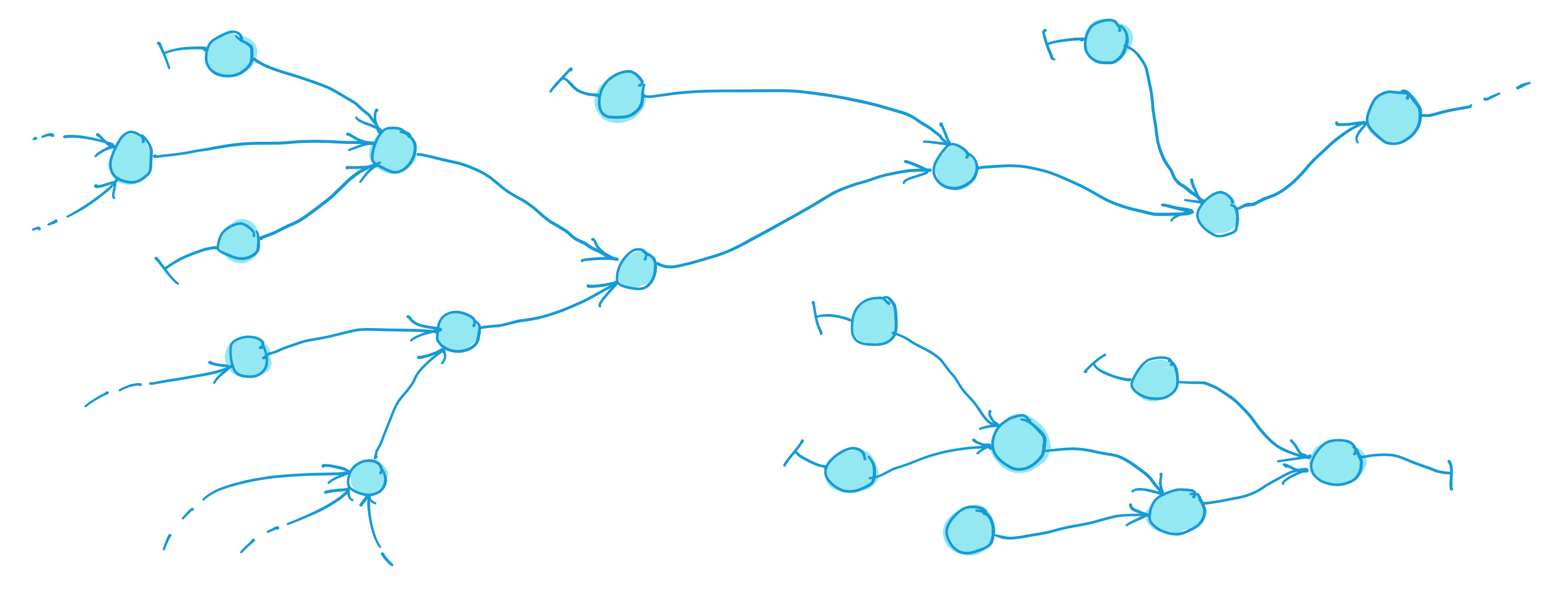}
    \caption{Examples of irreversible state spaces}
  \end{subfigure}\\[1.5em]
  \begin{subfigure}{\linewidth}
    \centering
    \includegraphics[width=.85\linewidth]{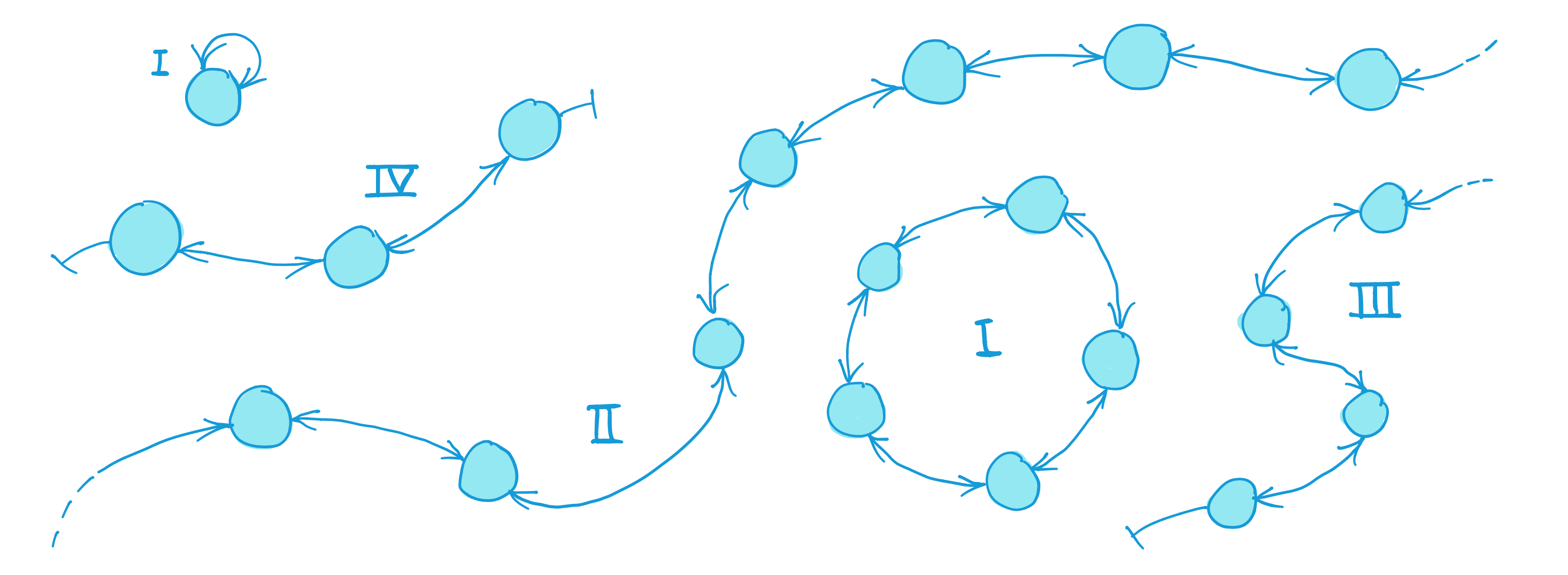}
    \caption{Examples of reversible state spaces}
  \end{subfigure}
  \caption[Some examples of the state spaces of irreversible and reversible computers.]{%
    These figures show some examples of the state spaces of (a)~irreversible and (b)~reversible computers.
    The blue nodes represent distinct computational states, and arcs represent computational transitions or operations.
    Terminal arcs, represented by $\vdash$, indicate that computation either begins or ends here; that is, no predecessor or successor states (respectively) exist.
    Recall that the dynamics of the system are free to wander forwards and backwards in state space, including away from a terminal state.\capnl
    The distinction between irreversible and reversible computers is that an irreversible state may (and often \emph{does}) have more than one predecessor state.
    For example, if two Boolean variables are consumed and replaced with their logical conjunction, $(a,b)\mapsto a\land b$, and if this is \atom{False}, then there are three possible predecessors: $(\atom{False},\atom{False})$, $(\atom{False},\atom{True})$, and $(\atom{True},\atom{False})$.
    This is loss of information, and it is impossible to determine which path was taken to reach the current state.
    Where an irreversible computer completely forgets or discards a variable---such as of type \code{uint32}---it may have a very large number of predecessor states---for \code{uint32}, $2^{32}\approx\num{4e9}$.
    For abstract models of computation such as a Register Machine (see, e.g., \textcite{register-machine}), there can even be a countable infinity of predecessors in the case of the `clear-to-zero' operation.\capnl 
    In contrast, a reversible state may have at most one predecessor and one successor state.
    We can classify reversible computers into four classes;
      (I) finite cyclic programs with no termini,
      (I\rmsp{}I) bi-infinite programs with no termini,
      (I\rmsp{}I\rmsp{}I) infinite programs with one terminus (comparable to non-halting programs in computability theory),
      (I\rmsp{}V) bi-terminating programs with two termini (i.e.\ finite linear chains).
    Bear in mind that the form of this state space is separate from that of the program logic:
      a reversible program may well contain loops and conditional branching, but for a given program and input or output, evolution is deterministic both forwards and backwards.}
  \label{fig:state-space}
\end{figure}

The dynamics of this system can therefore be seen to be a walk in a 1-dimensional phase space where at any point, there are at most two transitions available: one to the successor state, and one to the predecessor state, should they exist. Depending on whether we use a model with discrete or continuous time\footnote{Continuous time is the more apt model, however for generating function approaches discrete time must be employed. Fortunately, when the bias rates are uniform, it is easy to convert between the two by simply dividing the times by the gross bias rate, $\lambda=\lambda_\oplus+\lambda_\ominus$.}, these transitions will be labelled with probabilities or rates that depend on the strength of the bias source(s). We note that we expect the strength of the bias sources to be very small, such that the timescale of making one step of net progress is very large compared to that of a single transition. As a result, the monon will experience a time-averaged bias, and any correlations in the system will be expected to have been exponentially suppressed by diffusive processes. Therefore, we can model the monon's dynamics as a Markov process or chain; namely, the dynamics of the system depend only on its current state, rather than its history.

As mentioned in the introduction, we shall incorporate spatial position as just another part of our computational state, such that translocation along a lattice (or indeed any other environmental interaction) can be treated as just another computational transition. This abstraction allows us to greatly simplify our analysis, as we need only track position in phase space; the exact meaning of different phase coordinates is irrelevant, and so our results will remain general in a large class of synchronisation processes.

\para{Interacting Mona}

For a system of multiple particles, the state of the system is given by a coordinate in their joint phase space. In a classical physical system of $N$ particles, this manifests as a $6N$-dimensional phase space as each particle has six degrees of freedom: three for the particle's position, and three for its momentum vector. Similarly, our system of $N$ mona has as states coordinates in an $N$-dimensional phase space. When these mona do not interact, the phase space naturally decomposes into a tensor product of each monon's individual phase space. If, however, $n$ of these mona happen to interact with one another, then the subspace formed from their degrees of freedom is irreducible: it cannot be further decomposed into a tensor product. The reason is that a synchronisation manifests as a constraint on the joint state.

To make this more concrete, suppose three mona are to share some information. Let their individual states be indexed by $\mathbb Z$, such that their joint phase space is given by a subset of $\mathbb Z^3$. Suppose the interaction occurs when each monon is in state 0, i.e.\ the coordinate at which they interact is $(0,0,0)$. As the future state of each monon depends on this interaction/communication/synchronisation event, we observe that some coordinates are causally impossible. For example, $(3,-4,5)$ would imply that the first and third mona have received information from the second, despite the second having never given them this information. In fact, we find that the only valid states are given by $\{(-a,-b,-c):a,b,c\in\mathbb N\}\cup\{(a,b,c):a,b,c\in\mathbb N\}\equiv(-\mathbb N)^3\cup\mathbb N^3$. This corresponds to the negative and positive octants of $\mathbb Z^3$, and the system must pass through the origin. The equivalent case of two mona is depicted in \Cref{fig:synch-juxt,fig:synch-simple-2}.

This is not the only way for mona to interact as illustrated by other examples in \Cref{fig:synch-ex-1}. The common factor is that a synchronisation event between mona with individual phase spaces $\mathcal R_i$ corresponds to their joint state $\mathcal R$ being a strict subset of $\mathcal R' = \medotimes_i\mathcal R_i$. Our aim is to analyse the mean time to get from some region of phase space to another, and for this we shall need to specify some regions: $\mathcal I$ will be the initial region of phase space (or more generally, a distribution over phase space), $\mathcal P$ will be the region `before' synchronisation, $\mathcal P'$ the region `after', and $\mathcal S$ the interfacial region corresponding to the process of synchronisation. These phase regions must satisfy the relations $\mathcal I \subset \mathcal P$, $\mathcal P \cap \mathcal P' = \mathcal S$, and $\mathcal P \cup \mathcal P'=\mathcal R$. It should be noted that the synchronisation region $\mathcal S$ is arbitrary, although in practice there is usually a natural choice.

The question naturally arises of what happens at the boundaries of this phase space. Quite simply, transitions which would take the mona outside of the valid phase region are blocked; that is, there is no reaction with reactants $\ket{0}$ and $\oplus$ if the other mona are not all receptive to synchronisation. A more abstract but more general answer is that the boundaries of $\mathcal R$ are \emph{reflective}: the probability flux/current normal to them vanishes.

We now introduce a canonical slicing of $\mathcal P$ into hypersurfaces $\{\mathcal P_s:s\in\mathbb N\}$. The hypersurface $\mathcal P_s$ is given by the locus of nodes whose shortest path (i.e.\ the minimum number of transitions) to $\mathcal S$ is $s$. By definition, $\mathcal P_0\equiv\mathcal S$, and it can be shown that to get from a node in $\mathcal P_{s+k}$ to $\mathcal P_s$, one must pass through the hypersurfaces $\{\mathcal P_{s+k-1},\mathcal P_{s+k-2},\ldots,\mathcal P_{s+1}\}$ in turn. We will usually take $\mathcal I\subseteq\mathcal P_s$ for some `distance' $s$.

These definitions allow us to categorise the forms of synchronisation events. If $|\mathcal P_{s+1}| \ge |\mathcal P_s|$ for all $s$ within an appropriate vicinity of $\mathcal S$ where $|\mathcal P_s|$ is the number of nodes in said hypersurface, then we call this a constrictive synchronisation. If instead $|\mathcal P_{s+1}| \le |\mathcal P_s|$ we call it dilatory. These categories intersect in the special case of $|\mathcal P_s|$ constant; this can be likened to a tube, and has zero penalty. For uniformly positive bias, a dilatory synchronisation problem does not present an impediment to synchronisation and in fact may even aid the process, as the reflective boundary serves to `push' probability density towards $\mathcal S$. In practice, dilatory synchronisations only occur in contrived scenarios, if at all, and synchronisations can be assumed constrictive. If a synchronisation geometry does not fit into either of these categories, then it may need to be partitioned into constrictive and dilatory regions, each of which can be analysed separately. Alternatively, if the geometry is `on average' constrictive then one should be able to cautiously apply the relevant results.

\para{Locally Unconstrained Synchronisation}

As alluded to in the introduction and \Cref{fig:synch-ex-2}, a synchronisation problem can be modified to `fill in' the missing regions of phase space. This is achieved by introducing a `dummy' counter for each monon. Whenever a monon reaches the synchronisation subspace (i.e.\ it is receptive to synchronisation) it begins counting the time it has waited. This is not quite accurate, as the counter is just another part of the monon's computational state and so it can go down as well as up; nevertheless, on average it counts up. The consequence is that every monon can continue to evolve unimpeded. Once the last monon becomes receptive to synchronisation, the mona finally synchronise. At this point, the latent mona begin to decrement their counters. When a monon reaches zero, it may resume its original computation. Though a subtle point, this is indeed a reversible computation despite appearing to erase information, and is made clearer in \Cref{fig:synch-ex-2}. 

In return for the elimination of phase space boundaries, the phase space becomes effectively larger, and control over the state of the mona is more limited. In particular, with a weak bias the mona are free to wander into these dummy regions. We will analyse this case in \Cref{sec:recessive} in order to compare its performance to the constrictive case. As with the constrictive/dilatory cases, we can classify these problems into recessive and processive cases. Recessive cases correspond to $\mathcal P'$ being convex, and processive to it being concave. Again, it is not obvious how a processive case could be constructed, if indeed it can be. Moreover, it is unclear how to analyse a case that does not fall into these categories.

\para{Problem Statement}

With the synchronisation phase space well defined, we are now in a position to formulate a question. Our goal is to determine the mean time to get from $\mathcal I$ to $\mathcal S$, and to compare this to the unsynchronised system $\mathcal R'$. For homogeneous constant bias $b$, it is straightforward to show that unconstrained evolution from $\mathcal P_s$ to $\mathcal P_0=\mathcal S$ takes mean time $\tau=s/b\lambda$. That is, the computational 'speed' is $1/b\lambda$. More generally, we can define this quantity by $\tau = \inf\{t:\forall u>0.\evqty{x(t+u)}\in\mathcal P'\}$. That is, it is the least time for which the mean position in phase space never leaves the post-synchronisation domain. For many systems, this can be identified with the first time at which the mean position in phase space enters the post-synchronisation domain, $\tau = \inf\{t:\evqty{x(t)}\in\mathcal P'\}$.

To compute this latter quantity $\tau$ in general cases, we consider the evolution of the phase space probability distribution, $W(t)=W(\vec x;t)$. This quantity $\tau$ is well known in the literature, and is termed the Mean First-Passage Time, or MFPT. We will generally follow the approach of \textcite{risken}, in particular Chapter~8. The first observation we can make is that we need not track particles that cross the boundary between the two domains; that is, we can place an absorbing boundary at $\mathcal S$. The total extant probability, $G(t)=\int_{\mathcal P\setminus\mathcal S}\dd{\vec x}W(t)$, can then be interpreted as a survival probability, $\pr(\tau>t)$. Introducing a probability density $\varrho(t)$ for the MFPT, such that $\tau=\int_0^\infty\dd{t}t\varrho(t)$, we can see that $G(t)=\pr(\tau>t)=\int_t^\infty\dd{t'}\varrho(t')$ and hence $\tau=-\int_0^\infty\dd{t}t\dv{G}{t}=\int_0^\infty\dd{t}G(t)$.

We proceed by reducing the dimension of the problem. Surprisingly, it transpires that we can in fact eliminate the time variable. We recall the standard approach from \textcite{risken}. The $n^{\text{th}}$ moment of $\tau$ is found to be
\begin{align*}
  \evqty{\tau^n} &= -\int_0^\infty \dd{t}t^n\pdv{t}\int_{\mathcal P}\dd{\vec x}W(\vec x;t)=\int_{\mathcal P}\dd{\vec x}\underbrace{\qty\bigg(-\int_0^\infty\dd{t}t^n\pdv{t}W(\vec x;t))}_{w_n(\vec x)}.
\end{align*}
Let $\mathcal L=\mathcal L(\vec x)$ be an operator describing the time evolution of the phase distribution, i.e.\ $\dot W=\mathcal LW$. Here we require that $\mathcal L$ not depend on time, and so we will not consider time-variable biases from here on. Integrating by parts and then applying $\mathcal L$, we can relate the $w_n$ to each other:
\begin{align*}
  w_{n+1}(\vec x) &= n\int_0^\infty\dd{t}t^nW(\vec x;t), \\
  \mathcal Lw_{n+1}(\vec x) &= n\int_0^\infty\dd{t}t^n\mathcal LW(\vec x;t) \\
    &= n\int_0^\infty\dd{t}t^n\pdv{t}W(\vec x;t) \\
    &= -nw_n(\vec x).
\end{align*}
Evaluating the base case, $w_0(\vec x)=W(\vec x;0)$, we thus have an inductive definition of the $w_n$, from which we can obtain each moment of $\tau$. As these turn out to be non-trivial to evaluate, we will focus on the first moment---the mean---in this chapter, i.e.\ $\tau=\int_{\mathcal P}\dd{\vec x}w_1$ with $\mathcal Lw_1=-W(0)$ where $W(0)$ is the initial distribution on $\mathcal I$. That is, we merely need to compute the steady state distribution of the system $\mathcal P$ with absorbing boundary at $\mathcal S$, subject to forcing $W(0)$. Note that, whilst $W(0)$ is a normalised probability distribution with total density 1, $w$ is not. In fact, its total density is precisely the MFPT. 

A useful observation is that phase density is conserved by the operator $\mathcal L$. This means we can write $\mathcal L=-\vec\Delta\cdot\vec S$ for discrete phase space, where $\vec\Delta$ is the discrete vector derivative (or finite difference) and $\vec S$ is the phase current. For continuous phase space, we have instead $\mathcal L=-\vec\nabla\cdot\vec S$. Current will be very important to our analysis, and is informed by the sources and sinks due to the $W(0)$ forcing and absorbing boundary respectively.

\subsection{Techniques for Discrete Phase Space}

\para{Recurrence Relations}

There are a number of ways to represent the operator $\mathcal L$ for a discrete system. The canonical approach to describe a Markov chain is by introducing a distribution vector with indices corresponding to loci in phase space. For example, the phase space corresponding to \Cref{fig:synch-juxt} would be represented by a vector in the infinite dimensional vector space indexed by the set $\{(-i,-j):i,j\in\mathbb N\}\cup\{(i,j):i,j\in\mathbb N\}$. The operator is then an infinite matrix for this space. Whilst a convenient approach due to the wealth of techniques built around this representation, in addition to the ability to solve for the steady state as an eigenvector problem, it is too unwieldy to make much progress here.

A better representation is given by considering the local structure of $\mathcal L$, as alluded to earlier when we introduced current. For the one-dimensional phase space of a bi-infinite monon with transition rates $\rate(i\mapsto i+1)$ and $\rate(i+1\mapsto i)$, we can write the system in the following two convenient ways:
\begin{align*}
  \dot x_i &= -[\rate(i\mapsto i+1)+\rate(i\mapsto i-1)]x_i \\
           &\phantom{{}={}}+ \rate(i-1\mapsto i)\,x_{i-1} + \rate(i+1\mapsto i)\,x_{i+1}, \\
  S(i\mapsto i+1) &= \rate(i\mapsto i+1)\,x_i - \rate(i+1\mapsto i)\,x_{i+1}.
\end{align*}
Note how $\dot x_i \equiv - S(i\mapsto i+1) + S(i-1\mapsto i)$, as expected. Also keep in mind that the expression for $\dot x_i$ needs to be modified to incorporate sources and sinks. These representations are just recurrence relations, and methods to solve these include elimination, induction, and generating functions. To demonstrate, we solve for the MFPT from $\ket{-s}$ to $\ket{0}$.

To set up the problem, we place an absorbing boundary at $\ket{0}$ such that $x_0=0$ and we force the system by placing a negative unit-strength source at $\ket{-s}$ such that $S(-s\mapsto-s+1) - S(-s-1\mapsto-s)=1$. The boundary condition at $-\infty$ is $x_{-\infty}=0$, as is the current. Therefore, we find that $S(-s-k-1\mapsto-s-k)=0$ for $k>0$ and $S(-s+k\mapsto-s+k+1)=1$ for $s>k\ge0$. That is, the current transports density from the source to the sink. Writing out the current equations,
\begin{align*}
  1 = S(-1\mapsto0) &= \rate(-1\mapsto0)\,x_{-1}, \\
  1 = S(-2\mapsto-1) &= \rate(-2\mapsto-1)\,x_{-2} - \rate(-1\mapsto-2)\,x_{-1}, \\
  1 = S(-3\mapsto-2) &= \rate(-3\mapsto-2)\,x_{-3} - \rate(-2\mapsto-3)\,x_{-2}, \\
  &~\,\vdots \\
  0 = S(-s-k-1\mapsto-s-k) &= \rate(-s-k-1\mapsto-s-k)\,x_{-s-k-1} \\&\phantom{{}={}}- \rate(-s-k\mapsto-s-k-1)\,x_{-s-k}, \\
  &~\,\vdots,
\end{align*}
we can find the densities in the vicinity of the boundary;
\begin{align*}
  x_0 &= 0, \\
  x_{-1} &= \frac1{\rate(-1\mapsto0)}, \\
  x_{-2} &= \frac1{\rate(-2\mapsto-1)}\qty\Big[1+\frac{\rate(-1\mapsto-2)}{\rate(-1\mapsto0)}], \\
  x_{-3} &= \frac1{\rate(-3\mapsto-2)}\qty\Big[1+\frac{\rate(-2\mapsto-3)}{\rate(-2\mapsto-1)}\qty\Big[1+\frac{\rate(-1\mapsto-2)}{\rate(-1\mapsto0)}]], \\
  &~\,\vdots.
\end{align*}
More generally, we can write
\begin{align*}
  x_{-n} &= \begin{cases}
    \frac{1}{\rate(-1\mapsto0)}\qty\big(\sum_{k=1}^n\prod_{r=2}^{k}\frac{\rate(-r+1\mapsto-r)}{\rate(-r\mapsto-r+1)}), & n \le s, \\
  \frac{1}{\rate(-1\mapsto0)}\qty\big(\sum_{k=1}^s\prod_{r=2}^{k}\frac{\rate(-r+1\mapsto-r)}{\rate(-r\mapsto-r+1)})\qty\big(\prod_{r=s+1}^{n}\frac{\rate(-r+1\mapsto-r)}{\rate(-r\mapsto-r+1)}), & n > s.
  \end{cases}
\end{align*}
That is, we have exact expressions for the entire steady state distribution. Adding these up will yield the MFPT, $\tau=\sum_{i=0}^\infty x_i$. This expression is not particularly convenient; we can instead rewrite the recurrence relation as
\begin{align*}
  x_{-n} &= \frac{S(-n\mapsto-n+1)}{\rate(-n\mapsto-n+1)} + \frac{\rate(-n+1\mapsto-n)}{\rate(-n\mapsto-n+1)}x_{-n+1},
\end{align*}
from which we find
\begin{align*}
  \tau &= \sum_{n=1}^{s}\frac{1}{\rate(-n\mapsto-n+1)} + \tau\sum_{n=1}^\infty\frac{\rate(-n+1\mapsto-n)}{\rate(-n\mapsto-n+1)}\frac{x_{-n+1}}{\tau} \\
  &= \qty\Big(1-\evqty\Big{\frac{\rate(-n+1\mapsto-n)}{\rate(-n\mapsto-n+1)}})^{-1}\sum_{n=1}^{s}\frac{1}{\rate(-n\mapsto-n+1)},
\end{align*}
where the expectation value is taken over the steady state probability distribution. Even if the steady state distribution is unknown, we can nevertheless use this to obtain bounds on $\tau$. In the specific case of homogeneous bias, $\rate(n\mapsto n+1)=p\lambda$ and $\rate(n+1\mapsto n)=q\lambda$, the bounds coincide and thus we can obtain an exact expression without knowledge of the exact distribution (though the distribution also reduces to a reasonably simple form), finding
\begin{align*}
  \tau &= \qty\Big(1-\frac qp)^{-1}\frac{s}{p\lambda} = \frac{s}{b\lambda},
\end{align*}
as expected.

\para{Generating Functions}

When the values of $\lambda$ take on a particularly nice form, such as constant uniform, generating functions provide a tempting approach. A generating function is used to represent a series as coefficients of a power series. For example, the series $\{0,1,2,3,\ldots\}$ has generating function $t(1-t)^{-2}$ because its series expansion is $t+2t^2+3t^3+4t^4+\cdots$. Using multiple variables, complicated structures such as walks can be encoded in a generating function\footnote{For introductions to generating functions and enumerative combinatorics, see \textcite{gen-fun,enum-comb}. For a review of techniques used to analyse walks in two or more dimensions with generating functions, see \textcite{bm-walks}.}. We will seek a generating function, $W$, whose terms correspond to walks starting at each position $\ket{s}$, and end when they reach $\ket{0}$ for the first time. For convenience, we are walking down towards 0 from positive states, rather than up from negative states. It turns out that the easiest way to construct these walks is in reverse: we start from $\ket{0}$ and add either an `up' step or a `down' step, providing we never go down to $\ket{-1}$. Stated another way, a walk is either the empty walk (with generating function $1$), another walk $w$ followed by an `up' step ($wu$), or another (positive) walk $w_+$ followed by a `down' step ($w_+d$). This way, the walks described are only those right up to the point they cross the origin.

To get meaningful statistical information, we label each step with its probability. In the reversed walk, an `up' step has probability $p$ and a `down' step $q$. This leads to the implicit functional equation
\begin{align*}
  W &= 1 + pxtW + q\bar xt(W-W|_{x=0})
\end{align*}
where $\bar x\equiv 1/x$, $x$ is a variable that records the final position of the walk, and $t$ is a variable that records the number of steps in the walk. To prevent a downward step at $\ket{0}$, we ignore these walks when constructing the next step. Expanding $W$, we can see that it has the terms we expect:
\begin{align*}
  W &= t^0(x^0) + t^1(px^1) + t^2(ppx^2 + pqx^0) + t^3(pppx^3 + ppqx^1 + pqpx^1) + \cdots
\end{align*}
To proceed, we seek a closed form expression. Rewriting, we have $(x-t(px^2+q))W = x - qtW|_{x=0}$. It would appear that we are now stuck, as there is an unknown $W|_{x=0}$ which depends on the complete generating function $W$. Fortunately there is a powerful technique that can be used here, the Kernel method~\cite{gf-kernel}. We require that $W$ have no negative powers of $x$, and therefore the \emph{kernel}, $x-t(px^2+q)$, must be a factor of the right hand side of the equation. Among other things, this means that when the kernel vanishes, so must the right hand side. The kernel has two roots, $X_\pm$, although only one of these turns out to be appropriate (in the sense of not containing negative powers of $t$):
\begin{align*}
  X_\pm &= \frac{1\pm\sqrt{1-4pqt^2}}{2pt}, & 0 &= X_- - qtW|_{x=0}.
\end{align*}
Substituting into our functional equation, we obtain
\begin{align*}
  W &= \frac{1-\bar xX_-}{1-(px+q\bar x)t}.
\end{align*}
In fact, this is not quite the generating function we want. We want the final step to be into an absorbing boundary, so we define the `true' generating function to be $V = pxtW$. A little thought shows that the coefficient of $x^s$ in $V$ is the probability generating function of the first passage time; that is, the coefficient of $t^nx^s$ is the probability that a walk starting at $\ket{s}$ takes $n$ steps to reach $\ket{0}$. In order to find the MFPT, we want to calculate $\sum_{n=0}^\infty n[t^nx^s]V$ where $[t^nx^s]V$ is the coefficient of $t^nx^s$ in V. In fact, we can obtain a generating function encoding this information by taking the $t$-derivative of $V$, exploiting the fact that $\partial_t t^n=nt^{n-1}$. If we then set $t=1$, we sum up all the $t$ coefficients and therefore obtain the generating function for the MFPTs, parametrised by s. That is, $\dot V|_{t=1}$. Evaluating this gives
\begin{align*}
  \dot V|_{t=1} &= \frac1b\frac{x}{(1-x)^2} = \sum_{s=0}^\infty \frac sbx^s,
\end{align*}
i.e.\ the MFPT for $\ket{s}$ is $s/b$ as expected for discrete time. Converting to continuous time recovers $s/b\lambda$.

Unfortunately, these techniques do not extend easily to multiple dimensions. Nevertheless, there is a growing selection of techniques applicable to two-dimensional quadrant walks. See \textcite{bm-walks} for a review. We had partial success in applying these techniques, and some of our results are summarised in \Cref{app:gf}, but ultimately we were unable to use these to obtain expressions for the MFPT. Instead, we focussed on obtaining lower and upper bounds by making use of the recurrence relations and properties such as detailed balance; these results are presented in \Cref{sec:constrict-discrete}.

\subsection{Techniques for Continuous Phase Space}\label{sec:meth-cont}

\para{Fokker-Planck Equation}

The phase spaces we are interested in are discrete, but for completeness we shall also consider continuous phase spaces. Many of the definitions for discrete phase space generalise naturally such as the phase regions $\mathcal R'$, $\mathcal R$, $\mathcal P$, $\mathcal P'$, $\mathcal I$, and $\mathcal S$. The distribution $W$ is to be interpreted as a density, but otherwise $W$, $G$, $w_n$, $\varrho$, etc.\ generalise as expected. The boundaries of $\mathcal R$ are enforced by a no-flux boundary condition. The hypersurface slicing is less obvious, and will require us to first define the dynamics of the system.

Being stochastic, the dynamics are conventionally described by a Langevin stochastic differential equation. If $\vec\xi(t)$ is the phase coordinate, then we can write $\dv{t}\vec\xi(t)=\vec\eta(\vec\xi;t)$ where $\vec\eta(\vec\xi;t)$ is the noise term, a family of vectors of random variables indexed by the time coordinate. It is standard to use the decomposition $\vec\eta=\vec h+\vec\theta$ where $\vec h=\evqty{\vec\eta}$ is a deterministic `forcing' term, and $\vec\theta=\vec\eta-\evqty{\vec\eta}$ is a noise term with zero mean. Moreover, we expect the noise term to be uncorrelated in time, i.e.\ our dynamics should have the Markov property. Following the approach of \textcite{risken} in Chapter~3.4, we find we can write $\vec\theta(\vec\xi;t)=\mathbf g(\vec\xi;t)\vec\Gamma(t)$ where $\vec\Gamma$ is a vector of independent unit-variance zero-mean Gaussian variables, and $\mathbf g$ is a noise strength matrix. Using the Kramers-Moyal expansion, we can derive an equation for the evolution of the probability density $W$,
\begin{align*}
  \dot W &= -\vec\nabla\cdot\underbrace{(\vec\mu-\vec\nabla\cdot\mathbf D)W}_{\vec S},
\end{align*}
where $\vec\mu$ is the drift coefficient and $\mathbf D$ the diffusion matrix, both deriving from $\vec h$ and $\mathbf g$. This equation is known as the \emph{Fokker-Planck} equation (FPE).

We can now define hypersurface slicing for continuous phase space, remaining robust against coordinate transforms. Geodesic paths can be found by integrating the drift coefficient, $\dot{\vec x}=\vec\mu(\vec x)$, from some initial coordinate $\vec x_0$. A hypersurface $\mathcal P_s$ is defined by the locus of points which take the same time $t$ to reach $\mathcal S$. The label $s$ is arbitrary as long as it increases monotonically with $t$; typically $s$ will be chosen to correspond to geodesic path lengths in the `physically relevant' coordinate system, e.g.\ $s=b\lambda t$ in the case of uniform constant bias.

To relate the drift coefficient and diffusion matrix to the computational bias, we need to compare the statistical properties of the two systems. For constant bias, this will be straightforward and will result in a constant drift coefficient and diffusion matrix. Where the bias varies, more care is required to ensure the desired properties are replicated in the continuous case. 

We match the statistical properties for a single degree of freedom. In the discrete case, the relevant distribution is given by the difference between two Poisson distributions. One Poisson distribution represents the number of $\oplus$ tokens received, and the other the number of $\ominus$ tokens. This is known as the \emph{Skellam} distribution, and our parameters are $p\lambda t$ and $q\lambda t$, yielding mean $b\lambda t$ and variance $\lambda t$. In the continuous case, the FPE is given by $\dot W=-\mu W'+DW''$ and its exact solution can be obtained by a routine application of Fourier transforms:
\begin{align*}
  \dot{\widetilde W} &= -ik\mu\widetilde W - k^2D\widetilde W \\
  \partial_t \log\widetilde W &= -(k^2D+ik\mu) \\
  W &= \frac1{\sqrt{2\pi}}W_0 \ast \frac1{\sqrt{2\pi}}\int_{\mathbb R}\dd{k} \exp\!\qty\Big(ikx-(k^2D+ik\mu)t) \\
    &= \frac1{\sqrt{2\pi}}W_0 \ast \frac1{\sqrt{2\pi}}\int_{\mathbb R}\dd{k} \exp\!\qty\Big(-Dt\qty\Big(k+\frac{i(x-\mu t)}{2Dt})^2-\frac{(x-\mu t)^2}{4Dt}) \\
    &= W_0 \ast \frac1{\sqrt{4\pi Dt}}\exp\!\qty\Big(-\frac{(x-\mu t)^2}{4Dt}).
\end{align*}
That is, the initial distribution $W_0$ is convolved with a Gaussian of mean $\mu t$ and variance $2Dt$. Comparing with the Skellam distribution, we identify $\mu=b\lambda$ and $D=\tfrac12\lambda$.

\section{Constrictive Case}\label{sec:constrict}

Using these techniques, we will be able to obtain a variety of exact, approximate, and numeric results for the constrictive case for general geometries. In practice, we are more interested in truncated simplicial geometries (\Cref{dfn:simplex}) as their construction is more `natural'. The construction is natural because it comprises $d$ mona which evolve independently at all times except for the moment of synchronisation.

\begin{dfn}[Truncated Simplex]\label{dfn:simplex}
  We refer to our primary geometry of interest as a `truncated simplex'. A simplex is a generalisation of the concept of a triangle to arbitrary dimensions. That is, in dimension one it is a line segment, and in dimension three it is a tetrahedron. The pre-synchronisation subspace $\mathcal P$ of \Cref{fig:synch-simple-2} can be considered to be a right simplex in two dimensions, i.e.\ a right triangle, of infinite extent. It is defined by the set $\{(x,y):x\le0\land y\le0\}$ (restricted to $\mathbb Z^2$ for the discrete case or $\mathbb R^2$ for the continuous). The pre-synchronisation subspace $\mathcal P$ of \Cref{fig:synch-wide-1}, meanwhile, is a \emph{truncated} simplex in two dimensions. Its set definition is $\{(x,y):x\le0\land y\le0\land |x+y|\ge w\}$ where $w$ is the side-length of the constriction surface $\mathcal S$. In $d$ dimensions, the set is given by
  \[ \qty\bigg{\vec x:\bigwedge_{i=1}^dx_i\le0\land\qty\bigg|\sum_{i=1}^d x_i|\ge w}. \]
\end{dfn}

It will be useful to determine the sizes of hypersurfaces $|\mathcal P_n|$ for truncated simplices. These hypersurfaces are in fact regular $d-1$ simplices, which in $\mathbb R^d$ have hypervolume
\[ \frac{\sqrt{d}}{(d-1)!}\qty\bigg(\frac{n+w}{\sqrt{2}})^{d-1}. \]
In $\mathbb Z^d$ we can use generating functions. The size of $|\mathcal P_n|$ is equivalent to the number of ways $n+w-1$ can be made from the sum of $d$ natural numbers. This is conveniently given by the coefficient of $x^{n+w-1}$ in $(1-x)^{-d}$, 
\[ \binomqty\bigg{d+n+w-2}{n+w-1}. \]

\subsection{Initial Estimates}\label{sec:constrict-initial}

\para{Information Erasure Model}
We shall begin first with some quick initial estimates in order to better understand the qualitative behaviour of our results. Consider the lateral evolution of some initial distribution $\mathcal I$; whilst this initial distribution can in principle be chosen arbitrarily, it will tend to diffuse to fill the entire hypersurface. Moreover, the timescale of this lateral diffusion will be much shorter than that of the approach of the mean position towards the interfacial region $\mathcal S$ in the case of vanishing bias. To clarify, the bias between two adjacent nodes $x$ and $y$ is given by
\[ b(x\mapsto y) = \frac{\rate(x\mapsto y) - \rate(y\mapsto x)}{\rate(x\mapsto y) + \rate(y\mapsto x)}. \]
Therefore, without significant loss of generality, we assume a substantially delocalised initial distribution. This distribution need not be uniform, but it will have an entropy that scales roughly proportional to $\log|\mathcal P_s|$, where $|\mathcal P_s|$ is the size of the initial hypersurface. In order for an interaction to occur, the distribution must pass through the interfacial hypersurface $\mathcal S\equiv \mathcal P_0$ whereupon it will have a distributional entropy proportional to $\log|\mathcal P_0|$. As the geometry is constrictive, $|\mathcal P_0|\le|\mathcal P_s|$ and therefore this process requires an erasure of information. For constant bias $b$, information can be erased (see \Cref{chap:revi}) at a maximum rate $2b^2\lambda$ where $\lambda$ is the gross transition rate. We can therefore bound the synchronisation time from below by
\[ \frac1{2db^2\lambda} \log\frac{|\mathcal P_s|}{|\mathcal P_0|} \]
which, for a truncated simplex, is
\begin{align*}
    \frac1{2db^2\lambda} \sum_{k=0}^{d-2}\log\!\qty\Big(1+\frac{s}{w+k}) 
  = \underbrace{\frac1{4b^2\lambda} \log\!\qty\Big(1+\frac{s}{w})}_{d=2}.
\end{align*}
Note the factor $d$ in the denominator which arises because each monon is able to independently erase information at the given rate.

This neglects the time for the individual mona to reach the synchronisation surface, which is $s/b\lambda$. For $s\lesssim1/b$, the erasure time will dominate and so the MFPT can be approximated as
\[ \frac{s}{b\lambda} + \frac1{4b^2\lambda} \log\!\qty\Big(1+\frac{s}{w}), \]
which shows that there is a `penalty' term for synchronisation.
When $s\gtrsim1/b$ it is less clear as it is possible that the mona could perform some or all of the erasure along their journey leading to no penalty at all. From a practical point of view, any penalty in this case is negligible as the journey time will be larger, but it is instructive to understand what occurs in this case in order to understand how the system erases information. If the penalty were eliminated, then it would suggest that the lateral distribution of the mona should collimate into a narrow `beam' in anticipation of the synchronisation surface. This is clearly nonsensical, as the tendency is for the mona to diffuse laterally. A simple way to model the interaction is to divide the $d$ mona into 1 `latent' monon and $d-1$ `precocious' mona. The latent monon is identified as the last monon to arrive, and as such the precocious mona can be assumed to have reached a steady state distribution. For uniform constant bias, each monon's steady state distribution is geometric with $\pr(\mathcal P_n)=\frac bp(\frac qp)^n$ and entropy $\log\frac pb+1+\bigOO{\frac bq}$. For the latent monon to pass through the constriction point, each of the precocious mona's distributions must be erased. As these erasures could in principle occur simultaneously, this gives a lower bound for such a penalty as
\[ \frac{d-1}{2db^2\lambda}\log\frac{p}{bw}. \]
Putting the two results together, we see that the lower bound on the penalty should increase logarithmically with distance before reaching a plateau as $s\gtrsim1/b$.

\para{Quasi-Steady State Approximation}
Whilst the above is reasonable from an information theory perspective, it does not provide a mechanism for information erasure. A simple mechanistic model for synchronisation can be given by the Quasi-Steady State Approximation. This approximation consists of two phases; first, the initial distribution $\mathcal I$ evolves and approaches the interfacial surface $\mathcal S$, which is taken to be reflective. We then assume the distribution to have roughly attained its steady state form, at which point the second phase begins. In  the second phase, there is a steady leak of density through $\mathcal S$ per the transition rates of the system which is assumed to not significantly affect the form of the distribution in $\mathcal P$. 

For constant uniform bias, the steady state is given by $\pr(\mathcal P_n) = |\mathcal P_n|(\frac{q}{p})^n / \sum_{k=0}^\infty |\mathcal P_k|(\frac{q}{p})^k$. Evaluating the normalisation constant is non-trivial for arbitrary $|\mathcal P_k|$, but we find approximately that $\pr(\mathcal P_0)\sim b^dw^{d-1}$. The leak rate is $p\lambda\pr(\mathcal P_0)$, and therefore the Quasi-Steady State approximation gives an exponential decay process with half life $\sim b^{-d}w^{1-d}\lambda^{-1}$. This is the approximate penalty term, giving an overall MFPT
\begin{align*}
  \tau &\sim \frac{s}{b\lambda} + \frac{1}{b^dw^{d-1}\lambda}
\end{align*}
and showing that, whilst in principle an order $1/b^2$ penalty term is possible via information erasure, in practice a $d$-dimensional synchronisation will incur a substantially greater penalty for $d>2$ (as $b\ll1$). Fortunately this can be mitigated, either by replacing the synchronisation with a series of $2$-dimensional interactions or by making the information erasure explicit. Explicating the information erasure in such a case can be achieved using the `naive' approach of making the interfacial hypersurface sticky at the cost of erasing at least \SI{1}{\bit} of information (representing the state \code{mononIsMoving}).

\subsection{Discrete Phase Spaces}\label{sec:constrict-discrete}

Whilst the preceding approximations are reasonable, they make significant simplifications that neglect much of the specific dynamics of the system. In order to be more confident in their conclusions, we shall proceed by making use of the analytical techniques discussed in \Cref{sec:methodology}. We begin with the discrete case, being that this corresponds to the underlying geometry of the phase spaces of interest.

\subsubsection{Exact results}\label{sec:constrict-exact}

Recall that the MFPT can be obtained as the total phase density at steady state of the modified system, wherein the initial distribution $\mathcal I$ is instead supplied as a constant forcing term. Equivalently, this modified system can be understood as the `teleportation' of density absorbed at $\mathcal S$ back into the system according to $\mathcal I$. This is because, at steady state, the rate of density loss at the absorbing boundary must exactly balance the rate of the forcing term. Observe now that, if $s=1$ (i.e.\ $\mathcal I\subseteq \mathcal P_1$), then the `teleportation' of density from $\mathcal S$ to $\mathcal I$ resembles a reflective boundary. An unforced system with reflecting boundaries and whose underlying Markov chain is reversible has the convenient property that the current is everywhere vanishing. Such a reversible Markov system with vanishing current permits us to readily obtain the steady state using the principle of \emph{detailed balance}: the densities of each pair of states $x$ and $y$ are such that $[x]\rate(x\mapsto y)=[y]\rate(y\mapsto x)$.

Before continuing, we must first clarify the constraint on $\mathcal I$. We begin by introducing a labelling for the states in each hypersurface $\mathcal P_n$ as $\{(n,c) : c\in\mathcal P_n\}$. It is important to note that the steady state densities in the initial hypersurface, $[(s,c)]$, do not generally correspond in a simple way to the initial distribution $[(s,c)]_0\sim\mathcal I$. In the reflective system, the effective forcing term is given simply by the reflection rates,
\begin{align*}
  [(1,c)]_0 &= \sum_{c'\in\mathcal S} [(0,c')]\rate((0,c')\mapsto(1,c)) \\
            &\equiv [(1,c)] \sum_{c'\in\mathcal S}\rate((1,c)\mapsto(0,c')).
\end{align*}
That is, the only permissible initial distribution is prescribed by the steady state distribution of $\mathcal P_1$ and is in proportion to the forward transition rates. As shall be seen shortly, for the example geometry shown in \Cref{fig:synch-wide-1} this yields the initial distribution
\begin{align*}
  [(1,c)]_0 &= \frac1w\begin{cases}
    \frac12, & \text{$c$ on boundary}, \\
    1,       & \text{otherwise},
  \end{cases}
\end{align*}
which, for large $w$, is effectively uniform. Finally, to ensure that the density sum yields the MFPT, we pick the normalisation such that $[\mathcal I]=\sum_{cc'}[(1,c)]\rate((1,c)\mapsto(0,c'))=\sum_{cc'}[(0,c')]\rate((0,c')\mapsto(1,c))=1$. Bear also in mind that the reflecting boundary is an artefact of this representation, with its density in the original system being zero; therefore, in computing the MFPT, we must remember to neglect its contribution.

\para{Truncated Simplices} In addition to being a relatively simple geometry, the truncated simplex systems have constant uniform bias. This property gives rise to a particularly simple steady state. Suppose the forward and backward rates are given by $p\lambda/d$ and $q\lambda/d$ respectively where $p-q=b$ and $d$ is the dimension of the simplex. More precisely, given that each non-boundary state $(n,c)$ is adjacent to $d$ states $(n-1,c'_i)$ in the forward direction and $d$ states $(n+1,c'_i)$ in the backward direction for $i=1\ldots d$, the rates may be written as $\rate((n,c)\mapsto(n-1,c'_i))=p\lambda/d$ and $\rate((n,c)\mapsto(n+1,c'_i))=q\lambda/d$. Using detailed balance, we can deduce therefore that $[(n,c)]=\frac qp[(n-1,c'_i)]$ for \emph{any} $i=1\ldots d$ and hence $[(n-1,c'_i)]=\frac pq[(n,c)]$. As each pair of adjacent hypersurfaces in a truncated simplex forms a connected subgraph, one can show inductively that $[(n,c)]=[(n,c')]$ for all $c,c'\in\mathcal P_n$ (including boundary states). Combining the results thus far, we obtain $[(n+m,c)] = (\frac qp)^m [(n,c')]$. Finally, we find the normalisation as $\forall c.\,1 = [(0,c)] |\mathcal S| q\lambda$ which leads to the complete description of the steady state distribution and hence the MFPT,
\begin{align*}
  [(n,c)] &= \frac{(\frac qp)^n}{q\lambda |\mathcal S|}, &
  \tau &= \sum_{n=1}^\infty \frac{(\frac qp)^n}{q\lambda}\frac{|\mathcal P_n|}{|\mathcal S|},
\end{align*}
where we have neglected the contribution of the reflective boundary density, $[\mathcal S]$ as this is just an artefact of our representation.

Using the expression for the hypersurface sizes, we can compute the MFPT for an arbitrary dimension $d$. The factor $|\mathcal P_n|/|\mathcal S|$ reduces to $\prod_{k=0}^{d-2}(1+\frac n{w+k})$ and therefore the MFPT is given by $\frac1{q\lambda}\sum_{n=1}^\infty(\frac qp)^n\prod_{k=0}^{d-2}(1+\frac n{w+k})$. Enumerating for dimensions $d=1,2,3$, we find the $d$-dimensional MFPTs $\tau_d$,
\begin{equation}\begin{alignedat}{4}
  \tau_1 &= \frac1{q\lambda}\sum_{n=1}^\infty\qty\Big(\frac qp)^n&&=\frac1{b\lambda}, \\
  \tau_2 &= \frac1{q\lambda}\sum_{n=1}^\infty\qty\Big(\frac qp)^n\qty\Big(1+\frac nw)&&=\frac1{b\lambda} + \frac1w\frac p{b^2\lambda}, \\
  \tau_3 &= \frac1{q\lambda}\sum_{n=1}^\infty\qty\Big(\frac qp)^n\qty\Big(1+\frac nw)\qty\Big(1+\frac n{w+1})&&=\frac1{b\lambda} + \frac{2w+1}{w(w+1)}\frac{p}{b^2\lambda} + \frac1{w(w+1)}\frac{p}{b^3\lambda}.
\end{alignedat}\label{eqn:mfpts-simplex}\end{equation}
Obtaining an expression for general $d$ is non-trivial, but we can see that the leading order term will be $\sim b^{-d}w^{1-d}\lambda^{-1}$ whenever $w\lesssim 1/b$. When $d\lesssim w$, we can also obtain an approximate expression,
\begin{align*}
\tau_d &\approx \frac1{q\lambda}\sum_{n=1}^\infty\qty\Big(\frac qp)^n\qty\Big(1+\frac nw)^{d-1} \\
&= \frac1{q\lambda}\sum_{k=0}^{d-1}\binomqty\Big{d-1}{k}\qty\Big[\qty\Big(\frac{t}{w}\pdv{t})^k\frac1{1-t}]_{t=\frac qp} \\
&\approx \frac1{q\lambda}\sum_{k=0}^{d-1}\binomqty\Big{d-1}{k}\qty\Big[\qty\Big(-\frac{1}{w}\pdv{s})^k\frac1{s}]_{s=2b} \\
&\approx \sum_{k=0}^{d-1}\qty\Big(\frac d{2w})^k\frac{1}{b^{k+1}\lambda}.
\end{align*}

\para{General Geometries} Other geometries can be evaluated in the same way as for truncated simplices. We will now generalise the preceding argument as far as possible. First, we use detailed balance to compute the average state density for each hypersurface,
\begin{align*}
  \frac{[\mathcal P_n]}{|\mathcal P_n|} &= \frac1{|\mathcal P_n|}\sum_{c\in\mathcal P_n}[(n-1,c')]\frac{\rate((n-1,c')\mapsto(n,c))}{\rate((n,c)\mapsto(n-1,c'))} \\
  &= \frac{[\mathcal P_{n-1}]}{|\mathcal P_{n-1}|} \underbrace{ \frac1{|\mathcal P_n|}\sum_{c\in\mathcal P_n}\frac{[(n-1,c')]}{[\mathcal P_{n-1}]/|\mathcal P_{n-1}|}\frac{\rate((n-1,c')\mapsto(n,c))}{\rate((n,c)\mapsto(n-1,c'))} }_{\hat t_n},
\end{align*}
where for each $c\in\mathcal P_n$, $c'\in\mathcal P_{n-1}$ is any node adjacent to $c$ (i.e.\ such that the transition rates between them are non-zero). In so doing, we have generalised the factor $t=\frac qp$, used for constant uniform bias, to $\hat t_n$. These densities can then be summed to obtain an expression for the MFPT, 
\begin{align*}
  \tau &= [\mathcal S] \sum_{n=1}^\infty \frac{|\mathcal P_n|}{|\mathcal S|}\prod_{k=1}^n \hat t_k \\
  &= \evqty\Big{\sum_{c'}\rate((0,c)\mapsto(1,c'))}_{c\in\mathcal S}^{-1} \sum_{n=1}^\infty \frac{|\mathcal P_n|}{|\mathcal S|}\prod_{k=1}^n \hat t_k,
\end{align*}
where we have used the normalisation condition to solve for $[\mathcal S]$. 

We can now try to extract general scaling behaviour for a broad class of systems. As we are considering constrictive geometries, $|\mathcal P_n|$ is a monotonically increasing function as $n\to\infty$. Therefore, we must have $\prod^n \hat t_k\to0$ as $n\to\infty$ or else the MFPT will diverge (the physical reason being that $\hat t_k>1$ implies negative bias). Moreover, the $\prod^n\hat t_k$ must, in the limit, decrease faster than $|\mathcal P_n|$ grows. Provided that (most) of the $\hat t_k$ are less than 1 this is usually true as it ensures exponential decay whereas we expect $|\mathcal P_n|$ to grow only polynomially. It is reasonable to assume that regions of $\hat t_k>1$ (negative bias) are brief and rare. Given this assumption, the $\prod^n\hat t_k$ can be characterised by a decay length $\ell_n = \min \{m : -\sum_{k=n}^{n+m}\log\hat t_k\ge 1\}$, i.e.\ the distance from $n$ over which it decays by a factor $e$. This decay length is an estimator of the bias, $\ell_n \sim 1/2b$.

Consider the function $n^{p-1} t^n$ for $p,n>0$ and $0<t<1$; the function increases monotonically until its one turning point, before decaying inexorably towards zero. The turning point can be shown to occur at $n=(p-1)\ell$ where $\ell=-1/\log t$; consequently, the integral under the curve is dominated by the range $[0,(p-1+\bigOO{1})\ell)$. The integral can therefore be approximated as $\sim((p-1+\bigOO{1})\ell)^{p}/p\sim p!\ell^{p}$ by assumption that within this range $t^n\sim\bigOO{1}$. Returning to our general MFPT expression, and using the fact that $|\mathcal P_n|$ is monotonic, we can come to the same approximate conclusion. Interpolating quantities to be continuous in $n$, the turning point occurs when $\partial_n\log|\mathcal P_n|=-\log\hat t_n$. Expanding $|\mathcal P_n|$ as a polynomial in $n$ over this range, we can approximate it by its leading order term as $|\mathcal P_n|=an^{p-1}+\bigOO{n^{p-2}}$, and therefore obtain the same range $\sim[0,p\ell)$. Finally then, the MFPT can be approximated to leading order (up to a constant multiplicative factor of order unity) as
\[\tau\sim a\evqty\Big{\sum_{c'}\rate((0,c)\mapsto(1,c'))}_{c\in\mathcal S}^{-1}\frac{p!}{2^p}\frac{1}{b^{p}}\]
and we identify $p$ as the effective dimensionality $d$ of the system.

\subsubsection{Bounded Results}

To extend to the case of $s>1$ we proceed inductively. The MFPT from a node $x$ is given by the recurrence relation 
$\tau(x\mapsto\mathcal S) = (\sum_{x'}\rate(x\mapsto x'))^{-1} + \sum_{x'}\rate(x\mapsto x')\tau(x'\mapsto\mathcal S)$
where the $x'$ are adjacent to $x$. To rewrite this inductively, first pick a contiguous hypersurface $\mathcal X$ lying between $\mathcal I$ and $\mathcal S$ such that all paths from $\mathcal I$ to $\mathcal S$ pass through $\mathcal X$ at least once. It can then be seen that $\tau(i\mapsto\mathcal S)$ where $i\in\mathcal I$ can be written as $\tau(i\mapsto \mathcal X) + \sum_{x\in\mathcal X}p_{x|i}\tau(x\mapsto\mathcal S)$ where $\sum_{x\in\mathcal X}p_{x|i}=1$. This itself can be proven inductively by assumption that any node in the region before $\mathcal X$ can be written thus and using the trivial base case of $\tau(x\mapsto\mathcal X)=0$ where $x\in\mathcal X$. We then introduce the distributional quantity $\tau(\mathcal I\mapsto\mathcal S)=\sum_{i\in\mathcal I}\pr(i|\mathcal I)\tau(i\mapsto\mathcal S)$ to write this as $\tau(\mathcal I\mapsto\mathcal S)=\tau(\mathcal I\mapsto\mathcal X_{\mathcal I})+\tau(\mathcal X_{\mathcal I}\mapsto\mathcal S)$ where $\mathcal X_{\mathcal I}$ is the \emph{unique} distribution of states in $\mathcal X$ as they are first reached from $\mathcal I$.

Using this inductive form, we can rewrite the MFPT as the sum of a series of $s=1$ MFPTs,
$\tau(\mathcal I_s \mapsto \mathcal S) = \sum_{k=0}^{s-1}\tau(\mathcal I_{k+1} \mapsto \mathcal I_{k})$
where $\mathcal I_k$ is the unique distribution over $\mathcal P_k$ as first reached from $\mathcal I_{k+1}$. This immediately yields a first approximation to the MFPT for arbitrary $s$ by assuming that $\mathcal I_k$ is similar to the reflective distribution used in \Cref{sec:constrict-exact}. Calculating for the general geometry restricted to uniform constant bias, we find
\begin{equation}\begin{aligned}
  \tau &\approx \sum_{k=1}^s \frac1{q\lambda}\sum_{n=1}^\infty \frac{|\mathcal P_{k+n}|}{|\mathcal P_{k-1}|}\qty\Big(\frac{q}{p})^n \\
  &= \frac1{p\lambda}\sum_{n=0}^\infty \qty\Big(\frac qp)^n \sum_{k=0}^{s-1}\frac{\mathcal P_{k+n+1}}{\mathcal P_k} \\
  &\equiv \frac1{p\lambda}\opn{\mathcal Z}\!\qty\Big{\sum_{k=0}^{s-1}\frac{\mathcal P_{k+n+1}}{\mathcal P_k}}\!\qty\Big[\frac pq]
\end{aligned}\label{eqn:mfpt-lb}\end{equation}
where $\mathcal Z$ is the unilateral $\mathcal Z$-transform.

Unfortunately we have been unable to generally determine $\mathcal I_{n-1}$ from $\mathcal I_n$, but we can extract sufficient information to be able to bound the MFPT from above and below. In the reflective approach for $s=1$, recall that the steady state distribution is given by $[(n+m,c)]=(\frac qp)^m[(n,c')]$ for all $c,c'$. Notice also that the transitions to $\mathcal S$ are precisely those that are to be `absorbed', and those from $\mathcal S$ are precisely those to be `teleported'. The consequence is that the distribution on $\mathcal S$ is the unique distribution $\mathcal I_0$ as earlier defined, induced by $\mathcal I=\mathcal I_1$. Consider then the case $s=2$ where the initial distribution $\mathcal I$ is this `reflective' distribution. The MFPT can be decomposed into $\tau(\mathcal I_2\mapsto\mathcal I_1)+\tau(\mathcal I_1\mapsto\mathcal S)$; we know the first term as it is just that given by our $s=1$ calculation, but the second term requires us to know the $s=1$ MFPT for a uniform initial distribution.

The `reflective' distribution is defined by $[(n,c)]\propto\text{number of forward transitions}$; for example, the distribution for the Truncated 2-Simplex is given ratiometrically as $1:2:2:\cdots:2:1$. The uniform $1:1:1:\cdots:1:1$ distribution can thus be considered as a normalised superposition of the reflective distribution and a pure-boundary distribution $1:0:0:\cdots:0:1$. This is useful because the MFPT is linear, and therefore we can apply the principle of linear superposition. Now, for uniform constant bias in a constrictive geometry, the boundary nodes within a given hypersurface $\mathcal P_n$ must have maximal MFPT because the expected transition direction from the boundary nodes is backwards whereas for all other nodes it is forwards. As a result, the MFPT for a uniform distribution is greater than for a reflective distribution. This applies for each subsequently induced hypersurface distribution, as the effective negativity of the boundary transitions causes it to be `sticky' and attract density to itself.

\para{Lower Bound}
Therefore, for an initially reflective distribution, the estimate of the MFPT given by \Cref{eqn:mfpt-lb} is in fact a lower bound. Moreover, it is a lower bound for any distribution lying between the reflective distribution and the limiting distribution induced by the aforementioned boundary attraction phenomenon. 

\para{Upper Bound}
A trivial upper bound is given by the sum of the MFPTs for pure-boundary distributions, but this is needlessly loose. To obtain a tighter upper bound, we appeal to the distributional superposition,
\begin{align*}
  \tau(\mathcal I_n^{\text{refl.}}) &= \tau(\mathcal I_n^{\text{refl.}}\mapsto\mathcal I_{n-1}^{\text{unif.}}) + \alpha\tau(\mathcal I_{n-1}^{\text{refl.}}) + (1-\alpha)\tau(\mathcal I_{m-1}^{\text{bound.}}) \\
  &= \frac1{p\lambda}\sum_{r=0}^\infty\frac{|\mathcal P_{n+r}|}{|\mathcal P_{n-1}|}\qty\Big(\frac qp)^r + \alpha\tau(\mathcal I_{n-1}^{\text{refl.}}) + (1-\alpha)\tau(\mathcal I_{m-1}^{\text{bound.}}).
\end{align*}
The value of $\alpha$ can be found by taking the ratiometric vector for the reflective distribution, say $1:2:\cdots:2:1$, and then finding the boundary distribution which makes this up to uniform, i.e.\ $1:0:\cdots:0:1$. The uniform distribution would then be $2:2:\cdots:2:2$, and so the proportion of density in the reflective component is $(1+2+\cdots+2+1)/(2+2+\cdots+2+2)$, or the normalised average $\evqty{f_x}/\max_x f_x$ where $f_x$ is the number of forward transitions for node $x$.

The boundary MFPT is harder to find, but we can bound it from above. First consider the exact $s=1$ MFPT when $w=1$; in this case, $\mathcal P_1$ is formed entirely from boundary nodes and therefore its MFPT is precisely the boundary MFPT. In fact, this boundary MFPT is an upper bound on all the other boundary MFPTs as it corresponds to the special case of no interior nodes; when interior nodes do exist, they will have lower MFPTs than the boundary and density in the boundary will diffuse into these interior nodes, thus reducing the boundary MFPT. These definitions of $\alpha$ and boundary MFPTs will become clearer with a concrete example.

\para{Truncated Simplices}
Specialising to truncated simplices, we can quantify these lower and upper bounds. The $s=1$ MFPTs have already been obtained earlier for $d=1,2,3$ in \Cref{eqn:mfpts-simplex}, and so the MFPT lower bounds can be obtained thus,
\begin{align*}
  \tau_1 &= \frac{s}{b\lambda}, \\
  \tau_2 &\ge \frac{s}{b\lambda} + \frac{p}{b^2\lambda}\Delta\psi, \\
  \tau_3 &\ge \frac{s}{b\lambda} + \frac{p}{b^2\lambda}\qty\Big(2\Delta\psi + \frac1{s+w} - \frac1w) + \frac{p}{b^3\lambda}\qty\Big(\frac1w-\frac1{s+w}),
\end{align*}
where $\Delta\psi=\sum_{k=w}^{w+s-1}\frac1k=\psi(s+w)-\psi(w)\approx\log(1+\frac sw)$ with $\psi$ the digamma function. The factor $\Delta\psi$ closely resembles the information loss in the synchronisation interaction for 2-simplices, and the coefficient of $p/b^2\lambda$ in $\tau_3$ resembles that for 3-simplices, as expected. Notice that these lower bounds do not plateau as $s\to1/b$ as conjectured in \Cref{sec:constrict-initial}; whilst it is in principle possible from an information theoretic perspective to reach this plateau penalty, the passive dynamics of these interactions preclude this possibility. Nevertheless, at the point where the plateau becomes relevant, the penalty becomes insignificant in comparison to the $s/b\lambda$ for $\tau_2$ and therefore this is of little concern. For $\tau_3$ and above, penalties of order $\bigOO{b^{-3}}$ and above are present and remain significant beyond $s/b\lambda$.

\begin{figure}
  \centering
  \includegraphics[width=.8\linewidth]{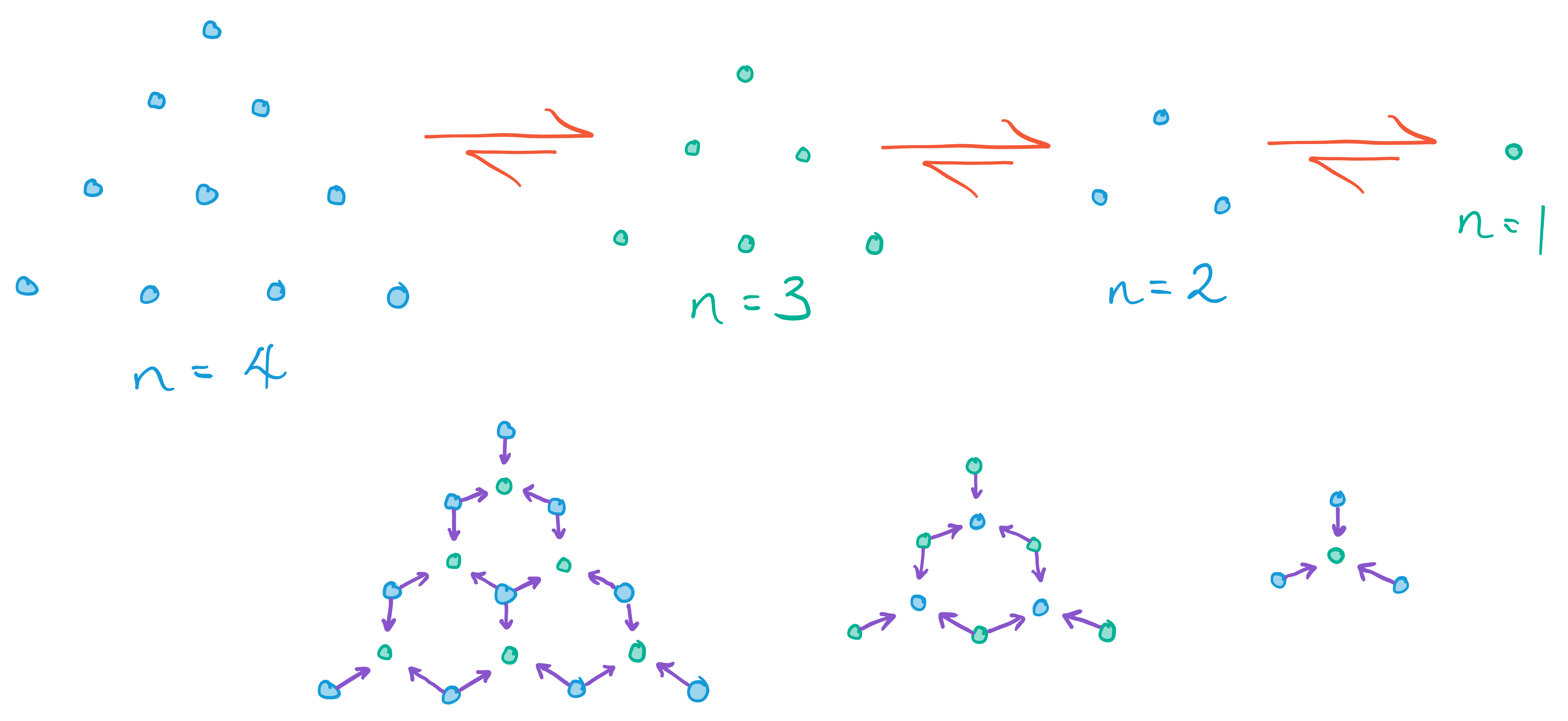}
  \caption[An illustration of a series of hypersurfaces in the $d=3$ simplex to demonstrate the different classes of boundary nodes.]{An illustration of a series of hypersurfaces in the $d=3$ simplex to demonstrate the different classes of boundary nodes. The top row shows the hypersurfaces on their own, whilst the bottom row shows adjacent hypersurfaces superimposed and the forward transitions between them.}
  \label{fig:simplex3-boundary}
\end{figure}

For the upper bound, we first obtain the upper bound on the boundary MFPT as $\frac{s}{b\lambda}+\frac{ps}{b^2\lambda}$ for $d=2$ and $\frac{s}{b\lambda}+\frac32\frac{ps}{b^2\lambda}+\frac12\frac{ps}{b^3\lambda}$ for $d=3$. One needs to be particularly careful for $d=3$ and above as there are different `classes' of boundary nodes: for $d=3$ (as shown in \Cref{fig:simplex3-boundary}), there are `vertex' boundary nodes which have only one forward transition, and `edge' boundary nodes which have two forward transitions, whilst interior nodes have three. More generally, in dimension $d$ there are $d-1$ classes of boundary nodes with $1,2,\ldots,d-1$ forward transitions respectively (and interior nodes with $d$ forward transitions). The vertex nodes will have the highest MFPTs and correspond precisely to the $w=s=1$ MFPT, and are therefore what we shall use for MFPT bounds in higher dimensions. The values of $\alpha$ can be found by counting transitions, and are $\alpha_1=1$, $\alpha_2=1-(n+w-1)^{-2}$ and $\alpha_3=1-2(n+w)^{-1}$.

Using these values of the boundary MFPTs and superposition fractions, we can write a recurrence relation for each dimension in $\tau_d(n)$,
\begin{alignat*}{6}
  \tau_1(w+k+1) &= \frac1{b\lambda} +{} &1&\cdot\tau_1(w+k) +{} &0&\cdot\qty\Big(\frac{k}{b\lambda}), \\
  \tau_2(w+k+1) &\le \frac1{b\lambda} +{} &\frac{w+k-1}{w+k}&\cdot\tau_2(w+k)+{}&\frac1{w+k}&\cdot\qty\Big(\frac{k}{b\lambda}+\frac{pk}{b^2\lambda}), \\
  \tau_3(w+k+1) &\le \frac1{b\lambda} +{} &\frac{w+k-1}{w+k+1}&\cdot\tau_3(w+k)+{}&\frac2{w+k}&\cdot\qty\Big(\frac{k}{b\lambda}+\frac32\frac{pk}{b^2\lambda}+\frac12\frac{pk}{b^3\lambda}
  ),
\end{alignat*}
with initial conditions
\begin{align*}
  \tau_1(w+1) &= \frac1{b\lambda}, \\
  \tau_2(w+1) &= \frac1{b\lambda} + \frac1w\frac{p}{b^2\lambda}, \\
  \tau_3(w+1) &= \frac1{b\lambda} + \frac{2w+1}{w(w+1)} \frac{p}{b^2\lambda} + \frac1{w(w+1)}\frac{p}{b^3\lambda}.
\end{align*}
These can finally be solved to yield
\begin{align*}
  \tau_1 &= \frac{s}{b\lambda}, \\
  \tau_2 &\le \frac{s}{b\lambda}+\frac{\frac12s(s+1)}{s+w-1}\frac{p}{b^2\lambda}, \\
  \tau_3 &\le \frac{s}{b\lambda}+\frac{\frac12s[2s^2+(3s+1)(w+1)-4s]}{(w+s)(w+s-1)}\frac{p}{b^2\lambda} + \frac{\frac16s(2s^2+3(s-1)(w-1)+4)}{(w+s)(w+s-1)}\frac{p}{b^3\lambda},
\end{align*}
and, in the limit of large $s$, these simplify to
\begin{align*}
  \tau_1 &= \frac{s}{b\lambda}, &
  \tau_2 &\le \frac{s}{b\lambda} + \frac{s+1}{2b^2\lambda}, &
  \tau_3 &\le \frac{s}{b\lambda}+\frac{2s+3w}{2b^2\lambda}+\frac{2s+3w}{6b^3\lambda}.
\end{align*}
In summary, the MFPT penalties for $d=1,2,3$ truncated simplices are bounded above and below by
\begin{align*}
  \Delta\tau_1 &= 0, \\
  \frac{p}{b^2\lambda}\Delta I_2 \le \Delta\tau_2 &\le \frac{p}{b^2\lambda}\frac{\frac12s(s+1)}{s+w-1}, \\
  \frac{p}{b^2\lambda}\Delta I_3 + \frac{p}{b^3\lambda}\frac s{w(s+w)} \le \Delta\tau_3 &\le \frac{p}{b^2\lambda}\frac{\frac12s[2s^2+(3s+1)(w+1)-4s]}{(w+s)(w+s-1)} \\&+ \frac{p}{b^3\lambda}\frac{\frac16s(2s^2+3(s-1)(w-1)+4)}{(w+s)(w+s-1)},
\end{align*}
where $\Delta I_2=\psi(s+w)-\psi(w)$ and $\Delta I_3=2\Delta I_2+\frac1{s+w}-\frac1w$. Expressions for higher dimensions can be obtained using the same approach, but the key point to note is that the terms in each order of $b^{-1}$ are present and significant in both the lower and upper bounds and therefore a synchronisation interaction in dimension $d$ will be subject to a penalty of order $\bigOO{b^{-d}}$, with the coefficient of the $\bigOO{b^{-2}}$ penalty term bounded below by the information loss in the interaction.

\subsubsection{Numerical Simulation}
\label{sec:simulation}

\begin{figure}
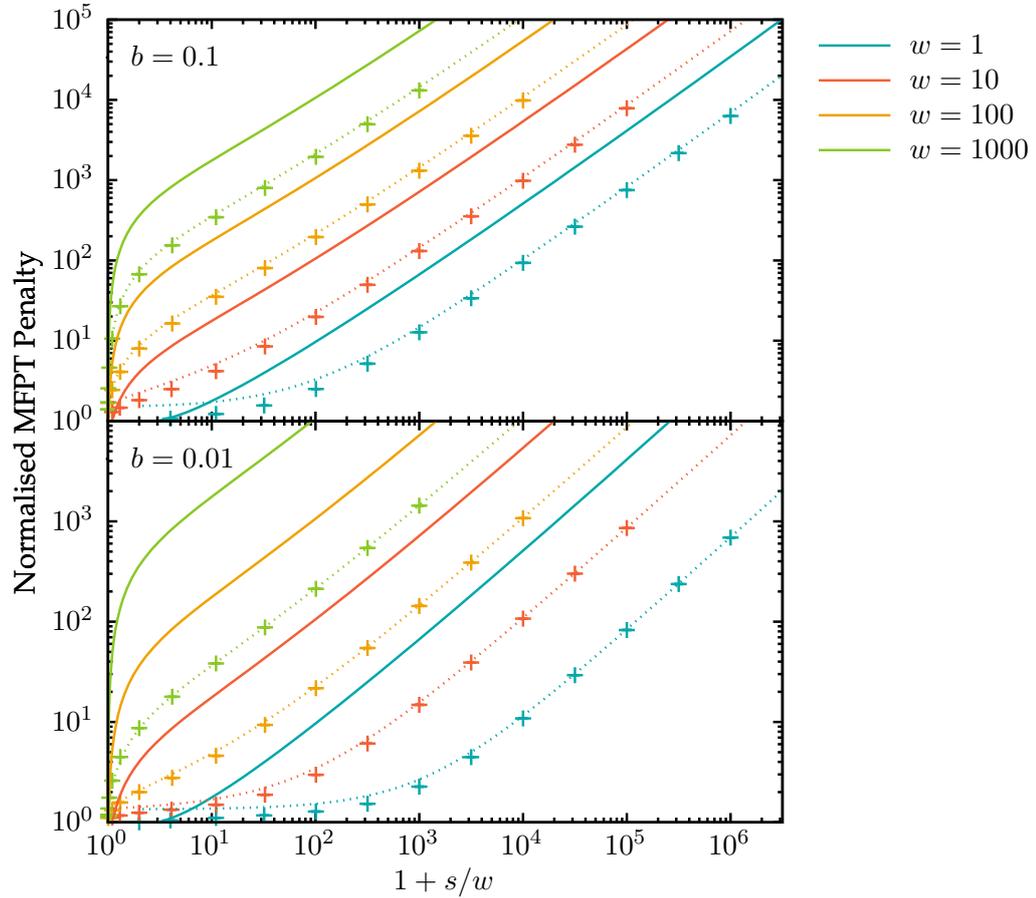
%[p]
  \captionsetup{singlelinecheck=off,type=figure}
  \centering
  \begin{minipage}[t]{10.5cm}
    \strut\vspace*{-\baselineskip}\newline
    \input{fig-penalties}
  \end{minipage}\begin{minipage}[t]{2.75cm}
    \strut\vspace*{-\baselineskip}\newline
    \\[-3pt]\input{fig-penalties-key}
  \end{minipage}
    
  \caption[Simulation results for the $d=2$ truncated simplex case, with varying initial hyperplane distances $s$ and constriction widths $w$, and subject to a uniform bias $b\in\{0.1,0.01\}$.]{Simulation results for the $d=2$ truncated simplex case, as depicted in \Cref{fig:synch-wide-1}, with varying initial hyperplane distances $s$ and constriction widths $w$, and subject to a uniform bias $b\in\{0.1,0.01\}$. The \emph{Normalised MFPT Penalty} corresponds to $(\tau-\frac s{b\lambda})/(\frac p{b^2\lambda}\Delta I_2)$, i.e.\ it is the MFPT penalty divided by the lower bound as computed in \Cref{sec:constrict-discrete}. As a result, the horizontal axis corresponds to the lower bound. The plotted points show simulation data with (negligible) error bars, the solid lines show the (normalised) upper bound, and the dashed lines show the empirical approximation given by \Cref{eqn:mfpt-empirical}.}
  \label{fig:penalties}
\end{figure}

In the course of investigating this synchronisation problem, a suite of tools\footnote{\url{https://github.com/hannah-earley/revcomp-synch-rw-mfpt}}\ for computing MFPTs was developed, as well as an assortment of other tools\footnote{\url{https://github.com/hannah-earley/revcomp-synch-rw-distribution}}\footnote{\url{https://github.com/hannah-earley/revcomp-synch-rw-distribution-2}}. Computing MFPTs in regimes of low bias and unbounded state space is challenging because the probability density tends to diffuse very far from $\mathcal S$. Attempting to compute it in a deterministic way via the explicit steady state distribution has a high spatial complexity as one must make sure not to prematurely truncate the state space, or else risk underestimating the MFPT. A rough estimate gives a spatial complexity of $\bigOO{1/b^2 + (s+w)/b}$, and the time complexity will be the product of this spatial complexity and the number of simulation steps to convergence, which will be at least $\bigOO{1/b}$ to allow particles from the initial distribution to explore the entire state space.

The alternative is to exploit the Markov property of the system in order to use a stochastic Markov-Chain Monte-Carlo simulation approach, or MCMC. In an MCMC approach we essentially apply the stochastic dynamics of the system to an ensemble of instances of the system. In this case, each instance is a single phase particle, whose coordinate represents the joint state of the underlying mona. Therefore, for an ensemble of size $m$, the spatial complexity will be $\bigOO{m}$. Again, the goal is to obtain the steady state distribution of the forced absorbing-boundary system; to ensure conservation of density, we use the teleporting form in which there is a one-to-one correspondence between each particle incident on the absorbing boundary and a particle injected into the initial distribution.

The naive approach is to record the number of time-steps between injection and absorption for each particle. Unfortunately this approach is not robust enough here as the high-diffusion negligible-drift regime leads to substantial variance in the FPT distribution such that a significant proportion of particles in the ensemble will get `lost' in the phase space for an unboundedly long time. Therefore the mean of the particles' injection-absorption times will typically be divergent, or the program will run indefinitely as it waits for each particle to be absorbed. A possible solution is to exclude particles which have not returned after some predefined number of time-steps, but this will lead to an underestimate of the MFPT.

The correct solution, it transpires, is to instead record the absorption current: that is, within some time-step window $\Delta t$, count the number of particles $n$ incident on the absorbing boundary in order to obtain an unbiased estimate of the current $S=n/m\Delta t$ where $m$ is the ensemble size. This process can be repeated as many times as one likes to obtain a series of measurements of the current, from which basic statistics can be used to find an improved estimate of the mean current and its standard error.

There remains, however, a caveat common to all MCMC simulations. Assuming our particles are distributed according to the desired steady state, then their distribution will remain so as they evolve under the stochastic dynamics of the system. The problem is in ensuring that the initial distribution of the system is the steady state, despite not knowing what that is. To address this, one typically initialises the system to an approximation of the steady state and then lets the distribution `burn in' by running the simulation for some number of initial time-steps, after which it is hoped that the approximation will have converged to the steady state. The poorer the quality of the initial approximation to the steady state, the longer this burn-in phase will take. Moreover, the length of the burn-in time can be hard to ascertain and so often one must resort to a heuristic judgement.

Oft forgotten can be the quality of the random number generator used to simulate the stochastic dynamics. In an earlier iteration of this tool, we neglected to consider this and as a result used the standard \code{rand} function: a \emph{linear congruential generator} with period $2^{32}$. As the simulation times required to obtain good quality MFPT estimates for many of our parameters exceeded this period, the data obtained was invalid. It can be hard to detect this, and we only realised when a debug trace of the intermediate outputs showed a noticeable and unexpected periodicity. As a result, our MFPT suite now uses the PCG family of PRNGs~\cite{pcg} which have an internal state space size and period of $2^{64}$ whilst also being very fast.

Whilst the dynamics of this system are very simple, some several quadrillion time-steps were required to obtain the data shown in \Cref{fig:penalties} and so a sophisticated toolchain was built around the MCMC routine to improve robustness and automatability. The core program, \tool{./walk}, is written in \cpp\ and makes use of \tool{OpenMP} for parallelisation. Simulating an ensemble of random walkers is an \emph{embarrassingly parallel} problem, meaning that performance can generally scale linearly with the number of available CPUs due to the minimal amount of inter-thread synchronisation necessary. Unfortunately the \tool{GCC} and \tool{LLVM} compilers\footnote{In our experience, \tool{LLVM}'s optimisation was far better than that of \tool{GCC} in this case.} were not able to fully optimise the inner loop responsible for executing the stochastic dynamics, and so some hand tuning was necessary to improve the assembly output and bring down the iteration time to $\lesssim\SI{5}{\nano\second}$ on the machines used. To ensure resumable operation and cooperation with cluster scheduling systems such as \tool{slurm}, the program supports checkpointing (both intermittent and in response to trappable signals). The program is also reasonably modular, making it straightforward to adapt to new random walk systems.

To coordinate simulating the large number of parameters across different topologies of networked systems and to minimise need for manual intervention, a batched job runner was developed in the form of a \python\ script. The job runner has three modes of operation; it can take a description of a set of jobs and generate specific instructions for each job, it can connect to a distributed job queue to request and run these jobs, and it can query the status of all the jobs and job runners (optionally sending updates by email at regular intervals). A key feature is that it can analyse the output of \tool{./walk} to determine whether the error has converged to a sufficiently small value, whereupon it will move onto the next job. Finally, a number of tools were created to analyse the resulting data, including the ability to infer and inspect the approximate steady state distribution over the phase space.

The results of these simulation tools are shown in \Cref{fig:penalties}. As well as serving as a sanity check on the derived MFPT bounds, we were also able to obtain an estimated empirical equation for the true MFPT as
\begin{align}
  \tau &= \frac{s}{b\lambda}\qty\Big(1+p\frac{3+s}{w+s}) + \frac43\frac{p}{b^2\lambda}\Delta I_2,\label{eqn:mfpt-empirical}
\end{align}
where the $(3+s)/(w+s)$ term is inspired by the exact MFPT for $b=1$, given by $\tau=s(1+\frac12\frac{3+s}{w+s})$. In particular these tools were very helpful in getting an intuitive feel for the parametric dependence of the MFPT when analytic approaches seemed unyielding.

\subsection{Continuous Phase Spaces}\label{sec:constrict-continuous}

The approach for continuous phase spaces will largely follow that for discrete, except that we shall make use of the Fokker-Planck equation as introduced in \Cref{sec:meth-cont}. In order to solve for the reflective steady state, we shall use the fact that its current vanishes. Recall that the continuous current is given by $\vec S=(\vec\mu-\vec\nabla\cdot\mathbf D)W$ where $W$ is the phase density, $\vec\mu$ the drift coefficient and $\mathbf D$ the diffusion matrix. When $\mathbf D$ is non-singular, this can be rewritten as $-\mathbf D(-\mathbf D^{-1}\vec\mu'+\vec\nabla)W=-\mathbf De^{-\varphi}\vec\nabla e^\varphi W$ where $\vec\mu'=\vec\mu-[\vec\nabla\cdot\mathbf D]$ and $\varphi$ is defined by $\vec\nabla\varphi=-\mathbf D^{-1}\vec\mu'$. If the current is identically zero, then it immediately follows that the density is given by $W=Ne^{-\varphi}$ where $N$ is a normalisation constant. This is fairly standard, and a similar result is obtained in Chapter~6 of \textcite{risken}.

To connect this reflective-boundary distribution with the absorbing-boundary distribution, we must find the absorption current. We assume that our reflective-boundary distribution is approximately equivalent to an absorbing-boundary distribution with forcing applied at the $\mathcal P(-\delta x)$ hypersurface, where $x$ is a coordinate orthogonal to the hypersurfaces; in the limit $\delta x\to0$ this becomes exact, in analogy with the discrete case. The true absorbing-boundary distribution has vanishing density at $\mathcal P(0)=\mathcal S$ by definition, and therefore we can recover our desired distribution by introducing an appropriate decay from $\mathcal P(-\delta x)$ to $\mathcal S$. In the limit $\delta x\to0$ any higher order polynomial terms in this decay will vanish, and therefore the decay can be assumed linear. The absorption current is then given by
\begin{align*}
  \vec S(0) &= \vec\mu W|_{x=0} - \sum_i \vec D_i\lim_{\delta x_i\to 0} \frac{W|_{x_i}-W|_{x_i-\delta x_i}}{\delta x_i} = \vec D_x \frac{W(-\delta x)}{\delta x} = \vec D_x \frac{Ne^{-\varphi}}{\delta x}.
\end{align*}
Using the fact that the current is along the $x$ direction, we can find the normalisation constant subject to the constraint that the total absorption current is 1 as $N = \delta x / \int_{\mathcal S}\dd[d-1]{\vec x} D_{xx}e^{-\varphi}$. Therefore we find that the MFPT from $s=-\delta x$ is given by $\delta\tau=\delta x\int_{\mathcal P}\dd[d]{\vec x} e^{-\varphi} / \int_{\mathcal S}\dd[d-1]{\vec x} D_{xx}e^{-\varphi}$, from which we find the lower bound for the MFPT from arbitrary $s$,
\begin{align*}
  \tau(s) &\ge \int_{-s}^0 \dd{x} \frac{\int_{-\infty}^x \dd{x'}\int_{\mathcal P(x')}\dd[d-1]{\vec x} e^{-\varphi}}{\int_{\mathcal P(x)}\dd[d-1]{\vec x} D_{xx}e^{-\varphi}}
    = \int_0^\infty\dd{t}\int_{-s}^0\dd{x}\frac{[\evQty{e^{-\varphi}}A]_{x-t}}{[\evQty{D_{xx}e^{-\varphi}}A]_x}
\end{align*}
where $\evQty{e^{-\varphi}}$ is averaged over the given hypersurface and $A=\int_{\mathcal P(x)}\dd[d-1]{\vec x}$ is the area of that hypersurface. We see that the MFPT for general phase space is easier to express in the continuum limit, and in the particular case of $\varphi$ linear in $x$, i.e.\ $\varphi=\phi-\mu x/D$, this reduces to a Laplace transform:
\begin{align*}
  \tau &\ge \mathcal L_t\qty\Big{\int_{-s}^0\dd{x}\frac{[\evQty{e^{-\phi}}A]_{x-t}}{[\evQty{D_{xx}e^{-\phi}}A]_x}}\!\qty\Big(\frac{\mu}{D}).
\end{align*}

\para{Truncated Simplices}

For the Truncated Simplex example in $d$ dimensions, the drift coefficient is constant uniform in the $x$ direction and the diffusion matrix is the constant uniform and isotropic $\mathbf D=D\mathbf 1$. Therefore, $\phi=0$ and we have from earlier that
\[A(-n) = \frac{\sqrt{d}}{(d-1)!}\qty\Big(\frac{n+w}{\sqrt{2}})^{d-1}.\]
Hence, the MFPT lower bound for arbitrary $d$ may be obtained thus:
\begin{align*}
  \tau &\ge \frac1D\mathcal L_t\qty\Big{\int_{-s}^0\dd{x}\qty\Big(\frac{w-x+t}{w-x})^{d-1}}\!\qty\Big(\frac{\mu}{D}) \\
  &\ge \frac1D\mathcal L_t\qty\Big{\int_0^s\dd{x}\sum_{k=0}^{d-1}\binomqty\Big{d+1}{k}\qty\Big(\frac{t}{w+x})^{k}}\!\qty\Big(\frac{\mu}{D}) \\
  &\ge \frac1D\sum_{k=0}^{d-1}\binomqty\Big{d+1}{k}\mathcal L_t\qty\Big{t^k}\!\qty\Big(\frac{\mu}{D})\int_0^s\dd{x}\qty\Big(\frac{1}{w+x})^{k} \\
  &\ge \frac{s}{\mu}+(d-1)\frac{D}{\mu^2}\log(1+\frac sw)+\sum_{k=2}^{d-1}\frac{D^k}{\mu^{k+1}}\frac1{k-1}\frac{(d-1)!}{(d-1-k)!}\qty\Big[\frac1{w^{k-1}}-\frac1{(w+s)^{k-1}}].
\end{align*}

\section{Recessive Case}\label{sec:recessive}

We turn now to the recessive case, for which we shall assume a continuous phase space for simplicity. We shall also restrict our attention to recessive systems whose diffusion matrices are uniform and isotropic, $\mathbf D=D\mathbf 1$, across $\mathcal R=\mathcal P\cup\mathcal P'$ and whose drift vectors are constant uniform in $\mathcal P$; the condition on $\vec\mu$ in $\mathcal P'$ will become apparent later, but for now we shall assume it takes the same value as in $\mathcal P$. Finally, we adopt an orthogonal coordinate system $(u;\vec v)$ where $u$ is chosen such that $\vec\mu=\mu\hat u$. It may appear that $u$ indexes the hypersurfaces, but this is in fact not the case; a coordinate transform that would render $u$ a hypersurface index would apply a shear and therefore make $\mathbf D$ anisotropic.

\begin{figure}%[p]
  \captionsetup{singlelinecheck=off,type=figure}
  \centering
  \begin{minipage}[t]{12cm}
    \input{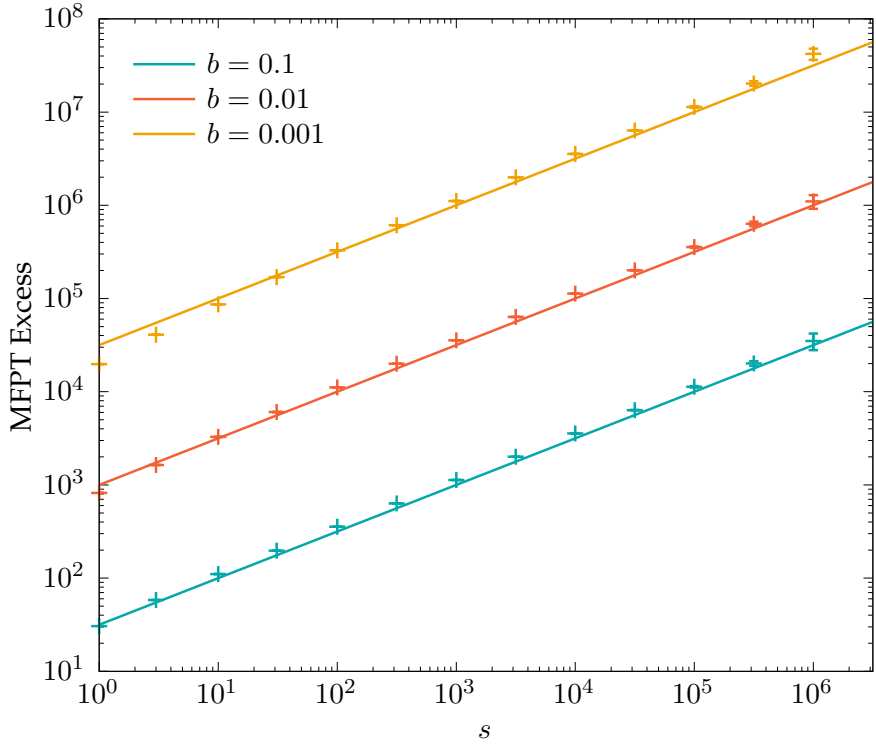}
  \end{minipage}
    
  \caption[Simulation results for the $d=2$ recessive case, starting on the line $x=y$ with varying initial distances $s$, and subject to a uniform bias $b\in\{0.1,0.01,0.001\}$.]{Simulation results for the $d=2$ recessive case, as depicted in \Cref{fig:synch-ex-2}, starting on the line $x=y$ and with varying initial distances $s$, and subject to a uniform bias $b\in\{0.1,0.01,0.001\}$. The \emph{MFPT Excess} corresponds to $\tau-\frac{2s}{b\lambda}$ (the coefficient $2$ was chosen empirically). The plotted points show simulation data with error bars, whilst the solid lines show $\frac{\sqrt{s}}{b^{3/2}\lambda}$ hence demonstrating an empirical MFPT of $\frac{2s}{b\lambda}+\frac{\sqrt{s}}{b^{3/2}\lambda}$.}
  \label{fig:gessel}
\end{figure}

\para{MFPT Approach}
Unfortunately the reflective steady state approach used for the constrictive case is inapplicable here as the unbounded transverse width of the space means the steady state is everywhere vanishing. Moreover, ignoring the zero normalisation, the uniform initial distribution is undesirable. For the canonical example shown in \Cref{fig:synch-ex-2}, initial distributions $\mathcal I$ of interest are typically centered on the line $x=y$ (corresponding to the white dashed line in the Figure). Nonetheless, we can apply\footnote{Our simulation suite is specialised to the case of walks in a single quadrant (with a possibly truncated boundary). Whilst it would not be too difficult to handle more general geometries, this particular three-quadrant example can be easily transformed into a single quadrant walk. First, we identify that the geometry is symmetric in the line $x=y$ and therefore the three quadrants are mapped to one and a half, with a reflective boundary placed at $x=y$. That is, $\mathcal P=\{(-x,y):x\in\mathbb N\land y\in\mathbb Z\land y>-x\}$. Next, we apply a shear $(-x,y)\mapsto(-x,y+x)$ to map the region to the top-left quadrant. The transformed transitions turn out to correspond to those of Gessel walks, a summary of which is presented by \textcite{bm-gessel} along with an investigation of their generating function.}\ the simulation suite introduced in \Cref{sec:simulation}. The results for an initial delta distribution on the line $x=y$ and a distance $s$ from $\mathcal S$ are shown in \Cref{fig:gessel} and reveal an empirical MFPT of $\tau\approx\frac{2s}{b\lambda}+\frac{\sqrt{s}}{b^{3/2}\lambda}$ and therefore a penalty of order $b^{-3/2}$.

This penalty would appear to be a significant improvement over that for the constrictive case of $b^{-2}$. Unfortunately this penalty is invalid because in fact the MFPT is not an appropriate quantity to use here. The reason is that the time quantity we are seeking is $\inf\{t:\forall u>0.\evqty{n(t+u)}>0\}$ where $n$ is a hypersurface index, and the use of the MFPT is predicated upon the idea that an initial distribution within $\mathcal P'$ has $\evqty{n(t)}>0$ for all positive $t$. It is straightforward to see that this property does not apply in the recessive case: Consider a simplified view of \Cref{fig:synch-rec-domain} as just the four quadrant domain $\mathcal R=\mathbb R^2$ such that $\mathcal P=\{(x,y):x<0\wedge y<0\}$ and $\mathcal P'=\{(x,y):x>0\land y>0\}$ and with $\vec\mu=\mu\frac1{\sqrt 2}(\hat x+\hat y)$; the hypersurfaces are given by $\mathcal P(\sqrt{2}n)=\{(n,n+y):y\in\mathbb R_+\}\cup\{(n+x,n):x\in\mathbb R_+\}$, i.e.\ $n(\vec x)=\sqrt 2\min(x,y)$. Now, the evolution of a phase particle may be decomposed into that along the line $x=y$ and perpendicular to this line. Calling these coordinates $u$ and $v$ respectively, i.e.\ $(u,v)=\frac1{\sqrt 2}(x+y,x-y)$ and $n=u-|v|$, we see that the distribution is given by the product of two Gaussians, with probability density
\begin{align*}
  \frac1{\sqrt{4\pi Dt}}\exp\!\qty\Big(-\frac{(u+s-\mu t)^2}{4Dt}) \cdot \frac1{\sqrt{4\pi Dt}}\exp\!\qty\Big(-\frac{v^2}{4Dt}).
\end{align*}
Now, the expected hypersurface is given by
\begin{align*}
  \evqty{n} &= \evqty{u}-\evqty{|v|} = -s + \mu t - \sqrt{\frac{4Dt}{\pi}}
\end{align*}
and therefore an initial distribution $\delta(n_0,0)$ will tend to go backwards in $n$-space for a time $t=\frac{D}{\pi\mu^2}$, retreating as far back as $\evqty{n}=n_0-\frac{D}{\pi\mu}$ before it begins to advance; at time $t=\frac{2D}{\pi\mu^2}$ it returns to its initial position $\evqty{n}=n_0$ whereupon it begins to make net progress. The consequence is that there is a time penalty for all $n_0<+\frac{D}{\pi\mu}$, and hence the appropriate boundary to consider for the MFPT calculation is given by $n=+\frac{D}{\pi\mu}$. 

Of course, in the case of $\vec\mu$ constant uniform across $\mathcal R$ we are able to use the far simpler approach of considering the exact probability distribution, as detailed above. We can also extend easily to arbitrary boundary shape. Unfortunately, as mentioned, $\vec\mu$ may well differ between $\mathcal P$ and $\mathcal P'$. For our canonical example in $(u,v)$ coordinates, 
\begin{align*}
  \vec\mu &= \begin{cases}
    \mu\hat u, & (u,v)\in\mathcal P, \\
    -(\opn{sign}v)\mu\hat v & (u,v)\in\mathcal P'.
  \end{cases}
\end{align*}
That is, in $\mathcal P'$ the drift direction is towards the $u$ axis. Recall also that we have ignored the six-quadrant structure, and so there is in a sense a lot more `space' in $\mathcal P'$. Nevertheless, we will be able to obtain lower and upper bounds.

\para{Probability Distribution Approach}

Ignoring the caveat about $\vec\mu$ for now, we proceed for a general boundary shape $u=f(v)$. For the canonical example, $f(v)=|v|$. Without loss of generality, we pick $f(0)=0$ and we note that a recessive geometry implies $f$ increases monotonically away from $v=0$. We then have $\evqty{n}=\evqty{u-f(v)}$; decomposing $f$ into its even and odd parts, $f_e(v)=\tfrac12[f(v)+f(-v)]$ and $f_o(v)=\tfrac12[f(v)-f(-v)]$ respectively, we see that $\evqty{n}\equiv\evqty{u-f_e(v)}$ whenever the initial distribution is even because the transverse dynamics respect evenness. As we are considering the initial distribution $\delta(-s,0)$, we discard the odd part of $f$ without loss of generality. 

If $f$ has a power series expansion about $v=0$, it will take the form $f(v)=\sum_{p=1}^\infty f_{2p}v^{2p}$. Series representations do not exist for many boundary shape functions of interest however, in particular $|v|$ has no such expansion. Consider instead a smooth function $g(v)$ such that $f(v)=g(|v|)$; if $g$ exists, it has series expansion $\sum_{p=1}^\infty f_p v^p$ and hence $f$ will have series expansion $\sum_{p=1}^\infty f_p |v|^p$. We can then compute $\evqty{f}$ as $\sum_{p=0}^\infty \alpha_pf_p(2Dt)^{p/2}$ where $\alpha_{2p}=(2p-1)!!$ and $\alpha_{2p+1}=\sqrt{2/\pi}(2p)!!$, and $n!!=n(n-2)(n-4)\cdots$ is the double factorial.

For the case $f=f_1|v|$, $\evqty{n}=-s+\mu t-2f_1\sqrt{Dt/\pi}$ and we will find a penalty of order $1/\mu^2$ as earlier. If instead $f$ is quadratic, then $\evqty{n}=-s+(\mu-2f_2D)t$ and there are two subcases: if $\mu>2f_2D$ then the rate of computation is reduced to $\mu-2f_2D$ within a large vicinity of the synchronisation interaction; if $\mu\le2f_2D$ then computation halts or goes backwards, and the synchronisation never\footnote{It is possible to escape if the phase geometry eventually changes to a permissible form, but the penalty will be substantial.} occurs. As we are considering systems with arbitrarily low bias, and hence $\mu$, this effectively means that quadratic boundaries are untenable. For $f$ cubic and above, $\evqty{n}$ briefly increases before decreasing forever. From these, we can deduce that $f$ should ideally be (absolute) linear, though it is admissible to also have a small quadratic component if $\mu$ is bounded from below by $2f_2D$.

We now show how to find lower and upper bounds for $\vec\mu$ per our canonical example, that is $\vec\mu(\mathcal P)=\mu\hat u$ and $\vec\mu(\mathcal P')=-(\opn{sign}v)\mu\hat v$. In both regions, the action of the drift vector alone is to increase $n$. The direction of $\vec\mu$ in $\mathcal P'$ acts to pull the distribution further away from $\mathcal P$ than would be the case if $\vec\mu$ pointed $u$-wards; therefore, picking constant uniform $\vec\mu$ will retard the process of synchronisation and thereby result in an upper bound. Alternatively we can advance the process by picking a uniform superposition of the two, $\vec\mu=\mu(\hat u-(\opn{sign}v)\hat v)$, letting the extra `space' in the $v$ axis apply to $\mathcal P$ too. This extra `space' would appear to complicate matters further, but we can avoid this complication by realising that the action of this $v$-wards drift is simply to increase the value of $n$ and therefore it is equivalent to setting $\vec\mu=2\mu\hat u$. Solving for the bounds finally yields
\begin{align*}
  \tau &\ge \frac{s}{2\mu-2f_2D}+\frac{2f_1^2D}{\pi(2\mu-2f_2D)^2}\qty[1+\sqrt{1+\frac{\pi s}{f_1^2D(2\mu-2f_2D)}}] \\
  &\le \frac{s}{\mu-2f_2D}+\frac{2f_1^2D}{\pi(\mu-2f_2D)^2}\qty[1+\sqrt{1+\frac{\pi s}{f_1^2D(\mu-2f_2D)}}],
\end{align*}
which for our canonical example reduces to
\begin{align*}
  \frac{s}{2\mu}+\frac{D}{2\pi\mu^2}\qty[1+\sqrt{1+\frac{\pi s}{2\mu D}}] \le \tau
  &\le \frac{s}{\mu}+\frac{2D}{\pi\mu^2}\qty[1+\sqrt{1+\frac{\pi s}{\mu D}}].
\end{align*}
This shows that the penalty is at least of order $\mu^{-2}$ and, for small $s$, is as high as $\mu^{-5/2}$. In other words, the penalty is worse than for the constrictive case. Interestingly however, the penalty is almost exactly the same for higher dimensions if the boundary is replaced by a hypersurface of revolution of $f$; in fact, in dimension $d$ we find
\begin{align*}
  \alpha_p=\frac{(p+d-3)!!}{(d-3)!!}\begin{cases}
    \phantom{\sqrt{\frac{2}{\pi}}}\mathllap{\ooalign{\phantom{$\frac{2}{\pi}$}\cr\hss$1$\hss}} & \text{$p,d$ even} \\
    \sqrt{\frac{2}{\pi}} & \text{$p$ odd} \\
    \phantom{\sqrt{\frac{2}{\pi}}}\mathllap{\frac{2}{\pi}} & \text{$p$ even, $d$ odd}
  \end{cases}
\end{align*}
but otherwise the form of the synchronisation times remains the same. As a result, using a recessive synchronisation geometry is an alternative way of efficiently synchronising in dimensions $d=3$ and above. Of course, a sequence of $d=2$ constrictive synchronisations will limit the penalty to order $\mu^{-2}$ and so constrictive synchronisation geometries remain the most effective approach.

\section{Constrictive Generating Functions}
\label{app:gf}

Finally, we present some limited successes towards generating functions for the MFPT for arbitrary initial distribution. 
We shall consider the truncated simplex for $d=2$ and $w=1$, as shown in \Cref{fig:synch-simple-2}. To do so, we must construct a function to enumerate all possible walks from each initial position and weight them by their probabilities. A walk starts at some initial position (for convenience we use the upper right quadrant), takes a number of up, down, left and right steps, and terminates at the origin. It is also subject to the constraint that it must not reach the origin until its final step, and if the walker is on a reflective boundary then a forbidden step (left or down, depending on which boundary) is replaced by a null step with the same probability but which leaves its position unchanged.

Constructing a generating function subject to these constraints is easier in reverse. Let the variables of the generating function be $x$, $y$ and $t$, such that a term $\rho x^my^nt^s$ indicates that a walk starting at $(m,n)$ reaches the origin in $s$ steps with probability $\rho\equiv\pr(s|m,n)$. As all walks almost surely reach the origin in a finite time, $\sum_{s=0}^\infty\pr(s|m,n)=1$ and so if $W(x,y;t)=\sum_{mns} \pr(s|m,n)x^my^nt^s$ then $W(x,y;1)=\sum_{mn}x^my^n\equiv\frac1{1-x}\frac1{1-y}$. To construct such a $W$ from walks in reverse, we start from the origin and walk anywhere in the plane for any number of steps. As we shall be handling the origin specially, we start with the last step which must be from either $(0,1)$ or $(1,0)$. The probability that the next step from either of these reaches the origin is $\frac12p$, and therefore these two particular walks are given by the initial term $\frac12pt(x+y)$. Given a walk $w$ of length $s$ starting at $(m,n)$, we can construct a walk of length $s+1$ by prepending it with each of the four possible steps. Ignoring the boundary conditions, this means we can generate a longer walk as $\frac12pt(x+y)w + \frac12qt(\bar x+\bar y)w$ where $\bar x\equiv x^{-1}$. Note that $\frac12ptyw$ corresponds to prepending a downward step with probability $\frac12p$, but we increase the power of $y$ because the power of $y$ is the \emph{starting} position. If $w$ is of the form $\rho x^mt^s$ or $\rho y^nt^s$, then we replace respectively the steps $\frac12qt\bar y$ or $\frac12qt\bar x$ with $\frac12pt$ (this is not a typo; to see that $p$ is correct it is helpful to work through an example walk). Lastly we must ensure walks avoid the origin, or at least that we ignore those walks which prematurely reach the origin. We do so by `trapping' such walks so that they get stuck there by excluding these walks from the $\frac12pt(x+y)w$ transitions. Assembling these, we obtain the functional equation
\begin{align}\begin{aligned}
  W(x,y;t) &= \tfrac12pt(x+y) && \text{(last step)} \\
   &+ \tfrac12qt\bar x[W(x,y;t)-W(0,y;t)] + \tfrac12ptW(0,y;t) && \text{(right step)} \\
   &+ \tfrac12qt\bar y[W(x,y;t)-W(x,0;t)] + \tfrac12ptW(x,0;t) && \text{(up step)} \\
   &+ \tfrac12pt(x+y)[W(x,y;t)-W(0,0;t)], && \text{(left/down step)}
\end{aligned}\label{eqn:functional-constrict-raw}\end{align}
which can be written more usefully in the form
\begin{align}
  \tfrac2p\bar tKW &= (x+y)(1-S) + (1-r\bar y)X(x) + (1-r\bar x)X(y) \label{eqn:functional-constrict} \\
  K &= 1-\tfrac12pt(x+r\bar x+y+r\bar y) \nonumber
\end{align}
where $r=\frac qp$, $S=W(0,0;t)$, $X(x)=W(x,0;t)\equiv W(0,x;t)$, and $K\equiv K(x,y;t)$ is known as the kernel.

\para{Orbital Sum Approach}
We now apply the orbital sum approach as detailed by \textcite{bm-walks}. The group of the kernel is defined to be the set of maps $(x,y)\mapsto(x',y')$ that leave the kernel unchanged, with the group operation taken to be function composition, $\circ$. For this kernel it is generated by the orthogonal involutions $\chi:(x,y)\mapsto(r\bar x,y)$ and $\upsilon:(x,y)\mapsto(x,r\bar y)$, and hence is finite with order 4. Explicitly, the elements of the group are $G=\{1,\chi,\chi\upsilon,\upsilon\}$. We also define the `sign' of each element, $\sigma_\gamma$, as $\sigma_1=\sigma_{\chi\upsilon}=+1$ and $\sigma_\chi=\sigma_\upsilon=-1$. When finite, this group has the interesting property that the `orbital sum' of a function of just $x$ or $y$ under this group vanishes, i.e.\ $\sum_{\gamma\in G}\sigma_\gamma\gamma(f(x))=\sum_{\gamma\in G}\sigma_\gamma\gamma(f(y))=0$.

Looking at our functional equation~\ref{eqn:functional-constrict}, all of the terms on the right hand side are functions of both $x$ and $y$. However, by dividing through by $(1-r\bar x)(1-r\bar y)$ we can make two of these terms functions of only one of $x$ and $y$. We need to be careful however, as generating functions have ring structure but don't generally admit an operation corresponding to division. Moreover, we are dividing through by series in negative powers of $x$ and $y$, and so we need to consider the Laurent series structure. Specifically our series belong to the ring $\mathbb R(x,y)\llb t\rrb$ of formal power series in $t$ whose coefficients are Laurent polynomials in $x$ and $y$, and the series $W$ belongs to the ring $\mathbb R[x,y]\llb t\rrb$ of formal power series in $t$ whose coefficients are polynomials in $x$ and $y$. To perform the division, we observe that the series $-\bar rx(1-\bar rx)^{-1}=-\sum_{n=1}^\infty(\bar rx)^n$ annihilates $(1-r\bar x)$ under multiplication and therefore is an appropriate multiplicative inverse to use.

Multiplying through then and taking the orbital sum, we find
\begin{align*}
  \sum_{\gamma\in G}\sigma_\gamma\gamma\qty\Big( \frac{\bar rx}{1-\bar rx}\frac{\bar ry}{1-\bar ry}\tfrac2p\bar tKW) &= (1-S) \sum_{\gamma\in G}\sigma_\gamma\gamma\qty\Big( (x+y)\frac{\bar rx}{1-\bar rx}\frac{\bar ry}{1-\bar ry}).
\end{align*}
Recalling that $K$ is invariant with respect to the group, we can rewrite as
\begin{align*}
  \sum_{\gamma\in G}\sigma_\gamma\gamma\qty\Big( \frac{\bar rx}{1-\bar rx}\frac{\bar ry}{1-\bar ry}W) &= \tfrac12pt(1-S)\underbrace{\frac{1}{K} \sum_{\gamma\in G}\sigma_\gamma\gamma\qty\Big( (x+y)\frac{\bar rx}{1-\bar rx}\frac{\bar ry}{1-\bar ry})}_{R(x,y;t)}.
\end{align*}
To proceed, observe that $\frac{\bar rx}{1-\bar rx}\frac{\bar ry}{1-\bar ry}W$ remains part of the ring $\mathbb R[x,y]\llb t\rrb$. Moreover, every term has a positive power in $x$ and $y$. Therefore, the orbital sum yields the sum of series in the rings $\mathbb R[x,y]\llb t\rrb$, $\mathbb R[\bar x,y]\llb t\rrb$, $\mathbb R[x,\bar y]\llb t\rrb$ and $\mathbb R[\bar x,\bar y]\llb t\rrb$, and so if we extract the positive part consisting of only the terms with positive powers of $x$ and $y$, then we can find an expression for $W$
\begin{align*}
  \frac{\bar rx}{1-\bar rx}\frac{\bar ry}{1-\bar ry}W &= \tfrac12pt(1-S)[x^>y^>]R \\
  W &= \tfrac12pt(1-r\bar x)(1-r\bar y)(1-S)R^>
\end{align*}
where $[x^>y^>]R\equiv R^>$ is the positive part of $R$. There is a limitation here: if $W$ can be written as $f(x,y;t)+g(t)\frac1{1-x}\frac1{1-y}$ then the orbital sum of the $g$ term in this scheme will vanish because $\frac1{1-x}\frac{\bar rx}{1-\bar rx}$ is invariant under the group, as is the $y$ part. Moreover, any other functions which vanish under the orbital sum will also be excluded from our apparent $W$. This means, for example, that evaluating the above expression for $W(x,y;1)$ will yield 0 because $W(x,y;1)$ is in fact $\frac1{1-x}\frac1{1-y}$.

The MFPT starting from $(m,n)$ is given by $\sum_s s\pr(s|m,n)$, and its generating function is $T(x,y)=\sum_{mns} s\pr(s|m,n)x^my^n$. It is easy to show, therefore, that $T(x,y)=\dot W(x,y;1)$ and so we find
\begin{align*}
  T(x,y) &= \tfrac12p(1-r\bar x)(1-r\bar y)(-\dot S(1))R^>(x,y;1),
\end{align*}
assuming that $T$ has no terms which vanish under the orbital sum. It remains to find $S$, and to determine whether or not $T$ has any such terms.

\para{Obstinate Kernel Method}
Another approach is given by the obstinate kernel method, wherein we exploit roots of the kernel. A pair of such roots is given by $(x,Y_\pm(x))$ where $Y_\pm=B\pm\sqrt{B^2-r}$, $B=\frac1{pt}-\tfrac12z$, and $z=x+r\bar x$ is a symmetric function of $x$ and $\chi(x)\equiv r\bar x$. It will be helpful to note the elementary symmetric polynomials of $Y_\pm$ as $\tfrac12(Y_++Y_-)=B$ and $Y_+Y_-=r$. Only one of these roots, $Y_-$, is a power series in $t$; the other includes a term in $t^{-1}$, and so we shall substitute $Y_-$ into the functional equation~\ref{eqn:functional-constrict} to get
\begin{align*}
  0 &= (x+Y_-)(1-S) + (1-Y_+)X(x)+(1-r\bar x)X(Y_-).
\end{align*}
To solve for $X(x)$, and thence $W(x,y)$, we eliminate $X(Y_-)$ by making use of the kernel's group; in particular applying $\chi$ yields a second equation in $X(Y_-)$,
\begin{align*}
  0 &= (r\bar x+Y_-)(1-S) + (1-Y_+)X(r\bar x)+(1-x)X(Y_-),
\end{align*}
from which we find
\begin{align*}
  Q(x)-Q(r\bar x) &= (1-S)(r\bar x-x)\qty\Big(\frac{1-Y_--z}{1-Y_+})
\end{align*}
with $Q(x)\equiv(1-x)X(x)$. Extracting the positive part then gives
\begin{align*}
  Q(x)-S &= (1-S)[x^>](r\bar x-x)\qty\Big(\frac{1-Y_--z}{1-Y_+}).
\end{align*}
From $Q$, $X$ can be obtained and thereby $W$ and $T=\dot W|_{t=1}$. Alternatively, the MFPTs for arbitrary positions can be computed using the recurrence relations and the boundary MFPTs $\dot X|_{t=1}$. In order to do so, we shall need to obtain an expression for $S$. Letting $U(t)=[x^1]X(x;t)$, we see that $[x^1](Q-S)=U-S$ and we compute $[x^0y^0]W$ using \Cref{eqn:functional-constrict-raw} to find $(1-pt)S=qtU$. Therefore we can obtain an expression for $S$,
\begin{align*}
  \frac{S}{1-S} &= \frac{qt}{1-t}[x^1](r\bar x-x)\qty\Big(\frac{1-Y_--z}{1-Y_+}).
\end{align*}
Using the symmetric variable $x$ helps to evaluate the positive parts of the antisymmetric series $\qty\big(r\bar x-x)\qty\big(\frac{1-Y_--z}{1-Y_+})$, but we were unable to proceed beyond this point.

\section{Applicability to Other Architectures}

Finally, we comment on the applicability of these results to non-Brownian reversible computers in the limit of vanishing free energy density.
The concepts of mona and synchronisation phase space can be readily extended to other models of communicating reversible computer, but the time evolution of the phase space distribution may differ.
In many architectures, the probability of a backward transition approaches 0.
Nevertheless, it is still possible for the joint distribution of mona in phase space to expand over time.
In particular, if the eigenstates of the Hamiltonian change over time due to thermal fluctuations, then the rate of computation at different \emph{independent} loci will fluctuate and so synchronisation will be lost.
Moreover, universal phenomena such as the Heisenberg uncertainty principle and general relativistic time dilation effects suggest that even a system at absolute zero would experience desynchronisation.

An important assumption in the above, and indeed this chapter, was that the mona evolve independently except during synchronisation events.
As argued by \textcite{frank-coupled-sync} in a thought experiment, however, it is possible to couple the mona in a system with a potential that increases with increasing desynchronisation.
In particular he imagined a lattice of clockwork computational units driven by a series of motors, with adjacent axles connected by bicycle chains.
At low speeds $|v|$, the power supplied to counter frictional losses in the system would scale with $\sim N|v|^2$ where $N$ is the number of mona, which recovers the $V^{5/6}$ scaling law.
Any tension due to desynchronisation between adjacent mona would not substantially increase this power requirement, not least because adjacent axles experience equal and opposite tensions.
Therefore the introduction of such potentials may be used to constrain the size of the joint phase distribution of adjacent mona; that is, mona remain locally in approximate synchronisation.
Globally, remote mona may accumulate more significant desynchronisation, scaling with $\sqrt{RT}$ where $R$ is linear dimension of the system and $T$ the temperature, but this would not pose any problems to long-distance communication as long as message packets remain in local synchronisation at all points along their routes.

As a corollary, we extend Frank's thought experiment to a general conjecture about which classes of reversible computer are and are not subject to a synchronisation penalty:
\begin{cnj}[Inter-monon synchronisation potentials are a necessary condition for cheap communication]
  \label{cnj:sync-potential}
  By the above arguments, we conjecture that the joint phase space of independently evolving mona cannot remain localised, and hence that the synchronisation penalties derived in this chapter apply for all such reversible computer architectures.
  Conversely, mona whose evolution is coupled by a potential that increases with increasing desynchronisation will remain localised, and therefore can in principle pass through synchronisation windows with arbitrarily low penalty.
  If the distribution can be constrained to be compact, then the penalty may be eliminated.
\end{cnj}
However, this is predicated on communicating mona being carefully programmed.
Communicating mona must anticipate each other's computational states, such that they are both simultaneously (up to the width of the synchronisation window) receptive to communication.
If a monon $A$ attempts to send information to another monon $B$ that is not expecting to receive such information, then this will incur an entropic synchronisation penalty by the mechanisms discussed in this chapter.
Alternatively, a radically different approach to eliminating communication penalties is given by fully serial architectures, such as is \emph{theoretically} permitted by a Quantum Zeno architecture (\Cref{sec:qze}).

An interesting example of an architecture where careful ordering is intrinsic to the computation itself is the BARC\footnote{BARC was historically referred to as ABRC, the Asynchronous Ballistic Reversible Computing model.} (Ballistic Asynchronous Reversible Computing) model of \textcite{frank-barc}.
BARC systems consist of devices with an internal state, and wires between them along which pulses can travel. Devices are designed to interact with at most one pulse at a given time, the effect of which may be to change the device's internal state and to re-emit the pulse along a different wire.
As the order of signals is important to the computation, any well-designed BARC computer will satisfy the synchronisation simultaneity condition.
In this case, the synchronisation window depends on pulse spacing.
Additionally it is an architecture without backwards transitions.
However, the architecture---as presented---does not couple pulses to each other with a synchronisation potential, and so it is subject to the conclusions of this chapter as the ordering between free signals is subject to change.
Nevertheless, this desynchronisation can be kept impressively minimal through the high efficiencies and engineering tolerances afforded by superconducting circuits (see \textcite{frank-barc-fluxon} for progress towards a Single Flux Quanta implementation of the BARC model).

\paragraph{Petri Nets}

Another model of asynchronous concurrency that is quite popular is that of Petri Nets (see~\cite{petri} for a review).
A Petri Net is a weighted directed graph with two types of nodes, `places' and `transitions'.
A place contains tokens, and the configuration of tokens in all places---a `marking'---represents the state of the system.
Tokens can be consumed, generated, and moved by the activation of a transition.
A transition node consumes some tokens from each place connected to it by an incoming edge, and deposits some tokens to each place connected by an outgoing edge.
Edges may only be drawn between a transition and a place, and the weight of the edge dictates the number of tokens involved in the transition.
A transition may only activate if there are enough input tokens.

Whilst a Petri Net is not necessarily reversible in the sense of having a corresponding inverse transition for every transition, it is always possible to construct an `inverse Petri Net' by inverting each edge's direction.
From this perspective, a Petri Net along with its sequence of transitions form a logically reversible system.
It is not immediately obvious where entropy costs manifest in a Petri Net, and so it may appear that they offer a route to avoid the conclusions of this chapter.
In order to accommodate entropy costs, the sequence of transitions needs closer attention.
Without loss of generality, assume discrete time where at each time step a single\footnote{Multiple transitions can occur simultaneously only if they do not share any places, but then they can always be assigned an arbitrary order and treated as sequential.} transition is attempted.
If every transition is always successful, then there is no entropy increase.
If, however, a transition may fail, then reversibility requires us to record a binary outcome of success or failure.
That is, the input is a stream of symbols---each corresponding to a particular transition---and the output is a stream of symbol-outcome pairs.
This is where entropy may increase, and is another example of a breakdown of simultaneity in synchronisation.
Moreover, such a system---in which the transition order is sequential---does not allow for separate independent entities, let alone spatial structure.
Therefore the notion of synchronisation/communication is not particularly relevant.
To accommodate this, multiple transition streams can be introduced.
For avoidance of conflict, the streams must satisfy the constraint that no two transitions sharing a place fire at the same time.
Now the issues of synchronicity of signals and events become manifest: if these transition streams can be coupled to each other to maintain some local synchronisation then low cost communication can be supported, otherwise there will be a synchronisation penalty.
Therefore, whilst Petri Nets are a convenient framework for modelling asynchronous systems, they are not particularly convenient for modelling the possibility of desynchronisation and the resultant resynchronisation entropy costs.

\section{Conclusion}

The primary result of this chapter is that communication between distinct reversible computers in a Brownian system (and arguably any reversible architecture whose mona evolve independently) with a limiting supply of free energy is very expensive in comparison to independent computation.
A single computational step takes time $1/b\lambda$ for gross computational rate $\lambda$ and bias $b=(\dot G/k_BT\lambda N)^{1/2}$, where $\dot G$ is the rate of supply of free energy, $N$ is the number of mona, $k_B$ is Boltzmann's constant, and $T$ is the system's average temperature. In contrast, an interaction event such as communication takes at least a time $\sim1/b^2\lambda$. As the bias scales as $\sim\sqrt{A/V}\sim R^{-1/2}$ for a system of convex bounding surface area $A$, internal volume $V$ and radius $R$, communication will tend to `freeze' out as the system gets bigger. This cost is unavoidable as it is due to the implicit erasure of information involved in synchronising divergent computational states; however, if synchronisation is rare then the temporal cost can be reduced. Suppose that the average proportion of mona wishing to communicate at a given time is $\nu\ll1$, then the free energy can be divided into two supplies: $\dot G_{\text{indep.}}$ and $\dot G_{\text{synch.}}$ for independent computation and synchronisation events respectively, as intimated in \Cref{sec:revi-disc}. A separate set of bias klona can then be introduced for the synchronising mona,
\begin{align*}
  b_{\text{indep.}} &= \sqrt{\frac{\dot G_{\text{indep.}}}{k_BT}\frac{1}{\lambda N}}, &
  b_{\text{synch.}} &= \sqrt{\frac{\dot G_{\text{synch.}}}{k_BT}\frac{1}{\lambda\nu N}}.
\end{align*}
In order for the synchronisation to be viable we therefore require $1/b_{\text{synch.}}^2 \lesssim 1/b_{\text{indep.}}$ and hence find
\begin{align*}
  \nu &\lesssim \frac{\dot G_{\text{synch.}}}{k_BT}\sqrt{\frac{k_BT}{\dot G_{\text{indep.}}}\frac{1}{\lambda N}}.
\end{align*}
If we want to maintain our asymptotic independent computation rate of $\sim\sqrt{AV}$, then we will have $\dot G_{\text{synch.}}\sim\dot G_{\text{indep.}}\sim A$ and $N\sim V$, giving $\nu\lesssim\sqrt{A/V}\sim b$.
That is, the number of synchronising mona $\nu N$ can be as high as $\sim\sqrt{AV}$. In fact, this subpopulation of mona are also permitted to perform arbitrary operations subject to entropic costs, with the proviso that their individual net transition rates will be $b\lambda\sim R^{-1/2}$. The consequence is that, for a very large reversible Brownian computer, any parallel computation should be structured to minimise the need to synchronise the joint state of the system.
Equivalently, the proportion of computational steps that are permitted to synchronise is $\lesssim b$.
If one wishes to increase the proportion of synchronising mona, then the total computational rate must necessarily be reduced or an architecture satisfying \Cref{cnj:sync-potential} must be employed.

\para{Impact on Algorithms}

The consequence of this `freezing' out of synchronisation is that the most natural computational problems for these computers are those which are `embarrasingly parallel'.
An embarrasingly parallel problem is one which can be easily divided into many tasks with very infrequent synchronisations between them.
Some examples of such problems are Monte Carlo algorithms, brute force searches, and ray tracing.
In contrast, highly serial problems, or parallel problems involving significant synchronisation, are not well accommodated.
Serial programs suffer because the rate of computation by individual mona is asymptotically vanishing.
Highly synchronous parallel programs suffer because of the time penalty for synchronisations.
In limiting cases where no parallelisation is possible or where synchronisation cannot be made arbitrarily infrequent, the rate of these computations will fall to $\sim A$ as with irreversible computers.
Nevertheless, as discussed in \Cref{sec:revi-disc}, hybrid computational systems supporting both high-bias and low-bias subsystems can be constructed.
Therefore serial/highly synchronous problems can be executed on a thin outer shell of the computer, and embarrassingly parallel problems can make use of the inner bulk of the computational matter.

A common example of highly synchronous parallel problems is that of simulating systems involving local interactions on a lattice, such as cellular automata, Ising models, and molecular dynamics simulations.
Indeed, in \textcite{frank-thesis} (Section 6.3.2.1), he examines local computation by a three dimensional array of computing elements.
For the class of non-Brownian computers under consideration, he finds an asymptotic benefit for reversible architectures over irreversible architectures.
Unfortunately the results of this chapter show that problems like these are not well-supported by Brownian architectures, and so this asymptotic benefit is not seen in this case.
However, the thought experiment of \textcite{frank-coupled-sync} and \Cref{cnj:sync-potential} do support the realisation of such asymptotic benefit in architectures in which the computing elements are appropriately coupled.

\endgroup

\begingroup

\begin{chapter-summary}

  This chapter concludes our analysis of the limits the laws of physics place on the sustained performance of reversible computers.
  \Cref{chap:revi} concerned aggregate performance in terms of computational operations per unit time, but neglected to consider interactions among computational sub-units or between computational sub-units and shared resources such as memory or chemical species.
  \Cref{chap:revii} extended the analysis to consider the former set of interactions.
  In this chapter we extend the analysis to consider the latter set, which in general will be governed by the laws of thermodynamics and statistical mechanics.
  When the free energy available to the computational subunits is limiting, this becomes a challenging endeavour as perturbations to these subsystems may have a corresponding entropy cost for which our available free energy is insufficient.

  In the first half we pay particular attention to resource distribution:
  Our Brownian computational subunits---themselves being small and particulate---will likely lack adequate internal memory for many programs of interest;
    as such, it can be expected that a realistic Brownian system will make heavy use of a shared pool of memory resources that can be dynamically distributed amongst the computational subunits according to their need.
  It is found that most schemes imaginable fail to function effectively in the limit of vanishing `computational bias' $b$, which measures the net fraction of transitions which are successful and which falls as the system grows in size.

  Driving thermodynamically unfavourable reactions, such as resource distribution, is a very general problem for such systems and can be solved by supplying a sufficient excess of free energy.
  In order to make these interactions viable, we propose an abstract chemical scheme to dynamically infer how much free energy is needed, and to supply it by accumulating just enough from the (weak) ambient free energy.
  The scheme takes any reaction, arbitrarily far from equilibrium, and sequesters some of the reactants and products such that the extant reactants and products are in equilibrium.
  The scheme rapidly adjusts to changes in the equilibrium position, and makes use of molecular computation for its inference algorithm.
  Unfortunately there is a cost in applying this scheme: the reaction is slowed by a factor inversely proportional to the `computational bias', $b\ll1$ (the net proportion of computational transitions that are successful).
  In fact, this overhead is comparable to that found for synchronisation interactions in \Cref{chap:revii}.

\end{chapter-summary}

  \chapter{Performance Trade-offs for Reversible Computers Sharing Resources}
  \label{chap:reviii}

\begin{figure}[h!]
  \centering
  \includegraphics[width=.9\textwidth]{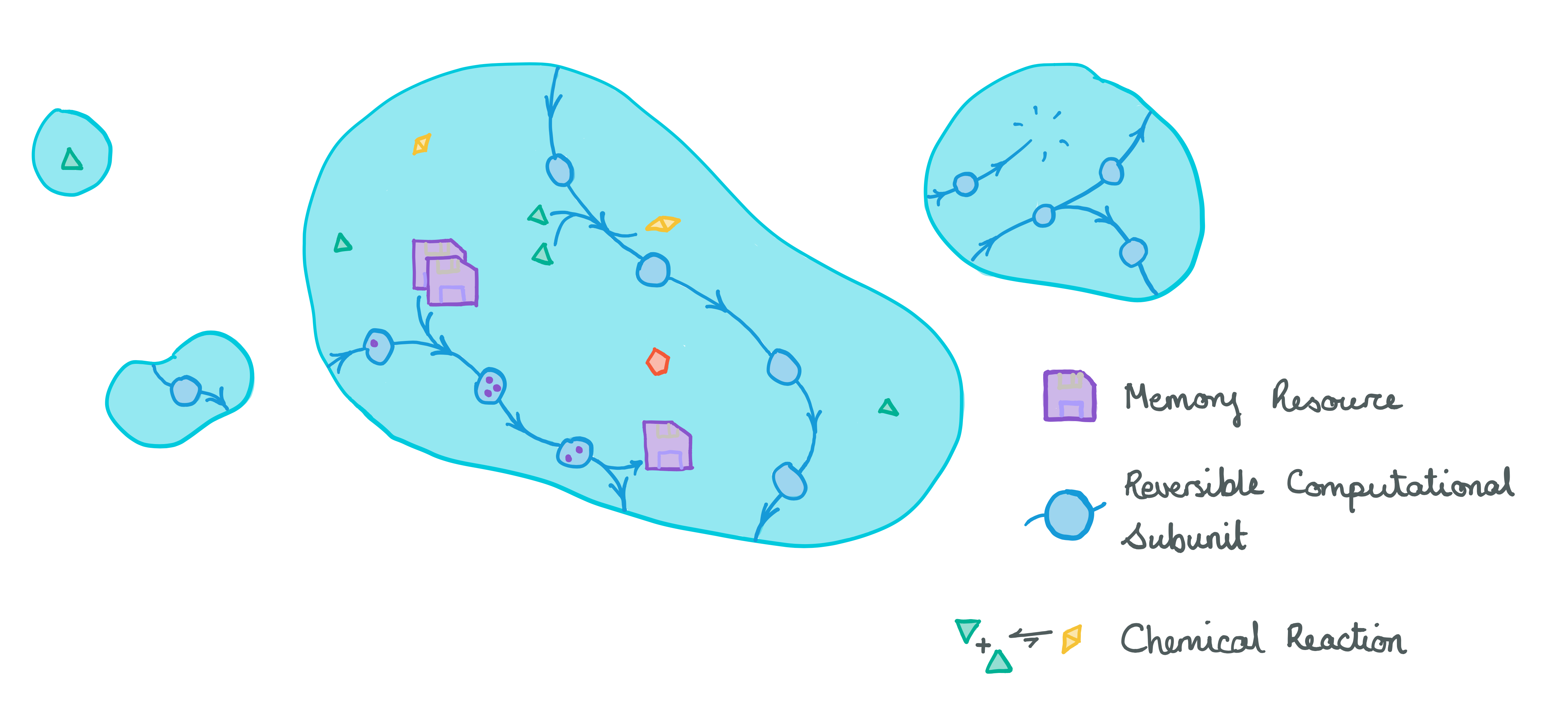}
    
  \caption[A cartoon representation of resource sharing interactions amongst Brownian computational particles.]{This chapter concerns interactions between reversible Brownian computational particles and populations of other particles such as species involved in chemical reactions or shared memory resources. Populations of particles are subject to well known statistical mechanical laws, and these laws can cause problems when computational particles attempt to interact with these populations.}
  \label{fig:cover-iii}
\end{figure}

\section{Introduction}

This is the final chapter investigating the influence of the laws of physics on computation, particularly as they apply to large reversible Brownian computers (see \Cref{dfn:brownian}).
As discussed in \Cref{chap:revii}, the net computational rate (in terms of primitive computational operations per unit time) studied in \Cref{chap:revi} is not the only performance metric of interest.
Therefore, in \Cref{chap:revii} we evaluated computational performance of reversible Brownian computers
(and argued for the applicability of the results to other reversible computers whose computational units evolve independently)
whose individual computational components were permitted to interact with one another, i.e.\ to communicate and to perform concurrent/parallel computation.
A key property of these interactions is that computational `particles' can be uniquely referenced, and therefore the traditional thermodynamic principles that apply to chemical systems were inapplicable.
In this chapter, we now consider the interface between traditional chemical species, of which there may be a variable number of identical units or particles, and the computational particles.
See \Cref{fig:cover-iii} for an illustration.
For convenience, we shall refer to these as \emph{klona} and \emph{mona} respectively, following the formulation introduced in \Cref{dfn:mona-klona}. Recalling the definitions, we call particles which are uniquely distinguishable or addressable \emph{mona} (sg.\ \emph{monon}), after the Greek for unique, \emph{\gr μοναδικός}. Similarly, we call indistinct particles belonging to a species \emph{klona} (sg.\ \emph{klonon}).
Klona may be likened to traditional chemical species such as glucose and ATP.
Mona are less obvious; an example would be a \SI{150}{\milli\liter} solution containing a \SI{1}{\milli\molar} concentration of random nucleic acids of length 80 nucleotides.

Interactions between mona and klona are of great practical importance for Brownian computers; whilst mona by themselves may be capable of arbitrary computation, this computation only becomes useful when used to actuate some other system or when it otherwise produces some observable output. Examples may include the activation of a fluorophore upon reaching some logical condition, or directing the translation of a particular messenger RNA into a protein. Moreover, interaction with klona may be essential for computation: general computation requires access to an unbounded amount of memory, but by their very nature mona are limited in size and thus memory. Therefore it is important for mona to be able to recruit more memory as required, and release it for use by other mona when the need passes. Resources other than memory may also be considered, such as structural monomers for use in constructing and deconstructing molecular machines or computers.

In this chapter we shall be most interested in the case of resource distribution, and will consider various such schemes in \Cref{sec:resource-scheme}. One approach is a centralised store which mona can access, either by travelling there themselves or by sending a `courier' monon on their behalf to fetch and retrieve (or carry and return) a resource. This approach is undesirable. If the couriers float freely in solution, then reaching the central store and returning will both take a time $\sim\bigOO{V}$; if on the other hand we introduce a lattice as in \Cref{chap:revii}, then the average distance will be proportional to the radius $R$ and, given $b\sim R^{-1/2}$, the average time will be $\sim\bigOO{b^{-3}}$. As the time for a single net computational step is $\sim\bigOO{b^{-1}}$, both of these are untenable. Instead we shall seek a decentralised approach by making use of klona; intuitively it must become harder to access a resource as its scarcity increases, but we would like the timescale to do so to not be substantially more costly than $\sim (b\ce{[X]})^{-1}$ where $[\ce{X}]$ is the available concentration of the resource.

We will unfortunately find that the reactions involved in all possible schemes are unfavourable except in a very limited range of resource availability. In \Cref{sec:drive-unfav} we shall return to considering general mona-klona interactions, and demonstrate how mona can dynamically drive any unfavourable reaction with klona by inferring the number of bias tokens required to balance the entropic cost of the reaction. Alas, there will turn out to be an overhead in achieving this even in the case when the driven reaction is at equilibrium and hence cost-free.

\section{Resource Distribution Schemes}
\label{sec:resource-scheme}

A resource distribution scheme for a resource $\ce{X}$ is a set of klona from which $\ce{X}$ can be extracted and into which it can be deposited. Formally, resources are conserved discrete quantities analogous to charges in physics. Letting the set of resource species be indexed by $\mathcal R$ and the set of resource klona by $\Sigma$, then to each klonon $\sigma\in\Sigma$ is associated a resource charge $\vec q_\sigma\in\mathbb N^{\mathcal R}$: a vector over the field of naturals and indexed by the resource species. In any reaction, the net resource charge must be conserved; for example, for the reaction $\ce{ $\ket{n}$ + $\sum_{\sigma\in\Sigma}\nu_{\sigma}\sigma$ <=> $\ket{n+1}$ + $\sum_{\sigma\in\Sigma}\nu'_{\sigma}\sigma$ }$ involving the transition of a monon between states $\ket{n}$ and $\ket{n+1}$ and the interaction of said monon with resource klona according to the stoichiometries $\ce{ $\vec\nu$ <=> $\vec\nu'$ }$, this conservation law can be expressed as $\vec q_{\ket{n+1}} - \vec q_{\ket{n}} = \sum_{\sigma\in\Sigma} (\nu_\sigma-\nu_\sigma')\vec q_\sigma$.

\para{Free Klona}

The simplest possible scheme is that in which the resource is freely present in solution, as shown in \Cref{fig:rr-free}. That is, a system with $\mathcal R=\{\ce{X}\}$, $\Sigma=\{\ce{X}\}$ and $(\vec q_{\ce{X}})_{\ce{X}}=1$.
This system will be a viable resource pool so long as the free energy change involved in both extraction and deposition of $\ce{X}$ is less than the available computational free energy, $\log\frac pq=2\arctanh b$. Note that our notion of `free energy' uses a different convention than is normally taken, namely we are giving the change in information entropy of the universe $\Delta h$, whereas normally the free energy change is expressed as $\Delta G=-kT\Delta h$. The free energy change vanishes when the system is in equilibrium, which for $\ce{X}$ an ideal gas can be calculated (via the Sackur-Tetrode equation) to occur at a concentration
\begin{align*}
  [\ce{X}]_{(\text{eq.})} &= e^{5/2}\qty\Big(\frac{4\pi mu}{3h_P^2})^{3/2},
\end{align*}
where $h_P$ is Planck's constant, $m$ is the mass of $\ce{X}$, and $u$ is its average internal energy. As we interact with the system, the concentration will be drawn away from this equilibrium position and as a result the absolute free energy difference will increase. For this scheme, the free energy difference corresponds to that due to adding or removing a particle of $\ce{X}$ and can therefore be identified with the \emph{Chemical Potential}, $\mu_{\ce{X}}$. In units of information, this can be expressed as $\mu_{\ce{X}}=\log\frac{[\ce{X}]\gamma_{\ce{X}}}{[\ce{X}]\gamma_{\ce{X}}|_{(\text{eq.})}}$ where $\gamma_{\ce{X}}$ is the activity coefficient of $\ce{X}$ that quantifies the deviation of the properties of $\ce{X}$ from an `ideal' substance. If $Q$ is the concentration of resource $\ce{X}$ (here trivially equal to $[\ce{X}]$) then we can now determine how far from equilibrium the resource system can be brought, and therefore how much of the resource $Q$ is accessible to the computational mona. For a small change $\delta Q$, we find
\begin{align*}
  \qty\Big|\frac{\delta Q}{Q}| < 2b\qty\Big|1+\pdv{\log\gamma_{\ce{X}}}{\log Q}|^{-1}  + \bigO{b^2}.
\end{align*}
We don't expect the activity coefficient to depend strongly on $Q$, and certainly not to the extent of $\pdv{\log\gamma}{\log Q}=-1$, which would imply a range of isentropic equilibria, and therefore we can only usefully access a proportion $\bigOO{b}\ll1$ of the available resource $\ce{X}$ under this scheme.

\doublepage{
\begin{figure}[p]
  \centering
  \fbox{\begin{subfigure}{\linewidth}
    \centering
    \includegraphics[width=\linewidth]{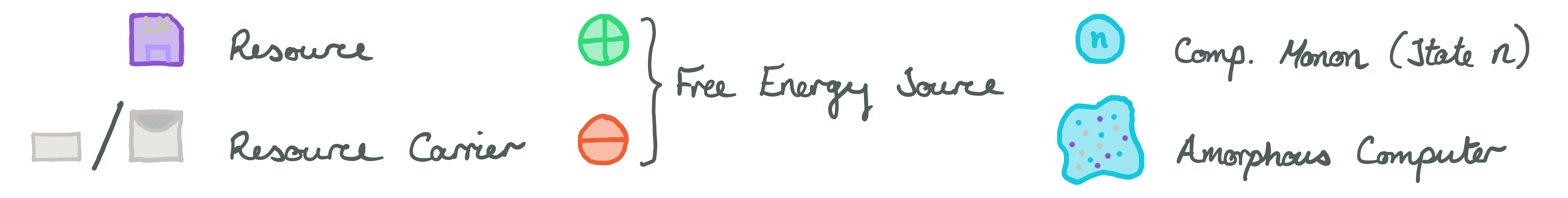}
  \end{subfigure}}
  \begin{subfigure}{\linewidth}
    \centering
    \rule{0pt}{0.4\linewidth}%
    \includegraphics[width=\linewidth]{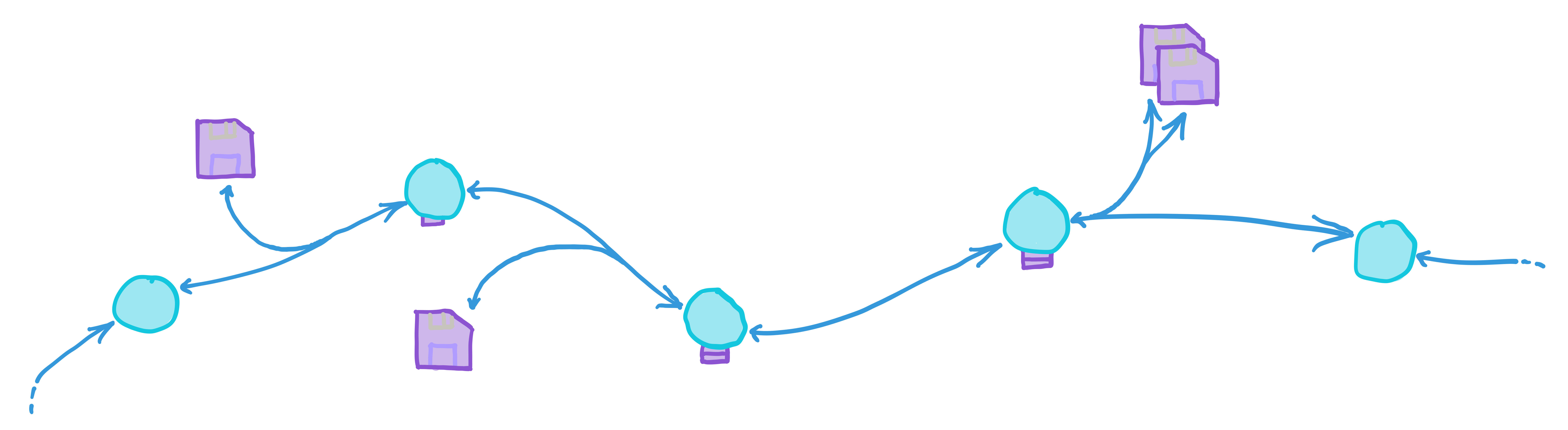}
    \caption{A possible state diagram for a monon interacting with a resource pool; the particular resource distribution scheme is left abstract.}\label{fig:rr-ex}
  \end{subfigure}
  \begin{subfigure}{\linewidth}
    \centering
    \rule{0pt}{0.4\linewidth}%
    \includegraphics[width=\linewidth]{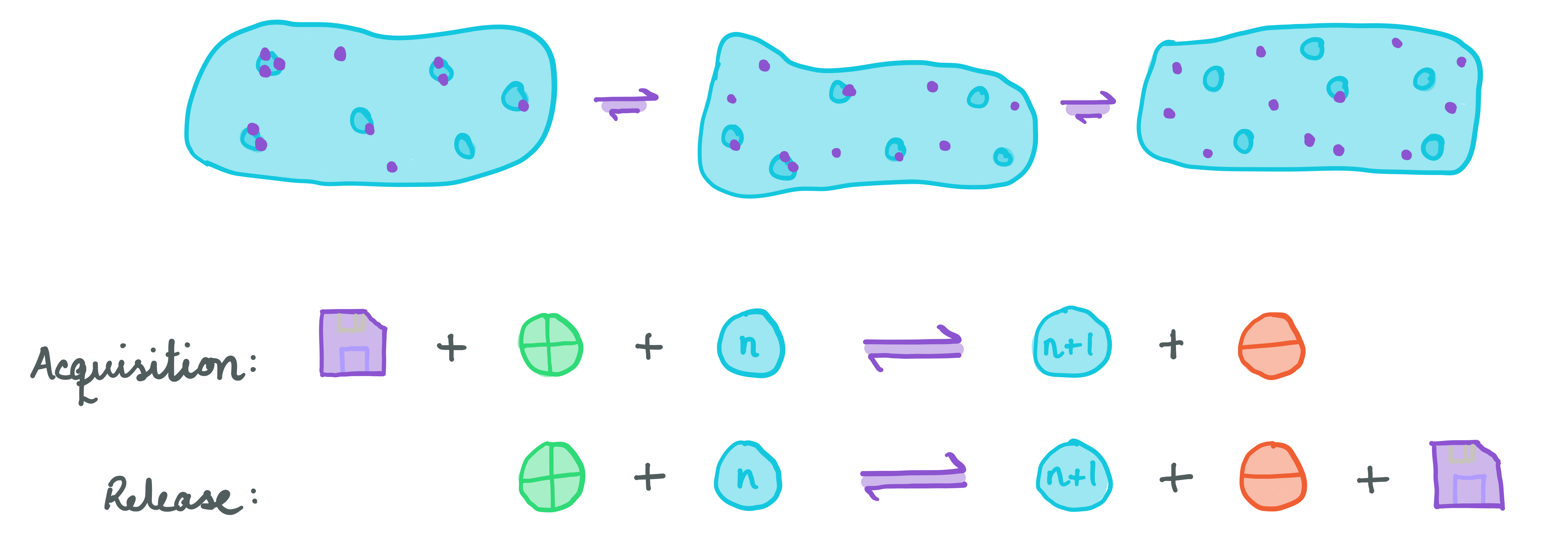}
    \caption{An illustration of the statistical states and reactions for the Free Klona resource distribution scheme.\\[2\baselineskip]}\label{fig:rr-free}
  \end{subfigure}
  \caption[An assortment of illustrative examples of some of the resource distribution schemes for a Brownian amorphous computer.]{%
    An assortment of illustrative examples of some of the resource distribution schemes for a Brownian amorphous computer as considered in this chapter.%
  }\label{fig:rr}
\end{figure}}{%
\begin{figure}[p]
  \ContinuedFloat
  \centering
  \fbox{\begin{subfigure}{\linewidth}
    \centering
    \includegraphics[width=\linewidth]{rr3-key}
  \end{subfigure}}
  \begin{subfigure}{\linewidth}
    \centering
    \rule{0pt}{0.4\linewidth}%
    \includegraphics[width=\linewidth]{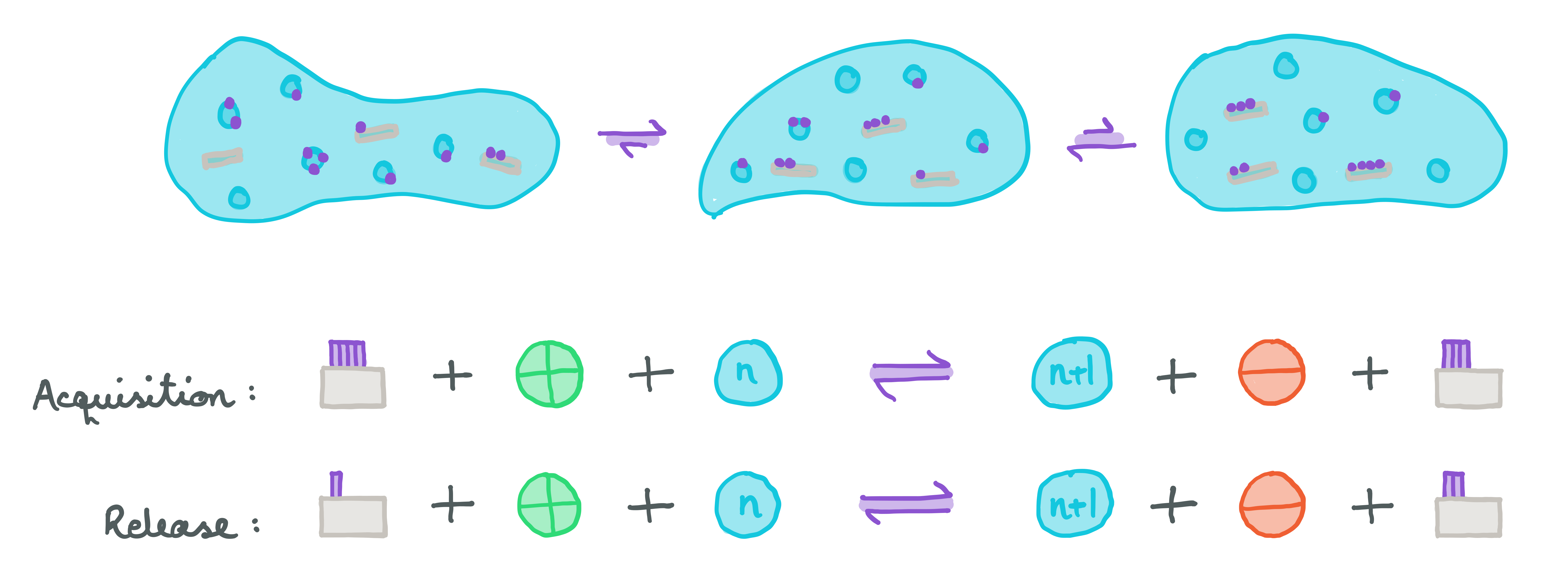}
    \caption{An illustration of the statistical states and reactions for the Bounded Carrier resource distribution scheme.\\}\label{fig:rr-bounded}
  \end{subfigure}
  \begin{subfigure}{\linewidth}
    \centering
    \rule{0pt}{0.4\linewidth}%
    \includegraphics[width=\linewidth]{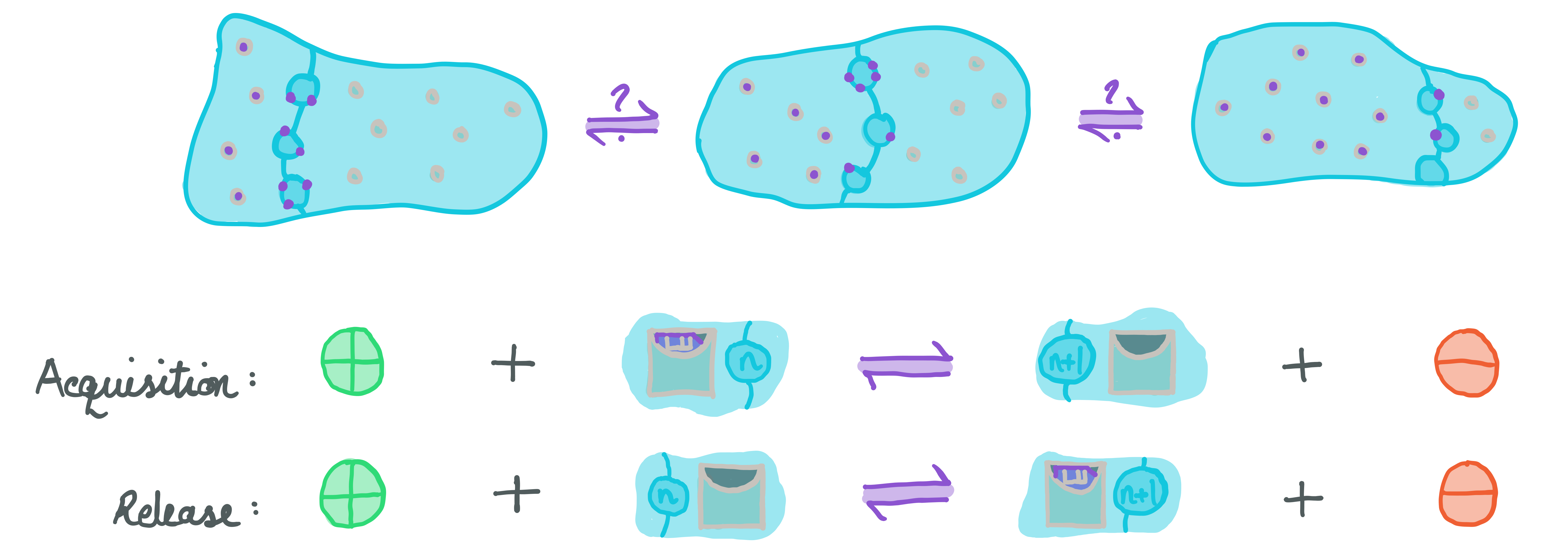}
    \caption{An illustration of the statistical states and reactions for the Isobaric Compartment resource distribution scheme. The question marks serve to indicate that the apparent isergonicity of the resource reactions is in question, and as we find, is unfortunately not the case.}\label{fig:rr-isobaric}
  \end{subfigure}
  \caption[]{%
    (continued) An assortment of illustrative examples of some of the resource distribution schemes for a Brownian amorphous computer as considered in this chapter.%
  }
\end{figure}}

\para{Bounded Carriers}

Perhaps the primary issue with free klona is that there is a change in particle number, implying a substantial entropy change when resources are traded with the pool. A possible improvement is suggested in ensuring total particle number remains constant by binding the resource to carrier klona, as shown in \Cref{fig:rr-bounded}. This may manifest, for example, as $\Sigma=\{\ce{X^0},{X^1}\}$ with $(\vec q_{\ce{X^0}})_{\ce{X}}=0$ and $(\vec q_{\ce{X^1}})_{\ce{X}}=1$. More generally, we have polymeric klona $\Sigma=\{\ce{X^k}:k\in[0,n]\cap\mathbb N\}$ with $(\vec q_{\ce{X^k}})_{\ce{X}}=k$ for some integer $n>1$. Once again, we shall consider a perturbation to the equilibrium distribution of the pool; it will be convenient to introduce a partition function $\mathcal Z(\beta)$ where the thermodynamic variable $\beta$ correlates with the amount of resource stored by the pool. The partition function will take the form $\mathcal Z=\sum_{k=0}^ne^{-\varepsilon_k-\beta k}$ where $\varepsilon_k$ contains all the thermodynamically pertinent information about $\ce{X^k}$ klona, such as its energy and internal degrees of freedom. In fact, we will take $\varepsilon_k=0$ as any $k$-dependence of the $\varepsilon_k$ would only serve to reduce the effective value of $n$ by narrowing the distribution\footnote{Assuming that the equilibrium distribution is unimodal, this corresponds to a reduction in variance. If the distribution is multimodal then instead there is a reduced `local' variance about each mode.}.

The free energy change for a reaction is given by the logarithm of the ratios of reaction rates (see, e.g., \Cref{sec:crn} for a derivation). Therefore in this case it is given by $\Delta h = \log\sum_{k=1}^n e^{-\beta k}-\log\sum_{k=0}^{n-1}e^{-\beta k}$,
and so at equilibrium (when $\Delta h=0$) we find $\beta=0$. For a perturbation from equilibrium, this reduces (exactly) to $\delta\Delta h = -\delta\beta$. Now, the average $\ce{X}$-charge of each klonon is given by $\bar k=-\partial_\beta\log\mathcal Z$, and the charge density by $Q=\sum_{k=0}^n k[\ce{X^k}]\equiv \bar k[\ce{X}]$. The proportional resource accessibility is then given by
\begin{align*}
  \qty\Big|\frac{\delta Q}{Q}| &= \qty\Big|\frac{-\delta\beta\var k}{\bar k}| < \tfrac13b(n+2)
\end{align*}
using $\bar k=\tfrac12n$ and $\var k=\tfrac1{12}n(n+2)$. This suggests taking $n\gtrsim b^{-1}$ to ensure full resource availability. Unfortunately, doing so would mean the total system charge scales as $V/b\sim V^{7/6}$ which clearly cannot be sustained in the limit of large systems. One potential resolution is to reduce the carrier concentration to $[\ce{X}]\sim b^{1/2}$ and then take $n\sim b^{-1/2}$ so that the total charge scales with $V$; whilst this would slow down resource reactions commensurately, mona could use this additional time to accumulate $\sim b^{-1/2}$ tokens in order to raise their free energy budget to $\Delta h\lesssim b^{1/2}$, thereby obtaining a resource accessibility of $|\delta Q/Q|\sim 1$ at a time penalty of $\sim b^{-3/2}$.

\para{Unbounded Carriers}

It is also possible to consider polymeric klona of unbounded size, i.e.\ the limit $n\to\infty$, in which case the partition function is given by $\mathcal Z=(1-e^{-\beta})^{-1}$. It should be noted that in this limit there is no equilibrium distribution of the resources, as $\mathcal Z$ diverges as $\beta\to0$. In fact, distributions only exist for $\beta>0$ whilst the free energy constraint bounds it from above and hence $0<\beta<2b$. This range corresponds to $\bar k>(2b)^{-2}-\bigOO{1}$ and therefore a total enclosed charge $\gtrsim V^{4/3}$; for lower densities, the only favourable reaction is resource deposition. Therefore, whilst conceptually simpler, unbounded carriers are to be avoided.

\para{Isobaric Compartments}

The problem inherent to seemingly all schemes is that a resource reaction must necessarily reduce the quantity of reactants and increase the quantity of products, and this will in turn bias the resource reactions in the opposite direction per \emph{Le Chatelier}'s principle. As our last attempt to circumvent these issues, we introduce a scheme exploiting varying partial volumes in order to maintain equal concentrations of reactants and products. This `Isobaric Compartment' scheme is illustrated in \Cref{fig:rr-isobaric} and consists of dividing the system into two compartments by means of a massless and infinitely flexible membrane, impermeable to the carrier species $\ce{X^0}$ and $\ce{X^1}$. Referring to the two compartments by $\Gamma_\oplus$ and $\Gamma_\ominus$, we then establish the invariants that $\Gamma_\oplus$ only contains $\ce{X^1}$ and $\Gamma_\ominus$ only contains $\ce{X^0}$. A monon wishing to exchange resources with the pool must first bind to the membrane, and then to maintain the invariant it will translocate the carrier species between the two compartments. At equilibrium, the concentrations of carrier species in each compartment must be equal and thus it would appear that we have been successful in eliminating the free energy cost for this mona-klona interaction.

\newcommand{\eff}{{\text{eff.}}}
\newcommand{\eqm}{{\text{eq.}}}
Unfortunately, whilst the carrier species concentrations are equal, concentration is not the correct quantity for determining the reaction rates and thence the free energy difference. From a collision theory perspective, the reaction rates are given by the frequency with which mona and klona collide (in the correct orientation and conformation). Consider the case where all but one carrier reside within $\Gamma_\oplus$: if the total number of carriers is $N$ then the mean volume of $\Gamma_\ominus$ will be $V/N$ and the mean concentration $1/(V/N)=N/V$. This would suggest that a monon bound to the interfacial membrane will encounter the $\ce{X^0}$ klonon as often as it does an $\ce{X^1}$ klonon, but this clearly cannot be the case. If we imagine the $\ce{X^0}$ klonon as residing within a small bubble of volume $V/N$, then whilst its local concentration is indeed the same as for the $\ce{X^1}$ klona, the bubble itself is free to migrate anywhere along the membrane. Consequently, in a system in which particles can bind to the boundary of the system in addition to exploring their volume, the effective concentration becomes $N/(\alpha + \gamma)$ where $\alpha$ is the effective `volume' of the boundary (proportional to its area) and $\gamma$ is the enclosed conventional volume. For our isobaric system, $\gamma_\oplus=N_\oplus\gamma_0$ (and similarly for $\gamma_\ominus$) where the constant $\gamma_0=V/(N_\oplus+N_\ominus)$ is the volume available to each particle. We can therefore write $[\ce{X^0}]_\eff=[\ce{X^0}]/(\alpha/V + [\ce{X^0}]\gamma_0)$ where $[\ce{X^0}]=N(\ce{X^0})/V$ is the `true' concentration; redefining terms, we let $[\ce{X^0}]_\eff=[\ce{X^0}]/(\alpha + [\ce{X^0}]\gamma)$.

The system can be generalised to carriers for $n>1$ by letting the membrane be semi-permeable to $\{\ce{X^k}:0<k<n\}$, as these klona can participate as both reactants and products in either reaction. We can then write $\mathcal Z_\ominus=\sum_{k=0}^{n-1}e^{-\beta_\ominus k}$ for the partition function of $\Gamma_\ominus$ and similarly for $\Gamma_\oplus$. The effective concentrations of $\ce{X^k}$ in each compartment must be equal for $0<k<n$, i.e.\ $(e^{-\beta_\ominus k}/\mathcal Z_\ominus)[\ce{X}]_{\ominus\eff}=(e^{-\beta_\oplus k}/\mathcal Z_\oplus)[\ce{X}]_{\oplus\eff}$. As this must apply for each $0<k<n$, $(\beta_\oplus-\beta_\ominus)k$ must be constant with respect to $k$ and therefore $\beta_\oplus=\beta_\ominus=\beta$. Moreover, $[\ce{X}]_{\oplus\eff}/[\ce{X}]_{\ominus\eff}=\mathcal Z_\oplus/\mathcal Z_\ominus=e^{-\beta}$. Equilibrium occurs once again for $\beta=0$ with corresponding `true' concentrations $[\ce{X}]_\oplus=[\ce{X}]_\ominus=\tfrac12[\ce{X}]_{\text{total}}$. The free energy difference is then given by
\[\delta\Delta h=-\delta\beta=\frac{\delta[\ce{X}]_\oplus}{\tfrac12[\ce{X}]}\frac{2\alpha}{\alpha+\tfrac12\gamma[\ce{X}]}.\]
Finally, the resource charge is $Q=\bar k_\oplus[\ce{X}]_\oplus+\bar k_\ominus[\ce{X}]_\ominus$ and hence the proportional availability is given by
\begin{align*}
  \qty\Big|\frac{\delta Q}{Q}| &= \qty\Big|\frac{-\delta\beta\tfrac12[\ce{X}](\var_\oplus k+\var_\ominus k) + \delta[\ce{X}]_\oplus}{\tfrac12n[\ce{X}]}| \\
  &< 2b \qty\Big(\frac{\tfrac16(n^2-1)}{n} + \frac{\alpha+\tfrac12\gamma[\ce{X}]}{2\alpha n}) \\
  &< b ( \tfrac13n+\tfrac1{6n}  + \tfrac{1}{4\alpha n} ) ,
\end{align*}
recalling that $\gamma\equiv1/[\ce{X}]$.

As with the Bounded Carrier scheme we can let $n\sim1/b$ to ensure full resource availability, but we can also choose to let $\alpha\sim b$ with $n=1$. The constant $\alpha$ is proportional to the ratio of the membrane area to the system volume. A naive implementation would of course have $\alpha\sim b^2$ which, whilst ensuring full resource availability, would limit the number of mona able to interact with the resource scheme proportional to $A$. Instead, consider replacing the system's volume with a space-filling tubule (or perhaps a tubule-lattice) and partitioning the tubule along its length; this system is still an instance of the Isobaric Compartment scheme, but allows an $\alpha$ as large as $\sim1$ (if we wanted to allow all mona to interact with the resource system simultaneously). By making the tubule diameter $\sim b^{-1}$, a value of $\alpha\sim b$ can be achieved in which case the proportion of mona able to simultaneously interact with the resource system will be $\sim V^{5/6}$. This is, however, a time penalty of $\sim b^{-2}$ and hence is strictly worse than the Bounded Carrier scheme.

\section{Driving Unfavourable Reactions}
\label{sec:drive-unfav}

We conjecture that the results in the previous section are general; that is, no resource scheme can achieve better resource availability under the same constraints. In order to make use of a greater proportion of the resource pool, more free energy must be supplied and, ideally, the amount of free energy supplied should be the minimum necessary to avoid wasting the supply of negentropy. As alluded to in the introduction, we shall now consider general mona-klona interactions of arbitrary free energy cost.

\para{Reactions of Known or Bounded Cost}

Consider an arbitrary reaction $\ce{ $\sum_{i}\nu_{i}$X_i <=> $\sum_{i}\nu'_{i}$X_i }$ with forward reaction rate $\alpha$ and backward rate $\beta$. Recalling that the free energy change for the forward reaction is given by $\Delta h=\log\frac\alpha\beta$, it can be seen that $n>-\Delta h/2\arctanh b$ computational bias tokens are needed to drive the reaction forward in the case $\Delta h<0$; similarly, if $\Delta h>0$ then $n>\Delta h/2\arctanh b$ tokens are needed to drive the reaction backward. In addition, if we wish to couple such an unfavourable reaction to our computational mona then we should ensure by some means that the reaction does not occur in uncoupled isolation, such as raising the activation energy of the reaction or by employing a reaction mechanism that requires interaction with the mona. This kind of coupling abounds in nature, wherein a variety of bias systems are exploited. In fact, the $\ce{ATP}:\ce{ADP + P_i}$ bias system is itself generated from a number of other bias sources. One particular example involves the conversion of the free energy of a transmembrane proton gradient for which $n\gtrsim 4$; the molecular machine responsible for this coupling, ATP synthase, is a molecular motor which translocates 11 protons\footnote{The exact number varies between species, and even between different organelles of the same species, but is typically between 10 and 14.} across the membrane per revolution, in the process converting 3 molecules of $\ce{ADP}$ to $\ce{ATP}$, an entropically unfavourable process.

When $n$ is known and fixed (or, at least, bounded from above) an optimal approach is provided by the simple scheme that performs $n$ transitions between each unfavourable reaction. Indexing the mona states, $\ce{M_k}$, modulo $n+1$, the scheme can be written thus
\begin{align*}\begin{gathered}
  \ce{M_0 <=>>[$p\lambda$][$q\lambda$] $\cdots$ <=>>[$p\lambda$][$q\lambda$] M_k <=>>[$p\lambda$][$q\lambda$] $\cdots$ <=>>[$p\lambda$][$q\lambda$] M_n},\\
  \ce{M_n + $\sum_{i}\nu_{i}$X_i <=>>[$\alpha'$][$\beta'$] M_0 + $\sum_{i}\nu'_{i}$X_i}.
\end{gathered}\end{align*}
The reactions between $\ce{M_k}$ and $\ce{M_{k+1}}$ serve to increase the ratio $[\ce{M_n}]/[\ce{M_0}]$ such that it counteracts the ratio $\alpha'/\beta'=\alpha/\beta<1$. To solve for the steady state dynamics of this system, we write the currents
\begin{align*}
  S(\ce{M_k}\mapsto\ce{M_{k+1}}) &= p\lambda[\ce{M_k}] - q\lambda[\ce{M_{k+1}}] \\
  S(\ce{M_n}\mapsto\ce{M_0}) &= \alpha'[\ce{M_n}] - \beta'[\ce{M_0}]
\end{align*}
and use the fact that, at steady state, the currents must be equal. Moreover, the current will correspond to the net rate at which the target reaction is driven, and therefore we can find the optimal choice of $n>|\Delta h/2\arctanh b|$ to maximise this rate (as for excessively large $n$, the chain of intermediate monon states will dilute the proportion of transitions performing the desired reaction).

We first show that the system makes optimal use of the consumed bias tokens by calculating the value of $n$ for which the current vanishes, and hence for which the reaction is exactly balanced. For vanishing current, the principle of detailed balance applies and the steady state distribution is obtained simply by \smash{$[\ce{M_k}]/[\ce{M_0}]=(\frac pq)^k$} and \smash{$[\ce{M_0}]/[\ce{M_n}]=\alpha/\beta$}. This yields \smash{$n=\log\frac\beta\alpha/\log\frac pq=-\Delta h/2\arctanh b$}, precisely the threshold value of $n$. The attentive reader may point out that this value of $n$ is not necessarily integral, and therefore a cyclic chain of length $n+1$ may not exist; this is true, but we shall be seeking larger values of $n$ to increase the rate of the reaction and for which picking the nearest integral value shall be acceptable. Alternatively, one could consider the superposition of two such chains for $\lfloor n\rfloor$ and $\lceil n\rceil$ in the appropriate proportions.

To solve for non-vanishing current, we perform two telescopic sums
\begin{align*}
  nS &= \sum_{k=0}^{n-1} S(\ce{M_k}\mapsto\ce{M_{k+1}})  &
  \frac1pS \sum_{k=0}^{n-1} t^{-k} &= \sum_{k=0}^{n-1}( t^{-k} \lambda[\ce{M_k}] - t^{-k-1} \lambda[\ce{M_{k+1}}] )  \\
  &= b\lambda[\ce{M}] + q\lambda[\ce{M_0}] - p\lambda[\ce{M_n}], &
  S \frac{p^n-q^n}{p-q} &= p^n\lambda[\ce{M_0}] - q^n\lambda[\ce{M_n}],
\end{align*}
where $t\equiv\frac pq$, and then substitute $S(\ce{M_n}\mapsto\ce{M_0})$ for the current:
\begin{align*}\begin{gathered}
  b\lambda[\ce{M}] = (p\lambda+n\alpha')[\ce{M_n}] - (q\lambda+n\beta')[\ce{M_0}], \\
  \qty\Big(q^n\lambda + \frac{p^n-q^n}{p-q}\alpha')[\ce{M_n}] = \qty\Big(p^n\lambda + \frac{p^n-q^n}{p-q}\beta')[\ce{M_0}].
\end{gathered}\end{align*}
Hence, solving for $[\ce{M_0}]$ and $[\ce{M_n}]$, we can obtain the steady state current thus:
\begin{align*}
  S &= b\lambda[\ce{M}] \frac{q^n\beta' - p^n\alpha'}{(q^{n+1}-p^{n+1})\lambda + n(q^n\beta'-p^n\alpha')+(q\alpha'-p\beta')\frac1b(p^n-q^n)}.
\end{align*}
After some rearrangement, we find that the current is maximised when
\begin{align*}
  (t^{n/2}-t^{-n/2}\gamma)^2 &= \frac{q\log t}{b}(\gamma-1)(t\gamma-1)+\frac{q\lambda\log t}{\alpha}(t\gamma+1)
\end{align*}
where $\gamma\equiv\beta/\alpha$. For small bias, this reduces to $n=2n_0 + \bigOO{b}$ where $n_0=|\Delta h|/2\arctanh b$ is the threshold value. When $n$ is large, this gives a current $S \to [\ce{M}]\min(2b^2\lambda\tfrac\alpha\beta, 2b\alpha)$. This assumes that the intermediate monon transitions are not useful; if the intermediate transitions in fact perform useful work then this current can be multiplied by $n$ and, in the limit of large $n$, the current tends to $b\lambda$: the net transition rate for uncoupled mona.

\para{Reactions of Unknown or Unbounded Cost}

The preceding scheme suffers from a number of shortcomings due to the fact that $n$ is fixed; when $n_0$ is lower than that designed for, the system expends more bias tokens than necessary, whereas when it is greater the system will stall. Moreover the system is asymmetric with respect to direction of the driven reaction: if $\beta>\alpha$ then the forward reaction should be driven with $n$ tokens whilst the reverse reaction needs no additional bias, but this complicates the act of running a subroutine in reverse as each bias chain for the forward reaction must be removed, and a bias chain must be introduced for each instance of the reverse reaction. This is further complicated if the intermediate transitions along the chain perform useful computation. Whilst these problems are not insurmountable, it is possible to solve them all with a revised scheme in which $n$ is permitted to vary.

A scheme with variable $n$ is incompatible with the specific implementation of that for fixed $n$, and so we first migrate the mona chains into a distinct system of klona consisting of the species $\{\ce{X}^{(n)}_{k}:n,k\in\mathbb N\land 0\le k\le n\} \cup \{\ce{X}^{(-n)}_{-k}:n,k\in\mathbb N\land 0\le k\le n\}$ and subject to reactions \smash{$\ce{ X$^{(n)}_{k}$ <=>>[$p$][$q$] X$^{(n)}_{k+1}$ }$}, with steady state $[\ce{X}^{(n)}_{k}]/[\ce{X}^{(n)}_{0}]=(\frac pq)^k$. To demonstrate practicability, an abstract molecular realisation is provided in \Cref{app:seq-klona}. The idea is to sequester some of the reactants and products in an inactive state, with those remaining in an active state being at equilibrium (for an isenthalpic reaction, this manifests as equal concentrations). Representing the reactants by $\ce{A}$ and the products by $\ce{B}$, this effective sequestration behaviour is achieved by the reaction $\ce{ M$_i$ + A{:}X$^{(n_0)}_{n_0}$ <=>>[$p$][$q$] M$_{i+1}$ + B{:}X$^{(n_0)}_0$ }$ with $n_0=-\Delta h/2\arctanh b$. For fixed $n_0$, this scheme already solves all the aforementioned problems as the complexed system \smash{$\ce{A{:}X$^{(n_0)}_{n_0}$ <=> B{:}X$^{(n_0)}_0$}$} is an equilibrated abstraction over the underlying $\ce{A <=> B}$ reaction, and hence can be coupled directly to the computational mona system.

In the course of system operation, the value of $n_0$ will almost certainly vary. Introducing the sequestration klona for all values of $n\in\mathbb Z$, and letting $\alpha' \equiv \alpha\sum[\ce{ X$^{(n)}_n$ }]$ and $\beta' \equiv \beta\sum[\ce{ X$^{(n)}_0$ }]$, we see that the desired equilibrium condition is given by $\alpha'=\beta'$. Therefore, if the values of $\alpha'$ and $\beta'$ deviate from equilibrium then we should counteract this deviation by perturbing the $n$-distribution of the sequestration klona. Ideal kinetics are attained if this $n$-distribution converges to $\alpha'=\beta'$ exponentially fast, i.e.\ $\dot n\propto \beta'-\alpha'$. This target kinetics arises because $[\ce{ X$^{(n)}_n$ }]=(\frac pq)^n[\ce{ X$^{(n)}_0$ }]$ and $\frac pq>1$, hence increasing $n$ increases $\alpha'/\beta'$. A first attempt at implementation might be given by $\forall n\in\mathbb Z. \ce{ M$_i$ + A{:}X$^{(n)}_{n}$ <=>>[$p$][$q$] M$_{i+1}$ + B{:}X$^{(n-1)}_0$ }$, but this doesn't quite replicate the desired kinetics. Instead we separate the concerns of the computational bias and the reaction coupling thus:
  \begin{equation}\begin{aligned}
    \left.{\longce\begin{array}{c}
      \ce{ M_i <=>[$p$][$p$] M$_{i+\frac13}$ } \\
      \ce{ M_i <=>>[$p$][$q$] M$_{i+\frac13}$ }
    \end{array}}\right\} &&
    \ce{ M$_{i+\frac13}$ + A{:}X^{(n)}_{n} <=> M$_{i+\frac23}$ + B{:}X^{(n-1)}_0 } && 
    \left\{{\longce\begin{array}{c}
      \ce{ M$_{i+\frac23}$ <=>[$q$][$q$] M_{i+1} } \\
      \ce{ M$_{i+\frac23}$ <=>>[$p$][$q$] M_{i+1} }
    \end{array}}\right.
  \end{aligned}\label{eqn:dyn-coupling-scheme}\end{equation}
Assuming that at steady state $[\ce{M$_{i+\frac13}$}]=[\ce{M$_{i+\frac23}$}]\equiv\mu$, the coupled reaction then replicates the kinetics $\dot n=k\mu(\beta' - \alpha')$. For symmetry, we have introduced two intermediate mona with their inter-monon transitions used for coupling to the computational bias. In order to ensure that only a single token's worth of bias is consumed in the reaction, we also introduce two leak reactions such that the total free energy change is precisely that for a single token: $\Delta h=\log\frac{p+p}{q+p} + \log\frac{p+q}{q+q}=2\arctanh b$.

To verify that this system has the desired properties, we now solve for the steady state distribution. Employing detailed balance, we find for each $n$ that $[\ce{X$^{(n)}_{n}$}]/[\ce{X$^{(n)}_{n}$}]$
\begin{align*}
  \frac{[\ce{X$^{(n)}_{n}$}]}{[\ce{X$^{(n-1)}_{0}$}]} &= \frac\beta\alpha \equiv \qty\Big( \frac pq )^{n_0}  & &\implies &
  \frac{[\ce{X$^{(n)}_{0}$}]}{[\ce{X$^{(0)}_{0}$}]} &= \qty\Big( \frac pq )^{-\frac12(n+\frac12-n_0)^2+\frac12(\frac12-n_0)^2},
\end{align*}
where we have used the intra-$\ce{X$^{(n)}_{k}$}$ steady state distribution, $[\ce{X}^{(n)}_{k}]/[\ce{X}^{(n)}_{0}]=(\frac pq)^k$. This steady state distribution resembles a discrete Gaussian in $n$, and hence is relatively tightly centered about $n_0$; this is important to ensure good control over the distribution and that perturbing its mean value is practicable. As a sanity check, we can show that this distribution indeed gives $\beta'=\alpha'$:
\begin{align*}
  \frac{\beta'}{\alpha'} &= \qty\Big( \frac pq )^{n_0} \frac{\sum_{n\in\mathbb Z}( \frac pq )^{-\frac12(n+\frac12-n_0)^2+\frac12(\frac12-n_0)^2}}{\sum_{n\in\mathbb Z}( \frac pq )^{n}( \frac pq )^{-\frac12(n+\frac12-n_0)^2+\frac12(\frac12-n_0)^2}} \\
  &= \frac{\sum_{n\in\mathbb Z}( \frac pq )^{-\frac12(n+\frac12-n_0)^2+\frac12(\frac12+n_0)^2}}{\sum_{n\in\mathbb Z}( \frac pq )^{-\frac12(n-\frac12-n_0)^2+\frac12(\frac12+n_0)^2}} \\
  &= 1.
\end{align*}
Finally we seek to characterise the rates for the coupled reaction; first, we shall require the weight for each of the $\{\ce{X$^{(n)}_{k}$}:k\}$ subsystems:
\begin{align*}
  \sum_{k=0}^n[\ce{X$^{(n)}_{k}$}] &= \frac{[\ce{X$^{(n)}_{0}$}]}{t-1}\begin{cases}
    t^{n+1}-1 & n \ge 0 \\
    t-t^{n} & n \le 0
  \end{cases} 
\end{align*}
where $t=\frac pq$. Next, we use $t^nt^{-\frac12(n+\frac12-n_0)^2+\frac12(\frac12-n_0)^2}=t^{-\frac12(n-\frac12-n_0)^2+\frac12(\frac12+n_0)^2}$ to find the normalisation,
\begin{align*}
  [\ce{X$^{(0)}_{0}$}]^{-1} = 1 &+ \textstyle\frac1{t-1}[t\sum_{n<0}-\sum_{n>0}\,]t^{-\frac12(n+\frac12-n_0)^2+\frac12(\frac12-n_0)^2} \\
  &+ \textstyle\frac1{t-1}[t\sum_{n>0}-\sum_{n<0}\,]t^{-\frac12(n-\frac12-n_0)^2+\frac12(\frac12+n_0)^2}.
\end{align*}
This is non-trivial for general $n_0$, so we restrict to the two limiting cases of $n_0=0$ and $|n_0|\gg1$. In the former case, the normalisation simplifies to $[\ce{X$^{(0)}_{0}$}]=\frac{b}{1+b}$ and so we have $\beta'=\alpha'=\frac{\alpha b}{1+b}\sum_{n\in\mathbb Z}t^{\frac18-\frac12(n-\frac12)^2}$. The sum is bounded from below by the integral $2\int_0^\infty t^{-\frac12(s+\frac12)^2}\dd{s}$, and from above by $2\int_0^\infty t^{-\frac12(s-\frac12)^2}\dd{s}$, from which we obtain $\alpha'=\alpha\sqrt{\pi b}+\bigOO{b}$. In the latter case, suppose $n_0\gg1$ is positive (i.e.\ $\beta\gg\alpha$). The normalisation will then be dominated by the sums over $n>0$ and so has asymptotically exact approximation
\begin{align*}
  [\ce{X$^{(0)}_{0}$}]^{-1} &=\textstyle \frac{t}{t-1}\sum_{n\in\mathbb Z}t^{-\frac12(n-\frac12-n_0)^2+\frac12(\frac12+n_0)^2}-\frac1{t-1}\sum_{n\in\mathbb Z}t^{-\frac12(n+\frac12-n_0)^2+\frac12(\frac12-n_0)^2} \\
  &=\textstyle \frac 1b[pt^{\frac12(\frac12+n_0)^2}-qt^{\frac12(\frac12-n_0)^2}]\sum_{n\in\mathbb Z}t^{-\frac12(n+\frac12-n_0)^2},
\end{align*}
hence $\alpha'$ reduces to $2b\alpha + \bigOO{b^2}$ in the case of $\beta\gg\alpha$, or $2b\beta + \bigOO{b^2}$ in the case of $\alpha\gg\beta$. In the ideal case, in which only the excess $\ce{B}$ (when $\beta>\alpha$) is sequestered, we should have $\alpha'=\beta'=\alpha$. The consequence of the factor $(\sqrt{\pi b},2b)$ is to further suppress the current through the coupled transition. As the uncoupled monon transitions have current $\propto b$, this results in an overhead of $\sim b^{-3/2}$ in the best case and $\sim b^{-2}$ in the worst. This overhead is comparable to the time penalties we found for communication and synchronisation interactions in \Cref{chap:revii}, as for the coupled reaction to proceed both the monon and the complexed klonon must be in receptive states. Drawing further from the analogy, one may wonder whether reactions involving multiple klona are subject to even greater overheads; fortunately, this is not the case: provided only one sequestration klonon participates, by choosing a single representative from each of the reactant klona and product klona to engage in complexation with the sequestration klona, then the results stand.

\section{Molecular Counter Implementation}
\label{app:seq-klona}

Here we present example implementations of the counter klona $\ce{X^{(n)}_k}$ used in \Cref{sec:drive-unfav}. These implementations are abstract, but point towards a viable molecular realisation.

\para{Unary Representation}

\begin{listing}
  \centering
  {\longce\fbox{\begin{minipage}{.85\textwidth}\begin{align*}
    \ce{ $\llb\ctrTok\sigma\mkrAnyR\tokSucc\ctrTok\sigma'\rrb$ &<=>>[$p$][$q$] $\llb\ctrTok\sigma\tokSucc\mkrAnyR\ctrTok\sigma'\rrb$ } && \ctrRule{intra--bias} \\
    \ce{ M$_{i+\frac13}$ + A{:}$\llb\ctrTok\sigma\tokSucc\tokSucc\mkrPosR\rrb$ &<=> $\llb\ctrTok\sigma\tokSucc\mkrPosL\rrb${:}B + M$_{i+\frac23}$ } && \ctrRule{pos--step} \\
    \ce{ M$_{i+\frac13}$ + A{:}$\llb\tokSucc\mkrPosR\rrb$ &<=> $\llb\mkrZero\rrb${:}B + M$_{i+\frac23}$ } && \ctrRule{pos--base} \\
    \ce{ M$_{i+\frac13}$ + A{:}$\llb\mkrZero\rrb$ &<=> $\llb\mkrNegL\tokSucc\rrb${:}B + M$_{i+\frac23}$ } && \ctrRule{neg--base} \\
    \ce{ M$_{i+\frac13}$ + A{:}$\llb\mkrNegR\tokSucc\ctrTok\sigma\rrb$ &<=> $\llb\mkrNegL\tokSucc\tokSucc\ctrTok\sigma\rrb${:}B + M$_{i+\frac23}$ } && \ctrRule{neg--step} \\
  \end{align*}\end{minipage}}}
  \caption[The abstract molecular reaction scheme implementing unary counters.]{%
    The abstract molecular reaction scheme implementing $\ce{ X^{(n)}_k <=> X^{(n)}_{k+1} }$ and $\ce{ X^{(n)}_n <=> X^{(n-1)}_0 }$ for the unary representation case.
    As described in \Cref{app:seq-klona}, the species $\ce{ X^{(n)}_k }$ is represented by a polymer of $n$ copies of the $\tokSucc$ monomer.
    These polymers are delimited by $\llb$ and $\rrb$, and are intercalated with a `marker' monomer, $\protect\mkrPosR$ or $\protect\mkrNegR$.
    The position of the marker monomer corresponds to the value of $k$ as explained in the text, and the \ctrRule{intra--bias} reaction tends to push it in the direction of its arrow.
    Note that $\protect\mkrAnyR$ represents either marker monomer and $\sigma$, $\sigma'$ represent arbitrary strings of $\protect\tokSucc$ monomers.}
  \label{lst:ctr-unary}
\end{listing}

The simplest implementation of counters is given by representing numbers in unary, i.e.\ a Peano axiomatic approach. For a natural integer $n\in\mathbb N$, we propose a polymeric representation consisting of $n$ repeated $\tokSucc$ monomers and represented thus,
\begin{align*}
  \llb\underbrace{\tokSucc\tokSucc\cdots\tokSucc\tokSucc}_{n}\rrb.
\end{align*}
We can then introduce a subcounter $0\le k\le n$ by incorporating a marker monomer, $\mkrPosR$, placed between the monomers and which can migrate between the $n+1$ possible positions. Moreover, we can handle non-positive values of $n$ and $k$ via the alternative marker $\mkrNegR$. Whilst not strictly necessary, we also differentiate the case $n=k=0$ with a third marker monomer, $\mkrZero$. Example polymers are given for $n=-4,0,4$:
\begin{align*}
  \ce{X^{(4)}_0} &= \llb\mkrPosR\tokSucc\tokSucc\tokSucc\tokSucc\rrb, & & & \ce{X^{(-4)}_{0}} &= \llb\tokSucc\tokSucc\tokSucc\tokSucc\mkrNegR\rrb, \\
  \ce{X^{(4)}_1} &= \llb\tokSucc\mkrPosR\tokSucc\tokSucc\tokSucc\rrb, & & & \ce{X^{(-4)}_{-1}} &= \llb\tokSucc\tokSucc\tokSucc\mkrNegR\tokSucc\rrb, \\
  \ce{X^{(4)}_2} &= \llb\tokSucc\tokSucc\mkrPosR\tokSucc\tokSucc\rrb, & \ce{X^{(0)}_0} &= \llb\mkrZero\rrb, & \ce{X^{(-4)}_{-2}} &= \llb\tokSucc\tokSucc\mkrNegR\tokSucc\tokSucc\rrb, \\
  \ce{X^{(4)}_3} &= \llb\tokSucc\tokSucc\tokSucc\mkrPosR\tokSucc\rrb, & & & \ce{X^{(-4)}_{-3}} &= \llb\tokSucc\mkrNegR\tokSucc\tokSucc\tokSucc\rrb, \\
  \ce{X^{(4)}_4} &= \llb\tokSucc\tokSucc\tokSucc\tokSucc\mkrPosR\rrb, & & & \ce{X^{(-4)}_{-4}} &= \llb\mkrNegR\tokSucc\tokSucc\tokSucc\tokSucc\rrb.
\end{align*}
Notice the mirrored interpretation of marker positions as values of $k$ for positive and negative values of $n$; this is intentional and serves to simplify the implementation of the intra-$n$ biasing reactions. It should also be noted that these polymers are achiral and so their mirror images, e.g.\ $\ce{X^{(4)}_1} = \llb\tokSucc\tokSucc\tokSucc\mkrPosL\tokSucc\rrb$, are equivalent representations of the same polymer. This property greatly simplifies the implementation of the $\ce{ X^{(n)}_n <=> X^{(n-1)}_0 }$ reactions. Altogether, this implementation can be achieved with five reactions as listed in \Cref{lst:ctr-unary}. In fact, this scheme can be implemented with just four reactions by replacing $\mkrZero$ with $\mkrPosR$ and combining reactions \ctrRule{pos-step} and \ctrRule{pos-base}, but we prefer the given implementation for its symmetry. More concretely, these reactions engender the following bi-infinite Markov chain:
{\mhchemoptions{arrow-min-length=1.5em}\begin{multline*}
  \cdots \ce{<=>} \llb\mkrPosR\tokSucc\tokSucc\rrb
  \ce{<=>>} \llb\tokSucc\mkrPosR\tokSucc\rrb \ce{<=>>} \llb\tokSucc\tokSucc\mkrPosR\rrb
  \ce{<=>} \llb\mkrPosR\tokSucc\rrb \ce{<=>>} \llb\tokSucc\mkrPosR\rrb
  \ce{<=>} \llb\mkrZero\rrb \\ \llb\mkrZero\rrb
  \ce{<=>} \llb\mkrNegL\tokSucc\rrb \ce{<<=>} \llb\tokSucc\mkrNegL\rrb
  \ce{<=>} \llb\mkrNegL\tokSucc\tokSucc\rrb \ce{<<=>} \llb\tokSucc\mkrNegL\tokSucc\rrb
  \ce{<<=>} \llb\tokSucc\tokSucc\mkrNegL\rrb \ce{<=>} \cdots.
\end{multline*}}

\para{Binary Representation}

{\def\AutLabel#1{{\small\textsc{#1}}}
\def\autLabel#1{\makebox[0pt][c]{\AutLabel{#1}}}

\begin{figure}
  \centering
  \begin{tikzpicture}[shorten >=1pt,node distance=2.5cm,on grid,auto]
    \node[state,initial] (succ) {\autLabel{inc}};
    \node[state,accepting] (halt) [right=of succ] {\autLabel{halt}};
    \path[->] (succ) edge [loop above] node {$1\mapsto0$; \emph{left}} ()
                     edge [bend left, above] node {$0\mapsto1$} (halt)
                     edge [bend right, below] node {$\varnothing\mapsto1$} (halt);
  \end{tikzpicture}
  \caption[The state diagram for a Turing Machine implementing incrementation of a natural number in binary positional notation.]{The state diagram for a Turing Machine~\cite{turing-machine} implementing incrementation of a natural number in binary positional notation. The alphabet is $\{\varnothing,0,1\}$, with $\varnothing$ indicating a blank square; numbers may be provided with any number of leading 0s, and the initial position should be the least significant digit.}
  \label{fig:tm-succ}

  \vspace{1.5em}
  \begin{tikzpicture}[shorten >=1pt,node distance=2cm,on grid,auto]
     \node[state,accepting] (init) {\autLabel{init}};%
     \node[state] (carry) [right=1.5cm of init] {\autLabel{car}};%
     \node[state] (inc) [right=1.5cm of carry] {\autLabel{inc}};%
     \node[state] (end1) [above=of inc] {\autLabel{end\smash{$_1$}}};%
     \node[state] (end2) [right=of end1] {\autLabel{end\smash{$_2$}}};%
     \node[state] (dne1) [below=of inc] {\autLabel{end\smash{$_1'$}}};%
     \node[state] (dne2) [right=of dne1] {\autLabel{end\smash{$_2'$}}};%
     \node[state] (inc2) [right=of inc] {\autLabel{inc\smash{$'$}}};%
     \node[state] (inc3) [right=of inc2] {\autLabel{inc\smash{$''$}}};%
     \node[state] (carry2) [right=1.5cm of inc3] {\autLabel{car\smash{$'$}}};%
     \node[state] (carry3) [right=1.5cm of carry2] {\autLabel{car\smash{$''$}}};%
     \node[state,accepting] (term) [right=1.5cm of carry3] {\autLabel{term}};
     \node (anchor) [right=1.5cm of term] {};
    \path[->] (init) edge node {$\varnothing$} (carry)
              (carry) edge[bend left, above] node {\emph{left}} (inc)
              (inc) edge[bend left, below] node {$1 \mapsto 0$} (carry)
                    edge node [right] {$\varnothing \mapsto 1$} (end1)
                    edge node [right] {$0 \mapsto 1$} (dne1)
              (end1) edge node [above] {\emph{left}} (end2)
              (end2) edge node [right] {$\varnothing$} (inc2)
              (dne1) edge node [below] {\emph{left}} (dne2)
              (dne2) edge [bend left] node [right] {$0$} (inc2)
                     edge [bend right] node [left] {$1$} (inc2)
              (inc2) edge node [above] {\emph{right}} (inc3)
              (inc3) edge node {$1$} (carry2)
              (carry2) edge[bend left, above] node {\emph{right}} (carry3)
              (carry3) edge [bend left, below] node {$0$} (carry2)
                      edge node {$\varnothing$} (term);
  \end{tikzpicture}
\\[-3em]
  \begin{tikzpicture}[shorten >=1pt,node distance=2cm,on grid,auto]
     \node[state,accepting] (init) {\autLabel{init}};%
     \node[state] (carry) [left=1.5cm of init] {\autLabel{car}};%
     \node[state] (inc) [left=1.5cm of carry] {\autLabel{inc}};%
     \node[state] (end1) [above=of inc] {\autLabel{end\smash{$_1$}}};%
     \node[state] (end2) [left=of end1] {\autLabel{end\smash{$_2$}}};%
     \node[state] (dne1) [below=of inc] {\autLabel{end\smash{$_1'$}}};%
     \node[state] (dne2) [left=of dne1] {\autLabel{end\smash{$_2'$}}};%
     \node[state] (inc2) [left=of inc] {\autLabel{inc\smash{$'$}}};%
     \node[state] (inc3) [left=of inc2] {\autLabel{inc\smash{$''$}}};%
     \node[state] (carry2) [left=1.5cm of inc3] {\autLabel{car\smash{$'$}}};%
     \node[state] (carry3) [left=1.5cm of carry2] {\autLabel{car\smash{$''$}}};%
     \node[state,accepting] (term) [left=1.5cm of carry3] {\autLabel{term}};
     \node (anchor) [left=1.5cm of term] {};
    \path[->] (carry) edge node [above] {$\varnothing$} (init)
                      edge [bend left, below] node {$1 \mapsfrom 0$} (inc)
              (inc) edge [bend left, above] node {\emph{right}} (carry)
              (end1) edge node [left] {$\varnothing \mapsfrom 1$} (inc)
              (end2) edge node [above] {\emph{right}} (end1)
              (dne1) edge node [left] {$0 \mapsfrom 1$} (inc)
              (dne2) edge node [below] {\emph{right}} (dne1)
              (inc2) edge node [left] {$\varnothing$} (end2)
                     edge [bend left] node [left] {$0$} (dne2)
                     edge [bend right] node [right] {$1$} (dne2)
              (inc3) edge node [above] {\emph{left}} (inc2)
              (carry2) edge node [above] {$1$} (inc3)
                       edge [bend left, below] node {$0$} (carry3)
              (carry3) edge [bend left, above] node {\emph{left}} (carry2)
              (term) edge node [above] {$\varnothing$} (carry3);
  \end{tikzpicture}
  \\[1em]
  \caption[The state diagram for a Reversible Turing Machine (and its reverse) implementing incrementation (resp.\ decrementation) of a natural number in binary positional notation.]{The state diagram for a Reversible Turing Machine~\cite{bennett-tm} (and its reverse) implementing incrementation (resp.\ decrementation) of a natural number in binary positional representation. The alphabet is $\{\varnothing,0,1\}$, with $\varnothing$ indicating a blank square; numbers must be provided in the described prefix-free form, and the initial position should be the (empty) square to the immediate right of the least significant digit. The box-shaped `subroutine' is used to handle the case when a new digit must be prepended; as this is a reversible branch, we must ensure that the converged state \AutLabel{inc$'$} is able to uniquely determine the branch from whence it came.}
  \label{fig:rtm-succ}
\end{figure}}

Whilst particularly simple, the unary representation is very space inefficient; worse, incrementing and decrementing $n$, for the purpose of modulating access to a sparse resource/disequilibrium reaction, will itself require interaction with a resource pool in order to acquire and release $\circ$ monomers. Fortunately an exponentially more compact representation exists in the form of positional notations such as binary, with spatial complexity logarithmic in $|n|$ and hence needing far fewer interactions with its respective monomer pool.

Unlike with the unary case, we cannot exploit the structure of the representation of $n$ to represent $k$. Instead we shall need to store both explicitly. We will also require the ability to recognise the states $k=0$ and $k=n$, both for determining whether the coupled reaction $\ce{A <=> B}$ should proceed and for ensuring the reactions that generate the intra-$n$ distribution do not produce illegal values of $k$. These two states correspond respectively to all the digits of $k$ being $0$ or being identical to those of $n$. Since $|k|<|n|$, its minimal representation will be no larger than that of $n$ and hence a convenient structure for encoding the pair $(n,k)$ with the ability to easily check the condition $n=k$ is given by a `zipped' double-stranded polymer which we denote by $\llb\begin{smallarray}{c}n\\k\end{smallarray}\rrb$; for example, in base 2 the pair $(11,6)$ would be given by
\[ \begin{cB}1 & 0 & 1 & 1 \\ 0 & 1 & 1 & 0 \\\end{cB}. \]
In contrast to the unary case, this representation \emph{does} have an intrinsic polarity, with the leftmost digits being the most significant. To ensure logical reversibility of the forthcoming reaction scheme, the polymer representation must be unique for each $(n,k)$ pair, but positional representations admit a countably infinite equivalence class for each integer (e.g.\ 3 has base-10 representations $\{3,03,003,\ldots\}$). We resolve this ambiguity by trimming all leading zeros of $n$ and trimming $k$ to the same length. This has the consequence that $(0,0)$ has the `empty' representation $\llb\rrb$; it is certainly possible to introduce a special case for $(0,0)$ of $\llb\begin{smallarray}{c}0\\0\end{smallarray}\rrb$ for {\ae}sthetic reasons, but this would significantly complicate the reaction scheme for no other benefit. 

Thus far, we have only considered the representation for non-negative $k,n\in\mathbb N$; there are a number of approaches one could take to extend the representation, from simply incorporating a \emph{sign} trit, $\{-,0,+\}$, similar to the approach for the unary case, to a two's complement approach. The two's complement approach is particularly attractive as it would require only minimal adjustments to the non-negative reaction scheme. The two's complement representation of negative numbers is realised by taking the equivalence class for the integer, e.g.\ $\{11,011,0011,\ldots\}$ for 3, and complementing each bit to obtain, e.g., $\{00,100,1100,\ldots\}$. That is, each negative integer has an infinite prefix of 1s instead of 0s as for the non-negative case. This approach is used in most conventional CPUs because no special-casing for negative numbers is necessary: adding a negative number $m$, with two's complement $m'$, to a positive number $n$ is equivalent to adding the positive numbers $m'$ and $n$. As a simple justification for this, consider decrementing $1000\cdots000$. We will need to `borrow' a 1 from each digit to the left, eventually obtaining $111\cdots111$. In the limit of infinitely many 0s, this results in infinitely many 1s. For brevity we shall restrict our attention to the non-negative case, but extending to the full domain is a straightforward---if tedious---exercise.

Whilst recognising the cases $k=0$ and $k=n$ is not difficult in our representation, comparing $\sim\log n$ digits each time we wish to execute the coupled reaction is not ideal. By a slight increase in the complexity of the $\ce{ X^{(n)}_k <=>> X^{(n)}_{k+1} }$ and $\ce{ X^{(n)}_n <=> X^{(n-1)}_0 }$ algorithms, we can render these checks trivial. In particular, we augment the polymer with two bits of state for each monomer pair which we call $Q$, for e$Q$uality, and $Z$, for $Z$ero. As biochemical precedent, compare with phosphorylation sites. The function of these states is best understood by example; for $n=11$, the $Q$-$Z$ states for each of $k=0,1,8,11$ are given by
\begin{align*}
  \begin{cB}1 & 0 & 1 & 1 \\ 0 & 0 & 0 & 0 \\\ctrBinInv{1-4}\end{cB}, &&
  \begin{cB}1 & 0 & 1 & 1 \\ 0 & 0 & 0 & 1 \\\ctrBinInv{1-3}\end{cB}, &&
  \begin{cB}1 & 0 & 1 & 1 \\\ctrBinInv{1-2} 1 & 0 & 0 & 0 \\\end{cB}, &&
  \begin{cB}1 & 0 & 1 & 1 \\\ctrBinInv{1-4} 1 & 0 & 1 & 1 \\\end{cB}.
\end{align*}
Namely, a prefix of zeroes is marked by the $Z$ state (bottom lines) and a prefix matching $n$ is marked by the $Q$ state (middle lines), and these are easily achieved by implementing two local invariants for each digit pair $\begin{smallarray}{c}x\\y\end{smallarray}$:
\begin{enumerate}
  \item[$(Q)$] The $Q$ state is \emph{on} if and only if $y=x$ and the $Q$ state of the digit pair to its left (if it exists) is also \emph{on}.
  \item[$(Z)$] The $Z$ state is \emph{on} if and only if $y=0$ and the $Q$ state of the digit pair to its left (if it exists) is also \emph{on}.
\end{enumerate}
It can be seen that these two states are orthogonal as the leading bit of $n$ must be 1, with the possible exception of $(0,0)$ where it makes sense to define $Q$ and $Z$ to be both \emph{on} despite the lack of lack of monomers on which to mark said states.

\begin{listing}
  \centering
  \fbox{\begin{minipage}{.5\linewidth}\centering
  \def\quo#1{{\text{`}#1}}%
  \def\inc{\infix{\atom{Inc}}}%
  \def\eq#1#2{\ooalign{\hss\phantom{$#1$}\hss\cr\hss$#2$\hss}}%
  \vspace{.5em}
  \begin{align*}
    & \rlap{${[]}$}\phantom{[\quo{\eq10}~{x}~{\cdot}~{\cursivebS}]}~\inc~{[\quo{\eq01}]}{;} \\
    & {[\quo{\eq10}~{x}~{\cdot}~{\cursivebS}]}~\inc~{[\quo{\eq01}~{x}~{\cdot}~{\cursivebS}]}{;} \\
    & {[\quo{\eq01}~\phantom{x}~{\cdot}~{\cursivebS}]}~\inc~{[\quo{\eq10}~\phantom{x}~{\cdot}~{\cursivebS'}]}{:} \\
    &\qquad {\cursivebS}~\inc~{\cursivebS'}{.}
  \end{align*}
  \vspace{-.2em}
  \end{minipage}}
  \caption[A recursive \alethe\ program implementing incrementation of a natural number in binary positional notation.]{A recursive $\aleph$/\alethe\ (see \Cref{chap:aleph}) program implementing incrementation of a natural number in binary positional representation. Numbers are provided in the described prefix-free form as a list of digits $\{0, 1\}$ with the least-significant bit first. The second case pattern matches on two bits in order to ensure its output pattern is distinct from that of the first case, and this is where a program will stall should a number be provided which is not in prefix-free form.}
  \label{lst:alethe-succ}
\end{listing}

Before implementing a reaction scheme, we first require an algorithmic understanding of reversibly incrementing/decrementing integers in positional notation. As it is possible (and indeed trivial) to do so for integers in unary notation as demonstrated by \Cref{lst:ctr-unary} we should expect it to also be possible for positional notation, and indeed it is. The irreversible algorithm for incrementing a binary number is very simple (\Cref{fig:tm-succ}): if there are any trailing 1 bits then we carry the 1 (flipping the intermediate 1 bits to 0) until we reach the first 0, which we flip to 1. If we reach the end of the number, we prepend a new 1 bit. Making this algorithm reversible is conceptually simple, but expressing it in the form of a Reversible Turing Machine (RTM, as introduced by \textcite{bennett-tm}) is somewhat involved. An example implementation is provided in \Cref{fig:rtm-succ}, and can be broken down into three stages: (1) carrying; (2) incrementing the least significant 0 bit, or prepending a fresh 1 bit, followed by the logic to reversibly join these two branches; (3) `reverse' carrying, in which we return to the starting position and use the fact that the suffix we are processing takes the form $100\cdots000$ to ensure we can logically reverse this stage. It can be readily checked that this scheme as presented is reversible, and indeed we provide the mechanical reversal of the RTM which may be seen to implement the operation of reversibly decrementing a positive natural number. To demonstrate that the algorithm is in fact conceptually simple, an equivalent recursive program written in a higher level reversible language is shown in \Cref{lst:alethe-succ}.

Finally, the reaction schemes for $\ce{ X^{(n)}_k <=>> X^{(n)}_{k+1} }$ and $\ce{ X^{(n)}_n <=> X^{(n-1)}_0 }$ are made manifest in \Cref{lst:ctr-bin-k,lst:ctr-bin-n} respectively. In addition to implementing the reversible incrementation procedure in an abstract molecular sense, these schemes must also incorporate the relevant logic for ensuring $k$ is appropriately bounded and that the $Q$ and $Z$ invariants are preserved (although these may be transiently broken within the intermediate states). Furthermore the logic for the second reaction (which we choose to implement in reverse, i.e.\ $\ce{ X^{(n)}_0 <=> X^{(n+1)}_{n+1} }$, per the freedom reversibility affords us) must reversibly copy the value of $n$ into $k$, and thus the second scheme must read the entire length of the polymer. As for the case of $n\equiv2^p-1$ in $n\mapsto n+1$ (resp. $n+1\mapsto n$), the scheme recruits (resp.\ releases) a pair of binary monomers ($\mathbb B_2$) which will be provided by some resource pool mediated by its own sequestration klona.
An additional complication is the prevention of multiple initiation events: as the scheme operates via local rules, it is not possible to tell whether a molecule is in a stable state $\ce{ X^{(n)}_k }$ or if it is currently undergoing a transition. To address this, a `blocked' terminus is used to represent intermediate states. Symbolically, this takes the form $\llb\,\cdot\,\ooalign{\hss\hspace{-0.2ex}$\bullet$\hss\cr$\rrbracket$}$ instead of $\llb\,\cdot\,\rrb$.

\doublepage{%
\begin{listing}[!p]
  \centering
  {\longce\fbox{\begin{minipage}{0.85\textwidth}\begin{align*}
    \begin{cB}\sigma \\ \tau \\\ctrBinVar{1-1}\end{cB} &\ce{<=>>} \begin{cB*}\sigma & \ctrBinKL \\ \tau & \\\ctrBinVar{1-1}\end{cB*} && \ctrRule{$k$--init} \\
    \begin{cB*}\sigma & x & \ctrBinKL & \sigma' \\ \tau & 1 & & \tau' \\\ctrBinVar{1-1}\end{cB*} &\ce{<=>>} \begin{cB*}\sigma & \ctrBinKL & x & \sigma' \\ \tau & & 0 & \tau' \\\ctrBinVar{1-1}\end{cB*} && \ctrRule{$k$--carry} \\
    \begin{cB*}\sigma & x & 1 & \ctrBinKL & \sigma' \\\ctrBinInv{1-2} \tau & x & 0 & & \tau' \\\end{cB*} &\ce{<=>>} \begin{cB*}\sigma & x & 1 & \ctrBinKR & \sigma' \\\ctrBinInv{1-3} \tau & x & 1 & & \tau' \\\end{cB*} && \ctrRule{$k$--inc$_1$} \\
    \begin{cB*}\sigma & x & y & \ctrBinKL & \sigma' \\ \tau & 0 & 0 & & \tau' \\\ctrBinInv{1-3}\end{cB*} &\ce{<=>>} \begin{cB*}\sigma & x & y & \ctrBinKR & \sigma' \\ \tau & 0 & 1 & & \tau' \\\ctrBinInv{1-2}\end{cB*} && \ctrRule{$k$--inc$_2$} \\
    \begin{cB*}\sigma & x & y & \ctrBinKL & \sigma' \\\ctrBinVar{1-1} \tau & z & 0 & & \tau' \\\end{cB*} &\ce{<=>>} \begin{cB*}\sigma & x & y & \ctrBinKR & \sigma' \\\ctrBinVar{1-1} \tau & z & 1 & & \tau' \\\end{cB*} && \ctrRule{$k$--inc$_3$} \\
    \begin{cB*}1 & \ctrBinKL & \sigma' \\ 0 & & \tau' \\\ctrBinInv{1-1}\end{cB*} &\ce{<=>>} \begin{cB*}1 & \ctrBinKR & \sigma' \\\ctrBinInv{1-1} 1 & & \tau' \\\end{cB*} && \ctrRule{$k$--inc$_4$} \\
    \begin{cB*}\sigma & x & \ctrBinKR & 0 & \sigma' \\\ctrBinInv{1-2} \tau & x & & 0 & \tau' \\\end{cB*} &\ce{<=>>} \begin{cB*}\sigma & x & 0 & \ctrBinKR & \sigma' \\\ctrBinInv{1-3} \tau & x & 0 & & \tau' \\\end{cB*} && \ctrRule{$k$--carry$_1'$} \\
    \begin{cB*}\sigma & x & \ctrBinKR & 1 & \sigma' \\\ctrBinInv{1-2} \tau & x & & 0 & \tau' \\\end{cB*} &\ce{<=>>} \begin{cB*}\sigma & x & 1 & \ctrBinKR & \sigma' \\\ctrBinInv{1-2} \tau & x & 0 & & \tau' \\\end{cB*} && \ctrRule{$k$--carry$_2'$} \\
    \begin{cB*}\sigma & x & \ctrBinKR & z & \sigma' \\ \tau & y & & 0 & \tau' \\\end{cB*} &\ce{<=>>} \begin{cB*}\sigma & x & z & \ctrBinKR & \sigma' \\ \tau & y & 0 & & \tau' \\\end{cB*} && \ctrRule{$k$--carry$_3'$} \\
    \begin{cB*}\sigma & \ctrBinKR \\\ctrBinVar{1-1} \tau & \\\end{cB*} &\ce{<=>>} \begin{cB}\sigma \\\ctrBinVar{1-1} \tau \\\end{cB} && \ctrRule{$k$--term} \\
  \end{align*}\end{minipage}}}
  \caption[Part I of the abstract molecular reaction scheme implementing binary counters.]{The abstract molecular reaction scheme implementing $\ce{ X^{(n)}_k <=>> X^{(n)}_{k+1} }$ for the binary representation case. To only use a single bias token's worth of free energy, one can make use of a similar scheme to that used in \Cref{eqn:dyn-coupling-scheme}, and perhaps making some of the reactions in this scheme unbiased. Alternatively one can accept the use of many tokens, which has the advantage of increasing the rate of the coupled reaction due to the effectively higher bias, but has the disadvantage of slower convergence of the sequestration klona to steady state---potentially interfering with correct operation of the coupled reaction during periods of high flux. In this scheme, the $\langle$ and $\rangle$ symbols represent klona that mark the current state and progress of the overall reaction. It is of note that, in the \ctrRule{$k$--carry} rule, there is no provision for $Q$ to be set on the bit-pair to the left of $\begin{smallarray}{c}x\\1\end{smallarray}$, as such a state would imply we are trying to increment $k\ge n$ and is thus illegal.}
  \label{lst:ctr-bin-k}
\end{listing}}{%
\begin{listing}[!p]
  \centering
  {\longce\fbox{\begin{minipage}{0.85\textwidth}\begin{align*}
    \ce{ M$_{i+\frac23}$ + B{:} }\begin{cB}\sigma \\ \tau \\\ctrBinInv{1-1}\end{cB} &\ce{<<=>} \begin{cB*}\sigma & \ctrBinNL \\ \tau & \\\ctrBinInv{1-1}\end{cB*}\ce{ {:}{\abmTS} } && \ctrRule{$n$--init} \\
    \ce{ {\abmTS}{:} }\begin{cB*}\sigma & 1 & \ctrBinNL & \sigma' \\\ctrBinInv{4-4} \tau & 0 & & \tau' \\\ctrBinInv{1-2}\end{cB*} &\ce{<=>} \begin{cB*}\sigma & \ctrBinNL & 0 & \sigma' \\\ctrBinInv{3-4} \tau & & 0 & \tau' \\\ctrBinInv{1-1}\end{cB*}\ce{ {:}{\abmTS} } && \ctrRule{$n$--carry}\\
    \ce{ {\abmTS}{:} }\begin{cB*}\sigma & x & 0 & \ctrBinNL & \sigma' \\\ctrBinInv{5-5} \tau & 0 & 0 & & \tau' \\\ctrBinInv{1-3}\end{cB*} &\ce{<=>} \begin{cB*}\sigma & x & \ctrBinPL & 1 & \sigma' \\\ctrBinInv{4-5} \tau & 0 & & 1 & \tau' \\\ctrBinInv{1-2}\end{cB*}\ce{ {:}{\abmTS} } && \ctrRule{$n$--inc$_1$}\\
    \ce{ {\abmTS}{:} }\begin{cB*}\ctrBinNL & \sigma' \\\ctrBinInv{2-2} & \tau' \\\end{cB*} &\smash{\ceArrAdd{<=>}{$\mathbb B_2$}} \begin{cB*}\ctrBinPL & 1 & \sigma' \\\ctrBinInv{2-3} & 1 & \tau' \\\end{cB*}\ce{ {:}{\abmTS} } && \ctrRule{$n$--inc$_2$}\\
    \ce{ {\abmTS}{:} }\begin{cB*}\sigma & x & \ctrBinPL & y & \sigma' \\\ctrBinVar{5-5} \tau & 0 & & 0 & \tau' \\\ctrBinInv{1-2}\end{cB*} &\ce{<=>} \begin{cB*}\sigma & \ctrBinPL & x & y & \sigma' \\\ctrBinVar{5-5} \tau & & 0 & 0 & \tau' \\\ctrBinInv{1-1}\end{cB*}\ce{ {:}{\abmTS} } && \ctrRule{$n$--prefix$_1$}\\
    \ce{ {\abmTS}{:} }\begin{cB*}\ctrBinPL & \sigma' \\\ctrBinVar{2-2} & \tau' \\\end{cB*} &\ce{<=>} \begin{cB*}\ctrBinPR & \sigma' \\\ctrBinVar{2-2} & \tau' \\\end{cB*}\ce{ {:}{\abmTS} } && \ctrRule{$n$--prefix$_2$}\\
    \ce{ {\abmTS}{:} }\begin{cB*}\sigma & \ctrBinPR & x & \sigma' \\\ctrBinInv{1-1}\ctrBinVar{4-4} \tau & & 0 & \tau' \\\end{cB*} &\ce{<=>} \begin{cB*}\sigma & x & \ctrBinPR & \sigma' \\\ctrBinInv{1-2}\ctrBinVar{4-4} \tau & x & & \tau' \\\end{cB*}\ce{ {:}{\abmTS} } && \ctrRule{$n$--prefix$_3$}\\
    \ce{ {\abmTS}{:} }\begin{cB*}\sigma & \ctrBinPR & 1 & \sigma' \\\ctrBinInv{1-1}\ctrBinVar{4-4} \tau & & 1 & \tau' \\\end{cB*} &\ce{<=>} \begin{cB*}\sigma & 1 & \ctrBinNR & \sigma' \\\ctrBinInv{1-2}\ctrBinVar{4-4} \tau & 1 & & \tau' \\\end{cB*}\ce{ {:}{\abmTS} } && \ctrRule{$n$--inc$'$}\\
    \ce{ {\abmTS}{:} }\begin{cB*}\sigma & \ctrBinNR & 0 & \sigma' \\\ctrBinInv{1-1}\ctrBinInv{3-4} \tau & & 0 & \tau' \\\end{cB*} &\ce{<=>} \begin{cB*}\sigma & 0 & \ctrBinNR & \sigma' \\\ctrBinInv{1-2}\ctrBinInv{4-4} \tau & 0 & & \tau' \\\end{cB*}\ce{ {:}{\abmTS} } && \ctrRule{$n$--carry$'$}\\
    \ce{ {\abmTS}{:} }\begin{cB*}\sigma & \ctrBinNR \\\ctrBinInv{1-1} \tau & \\\end{cB*} &\ce{<=>>} \begin{cB}\sigma \\\ctrBinInv{1-1} \tau \\\end{cB}\ce{  {:}A + M$_{i+\frac13}$ } && \ctrRule{$n$--term} \\
  \end{align*}\end{minipage}}}
  \caption[Part II of the abstract molecular reaction scheme implementing binary counters.]{The abstract molecular reaction scheme implementing $\ce{ X^{(n)}_0 <=> X^{(n+1)}_{n+1} }$ for the binary representation case. This overall reaction is unbiased, which raises the concern that the sequestration klona could spend an undesirable amount of time in the intermediate states above. To limit this possibility, the intermediate states should be made more unfavourable and this is achieved by the complementary biases in the rules \ctrRule{$n$--init} and \ctrRule{$n$--term}. These biases need not be provided by the bias tokens, and in fact that is unideal as the bias is vanishingly small; instead it can be implemented by introducing an enthalpy change, such that there is a chance of thermal activation of the reaction, whilst the klona still prefer to exist in the non-intermediate states. As in the scheme for the intra-$n$ reaction, klona are introduced to mark the state and progress of the overall reaction; in this case, $\{$, $\}$, $($, and $)$, with the $\{$ and $\}$ klona corresponding primarily to incrementation of $n$ and the $($ and $)$ klona to copying of the prefix of $n$ into $k$.}
  \label{lst:ctr-bin-n}
\end{listing}}

\section{Conclusion}

In this chapter we studied the interaction of mona, such as molecular computers, with klona, shared resources, in a Brownian system with a limiting supply of free energy.
As in \Cref{chap:revii}, we found this to be very expensive in comparison to independent computation. In general the cost, in terms of transition time, to couple mona to an arbitrary system of klona is of order $\bigOO{b^{-2}}$ (an overhead of $\bigOO{b^{-1}}$) but, in contrast to the mona-mona interactions of \Cref{chap:revii}, lighter overheads of order $\bigOO{b^{-1/2}}$ or better can be achieved under certain conditions. For example, whilst we found that all resource distribution schemes we could devise (some particular examples of which were illustrated in \Cref{sec:resource-scheme}) did not admit a fractional resource availability above $\bigOO{b}$ if time overheads were to be avoided, by allowing a time overhead of $\bigOO{b^{-1/2}}$ (thence a time penalty of $\sim\bigOO{b^{-3/2}}$) the full resource pool was rendered accessible. We conjectured that these results are in fact general bounds on the capabilities of any such system.

To drive a general out-of-equilibrium system of klona, one needs to supply sufficient free energy to overcome the entropy cost of the reaction. When this cost is known or bounded, such as in biological systems, the definition of `sufficient' can be precomputed and built in to the specifications of the system. In general, however, the cost may be unknown or unbounded, or the favourable direction of the reaction may switch over time. We proposed a scheme in \Cref{sec:drive-unfav} to dynamically infer this cost, as well as automatically applying the correct amount of free energy tokens to pay for it, thereby exposing an equilibrated interface to the disequilibrium reaction. A more detailed specification of the (abstract) molecular realisation of such a scheme is presented in \Cref{app:seq-klona}. This equilibrated interface can then be readily coupled to our monon transitions, however the cost must still be paid in terms of transition time overhead and indeed it is, as the concentration of the interface is commensurately reduced. Unfortunately our proposed scheme has a minimum overhead of $\bigOO{b^{-1/2}}$ even in the case where the underlying reaction is at or near equilibrium (where it is theoretically possible to achieve zero overhead). It is unknown whether a better scheme exists, whilst still retaining the ability to drive reactions that are arbitrarily far from equilibrium.

As any non-zero overhead to a class of interactions will lead to this class `freezing out' in very large reversible computers, their frequency must be minimised---the proposed scheme notwithstanding. Once again referencing the conclusions of \Cref{chap:revii}, these interactions thus minimised can make use of a subpopulation bias klona of greater free energy to drive these unfavourable interactions more effectively.

\endgroup

\begingroup

\begin{chapter-summary}[.7]

  Reversible computation is somewhat neglected in the field of molecular computation.
  Motivated by a need for a model of reversible computation appropriate for a Brownian molecular architecture, this chapter introduces the \textAleph-calculus as a step towards this goal.
  This novel model is declarative, concurrent, and term-based---encapsulating all information about the program data and state within a single structure in order to obviate the need for a \emph{von Neumann}-style discrete computational `machine', a challenge in a molecular environment.
  The name is inspired by the Greek for `not forgotten' (\emph{\gr ἀλήθεια}), due to the emphasis on (reversibly) learning and un-learning knowledge of different variables.
  To demonstrate its utility for this purpose, as well as its elegance as a programming language, a number of examples are presented;
    two of these examples, addition/subtraction and squaring/square-rooting, are furnished with designs for abstract molecular implementations, although more work needs to be done in order to realise something resembling \textAleph\ experimentally.
  The example of addition/subtraction is reproduced in \Cref{fig:cover-aleph}.
  A natural by-product of these examples and accompanying syntactic sugar is the design of a fully-fledged programming language, \alethe, which is also presented along with an interpreter.
  Efficiently simulating \textAleph\ on a deterministic computer necessitates some static analysis of programs within the \alethe\ interpreter in order to render the declarative programs sequential.
  Finally, work towards a type system appropriate for such a reversible, declarative model of computation is presented.

\end{chapter-summary}

  \chapter[head={The \textslAleph-Calculus}]{The \textbfAleph-Calculus}
  \textbf{\textsf{\Large A declarative model of reversible programming with relevance to Brownian computers}}
  \label{chap:aleph}
  \vspace{1em}

\begin{figure}[!h]
  \centering
  \includegraphics[width=10cm]{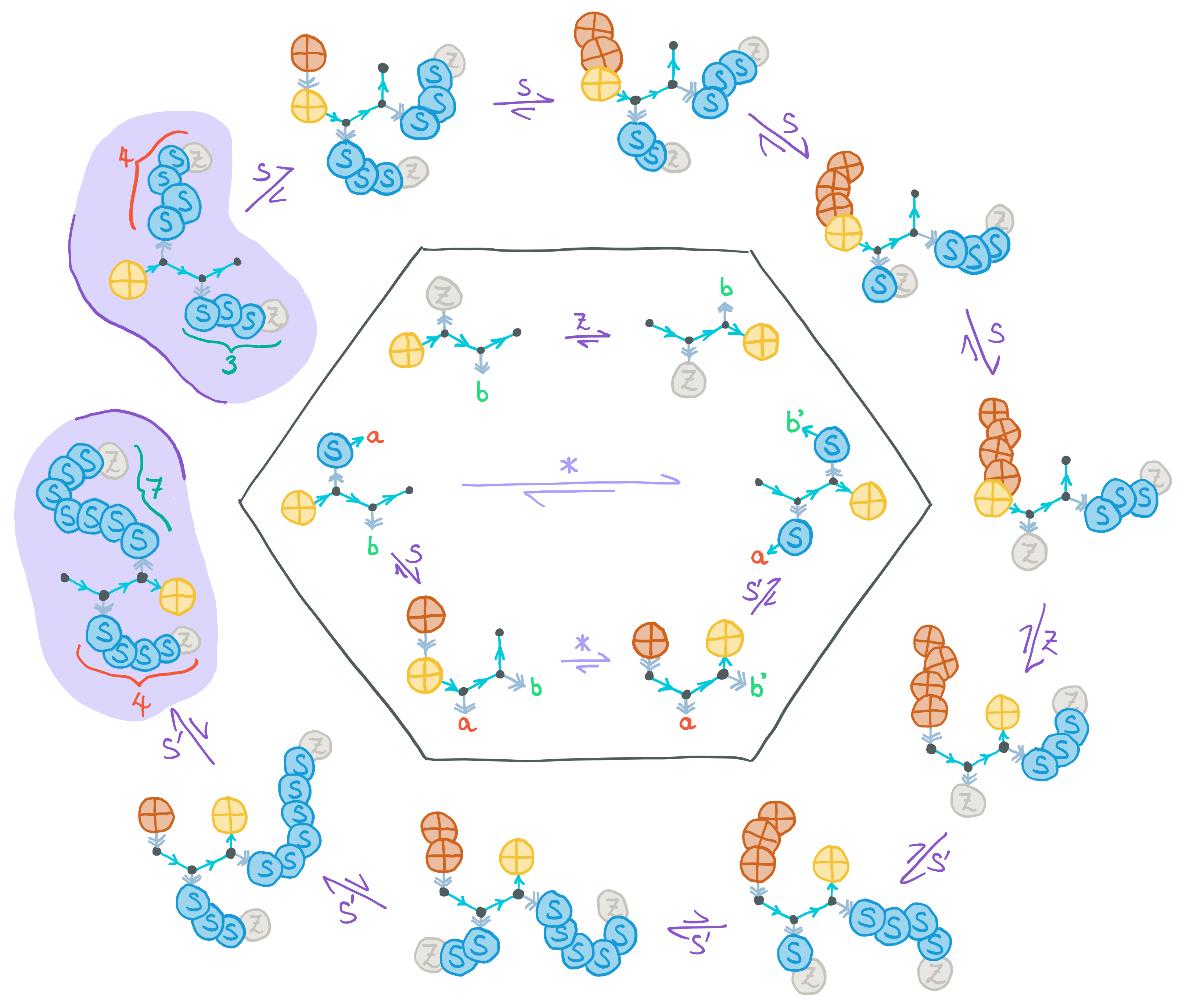}\vspace{1em}
  \caption[An abstract molecular realisation of an \textAleph\ definition of addition, here reversibly computing the sum of 4 and 3.]{An abstract molecular realisation of an \textAleph\ definition of addition (resp.\ subtraction), here reversibly computing the sum of 4 and 3 (resp.\ the difference between 7 and 4).}
  \label{fig:cover-aleph}
\end{figure}

\section{Introduction}

In this chapter we present a new model of reversible computing, the \textAleph-Calculus, and an associated programming language \alethe\ together with interpreter, whose properties we believe to be novel. The name is inspired by the Greek meaning `not forgotten' (\emph{\gr ἀλήθεια}), as the semantics of \textAleph\ revolve around the transformation of knowledge: `unlearning' knowledge of one or more variables in order to `learn' knowledge about one or more other variables, but in a reversible fashion such that nothing is ever truly forgotten. \textAleph\ is declarative and concurrent and, whilst perhaps a little too abstract and high level for this purpose, is motivated by a need for a reversible model for molecular programming and DNA computing. Crucial to such applications is that almost all information about not just the program's data, but the program state too, should be encoded within a computation `term'. This is in contrast to the imperative and functional languages reviewed in the introduction, wherein state information such as the instruction counter (indicating where in the program the current execution context is) is implicitly stored in a special register or other hidden state of the processor.
In \textAleph, the program itself is (reversibly) mutated during the course of its execution, such that the distinction between `program' and `processor' vanishes. The reason for this design is that a \emph{von Neumann}-style architecture, in which there is a discrete processing unit that interacts with a memory unit to execute a program, is unsuitable for a molecular context as it is---in some sense---too `bulky' and difficult to engineer. We assert that the proposed design is far more suitable to molecular implementation, or at least serves as a step towards this goal, and it is hoped the reader will be convinced of this by the examples and semantics illustrated herein.

\begingroup

\section[%
  head={\textslAleph\ by Example},%
  tocentry={\textAleph\ by Example}%
]{\textbfAleph\ by Example}
\label{sec:ex1}

Before expositing the formal semantics of \textAleph, it is illuminating to introduce a number of examples of \textAleph\ in order to gain an intutition for its syntax and features.
We begin with a reversible definition of addition.
An immediate obstacle with reversible arithmetic is that binary operations such as addition are not invertible: it is not possible, given the answer $7$, to infer which two numbers were originally added.
As shown in \Cref{fig:bennett-algos}, one possible resolution of this (due to Bennett~\cite{bennett-tm,bennett-pebbling}) is to embed an irreversible computation $x\mapsto f(x)$ injectively as $x\mapsto(x,f(x))$.
For addition, Bennett's algorithm would yield $+:(x,y)\leftrightarrow(x,y,x+y)$.
For example, adding 3 and 4 would give as output $(3,4,7)$.
Nevertheless we are not satisfied with this embedding, and wish to better exploit reversible computational architectures by programming directly with reversible primitives.
The benefits of doing so are that one often finds that far less temporary information need be generated than Bennett's algorithms might suggest, and also the injective embedding, $x\mapsto(x,f(x))$, retains the input which is excessive except in the trivial case of $f$ being constant;
  for any other function $f$ having any correlation at all between its outputs and inputs, only a partial image of $x$ need be preserved.
Moreover it is often the case that one can imagine a suitable and more preferable injective or bijective embedding.
For example, a more useful embedding of addition might take the form $+:(x,y)\leftrightarrow(x,x+y)$.
In defining these injections more carefully, one then often finds that the residual information of the input can in fact be made use of;
  for example, a Peano arithmetic implementation of $+:(x,y)\leftrightarrow(x,x+y)$ over the naturals readily begets the (domain-restricted) injection $\forall x>0.\times:(x,y)\leftrightarrow(x,xy)$ without generating any temporary data whatsoever.
Furthermore, when redundancy is eliminated as in these two cases one obtains the converse operation for free:
  that is, running $+:(x,y)\leftrightarrow(x,x+y)$ in reverse yields subtraction, and $\forall x>0.\times:(x,y)\leftrightarrow(x,xy)$ yields division, whilst their Bennett embeddings cannot do the same.

A little thought shows these must fail in certain cases. For the Bennett-style reversible embedding of addition, we saw that the forward direction of this program maps, e.g., $(3,4)$ to $(3,4,7)$, and likewise the reverse direction maps $(3,4,7)$ to $(3,4)$.
Clearly the forward direction is an injection, but what about the reverse?
Suppose we attempt to feed in $(3,4,5)$: as $3+4\ne5$, and as the forward direction is injective, there can be no pair $(a,b)$ that maps to $(3,4,5)$.
In fact, even our less redundant embedding $+:(x,y)\leftrightarrow(x,x+y)$ suffers from non-injectivity of its inverse, for example there is no value $y\in\mathbb N$ satisfying $+:(5,y)\leftrightarrow(5,2)$.
Yet another example occurs for the function $+':(x,y)\leftrightarrow(x-y,x+y)$ defined over the integers, in which there is no pair $(x,y)$ satisfying $+':(x,y)\leftrightarrow(3,6)$ (there is in $\mathbb Q$, however; namely, $(\frac92,\frac32)$).
Therefore, whilst we might expect reversible computers to compute bijective operations, this appears to be violated in even the simple example of addition.
In fact, no violation of reversibility has occurred, and the `primitive' steps of any reversible computer \emph{are} bijective.
What we have encountered here is simply a (co)domain error, the same as if we were to ask a conventional computer to evaluate $1/0$.
That is, whereas conventional computers compute \emph{partial} functions, reversible computers compute \emph{partial} bijections (or partial isomorphisms).
Exactly what happens when such an error is encountered depends on both the algorithm and architecture in question;
  for example, attempting to divide 1 by 0 may cause a computer to immediately complain, or it may enter an infinite loop if the algorithm is `repeatedly subtract the divisor until the dividend vanishes'.
We shall have more to say about how \textAleph\ handles such errors in due course.

\def\monus{\mathbin{\ooalign{\hss\raisebox{0.5ex}{$\cdot$}\hss\cr\phantom{$+$}\cr$-$}}}
\doublepage{%
\begin{listing}[!p]
  \centering
  \begin{sublisting}{\textwidth}
    \centering
    \begin{align*}
      & {+}~{\Z}~{b}~{\unit} = {\unit}~{\Z}~{b}~{+}{;} && \ruleName{add--base} \\
      & {+}~({\S} {a})~{b}~{\unit} = {\unit}~({\S} {a})~({\S} {b'})~{+}{:} && \ruleName{add--step} \\
      &\qquad  {+}~{a}~{b}~{\unit} = {\unit}~{a}~{b'}~{+}{.} && \ruleName{add--step--sub}
    \end{align*}
    \caption{The definition of reversible natural addition in \textAleph.}
    \label{lst:ex-add-def}
  \end{sublisting}
  \begin{sublisting}{\textwidth}
    \centering
    % BEWARE, MAGIC NUMBERS LIE WITHIN
\begingroup%
\newcommand{\subLeft}[1][1]{\smash{\raisebox{-0.65em}{\scaleto{\begin{tikzpicture}%
    \draw[<->] (0,0) to [out=270,in=90, looseness=1] (-#1,-0.5);%
  \end{tikzpicture}}{1.78em}}}\hspace{4.1em}}%
\newcommand{\subRight}[1][1]{\hspace{7.1em}\smash{\raisebox{-0.65em}{\scaleto{\begin{tikzpicture}%
    \draw[<->] (#1,-0.5) to [out=90,in=270, looseness=1] (0,0);%
  \end{tikzpicture}}{1.78em}}}}%
\def\bindings#1{\{#1\}}%
\def\bindingsSub#1{\smash{\underline{\bindings{#1}}}}%
\def\subterm#1{\smash{\overline{#1}}}%
\def\midsp{\phantom{\bindings{b:3}}}%
\begin{align*}
  {+}~{4}~{3}~{\unit} \leftrightsquigarrow \bindingsSub{a:3, b:3} &\midsp \bindingsSub{a:3, b':6} \leftrightsquigarrow {\unit}~{4}~{7}~{+} && \ruleName{add--step} \\
  \subLeft &\subRight && \ruleName{add--step--sub} \\
  \subterm{{+}~{3}~{3}~{\unit}} \leftrightsquigarrow \bindingsSub{a:2, b:3} &\midsp \bindingsSub{a:2, b':5} \leftrightsquigarrow \subterm{{\unit}~{3}~{6}~{+}} && \ruleName{add--step} \\
  \subLeft &\subRight && \ruleName{add--step--sub} \\
  \subterm{{+}~{2}~{3}~{\unit}} \leftrightsquigarrow \bindingsSub{a:1, b:3} &\midsp \bindingsSub{a:1, b':4} \leftrightsquigarrow \subterm{{\unit}~{2}~{5}~{+}} && \ruleName{add--step} \\
  \subLeft &\subRight && \ruleName{add--step--sub} \\
  \subterm{{+}~{1}~{3}~{\unit}} \leftrightsquigarrow \bindingsSub{a:0, b:3} &\midsp \bindingsSub{a:0, b':3} \leftrightsquigarrow \subterm{{\unit}~{1}~{4}~{+}} && \ruleName{add--step} \\
  \subLeft[0.25]\hspace{-1.8em} & \hspace{-1.95em}\subRight[0.28] && \ruleName{add--step--sub} \\
  \subterm{{+}~{\Z}~{3}~{\unit}} \leftrightsquigarrow{} &\bindings{b:3} \leftrightsquigarrow \subterm{{\unit}~{\Z}~{3}~{+}} && \ruleName{add--base}
\end{align*}%
\endgroup
    \caption{The \textAleph\ execution path when reversibly adding $4$ to $3$ or, in reverse, subtracting $4$ from $7$. The $\leftrightsquigarrow$~arrows refer to pattern matching/substitution, whilst the solid arrows refer to instantiation/consumption of `sub-terms'.}
    \label{lst:ex-add-43}
  \end{sublisting}
  \begin{sublisting}{\textwidth}
    \centering
    % BEWARE, MAGIC NUMBERS LIE WITHIN
\begingroup%
\newcommand{\subLeft}[1][1]{\smash{\raisebox{-0.65em}{\scaleto{\begin{tikzpicture}%
    \draw[<->] (0,0) to [out=270,in=90, looseness=1] (-#1,-0.5);%
  \end{tikzpicture}}{1.78em}}}\hspace{4.3em}}%
\def\bindings#1{\{#1\}}%
\def\bindingsSub#1{\smash{\underline{\bindings{#1}}}}%
\def\subterm#1{\smash{\overline{#1}}}%
\def\nomatch{\mathrel{\reflectbox{$\rightsquigarrow$}\hspace{-0.7ex}/}}%
\def\nomatchlong{\mathrel{\rlap{$\nomatch$}\phantom{\leftrightsquigarrow}}}%
\begin{align*}
  {\unit}~{5}~{2}~{+} \leftrightsquigarrow \bindingsSub{a:4, b':1} & \phantom{\bindings{b:3} \bindingsSub{a:3, b':6} \leftrightsquigarrow {\unit}~{4}~{7}~{+}} && \ruleName{add--step} \\
  \subLeft &&& \ruleName{add--step--sub} \\
  \subterm{{\unit}~{4}~{1}~{+}} \leftrightsquigarrow \bindingsSub{a:3, b':0} &&& \ruleName{add--step} \\
  \subLeft &&& \ruleName{add--step--sub} \\
  \subterm{{\unit}~{3}~{\Z}~{+}} \nomatchlong \phantom{\bindingsSub{a:1, b':3}} &&& \text{\small\emph{no matching rule}}
\end{align*}%
\endgroup
    \caption{The \textAleph\ execution path when attempting to (erroneously) subtract $5$ from $2$. The recursive algorithm identifies that $2-5\equiv 0-3$, but there is no matching definition for this and therefore computation `stalls' on this sub-term. This is usually addressed in the natural numbers by employing the saturating option of `monus', i.e.\ $2\monus5=0$, but it is not reversible.}
    \label{lst:ex-add-52}
  \end{sublisting}
  \caption{The definition of, and example applications of, reversible addition in \textAleph.}
  \label{lst:ex-add}
\end{listing}}{%
\begin{figure}[!p]
  \centering
  \includegraphics[width=\textwidth]{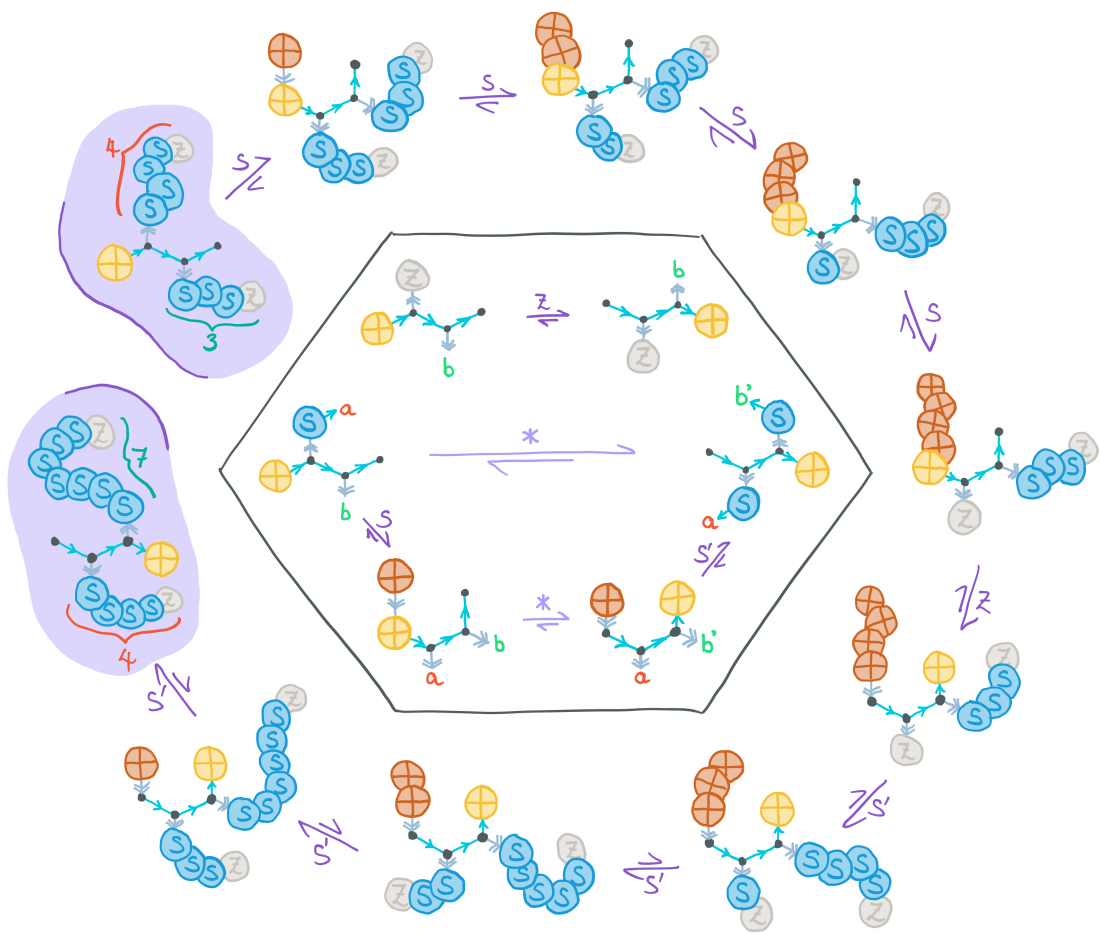}
  \caption[An abstract molecular translation of the reversible addition definition, as well as the example addition of $4$ and $3$.]{An abstract molecular translation of the reversible addition definition given in \Cref{lst:ex-add-def}, as well as the example addition of $4$ and $3$ following \Cref{lst:ex-add-43}. The abstract molecular model espoused here is defined by a fixed set of atoms (e.g.\ $\{+,\S,\Z,\bullet\}$) connected by two kinds of bond. Atoms joined by single-headed bonds are analogous to \textAleph\ terms, whilst double-headed bonds corresponds to nesting of composite terms. The bonds are represented by arrows because there is an intrinsic polarity/directionality to the molecules; this is not necessary, and can be replaced by auxiliary atoms such as \atom{L} and \atom{R}, but it does simplify our representation. Atoms are rendered by circles, whilst the reaction definitions also use variables written as un-circled letters. The $\bullet$ `atom' is special in that it is a placeholder for a nested composite term. Reaction arrows for `elementary' reactions are labelled by rule names, and starred reaction arrows indicate an effective reaction composed of multiple elementary steps. The arrows are drawn biased in anticipation of a biasing mechanism to be discussed later on within this section.}
  \label{fig:ex-plus}
\end{figure}}

\para{Addition}

Our examples will mostly concern natural numbers because they are particularly amenable to inductive and recursive definitions of both their structure and operations over them, such as addition and multiplication. The standard approach to this is the Peano axiomatic formulation, in which a natural number is defined to either be $\Z\equiv 0$, or $\S n\equiv n+1$ whenever $n$ is a natural number---i.e.\ the \emph{successor} of $n$. For example, 4 is constructed as $\S(\S(\S(\S\Z)))$. In fact there are seven more axioms in order to clarify such subtle points as uniqueness of representation, conditions for equality, and non-negativity. Addition can then be defined recursively by the base case $\Z+b=b$ and the inductive step $\S a+b=\S(a+b)$, and multiplication by $a\cdot\Z=\Z$ and $a\cdot\S b=a+a\cdot b$. To render addition reversible, we write $+:(\Z,b)\leftrightarrow(\Z,b)$ and $+:(\S a,b)\leftrightarrow(\S a,\S(a+b))$, realising the proposed embedding $+:(x,y)\leftrightarrow(x,x+y)$. Multiplication is embedded similarly, but there is a domain-restriction imposed in that $a$ must not be zero (or else $b$ cannot be uniquely recovered); $b$ may, however, be zero. In \textAleph, addition is written thus
  \begin{align*}
    & {+}~{\Z}~{b}~{\unit} = {\unit}~{\Z}~{b}~{+}{;} && \ruleName{add--base} \\
    & {+}~({\S} {a})~{b}~{\unit} = {\unit}~({\S} {a})~({\S} {b'})~{+}{:} && \ruleName{add--step} \\
    &\qquad  {+}~{a}~{b}~{\unit} = {\unit}~{a}~{b'}~{+}{.} && \ruleName{add--step--sub}
  \end{align*}
and is perhaps best understood by example. In \Cref{lst:ex-add}, we perform the example addition of $4$ and $3$ and, as promised, an example failure mode in which we attempt to subtract $5$ from $2$.

\textAleph\ can thus be seen, in a loose sense, to be a term-rewriting system. It is `loose' in the sense that its `sub-terms' exist `separate' from their parent term. In the example of \Cref{lst:ex-add}, an addition term is written ${+}~{a}~{b}~{\unit}$ and is mapped by the transition rules to ${\unit}~{a}~{c}~{+}$ where $c\equiv a+b$; here $+$ is an `atom', $a$, $b$ and $c$ are terms representing natural numbers (as composite terms formed from nested applications of the atoms $\S$ and $\Z$), and $\unit$, or `unit', is the empty term and is used by convention in \textAleph\ to avoid certain ambiguities. A program corresponds to a series of definitions of transition rules which pattern match on terms, and then substitute the variables into an output pattern. This matching process is subject to certain constraints that ensure reversibility. Moreover, a rule may specify sub-rules that indicate how to transform knowledge of some variable, e.g.~$b$, to knowledge of another, e.g.~$b'$, and is the primary mechanism of composition in \textAleph. Inherent to the semantics of the calculus is a secondary composition mechanism, which was the only mechanism available in an earlier iteration of this calculus (see \Cref{app:sigma}): composite terms at any level are all subject to the same transition rules. This behaviour is more reminiscent of functional programming languages, but is somewhat clumsy in practice; nevertheless it is not without utility in \textAleph---in particular, it is well suited for when a continuation-passing style approach is favoured. The sub-rule mechanism, on the other hand, is more reminiscent of declarative programming languages. 

When there is no matched rule, such as in \Cref{lst:ex-add-52}, this simply means that there is no successor state and so computation cannot continue. It is in fact very similar in nature to the case when computation succeeds, as then we obtain a term which has a predecessor state via some rule, but lacks a rule to generate a successor state. To properly distinguish these two conditions, we must explicitly mark `true' halting states; for addition, this is written as
\begin{align*}
  & \haltSym~{+}~{\blank}~{\blank}~{\unit}{;}\quad \haltSym~{\unit}~{\blank}~{\blank}~{+}{;}
\end{align*}
where $\haltSym$ (`bang') indicates that any term of the specified forms is a valid computational output. The former corresponds to the output of a subtraction, and the latter to the output of an addition. This subtlety is further contextualised by \Cref{fig:state-space}.

We conclude this first example by alluding to a possible molecular implementation, as depicted in \Cref{fig:ex-plus}. Here we see the importance of \textAleph\ being interpretable as a term-rewriting system, as the entirety of the state of the computation must be encoded within a single macromolecule (or molecular complex). Strictly speaking, this is not a requirement as DNA Strand Displacement systems~\cite{dsd} achieve computation without this requirement; in payment for this, though, the entire reaction volume is dedicated to the same (typically analogue) program. The precise mechanism of the reactions is omitted from the figure, instead expressing the model as an abstract chemical reaction network. Finding possible reaction mechanisms to implement \textAleph, or similar calculi, in real chemical systems will be the subject of future work. Whilst the abstract molecular formalism is attractive form an explanatory perspective, \textAleph\ provides a more concise formalism that is also easier to manipulate and to explicate its semantics.

\doublepage{%
\begin{listing}[!p]
  \centering
  \begin{sublisting}{\textwidth}
    \centering
    \begin{align*}
      & \haltSym~{\square}~{\blank}~{\unit}{;}\quad \haltSym~{\unit}~{\blank}~{\square}{;} \\
      & {\square}~{n}~{\unit} = {\square}~{n}~{\Z}~{\square}{;} && \ruleName{square--init} \\
      & {\square}~{\Z}~{m}~{\square} = {\unit}~{m}~{\square}{;} && \ruleName{square--term} \\
      & {\square}~({\S}{n})~{m}~{\square} = {\square}~n~({\S}{m''})~{\square}{:} && \ruleName{square--step} \\
      &\qquad {+}~{n}~{m}~{\unit} = {\unit}~{n}~{m'}~{+}{.} && \ruleName{square--step--sub$_1$}\\
      &\qquad {+}~{n}~{m'}~{\unit} = {\unit}~{n}~{m''}~{+}{.} && \ruleName{square--step--sub$_2$}
    \end{align*}
    \caption{The definition of $\square$ in \textAleph.}
    \label{lst:ex-square-def-aleph}
  \end{sublisting}
  \begin{sublisting}{\textwidth}
    \centering
    \includegraphics[width=.7\textwidth]{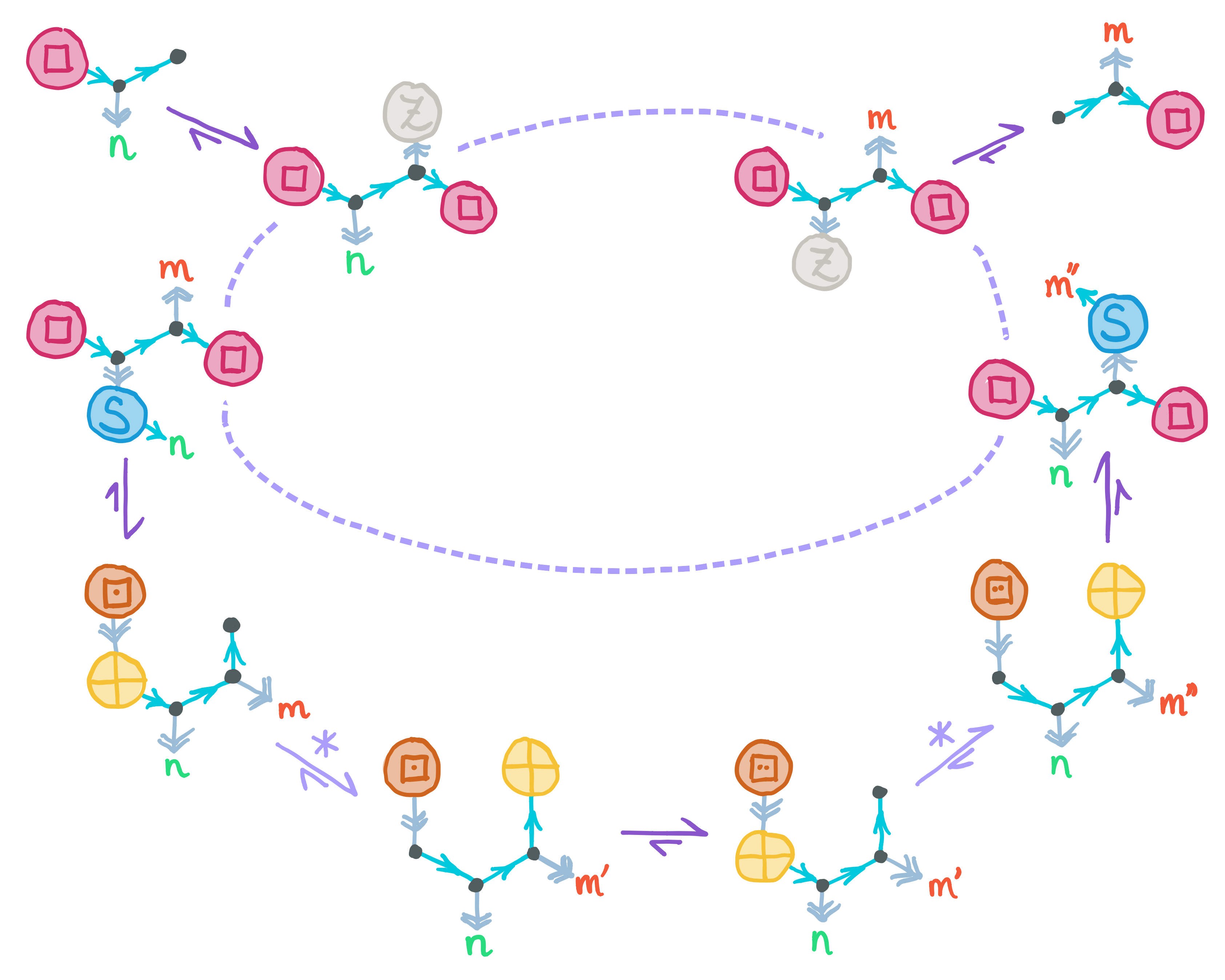}
    \caption{The definition of $\square$ in the abstract molecular formalism introduced in \Cref{fig:ex-plus}. The dashed lines indicate that there exists a term matching both patterns, and this is characteristic of conditionals and looping in \textAleph. Note that computation remains unambiguous and deterministic: by \Cref{fig:state-space}, each intermediate state has both a predecessor and a successor, and of the two matching patterns one will correspond to the reverse direction and one to the forward direction. For example, in the middle of a loop one may encounter the intermediate species ${\square}~{2}~{5}~{\square}$ and this term matches both the patterns ${\square}~({\S}{n})~{m}~{\square}$ and ${\square}~n~({\S}{m''})~{\square}$, but if it matches the former the computation will proceed forward whilst if the latter the computation will proceed backward. It will be shown later how the semantics keep track of this, and how we ensure the program really is unambiguous.}
    \label{lst:ex-square-def-mol}
  \end{sublisting}
  \caption{A reversible definition of squaring of natural numbers, $\square:n\leftrightarrow n^2$, both in \textAleph\ and in an abstract molecular formalism.}
  \label{lst:ex-square-def}
\end{listing}}{%
\begin{listing}[!p]
  \centering
  % BEWARE, MAGIC NUMBERS LIE WITHIN
\begingroup%
\newcommand{\subb}{\smash{\raisebox{-0.65em}{\scaleto{\begin{tikzpicture}%
    \draw[<->] (0,0) to [out=270,in=90, looseness=1.2] (-0.4,-0.22);%
  \end{tikzpicture}}{0.78em}}}}% 1.78
\def\bindings#1{\{#1\}}%
\def\over#1{\smash{\overline{#1}}}%
\def\under#1{\smash{\underline{#1}}}%
\def\overunder#1{\smash{\overline{\underline{#1}}}}%
\def\hl{\\[-1em]}%
\def\nl{\\[1.5em]}%
\def\subl{\hl&\hspace{0.8em}\subb\\}%
\def\subr{\hl&\hspace{4em}\subb\\}%
\begin{align*}
  \haltSym\quad{\square}~{3}~{\unit} \leftrightsquigarrow{} &\bindings{n:3} \leftrightsquigarrow {\square}~{3}~{\Z}~{\square} \quad\cdots && \ruleName{square--init} \nl
  %%%%%%%%%%%%%%%%%%%%%%%%%%%%%%
  \cdots\quad{\square}~{3}~{\Z}~{\square} \leftrightsquigarrow{} &\under{\bindings{n:2,m:\Z}} && \ruleName{square--step} \subl
  \multicolumn{2}{c}{$\over{{+}~{2}~{\Z}~{\unit}}\leftrightarrow\under{{\unit}~{2}~{2}~{+}}$} && \ruleName{square--step--sub$_1$} \subr
  &\overunder{\bindings{n:2,m':2}} && \text{\small\emph{checkpoint}} \subl
  \multicolumn{2}{c}{$\over{{+}~{2}~{2}~{\unit}}\leftrightarrow\under{{\unit}~{2}~{4}~{+}}$} && \ruleName{square--step--sub$_2$} \subr
  &\over{\bindings{n:2,m'':4}} \leftrightsquigarrow {\square}~{2}~{5}~{\square}\quad\cdots && \ruleName{square--step} \nl
  %%%%%%%%%%%%%%%%%%%%%%%%%%%%%%
  \cdots\quad{\square}~{2}~{5}~{\square} \leftrightsquigarrow{} &\under{\bindings{n:1,m:5}} && \ruleName{square--step} \subl
  \multicolumn{2}{c}{$\over{{+}~{1}~{5}~{\unit}}\leftrightarrow\under{{\unit}~{1}~{6}~{+}}$} && \ruleName{square--step--sub$_1$} \subr
  &\overunder{\bindings{n:1,m':6}} && \text{\small\emph{checkpoint}} \subl
  \multicolumn{2}{c}{$\over{{+}~{1}~{6}~{\unit}}\leftrightarrow\under{{\unit}~{1}~{7}~{+}}$} && \ruleName{square--step--sub$_2$} \subr
  &\over{\bindings{n:1,m'':7}} \leftrightsquigarrow {\square}~{1}~{8}~{\square}\quad\cdots && \ruleName{square--step} \nl
  %%%%%%%%%%%%%%%%%%%%%%%%%%%%%%
  \cdots\quad{\square}~{1}~{8}~{\square} \leftrightsquigarrow{} &\under{\bindings{n:\Z,m:8}} && \ruleName{square--step} \subl
  \multicolumn{2}{c}{$\over{{+}~{\Z}~{8}~{\unit}}\leftrightarrow\under{{\unit}~{\Z}~{8}~{+}}$} && \ruleName{square--step--sub$_1$} \subr
  &\overunder{\bindings{n:\Z,m':8}} && \text{\small\emph{checkpoint}} \subl
  \multicolumn{2}{c}{$\over{{+}~{\Z}~{8}~{\unit}}\leftrightarrow\under{{\unit}~{\Z}~{8}~{+}}$} && \ruleName{square--step--sub$_2$} \subr
  &\over{\bindings{n:\Z,m'':8}} \leftrightsquigarrow {\square}~{\Z}~{9}~{\square}\quad\cdots && \ruleName{square--step} \nl
  %%%%%%%%%%%%%%%%%%%%%%%%%%%%%%
  \cdots\quad {\square}~{\Z}~{9}~{\square} \leftrightsquigarrow{} &\bindings{m:9} \leftrightsquigarrow {\unit}~{9}~{\square}\quad\haltSym && \ruleName{square--term}
\end{align*}%
\begin{center}\fbox{\begin{minipage}{0.8\textwidth}\vspace{-0.5em}\begin{align*}
  \haltSym\quad{\square}~{3}~{\unit}\leftrightarrow
  {\square}~{3}~{\Z}~{\square}\leftrightarrow
  {\square}~{2}~{5}~{\square}\leftrightarrow
  {\square}~{1}~{8}~{\square}\leftrightarrow
  {\square}~{\Z}~{9}~{\square}\leftrightarrow
  {\unit}~{9}~{\square}\quad\haltSym
\end{align*}\vspace{-1.2em}\end{minipage}}\end{center}%
\endgroup
  \caption[An example application of the $\square$ definition.]{An example application of the $\square$ definition from \Cref{lst:ex-square-def}.}
  \label{lst:ex-square}
\end{listing}}

\para{Squaring}

Eliminating redundancy in the definition of addition yielded its inverse for free, but one may still object that the additional output is `garbage' and not of any utility. What will happen if one uses this addition subroutine many times in a larger program? Naively we may expect this garbage to accumulate, requiring either active dissipation of the additional entropy or the application of Bennett's algorithm to clean it up. In fact, by retraining one's thought process from the irreversible programming paradigm to the reversible paradigm, it is often possible to make use of this garbage data. We demonstrate this with the example of finding the square of a natural number. The candidate function, $\square:n\leftrightarrow n^2$, is an injection and so clearly meets our requirement of partial bijection. Therefore, we have good reason to believe that it is possible to implement it. The obvious approach of using multiplication will not work because its reversible embedding will take a form not dissimilar to $\times:(m,n)\leftrightarrow(m,m\cdot n)$, and so would yield $\square:n\leftrightarrow(n,n^2)$. Whilst this suffices for realising the square of a number, it retains too much redundancy in its outputs.

Often a helpful tactic is to consider an inductive approach. For the square numbers this is encapsulated by $(n+1)^2-n^2=2n+1$, from which can be obtained the identity $n^2\equiv\sum_{k=0}^{n-1}2k+1$. Again, we need to be clever: in order to achieve our desired (partial) bijection, we need to completely consume our input value of $n$. This can be done by evaluating the sum in reverse. Instantiate a new variable, $m=0$; as $m$ is set to a known value, this is reversible. Then, perform the following loop until $n$ reaches 0: decrement $n$, add $2n+1$ to $m$, repeat. At the end of this loop, $n$ will have reached a unique value (and can thus be reversibly destroyed) and $m$ will have been set to the square of the original value of $n$. In addition, this can be implemented with our addition subroutine in two steps, by adding $n$ twice to $m$ (retaining the value of $n$) and finally incrementing $m$. This is implemented in \textAleph\ in \Cref{lst:ex-square-def}, and the example of squaring $3$ (equivalently, taking the square root of $9$) is presented in \Cref{lst:ex-square}.

\section[%
  head={\textslAleph\ by Example 2: Parallelism \& Concurrency},%
  tocentry={\textAleph\ by Example 2: Parallelism \& Concurrency}%
]{\textbfAleph\ by Example 2: Parallelism \& Concurrency}
\label{sec:ex2}

Having introduced the essence of \textAleph\ in the previous section, we now dive deeper into some more advanced features and examples of \textAleph.

\para{Sugar}

To reduce boilerplate and increase clarity in longer programs, it is helpful to introduce some syntactic sugar (shorthands). More sugar will be introduced later for the definition of the programming language \alethe\ (which is really just sugared \textAleph), but for now only a light sprinkling is required.

Many rules take the rote form
\begin{align*}
  & \haltSym~{f}~{x_1}~{x_2}~\cdots~{x_m}~{\unit}{;}\quad
    \haltSym~{\unit}~{y_1}~{y_2}~\cdots~{y_n}~{f}{;} \\
  & {f}~{x_1}~{x_2}~\cdots~{x_m}~{\unit}={\unit}~{y_1}~{y_2}~\cdots~{y_n}~{f}{:} \\
  &\qquad \cdots
\end{align*}
which we can abbreviate with an infix form as
\begin{align*}
  & {x_1}~{x_2}~\cdots~{x_m}~\infix{f}~{y_1}~{y_2}~\cdots~{y_n}{:} \\
  &\qquad \cdots
\end{align*}
where the halting patterns are implied. In the special case of $f$ a single symbol, such as $+$, $\times$ or $\square$, we omit the backticks and write, e.g., ${4}~{3}~{+}~{4}~{7}$. Note that, if $f$ is a composite term, then we additionally assert $\haltSym~{f}$.

We have already seen sugar for numeric data, e.g.\ $4\equiv\S(\S(\S(\S\Z)))$. Another common data type is that of lists. As is standard in the functional world, we opt for singly linked lists implemented as composite pairs. If $\Cons~{x}~{y}$ represents the pair $(x,y)$ and $\Nil$ the empty list, then the list $[{5}~{2}~{4}~{3}]$ has corresponding representation $\Cons~{5}~(\Cons~{2}~(\Cons~{4}~(\Cons~{3}~\Nil)))$. We also introduce sugar for matching on a partial prefix of a list, i.e.\ $[{x}~{y}~{\cdot}~\cursivezS]$ corresponds to $\Cons~{x}\ (\Cons~{y}\ \cursivezS)$.

\para{Parallelism}

With this sugar thus defined, we can rewrite the somewhat clumsy definition of recursively mapping a function $f$ over a list,
\begin{align*}
  & \haltSym~{\atom{Map}}~{\blank}{;}\quad \haltSym~{({\atom{Map}}~{\blank})}~{\blank}~{\unit}{;}\quad \haltSym~{\unit}~{\blank}~{({\atom{Map}}~{\blank})}{;} \\
  & {({\atom{Map}}~{f})}~{\Nil}~{\unit} = {\unit}~{\Nil}~{({\atom{Map}}~{f})}{;} \\
  & {({\atom{Map}}~{f})}~{({\Cons}~{x}~\cursivexS)}~{\unit} = {\unit}~{({\Cons}~{y}~\cursiveyS)}~{({\atom{Map}}~{f})}{:} \\
  &\qquad {f}~{x}~{\unit} = {\unit}~{y}~{f}{.} \\
  &\qquad {({\atom{Map}}~{f})}~\cursivexS~{\unit} = {\unit}~\cursiveyS~{({\atom{Map}}~{f})}{.}
\end{align*}
more concisely and clearly as 
\begin{align*}
  & []~\infix{\atom{Map}~{f}}~[]{;} \\
  & [{x}~{\cdot}~\cursivexS]~\infix{\atom{Map}~{f}}~[{y}~{\cdot}~\cursiveyS]{:} \\
  &\qquad {x}~\infix{f}~{y}{.} \\
  &\qquad \cursivexS~\infix{\atom{Map}~{f}}~\cursiveyS{.}
\end{align*}
Notice that the order in which the sub-rules are executed does not make a difference to the final result, due to referential transparency; in fact \textAleph, being declarative, does not ascribe any importance to the ordering of the statements. Moreover, should a rule not be necessary for the final computation (or if there are multiple routes to the answer) then that rule will not necessarily be executed. A rule may even be evaluated more than once; for example, Bennett's algorithm may be implemented as
\begin{align*}
  & {x}~\infix{\atom{Bennett}~{f}}~{x}~{y}{:} \\
  &\qquad {x}~\infix{f}~\textit{garbage}~{y}{.}
\end{align*}
wherein the function $f$ will be evaluated once in the forward direction, its output $y$ will be copied, and then $f$ will be evaluated again in the reverse direction to consume the garbage. This arbitrarity in rule ordering and execution is important to its ability to operate in a stochastic system such as a molecular context, although in practice a compilation pass that chooses and enforces an optimal execution plan is important for efficiency.

The implicit duplication of variables that may occur means that one possible interpretation of \atom{Map} is
\begin{align*}
  & []~\infix{\atom{Map}~{f}}~[]{;} \\
  & [{x}~{\cdot}~\cursivexS]~\infix{\atom{Map}~{f}}~[{y}~{\cdot}~\cursiveyS]{:} \\
  &\qquad \infix{\atom{Dup}~{f}}~{f'}{.} \\
  &\qquad {x}~\infix{f}~{y}{.} \\
  &\qquad \cursivexS~\infix{\atom{Map}~{f'}}~\cursiveyS{.}
\end{align*}
from which we can see that not only is the order of the sub-rules arbitrary, but that they can be evaluated in parallel as the following example makes clear:
% BEWARE, MAGIC NUMBERS LIE WITHIN
\begingroup%
\def\bindings#1{\{#1\}}%
\def\over#1{\smash{\overline{#1}}}%
\def\under#1{\smash{\underline{#1}}}%
\newcommand{\sub}[2][1]{\smash{\raisebox{-0.65em}{\scaleto{\begin{tikzpicture}%
    \draw[<->] (0,0) to [out=270,in=90, looseness=#1] (#2,-0.5);%
  \end{tikzpicture}}{1.78em}}}}%
\begin{align*}
  \haltSym\quad{({\atom{Map}}~{\square})}~{[3~5~8]}~{\unit} \leftrightsquigarrow \bindings{ \under{f:\square, x:3\vphantom{[]}} &, \under{f':\square, \cursivexS:{[5~8]}} } \\
  \sub{-1}\hspace{2.2em} & \hspace{3em}\sub{0.5} \\
  \over{{\square}~{3}~{\unit}}=\under{{\unit}~{9}~{\square}} \quad&\quad
  \over{{(\atom{Map}~{\square})}~{[5~8]}~{\unit}}=\under{{\unit}~{[25~64]}~{(\atom{Map}~{\square})}} \\
  \sub{0.2}\hspace{1.5em} & \hspace{4.8em}\sub[0.5]{-2.3} \\
  \bindings{ \over{f:\square, y:9\vphantom{[]}} &, \over{f':\square, \cursiveyS:{[25~64]}} } \leftrightsquigarrow {\unit}~{[9~25~64]}~{({\atom{Map}}~{\square})}\quad\haltSym
\end{align*}%
\endgroup

There remains a subtle point to be made: variables can be implicitly duplicated if they are used by multiple rules, but there is then a contract made with these rules that they really do return the variable unchanged. It is not possible to ensure this statically, however, and so it is entirely possible that the copies of some variable may diverge. In this case, running \atom{Dup} in reverse will fail, and hence computation will stall. If duplication is not used, then computation will occur linearly and the changed value may be fed into subsequent rules unnoticed. In this case, computation may run to completion but yield an incorrect result due to the logic error.

\begin{listing}[!p]
  \centering
  \begingroup
\def\nl{\\[1em]}
\begin{minipage}{.35\linewidth}\begin{align*}
    & \atom{False}~\infix{\atom{Not}}~\atom{True}{;} \\
    & \atom{True}~\infix{\atom{Not}}~\atom{False}{;}
  \nl
    & \infix{{<}~{m}~{\Z}}~\atom{False}{;} \\
    & \infix{{<}~{\Z}~{({\S}~{n})}}~\atom{True}{;} \\
    & \infix{{<}~{({\S}~{m})}~{({\S}~{n})}}~{b}{:} \\
    &\qquad \infix{{<}~{m}~{n}}~{b}{.}
  \nl
    & \infix{{\le}~{m}~{n}}~{b'}{:} \\
    & \quad\infix{{<}~{n}~{m}}~{b}{.}
      \quad{b}~\infix{\atom{Not}}~{b'}{.} \\
    & \infix{{>}~{m}~{n}}~\mathrlap{{b}{:}}\phantom{{b'}{:}} \\
    & \quad\infix{{<}~{n}~{m}}~{b}{.} \\
    & \infix{{\ge}~{m}~{n}}~{b'}{:} \\
    & \quad \infix{{<}~{m}~{n}}~{b}{.}
      \quad {b}~\infix{\atom{Not}}~{b'}{.}
\end{align*}\end{minipage}\hfill%
\begin{minipage}{.55\linewidth}\begin{align*}
    & \haltSym~{\blank}~\infix{\atom{InsertionSort}~{p}}~{\blank}~{\blank}{;} \\
    & {({\atom{InsertionSort}}~{p})}~\cursivexS~{\unit} = \atom{IS}~{p}~\cursivexS~[]~[]~\atom{IS}{;} \\
    & \atom{IS}~{p}~[{x}~{\cdot}~\cursivexS]~\cursivenS~\cursiveyS~\atom{IS} = \atom{IS}~{p}~\cursivexS~[{n}~{\cdot}~\cursivenS]~\cursivezS~\atom{IS}{:} \\
    &\qquad {x}~\cursiveyS~\infix{\atom{Insert}~{p}}~{n}~\cursivezS{.} \\
    & \atom{IS}~{p}~[]~\cursivenS~\cursiveyS~\atom{IS} = {\unit}~\cursivenS~\cursiveyS~{(\atom{InsertionSort}~{p})}{;}
  \nl
    & {x}~[]~\infix{\atom{Insert}~{p}}~{\Z}~[{x}]{;} \\
    & {x}~[{y}~{\cdot}~\cursiveyS]~\infix{\atom{Insert}~{p}}~{n}~[{z}~{z'}~{\cdot}~\cursivezS]{:} \\
    &\qquad \infix{{p}~{x}~{y}}~{b}{.} \\
    &\qquad {x}~[{y}~{\cdot}~\cursiveyS]~{b}~\infix{\atom{Insert'}~{p}}~n~[{z}~{z'}~{\cdot}~\cursivezS]{.}
  \nl
    & {x}~[{y}~{\cdot}~\cursiveyS]~\atom{True}~\infix{\atom{Insert'}~{p}}~{\Z}~[{x}~{y}~{\cdot}~\cursiveyS]{;} \\
    & {x}~[{y}~{\cdot}~\cursiveyS]~\atom{False}~\infix{\atom{Insert'}~{p}}~{({\S}~n)}~[{y}~{\cdot}~\cursivezS]{:} \\
    &\qquad {x}~\cursiveyS~\infix{\atom{Insert}~{p}}~{n}~\cursivezS{.}
\end{align*}\end{minipage}
\\[1em]
\fbox{\begin{minipage}{.8\textwidth}\vspace{-0.2em}\begin{align*}
  & {[3~2~0~7~6~4~5~1]}~\infix{\atom{InsertionSort}~{<}}~{[1~4~3~3~3~0~0~0]}~{[0~1~2~3~4~5~6~7]}{.} \\
  & {[3~2~0~7~6~4~5~1]}~\infix{\atom{InsertionSort}~{\ge}}~\mathrlap{\underbrace{\phantom{[6~2~2~1~0~2~1~0]}}_{\text{garbage}}}{[6~2~2~1~0~2~1~0]}~{[7~6~5~4~3~2~1~0]}{.}
\end{align*}\vspace{-1.2em}\end{minipage}}
\endgroup
  \caption[An \textAleph\ implementation of insertion sort with arbitrary comparator and applied to an example list.]{An \textAleph\ implementation of the comparison operators $<$, $\le$, $>$ and $\ge$, and of insertion sort that can make use of these comparison operators. Lastly, the list $[{3}~{2}~{0}~{7}~{6}~{4}~{5}~{1}]$ is sorted into both ascending and descending order.}
  \label{lst:ex-sort}
  \begingroup%
\def\nl{\\[1.5em]}%
\begin{minipage}{.6\linewidth}\begin{align*}
  & \cursivexS~\infix{\atom{InsertionSort}~{p}}~{\cursivenS'}~\cursiveyS{:} \\
  &\qquad \cursivenS~\infix{\atom{Reverse}}~{\cursivenS'}{.}
    ~\comment{\cmtSLopen~list reversal} \\
  &\qquad \llap{\haltSym~}\atomLocal{Go}~{p}~\cursivexS~{[]}~{[]} =
              \atomLocal{Go}~{p}~\cursivexS~{[{n}~{\cdot}~\cursivenS]}~{\cursiveyS'}{.} \\
  &\qquad \atomLocal{Go}~{p}~{[{x}~{\cdot}~\cursivexS]}~\cursivenS~\cursiveyS = 
          \atomLocal{Go}~{p}~\cursivexS~{[{n}~{\cdot}~{\cursivenS'}]}~{\cursiveyS'}{:} \\
  &\qquad\qquad {x}~\cursiveyS~\infix{\atom{Insert}~{p}}~{n}~{\cursiveyS'}{.} \\
  &\qquad\qquad \cursivenS~\infix{\atom{Map}~{(\atom{IS$_1$}~{n})}}~{\cursivenS'}{.}
\end{align*}\end{minipage}\hfill%
\begin{minipage}{.35\linewidth}\begin{align*}
  & {m}~\infix{\atom{IS$_1$}~{n}}~{m'}{:} \\
  &\qquad \infix{{<}~{m}~{n}}~{b}{.} \\
  &\qquad \infix{{<}~{m'}~{n}}~{b}{.} \\
  &\qquad {m}~\infix{\atom{IS$_2$}~{b}}~{m'}{.} \\
  & {m}~\infix{\atom{IS$_2$}~\atom{True}}~{m}{;} \\
  & {m}~\infix{\atom{IS$_2$}~\atom{False}}~{(\S{m})}{;}
\end{align*}\end{minipage}
\\[1em]
\fbox{\begin{minipage}{.95\textwidth}\vspace{-0.2em}\begin{align*}
  & {[77~2~42~68~41~36~8~36]}~\infix{\atom{InsertionSort}~{<}}~{[7~0~5~6~4~2~1~3]}~{[2~8~36~36~41~42~68~77]}{.}
\end{align*}\vspace{-0.8em}\end{minipage}}%
\endgroup
  \captionsetup{singlelinecheck=off}
  \caption[The \alethe\ implementation of insertion sort from the standard library.]{The \alethe\ implementation of insertion sort from the standard library. In contrast to the implementation in \Cref{lst:ex-sort}, this definition yields more useful `garbage' data in that $\cursivenS'$ contains the permutation which maps $\cursivexS$ to $\cursiveyS$, as shown in the boxed example. Specifically, this corresponds to the following permutation (in standard notation):
    \begin{equation*}\qty\Big(\,\begin{matrix}
      0 & 1 & 2 & 3 & 4 & 5 & 6 & 7 \\
      7 & 0 & 5 & 6 & 4 & 2 & 1 & 3
    \end{matrix}\,).\end{equation*}}
  \label{lst:ex-sort2}
\end{listing}

\para{Sorting}

A larger example program, which is capable of sorting a list using an arbitrary comparison function, is presented in \Cref{lst:ex-sort}. Whilst of poor algorithmic complexity, insertion sort is employed for simplicity. A more efficient implementation using merge sort is available in the accompanying standard library\footnote{\url{https://github.com/hannah-earley/alethe-examples}.}. Clearly sorting is an irreversible process, as its purpose is to discard information regarding the original ordering of the list. The presented sorting algorithm puts a little effort towards increasing the utility of the garbage, in that the garbage data is a list saying where the original list item was placed into the sorted list \emph{at the moment of insertion}. The insertion sort included in the standard library (\Cref{lst:ex-sort2}) applies some additional processing to make this garbage data correspond to the permutation that maps the original list to the sorted list.

\para{Concurrency}

In addition to automatic parallelisation of independent sub-rules, \textAleph\ also supports defining transitions between separate terms. The motivation for this is for \textAleph\ to be able to fully exploit molecular architectures, but its utility is general and it is intended that a future version of the \alethe\ interpreter will support concurrency.

\subpara{Biasing Computation}

In \Cref{fig:ex-plus} and \Cref{lst:ex-square-def-mol}, abstract molecular implementations of addition and squaring were introduced respectively. Notably, the reaction arrows---whilst reversible---were biased in the forward direction. From a thermodynamic perspective, the direction in which a reaction occurs cannot be specified in an absolute sense, and depends on the conditions of the reaction volume at a given point in time. Moreover, at equilibrium the net direction of each reaction is null: they make no net progress. These concerns were discussed in more detail in \Cref{chap:revi}, but suffice it to say that trying to arrange the conditions of the system such that the computational terms themselves are inherently biased is impracticable, and will also limit the number of transition steps that can be executed. Moreover, it makes it difficult to run sub-rules in a reverse direction (such as is needed for Bennett's algorithm).

The solution is to separate the concerns of computation and biasing said computation. In biochemical systems, this is (mostly) achieved by using a common free energy carrier in the form of $\ce{ATP}$ and $\ce{ADP + P_{\textrm{i}}}$, which are held in disequilibrium such that the favourable reaction direction is $\ce{ATP + H2O <=>> ADP + P_{\textrm{i}} + H+}$. Other biochemical reactions are then coupled to one or more copies of this hydrolysis reaction. For our purposes, we generalise this to assume that free energy is available in the form of two terms $\oplus$ and $\ominus$, where the concentration of $\oplus$ in the reaction volume is greater than that of $\ominus$. The `computational bias' quantifying the average net proportion of computational transitions that are successful is found to be $b=([\oplus]-[\ominus])/([\oplus]+[\ominus])$.

With a bias system thus defined, we can couple any computational transition to it by simply having the input side consume a $\oplus$ term and the output side release a $\ominus$ term. The more transitions that couple to the bias source, the faster and more robustly the computation proceeds. For example, the square definition can be amended like so:
\begin{align*}
  & \haltSym~{\square}~{\blank}~{\unit}{;}\quad \haltSym~{\unit}~{\blank}~{\square}{;} \\
  & \partyBag\Big{&\oplus \\ &{\square}~{n}~{\unit}} = \partyBag\Big{&\ominus \\ &{\square}~{n}~{\Z}~{\square}}{;} && \ruleName{square--init} \\
  & \partyBag\Big{&\oplus \\ &{\square}~{\Z}~{m}~{\square}} = \partyBag\Big{&\ominus \\ &{\unit}~{m}~{\square}}{;} && \ruleName{square--term} \\
  &\partyBag\Big{&\oplus \\ &{\square}~({\S}{n})~{m}~{\square}} =
   \partyBag\Big{&\ominus \\ &{\square}~n~({\S}{m''})~{\square}}{:} && \ruleName{square--step}\\
  &\qquad {+}~{n}~{m}~{\unit} = {\unit}~{n}~{m'}~{+}{.} && \ruleName{square--step--sub$_1$}\\
  &\qquad {+}~{n}~{m'}~{\unit} = {\unit}~{n}~{m''}~{+}{.} && \ruleName{square--step--sub$_2$}
\end{align*}
The braces indicate that these definitions are concurrent, in that the transition will draw (release) two terms from (into) the reaction volume. As written above, the scheme is fairly basic in that it only drives computation in one direction. An improved scheme may make use of a hidden variable in each term to indicate its preferred direction of computation. If this direction is reversed, then the transition will instead draw a $\ominus$ term and release a $\oplus$ term. If a sub-rule then needs to be run in the opposite direction to its parent, then the hidden variable need simply be inverted. This scheme can be made even more sophisticated by allowing the hidden variable to refer to alternate bias sources in case there are multiple to which the transition could couple (as was recommended in \Cref{chap:revii,chap:reviii}).

\subpara{Communication}

A more canonical application of concurrency is that of communication between different computers. There are many possible communication schemes, and an overview and analysis of schemes appropriate for Brownian computers was presented in \Cref{chap:revii}, but we shall consider the simple example of an open communication channel between two computers, \atom{Alice} and \atom{Bob}. \atom{Alice} has a list, the contents of which she wishes to send to \atom{Bob}. A naive approach to this may resemble the \textAleph\ definition
\begin{align*}
  & \atom{Alice}~[{x}~{\cdot}~\cursivexS] = \partyBag\Big{&\atom{Alice}~\cursivexS\\&\atom{Courier}~{x}}{;} &
  & \partyBag\Big{&\atom{Bob}~\cursiveyS\\&\atom{Courier}~{y}} = \atom{Bob}~[{y}~{\cdot}~\cursiveyS]{;}
\end{align*}
where the \atom{Courier} terms are used to convey list items between the two computers. Unfortunately, this does not quite realise the desired behaviour in a Brownian context: Suppose \atom{Alice} starts with the list $[1~2~3~4~5]$. She will generate the courier terms $(\atom{Courier}~1)$, $(\atom{Courier}~2)$, \ldots, $(\atom{Courier}~5)$, as expected, but these terms will be delievered to \atom{Bob} via a diffusive approach. Except for the case of a one-dimensional system, \atom{Bob} will almost certainly receive these couriers in a random order, completely uncorrelated from the original order. This may be avoided if delivery over the channel is substantially faster than the rates of term fission/fusion reactions, but this is not likely in a chemical system. Moreover, this being a reversible Brownian system, \atom{Alice} will from time to time re-absorb a courier that she previously dispatched, thereby increasing the opportunity for the list order to be shuffled.

The take-home message, here, is that whilst the local dynamics of a concurrent reversible computation system---such as \textAleph---may well be reversible, the global dynamics are not necessarily reversible. In fact, this is a common property of microscopically reversibly systems, and is the basis of thermodynamics: the manifestation of macroscopically irreversible dynamics from microscopically reversible physics. It is doubtful that a model of concurrent reversible programming could preclude the possibility of such increases in entropy without severely restricting the system: likely any attempt to do so would effectively prevent the use of concurrency. Nevertheless, the programmer has a high degree of control here and so can, in principle, avoid entropy generation except where desired. In the above case of sending an ordered list, one could for example explicitly sequence the couriers. This suggests that a system of reversible molecular computers could have exceptionally low entropy generation and could realise very strange behaviours by getting as close to the implementation of a Maxwell D{\ae}mon as physics allows. Conversely, one is also free to exploit the thermodynamic properties of the system; for example, one could realise (to the extent the laws of physics permit) a true random number generator.

\subpara{Resources}

Our last example of concurrency concerns the distribution of conserved resources amongst computers. If our computers are to respect the reversibility of the laws of physics, they should also respect mass conservation and so will need to contend with their finity. Amorphous computing presents one approach to achieving powerful computation from limited computational subunits, but it does so by making extensive use of communication which precludes the ability to perform meaningfully reversible computation. We instead suppose that the computers can exchange resources, such as memory units and structural building blocks, with the environment. Whilst this also has thermodynamic consequences, as shown in \Cref{chap:reviii}, it at least separates the concerns of computation from resource access and thus preserves the ability to perform reversible computation. 

A host of resource distribution schemes, and thermodynamic analyses thereof, were presented in \Cref{chap:reviii}, but we content ourselves here with demonstrating how computers may interact with resources free in solution. In particular, we amend the definition of addition to be mass-conserving:
\begin{align*}
  & {+}~{\Z}~{b}~{\unit} = {\unit}~{\Z}~{b}~{+}{;} && \ruleName{add--base} \\
  & \partyBag\Big{&\S\\&{+}~({\S} {a})~{b}~{\unit}} = {\unit}~({\S} {a})~({\S} {b'})~{+}{:} && \ruleName{add--step} \\
  &\qquad  {+}~{a}~{b}~{\unit} = {\unit}~{a}~{b'}~{+}{.} && \ruleName{add--step--sub}
\end{align*}

\para{Effects and Contexts}

As briefly mentioned, composite terms are subject to the same transition rules. Suppose \atom{Lena} is learning \textAleph\ and experiments with this, creating the following trivial wrapper around $\square$:
\begin{align*}
  & \haltSym~{\blank}~\infix{\atom{MySquare}}~{\blank}{;} \\
  & {\atom{MySquare}}~{n}~{\unit} = {\atom{MySquare}}~{({\square}~{n}~{\unit})}~{\atom{MySquare}}{;} \\
  & {\atom{MySquare}}~{({\unit}~{m}~{\square})}~{\atom{MySquare}} = {\atom{MySquare}}~{m}~{\unit}{;}
\end{align*}
This definition, if a little contrived, will function as intended. Suppose, now, that she wants to inspect what happens within the loop of $\square$, and so writes the following:
\begin{align*}
  & \haltSym~{\blank}~\infix{\atom{MySquare'}}~{\blank}{;} \\
  & {\atom{MySquare'}}~{n}~{m}~{\unit} = {\atom{MySquare'}}~{({\square}~{n}~{m}~{\square})}~{\atom{MySquare'}}{;} \\
  & {\atom{MySquare'}}~{({\square}~{n'}~{m'}~{\square})}~{\atom{MySquare'}} = {\atom{MySquare'}}~{n'}~{m'}~{\unit}{;}
\end{align*}
Now, what happens if we instantiate the term ${\atom{MySquare'}}~{3}~{\Z}~{\unit}$? Well, the final term could be any of ${\unit}~{3}~{\Z}~{\atom{MySquare'}}$, ${\unit}~{2}~{5}~{\atom{MySquare'}}$, ${\unit}~{1}~{8}~{\atom{MySquare'}}$ or ${\unit}~{\Z}~{9}~{\atom{MySquare'}}$. Whilst this arguably achieves \atom{Lena}'s aim of inspecting the execution of $\square$, there is a problem in that \textAleph\ claims all of these terms are halting states yet \Cref{fig:state-space} asserts there should be a maximum of two halting states. Despite being contrived, this example shows that we need to introduce another constraint into \textAleph\ to prevent unexpected entropy generation: Any composite term or sub-term can only be created in a halting state, and can only be consumed in a halting state. Moreover, the creation and consumption patterns must be unambiguous as to which of the term's two halting states they refer (if, indeed, there are two). Otherwise, each instance of this ambiguity would multiply the state space and thus there would be an exponential increase in the size of the state space over time.

With the constraint on composite terms clarified, one can then ask whether all terms must follow the same set of rules. It turns out there is an important case where distinguishing between composite terms and top-level terms is useful; specifically, some cases of effectful computation. Suppose that the reaction volume has been endowed with a spatial lattice along which terms can travel, and imagine a term $(\atom{Cons}~\atom{Charlie}~\atom{Dan})$ with such a wanderlust. If the composite terms \atom{Charlie} and \atom{Dan} want to travel to different locations, then where will the term end up? It is likely that a tug-of-war will occur, and so it makes sense to restrict the effecting of translocation to top-level terms.

This distinction is achieved by introducing term contexts: a top-level term has a top-level context, which is some label (itself a term) that may contain information, whilst composite terms have `one-hole contexts'. For example, a term attached to a lattice may have context \atom{Lattice} while a term free in solution might have context \atom{Free}. Concretely, the aforementioned tug-of-war may be resolved by giving \atom{Charlie} control by making him the top-level term, rendered as $\atom{Lattice}:\atom{Charlie}~{(\atom{Dan})}$. Meanwhile, \atom{Dan} is rendered as $(\atom{Lattice}:(\atom{Charlie}~{\bullet})):\atom{Dan}$ where $\bullet$ indicates that this is a one-hole context. One-hole contexts can be defined for many data structures, but for a tree they correspond to removing some sub-tree of interest, replacing it with a `hole'; see \textcite{ohc-diff} for some interesting properties of one-hole contexts. Rules may pattern match on top-level contexts and thus consume the information held within, or even create and destroy top-level terms, but they cannot match on one-hole contexts as this would risk altering their structure; instead, one-hole contexts can only be matched by `opaque variables'. An opaque variable is a special variable found only in context patterns, which can match against a top-level context or a one-hole context, whilst regular variables in a context pattern can only match against top-level contexts. That is, the following are allowed
\begin{align*}
  & \partyBag{\atom{Lattice}&:\atom{Charlie}~{x}} = 
  \partyBag\Big{\atom{Lattice}&:\atom{Charlie}~{\unit} \\ \atom{Free}&:{x}}{;} \\
  & \partyBag{{\gamma}&:\atom{Dan}~{[\atom{Dan's Stuff}]}} =
  \partyBag\Big{{\gamma}&:\atom{Dan}\\\atom{Lattice}&:{[\atom{Dan's Stuff}]}}{;}
\end{align*}
whilst this is not
\begin{align*}
  & \partyBag{(\atom{Lattice}:(c~{\bullet}))&:\atom{Dan}} =
  \partyBag{(\atom{Lattice}:({\bullet}~\atom{Dan}))&:c}{;}
\end{align*}
This is not too onerous a restriction, as if one wishes to manipulate the structure of the one-hole context one can simply match against a higher level context.

In the earlier \textAleph\ definitions, contexts were missing from the rule patterns. This can be seen as another example of syntactic sugar, with a missing context implying an opaque variable context (i.e.\ $\gamma:$) such that the rule can match against any term at any level. For concurrent definitions, however, all participating terms must be contextualised as otherwise it is unclear how to assign the results to the bound one-hole contexts.

It is as yet unclear how translocation along a lattice, or other effects, is actually achieved. Typically effects will be introduced as additional computational primitives, as by definition their actions are not `computational'. As these computational primitives must be instantiated as top-level terms, we require a way to interface between computational terms and effector terms and this warrants a continuation-passing-style approach. For example, walking along a lattice from coordinate $\alpha$ to coordinate $\beta$ may be realised thus:
\begin{align*}
  & \haltSym~\atom{Charlie}'~{\blank}{;} \quad \haltSym~\atom{Charlie}''~{\blank}{;} \\
  & \partyBag{\atom{Latt}&:\atom{Charlie}~{\beta}~{x}} =
    \partyBag{\atom{Latt}&:\atom{Walk}~{\beta}~{(\atom{Charlie}'~x)}}{;} \\
  \mbox{\emph{(primitive)}}\quad &
    \partyBag{\atom{Latt}&:\atom{Walk}~{\beta}~c} \leftrightarrow
    \partyBag{\atom{Latt}&:\atom{Walk}'~{\alpha}~c}{;} \\
  & \partyBag{\atom{Latt}&:\atom{Walk}'~{\alpha}~{(\atom{Charlie}''~x)}} =
    \partyBag{\atom{Latt}&:\atom{Charlie}'''~{\alpha}~{x}}{;}
\end{align*}
That is, \atom{Walk} is given a coordinate to travel to as well as a continuation (which is free to perform additional computation during the walk, if desired). \atom{Walk} then replaces the destination coordinate with the origin coordinate to ensure reversibility. Finally, we define a transition from $\atom{Walk}'$ in order to return control to the continuation. This reveals a subtle point, that the effectful primitives are intentionally not marked as halting so that control can be transferred to and fro' them.

\endgroup

\section{The Calculus}

The \textAleph-calculus thus introduced has a very simple definition. In BNF notation, it is
\begin{equation*}\begin{aligned}
  \ruleName{pattern term} && \tau &::= \bnfPrim{atom} ~|~ \bnfPrim{var} ~|~ (\,\tau^\ast\,) \\
  \ruleName{party} && \pi &::= \tau:\tau^\ast ~|~ \bnfPrim{var}':\tau^\ast \\
  \ruleName{definition} && \delta &::= \{\pi^\ast\}=\{\pi^\ast\}:\pi\rlap.^\ast ~|~ \haltSym~{\tau}
\end{aligned}\end{equation*}
where \bnfPrim{atom} is an infinite set of atomic symbols (conventionally starting with an uppercase letter or a symbol), \bnfPrim{var} an infinite set of variables (conventionally starting with a lowercase letter), and $\bnfPrim{var}'$ is an orthogonal infinite set of variables (conventionally rendered in Greek) used for opaquely matching one-hole-contexts. A program is a series of definitions, $\delta^\ast$, and a physical term is simply a pattern term without variables. Notice that the form of the sub-rules differs from the examples: a sub-rule can be separated into the instantiation of a term according to an input pattern, the evolution of that term, followed finally by the consumption of its final halting state according to the output pattern. In this view, these sub-terms are identified by one-hole-contexts---specifically, opaque variables. This formulation not only allows the semantics to represent automatic parallelisation and a non-deterministic sub-rule ordering, but also begets an additional feature whereby sub-rules can instantiate top-level terms. The sub-rule forms in the examples can then be seen as sugar, i.e.\ $s=t.$ is equivalent to $\lambda:s.~\lambda:t.$ where $\lambda\in\bnfPrim{var}'$ is a fresh opaque variable. To illustrate, recall the definition of natural addition from \Cref{lst:ex-add-def}; desugared, this takes the form
  {\def\ruleLabel#1{\llap{\tiny$#1.$}~}\begin{align*}
    \ruleLabel{1}& \haltSym~{+}~{a}~{b}~{\unit}{;} \\
    \ruleLabel{2}& \haltSym~{\unit}~{a}~{b}~{+}{;} \\
    \ruleLabel{3}& \{\alpha:{+}~{\Z}~{b}~{\unit}\} = \{\alpha:{\unit}~{\Z}~{b}~{+}\}{;} && \ruleName{add--base} \\
    \ruleLabel{4}& \{\alpha:{+}~({\S} {a})~{b}~{\unit}\} = \{\alpha:{\unit}~({\S} {a})~({\S} {b'})~{+}\}{:} && \ruleName{add--step} \\
    \raisebox{0pt}[1.9em]{}%
      {\begin{aligned}\ruleLabel{4a}&\\\ruleLabel{4b}&\end{aligned}}&%
      {\begin{aligned}
        &\qquad  \beta: {+}~{a}~{b}~{\unit}{.} \\
        &\qquad  \beta: {\unit}~{a}~{b'}~{+}{.}
      \end{aligned}} && \ruleName{add--step--sub}
  \end{align*}}%
That is, there are four definitions: two `halting', and two `computational'. On the left side of definition 4, we have a bag of one party. This party has as context pattern the opaque variable $\alpha$, and its pattern term is a composite pattern term of length 4, consisting of the atom $+$, the composite pattern term of length 2 consisting of the atom $\S$ and the variable $a$, the variable $b$, and the empty composite pattern term (`unit'). There is also an opaque variable $\beta$, confined to the inner scope of definition 4. During execution of the sub-rule in the forward direction, the variables $a$ and $b$ will first be consumed in order to generate the sub-term ${+}~{a}~{b}~{\unit}$, which will be bound to the opaque variable $\beta$. This sub-term will then evolve to its other halting state, whereupon it will match sub-rule $4b$ and thus $\beta$ will be consumed and the variables $a$ and $b'$ obtained.

\para{Semantics}

To formalise the semantics embodied in the preceding examples, we define a transition relation $\leftrightarrow$ that maps a bag of terms to another bag of terms according to the rules defined for the current program. By a bag of terms, we aim to evoke the notion of a concoction of computational molecules in solution; a bag is a multiset, which can contain multiple copies of elements, but unlike a physical solution it lacks a concept of space. If desired, spatial locations can be readily simulated by appropriately chosen top-level contexts. Recalling that the global dynamics of the system are irreversible, we should expect that $\leftrightarrow$ is a non-deterministic relation. That is, application of the relation to a set of possible bags of terms may increase the size of this set, and the logarithm of the set size is identified with the entropy of the system. Concretely, the set of bags is the macrostate of the system, and each bag within the set is a microstate.

The calculus being reversible, the relation should share properties with equivalence relations. Namely, if $S,T,U$ are bags of terms, then we have
\begin{equation*}\begin{aligned}
  && S &\leftrightarrow S && \ruleName{refl} \\
  S \leftrightarrow T \implies&& T &\leftrightarrow S && \ruleName{symm} \\
  S \leftrightarrow T \land T \leftrightarrow U \implies&& S &\leftrightarrow U && \ruleName{trans}
\end{aligned}\end{equation*} 
but we also make clear that these transitions can occur within an environment of other non-participating terms, $\Gamma$,
\begin{equation*}\begin{aligned}
  S \leftrightarrow T \implies&& \forall\Gamma.~ \Gamma \cup S &\leftrightarrow \Gamma \cup T && \ruleName{ext} \\
  s \leftrightarrow t \implies&& \forall\Gamma.~ \Gamma \cup \{s\} &\leftrightarrow \Gamma \cup \{t\} && \ruleName{ext$'$}
\end{aligned}\end{equation*}
where the second rule, in which $s$ and $t$ are single terms rather than bags of terms, is introduced as a convenient abuse of notation for later definitions.

In order to physically effect rule transitions, we introduce `mediator terms' delimited by angle brackets, which interact with computational terms and can represent each of the intermediate states. If the program is given by $\mathcal P$, then the mechanism by which these mediator terms come into and out of existence is as follows,
\begin{equation*}\begin{aligned}
  \mathcal P\vdash (I=O:R) \implies&& \{\}&\leftrightarrow\{\langle I\varnothing\varnothing R\varnothing O\rangle\} && \ruleName{inst--comp} \\
  \mathcal P\vdash \rlap{$(\haltSym~{\tau})$}\phantom{(I=O:R)} \implies&& \{\}&\leftrightarrow\{\langle \tau\rangle\} && \ruleName{inst--halt}
\end{aligned}\end{equation*}
from which we see that the environment can contain an arbitrary number of copies of each\footnote{This complicates the aforementioned entropic interpretation of sets of term-bags, as the entropy will tend to diverge as the number of mediator terms tends to infinity. A more realistic implementation would leave the mean number of extant mediator terms finite and bounded, and perhaps even fix the number. This would have a further consequence on the kinetics and thermodynamics of the system, with the well-characterised Mich{\ae}lis-Menten kinetics a good starting point to the analysis thereof.}. The mediator terms for computational rules are sextuples $\langle II'BR\Gamma O\rangle$ consisting of, respectively, the bag of unmatched input patterns $I$, the bag of matched input patterns $I'$, the bag of resultant bindings $B$, the bag of sub-rules $R$, the local environment for internal sub-rules $\Gamma$, and the bag of output patterns $O$. The mediator terms for halting rules are trivial singletons $\langle\tau\rangle$ containing the relevant pattern, $\tau$.

To assist rules in binding composite terms, we permit the current focus of a term to vary over time,
\begin{equation*}\begin{aligned}
  c:(\vec l~{\ooalign{\hss$t$\hss\cr\phantom{$\bullet$}}}~\vec r) &\leftrightarrow (c:(\vec l~{\bullet}~\vec r)):t && \ruleName{ohc$_1$} \\
  c:[\vec l~{\ooalign{\hss$t$\hss\cr\phantom{$\bullet$}}}~\vec r] &\leftrightarrow (c:[\vec l~{\bullet}~\vec r]):t && \ruleName{ohc$_2$}
\end{aligned}\end{equation*}
where $\vec l$ and $\vec r$ are (possibly empty) strings of terms and $t$ is the term focus. The bracketed terms in the second rule will be explained shortly. Note that these one-hole-context rules may not be needed for all architectures, being implicitly true in a molecular architecture for example.

The action of halting mediators is simply to mark terms which are in a known halting state,
\begin{equation*}\begin{aligned}
  \exists b'.~t \overset{\tau}{\sim}b' \implies&& \{x:t,\langle\tau\rangle\} &\leftrightarrow \{x:[t],\langle\tau\rangle\} && \ruleName{term}
\end{aligned}\end{equation*}
with $[t]$ serving as an indicator of a halting state and where $t\overset\tau\sim b'$ means that $t$ unifies against $\tau$ with bindings $b'$ (see \Cref{lst:aleph-sem-unif}).

Computational mediators are somewhat more complicated. We first render their temporal symmetry manifest by two reversibility rules,
\begin{equation*}\begin{aligned}
  \langle I\varnothing\varnothing R\varnothing O\rangle &\leftrightarrow \langle O\varnothing\varnothing R\varnothing I\rangle && \ruleName{rev$_1$} \\
  \langle\varnothing IBR\Gamma O\rangle &\leftrightarrow \langle\varnothing OBR\Gamma I\rangle && \ruleName{rev$_2$}
\end{aligned}\end{equation*}
where these rules will be seen to be necessary even for computation in a single direction. Applying a computational rule consists first of binding against a matching term for each input pattern, followed by substituting the bindings into sub-terms per the sub-rules. The sub-terms may either be top-level or local. These transitions may all occur in parallel, i.e.\ a sub-term may be instantiated if all of its requisite variables are bound, even if not all the input patterns have matched a term.
{\def\ruleHeader#1#2{\llap{$\displaystyle #1 \hspace{5em}$}\rlap{\hspace{9.5em}\ruleName{#2}}}
\begin{equation*}\begin{aligned}
  &\ruleHeader{x:t\overset\pi\sim b \implies}{inp} \\
  \{x:t, \langle(I\cup\{\pi\})I'BR\Gamma O\rangle &\leftrightarrow
  \{\langle I(I'\cup\{\pi\})(B\cup b)R\Gamma O\rangle\} \\
  &\ruleHeader{\pi\in R \land x:t\overset\pi\sim b \implies}{sub$_1$} \\
  \langle II'(B\cup b)R\Gamma O\rangle &\leftrightarrow
  \langle II'BR(\Gamma\cup\{x:[t]\})O\rangle \\
  &\ruleHeader{x \notin \bnfPrim{var}' \implies}{sub$_2$} \\
  \{\langle II'BR(\Gamma\cup\{x:[t]\})O\rangle\} &\leftrightarrow
  \{x:[t],\langle II' BS\Gamma O\rangle\}
\end{aligned}\end{equation*}}
where $x$ is a context. The \ruleName{sub$_2$} transition enables top-level terms to escape the locally scoped environment. Completion of computation occurs by application of \ruleName{rev$_2$} followed by the \ruleName{inp} and \ruleName{sub$_{1,2}$} in reverse; clearly if there is no route from the set of input variables to the set of output variables then the computation will stall. It may be desirable to augment the transition rules with implicit variable duplication,
\begin{equation*}\begin{aligned}
  \langle II'(B\cup\{b\})R\Gamma O\rangle &\leftrightarrow\langle II'(B\cup\{b,b\})R\Gamma O\rangle && \ruleName{dup}
\end{aligned}\end{equation*}
otherwise rules which wish to increase their parallelisability should explicitly duplicate variables as needed. We shall also need to enable the sub-environment to evolve,
\begin{equation*}\begin{aligned}
  \Gamma\leftrightarrow\Gamma' \implies && \langle II'BR\Gamma O\rangle &\leftrightarrow\langle II'BR\Gamma'O\rangle && \ruleName{sub$_3$}
\end{aligned}\end{equation*}

These semantics, encapsulated by the $\leftrightarrow$ transition rule, are summarised in \Cref{lst:aleph-sem}. An example of their application is provided in \Cref{lst:semex-add}. It remains to describe the operation of binding/unification. This operation is defined as one would expect: a pattern matches if it is a variable, if it is an atom and the term is the same atom, or if it is a composite pattern and the term is a \emph{halting} composite term of the same length and if the pattern and term match pair-wise. This is summarised in \Cref{lst:aleph-sem-unif}.

\doublepage{%
\begin{listing}[!p]\centering
\fbox{\begin{minipage}{0.9\textwidth}\vspace{\baselineskip}\begin{equation*}\begin{aligned}
    && S &\leftrightarrow S && \ruleName{refl} \\
    S \leftrightarrow T \implies&& T &\leftrightarrow S && \ruleName{symm} \\
    S \leftrightarrow T \land T \leftrightarrow U \implies&& S &\leftrightarrow U && \ruleName{trans} \\
    S \leftrightarrow T \implies&& \forall\Gamma.~ \Gamma \cup S &\leftrightarrow \Gamma \cup T && \ruleName{ext} \\
    s \leftrightarrow t \implies&& \forall\Gamma.~ \Gamma \cup \{s\} &\leftrightarrow \Gamma \cup \{t\} && \ruleName{ext$'$} \\
    \mathcal P\vdash (I=O:R) \implies&& \{\}&\leftrightarrow\{\langle I\varnothing\varnothing R\varnothing O\rangle\} && \ruleName{inst--comp} \\
    \mathcal P\vdash \rlap{$(\haltSym~{\tau})$}\phantom{(I=O:R)} \implies&& \{\}&\leftrightarrow\{\langle \tau\rangle\} && \ruleName{inst--halt} \\
    && c:(\vec l~{\ooalign{\hss$t$\hss\cr\phantom{$\bullet$}}}~\vec r) &\leftrightarrow (c:(\vec l~{\bullet}~\vec r)):t && \ruleName{ohc$_1$} \\
    && c:[\vec l~{\ooalign{\hss$t$\hss\cr\phantom{$\bullet$}}}~\vec r] &\leftrightarrow (c:[\vec l~{\bullet}~\vec r]):t && \ruleName{ohc$_2$} \\
    \exists b'.~t \overset{\tau}{\sim}b' \implies&& \{x:t,\langle\tau\rangle\} &\leftrightarrow \{x:[t],\langle\tau\rangle\} && \ruleName{term} \\
    && \langle I\varnothing\varnothing R\varnothing O\rangle &\leftrightarrow \langle O\varnothing\varnothing R\varnothing I\rangle && \ruleName{rev$_1$} \\
    && \langle\varnothing IBR\Gamma O\rangle &\leftrightarrow \langle\varnothing OBR\Gamma I\rangle && \ruleName{rev$_2$} \\
    x:t\overset\pi\sim b \implies &&&&& \ruleName{inp} \\%&&
    &&& \omit{\llap{$\displaystyle\{x:t, \langle(I\cup\{\pi\})I'BR\Gamma O\rangle\}$}\rlap{$\displaystyle{}\leftrightarrow\{\langle I(I'\cup\{\pi\})(B\cup b)R\Gamma O\rangle\}$}} \\
    \pi\in R \land x:t\overset\pi\sim b \implies &&&&& \ruleName{sub$_1$} \\
    &&& \omit{\llap{$\displaystyle\langle II'(B\cup b)R\Gamma O\rangle$}\rlap{$\displaystyle{}\leftrightarrow\langle II'BR(\Gamma\cup\{x:[t]\})O\rangle$}} \\
    x \notin \bnfPrim{var}' \implies &&&&& \ruleName{sub$_2$} \\
    &&& \omit{\llap{$\displaystyle\{\langle II'BR(\Gamma\cup\{x:[t]\})O\rangle$}\rlap{$\displaystyle{}\leftrightarrow\{x:[t],\langle II' BR\Gamma O\rangle\}$}} \\
    \Gamma\leftrightarrow\Gamma' \implies && \langle II'BR\Gamma O\rangle &\leftrightarrow\langle II'BR\Gamma'O\rangle && \ruleName{sub$_3$}
\end{aligned}\end{equation*}\vspace{\baselineskip}\end{minipage}}
  \caption{Summary of \textAleph\ semantics.}
  \label{lst:aleph-sem}
\end{listing}%
\begin{listing}[!p]\centering
\fbox{\begin{minipage}{0.9\textwidth}\vspace{.5\baselineskip}\begin{equation*}\begin{array}{clcl}
    \cfrac{\alpha\in\bnfPrim{atom}}{\alpha \overset{\alpha}{\sim} \varnothing} & \ruleName{unif--atom} &
    \cfrac{v\in\bnfPrim{var}}{t \overset{v}{\sim} \{v\mapsto t\}} & \ruleName{unif--var} \\
    \cfrac{\bigwedge_i t_i \overset{\pi_i}{\sim} b_i}{[t_1\cdots t_n] \mathrel{\overset{\pi_1\cdots\pi_n}{\scalebox{1.75}[1]{$\sim$}}} \bigcup_i b_i} & \ruleName{unif--sub} &
    \cfrac{\lambda\in\bnfPrim{var}'\quad t\overset{\pi}{\sim}b}{\gamma:t \overset{\lambda:\pi}{\sim} \{\lambda\mapsto\gamma\}\cup b} & \ruleName{unif--ctxt$_1$} \\
    \cfrac{\gamma\overset{\pi_1}{\sim}b_1\quad t\overset{\pi_2}{\sim}b_2}{\gamma:t \mathrel{\overset{\pi_1:\pi_2}{\scalebox{1.5}[1]{$\sim$}}} b_1\cup b_2} & \ruleName{unif--ctxt$_2$}
\end{array}\end{equation*}\vspace{.5\baselineskip}\end{minipage}}
  \caption{Summary of unification semantics for \textAleph.}
  \label{lst:aleph-sem-unif}
\end{listing}}{%
\begin{listing}[!p]
  \centering
  \begingroup
\begin{sublisting}{\linewidth}
  \def\ruleLabel#1{\llap{\tiny$#1.$}~}
  \begin{align*}
    \ruleLabel{1}& \haltSym~{+}~{a}~{b}~{\unit}{;} \\
    \ruleLabel{2}& \haltSym~{\unit}~{a}~{b}~{+}{;} \\
    \ruleLabel{3}& \{\alpha:{+}~{\Z}~{b}~{\unit}\} = \{\alpha:{\unit}~{\Z}~{b}~{+}\}{;} && \ruleName{add--base} \\
    \ruleLabel{4}& \{\alpha:{+}~({\S} {a})~{b}~{\unit}\} = \{\alpha:{\unit}~({\S} {a})~({\S} {b'})~{+}\}{:} && \ruleName{add--step} \\
    \raisebox{0pt}[1.9em]{}%
      {\begin{aligned}\ruleLabel{4a}&\\\ruleLabel{4b}&\end{aligned}}&%
      {\begin{aligned}
        &\qquad  \beta: {+}~{a}~{b}~{\unit}{.} \\
        &\qquad  \beta: {\unit}~{a}~{b'}~{+}{.}
      \end{aligned}} && \ruleName{add--step--sub}
  \end{align*}
  \caption{Desugared definition of reversible natural addition in \textAleph. It will be convenient to make the definitions $I_4=\{\alpha:{+}~({\S} {a})~{b}~{\unit}\}$, $O_4=\{\alpha:{\unit}~({\S} {a})~({\S} {b'})~{+}\}$ and $R_4=\{\beta: {+}~{a}~{b}~{\unit}, \beta: {\unit}~{a}~{b'}~{+}\}$.}
  \label{lst:semex-add-def}
\end{sublisting}

\begin{sublisting}{\linewidth}
  \def\adj{~}
  \def\smS{\textsc{\scriptsize S}}
  \begin{align*}
  &\phantom{{}\leftrightarrow{}} \{ {\unit}{:}[{+}{3}{4}{\unit}] \} && \\
  \mathcal P \vdash \haltSym{+}{a}{b}{\unit} \implies\adj& \leftrightarrow \{\langle{+}{a}{b}{\unit}\rangle,{\unit}{:}[{+}{3}{4}{\unit}]\} && \ruleName{inst--halt} \\
  & \leftrightarrow \{\langle{+}{a}{b}{\unit}\rangle,{\unit}{:}({+}{3}{4}{\unit})\} && \ruleName{term} \\
  \mathcal P \vdash \haltSym{+}{a}{b}{\unit} \implies\adj& \leftrightarrow \{{\unit}{:}({+}{3}{4}{\unit})\} && \ruleName{inst--halt} \\
  \mathcal P \vdash (I_4=O_4:R_4) \implies\adj& \leftrightarrow \{\langle I_4\varnothing\varnothing R_4\varnothing O_4\rangle,{\unit}{:}({+}{3}{4}{\unit})\} && \ruleName{inst--comp} \\
  \smash{{\unit}{:}({+}{3}{4}{\unit})\overset{{\alpha}{:}{+}({\smS}{a}){b}{\unit}}\sim\{{\cdots}\}} \implies\adj& \leftrightarrow \{\langle\varnothing I_4\{\alpha{\mapsto}{\unit},a{\mapsto}2,b{\mapsto}4\}R_4\varnothing O_4\rangle\} && \ruleName{inp} \\
  {\cdots} \implies\adj& \leftrightarrow \{\varnothing I_4\{\alpha{\mapsto}{\unit}\}R_4\{\beta{:}[{+}{2}{4}{\unit}]\}O_4\} && \ruleName{sub$_1$} \\
  \{\beta{:}[{+}{2}{4}{\unit}]\}\leftrightarrow\{\beta{:}[{\unit}{2}{6}{+}]\} \implies\adj& \leftrightarrow \{\varnothing I_4\{\alpha{\mapsto}{\unit}\}R_4\{\beta{:}[{\unit}{2}{6}{+}]\}O_4\} && \ruleName{sub$_3$} \\
  {\cdots} \implies\adj& \leftrightarrow \{\langle\varnothing I_4\{\alpha{\mapsto}{\unit},a{\mapsto}2,b'{\mapsto}6\}R_4\varnothing O_4\rangle\} && \ruleName{sub$_1$} \\
  & \leftrightarrow \{\langle\varnothing O_4\{\alpha{\mapsto}{\unit},a{\mapsto}2,b'{\mapsto}6\}R_4\varnothing I_4\rangle\} && \ruleName{rev$_2$} \\
  \smash{{\unit}{:}({\unit}{3}{7}{+})\overset{{\alpha}{:}{\unit}({\smS}{a})({\smS}{b'}){+}}\sim\{{\cdots}\}} \implies\adj& \leftrightarrow \{\langle O_4\varnothing\varnothing R_4\varnothing I_4\rangle,{\unit}{:}({\unit}{3}{7}{+})\} && \ruleName{inp} \\
  & \leftrightarrow \{\langle I_4\varnothing\varnothing R_4\varnothing O_4\rangle,{\unit}{:}({\unit}{3}{7}{+})\} && \ruleName{rev$_1$} \\
  \mathcal P \vdash (I_4=O_4:R_4) \implies\adj& \leftrightarrow \{{\unit}{:}({\unit}{3}{7}{+})\} && \ruleName{inst--comp}\\
  \mathcal P \vdash \haltSym{\unit}{a}{b}{+} \implies\adj& \leftrightarrow \{\langle{\unit}{a}{b}{+}\rangle,{\unit}{:}({\unit}{3}{7}{+})\} && \ruleName{inst--halt} \\
  & \leftrightarrow \{\langle{\unit}{a}{b}{+}\rangle,{\unit}{:}[{\unit}{3}{7}{+}]\} && \ruleName{term} \\
  \mathcal P \vdash \haltSym{\unit}{a}{b}{+} \implies\adj& \leftrightarrow \{{\unit}{:}[{\unit}{3}{7}{+}]\} && \ruleName{inst--halt}
  \end{align*}
  \caption{One possible derivation of ${\unit}:[{+}~{3}~{4}~{\unit}] \leftrightarrow {\unit}:[{\unit}~{3}~{7}~{+}]$ using the semantics for $\textAleph$.}
  \label{lst:semex-add-sem}
\end{sublisting}
\endgroup
  \caption{\textAleph\ semantics as applied to the example of natural addition.}
  \label{lst:semex-add}
\end{listing}}

\para{Reversibility}

The \ruleName{symm} transition renders the semantics trivially reversible, but this is fairly weak. We conclude the discussion of the semantics of \textAleph\ by proving that the semantics are \emph{microscopically} reversible.

\begin{dfn}[Microscopic Reversibility]A transition is primitively microscopically reversible if it is a uniquely invertible structural rearrangement. A transition is microscopically reversible if it is decomposable into a series of primitively microscopically reversible transitions.\end{dfn}

\begin{thm}\label{thm:mrev}The semantics of\kern0.1ex\ \textslAleph\ are microscopically reversible.\end{thm}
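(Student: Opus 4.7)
The plan is to proceed by structural induction on derivations of the $\leftrightarrow$ relation presented in \Cref{lst:aleph-sem}, partitioning the rules into three kinds: \emph{meta-rules} (\ruleName{refl}, \ruleName{symm}, \ruleName{trans}, \ruleName{ext}, \ruleName{ext$'$}, \ruleName{sub$_3$}) which merely close the relation under reasonable operations, \emph{structural primitives} (\ruleName{ohc$_{1,2}$}, \ruleName{rev$_{1,2}$}, \ruleName{inst--comp}, \ruleName{inst--halt}, \ruleName{term}, \ruleName{sub$_2$}) whose action is a purely syntactic rearrangement, and \emph{binding primitives} (\ruleName{inp}, \ruleName{sub$_1$}, and optionally \ruleName{dup}) whose action involves unification. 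For each primitive I will exhibit an explicit inverse and verify that it is the \emph{unique} inverse, i.e.\ that the image of the forward transition contains enough information to recover the pre-image.

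First I would dispatch the meta-rules. \ruleName{refl} is the identity and therefore its own inverse. \ruleName{symm} is built into the definition of microscopic reversibility and introduces no new content. \ruleName{trans}, \ruleName{ext}, \ruleName{ext$'$} and \ruleName{sub$_3$} directly preserve decomposability into primitive reversible steps, by the induction hypothesis applied to their premises (since concatenation of reversible sequences, and application of a reversible sequence inside a fixed surrounding environment or inside a fixed mediator slot, remain reversible). Next I would dispose of the structural primitives: \ruleName{ohc$_{1,2}$} is invertible because the highlighted position $\bullet$ unambiguously records where the focussed sub-term $t$ must be reinserted; \ruleName{rev$_{1,2}$} are involutions on the mediator sextuple; \ruleName{inst--comp} and \ruleName{inst--halt} are reversed by recognising a mediator whose four ``working'' slots $I'$, $B$, $\Gamma$ and $O$ (resp., a halt mediator) are empty and whose shape matches a program clause $\mathcal P\vdash(I=O\colon R)$ or $\mathcal P\vdash(\haltSym~\tau)$; \ruleName{term} is inverted by identifying a bracketed term $[t]$ alongside a halt mediator carrying the same pattern $\tau$ against which $t$ unifies; \ruleName{sub$_2$} is inverted by the presence of the $[t]$ bracket, which certifies that a top-level term was extruded from, rather than arbitrarily added to, the local environment $\Gamma$.

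The substantive work is in the binding primitives. For \ruleName{inp} the forward step consumes a party $x\colon t$ and a pattern $\pi\in I$, depositing $\pi$ into $I'$ and $b$ into $B$; the inverse must identify a \emph{distinguished} $\pi\in I'$ together with the unique sub-bag $b\subseteq B$ that arose from it, and then reconstitute $x\colon t$ by substituting $b$ into $\pi$ (using \Cref{lst:aleph-sem-unif}, which determines $t$ uniquely from $(\pi,b)$, with the opaque variable component of $\pi$ supplying the context $x$). For \ruleName{sub$_1$} an analogous reconstruction applies with $\pi\in R$ and $\Gamma$ in place of the ambient bag. The key lemma I will need is therefore:

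\begin{quote}
\emph{Binding Traceability.} For any definition $\mathcal P\vdash(I=O\colon R)$ appearing in the program, the variables occurring in each pattern $\pi$ (including its opaque-variable context) are disjoint from those occurring in every other pattern in $I\cup O\cup R$; consequently the multiset $B$ of accumulated bindings admits a unique partition indexed by the patterns in $I'$ (respectively, by those dispatched sub-rules), and this partition is recoverable purely from the mediator.
\end{quote}

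The main obstacle, and the one requiring the most care, is precisely the traceability lemma: it is what guarantees that the inverse of \ruleName{inp} or \ruleName{sub$_1$} is \emph{unique} rather than merely existent. I expect the lemma to follow from a syntactic well-formedness condition on \textAleph\ definitions which should be imposed as part of the calculus (implicit in all the worked examples of \Cref{sec:ex1,sec:ex2}, and which the \alethe\ interpreter can check statically); once assumed, the inversion of every primitive is a short case analysis, and the overall theorem follows by routine induction. Implicit variable duplication under \ruleName{dup} would require a mild extension: its inverse requires that the duplicated copies be pairwise equal, which is exactly the partial-bijection caveat already discussed for the wider calculus and so does not compromise microscopic reversibility, only totality.
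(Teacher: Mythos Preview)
Your Binding Traceability lemma is false, and not for a subtle reason: variable sharing between $I$, $O$ and $R$ is the very mechanism by which information flows through an \textAleph\ rule. In the addition example (\Cref{lst:semex-add-def}), the variable $a$ occurs in the input pattern $\alpha:{+}~(\S a)~b~\unit$, in the output pattern $\alpha:\unit~(\S a)~(\S b')~{+}$, \emph{and} in both sub-rule patterns $\beta:{+}~a~b~\unit$ and $\beta:\unit~a~b'~{+}$. This is not an aberration; every non-trivial example in the paper works this way. So the well-formedness condition you hope to impose cannot hold, and without it your inversion of \ruleName{inp} and \ruleName{sub$_1$} fails: after several interleaved \ruleName{inp} and \ruleName{sub$_1$} steps, the bag $B$ does not admit a canonical partition by pattern-of-origin.

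The paper's proof takes a fundamentally different tack. Rather than arguing that each rule of \Cref{lst:aleph-sem} is uniquely invertible \emph{as stated}, it exhibits for each rule a finer decomposition into genuinely primitive steps, each a local structural rearrangement. For \ruleName{inst--comp} and \ruleName{inst--halt}, the mediator is not conjured from $\{\}$ but produced by an explicit template-duplication procedure (mark, propagate, recruit fresh monomers, ligate, disentangle) acting on the program clause itself; your one-line treatment of these rules does not address how creating something from nothing can be information-preserving. For \ruleName{ohc$_{1,2}$}, the choice of which sub-term to focus is made deterministic by a movable marker that steps along the composite. For \ruleName{inp}, \ruleName{sub$_1$} and \ruleName{sub$_2$}, the crucial device is to realise bags as strings equipped with a focus position and swap operations: a draw is then ``take the first element'', which is deterministic at any instant even though the shuffling that precedes it is not. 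This sidesteps the need for any disjointness hypothesis on variables. Your plan, as it stands, is attacking the theorem at the wrong level of abstraction.
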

\begin{proof}
  The rules \ruleName{refl,symm,trans,ext,ext$'$,term,rev$_1$,rev$_2$,sub$_3$} are microscopically reversible either trivially or inductively.

  The instantiation rules \ruleName{inst--comp,inst--halt} are less obvious, but can be realised in a microscopically reversible fashion in much the same way that cells translate an mRNA template to a polypeptide product. We shall first need a microscopically reversible way to duplicate a term, for which we assume that the environment contains an excess of structural building blocks (i.e.\ free atoms, variables, units $\unit$, etc). A term $t$ to be duplicated is first marked as such i.e.\ $t \mapsto \overline{t}$. These modified terms deviate from the definitions introduced at the beginning of this section, and are instead intermediate transitional structures used for the small-step microscopically reversible semantics described here. This marking is then propagated throughout the structure to all atoms, variables, and units by the following two microscopically reversible transitions:
  \begin{align*}
    \overline{\gamma:t} &\leftrightarrow \overline{\gamma}:\overline{t} && \ruleName{dup--prop$_1$} \\
    \overline{({x}~\vec\cursivexS)} &\leftrightarrow (\hat{x}~\vec\cursivexS) && \ruleName{dup--prop$_2$} \\
    (\vec\cursivexS~\hat{x}~{y}~\vec\cursiveyS) &\leftrightarrow (\vec\cursivexS~\overline{x}~\hat{y}~\vec\cursiveyS) && \ruleName{dup--prop$_3$}
  \end{align*}
  where rules \ruleName{dup--prop$_{2,3}$} are used to distribute the marking throughout composite terms. Note the use of the auxiliary `hat' marker $\hat{\,\cdot\,}$ to sequence this propagation in a microscopically reversible manner.
  Elementary terms thus marked recruit fresh copies of themselves from the free structural building blocks,
  \begin{align*}
    \frac{}{a} &\leftrightarrow \frac{a}{a} &
    \frac{}{u} &\leftrightarrow \frac{u}{u} &
    \frac{}{v} &\leftrightarrow \frac{v}{v} &
    \frac{}{\unit} &\leftrightarrow \frac{\unit}{\unit} &
    &\ruleName{dup--fresh}
  \end{align*}
  where $a\in\bnfPrim{atom}$, $u\in\bnfPrim{var}$ and $v\in\bnfPrim{var}'$, and where the drawing of a fresh copy from the environment is implicit. These are then ligated together (in a prescribed right-to-left order to ensure microscopic reversibility), and the composite structure disentangled,
  \begin{align*}
    \qty\Big(\vec{s}~\frac{t}{t}~\frac{\vec u}{\vec u}) &\leftrightarrow
    \qty\Big(\vec{s}~\frac{t~\vec u}{t~\vec u}) & 
    &\ruleName{dup--lig$_1$} &
    \qty\Big(\frac{\vec t}{\vec t}) &\leftrightarrow\frac{(\vec t)}{(\vec t)} &
    &\ruleName{dup--topo$_1$} \\
    \qty\Big{\vec{s}~\frac{t}{t}~\frac{\vec u}{\vec u}} &\leftrightarrow
    \qty\Big{\vec{s}~\frac{t~\vec u}{t~\vec u}} & 
    &\ruleName{dup--lig$_2$} & 
    \qty\Big{\frac{\vec\pi}{\vec\pi}} &\leftrightarrow\frac{\{\vec\pi\}}{\{\vec\pi\}} &
    &\ruleName{dup--topo$_2$} \\
    &&&& \frac{\gamma}{\gamma} : \frac{t}{t} &\leftrightarrow \frac{\gamma:t}{\gamma:t} &
    &\ruleName{dup--topo$_3$}
  \end{align*}
  finally resulting in a fully duplicated and disentangled structure $\frac tt$ from $\overline{t}$ for $t$ any term, party, or party-bag. Finally, the instantiation rules can be realised microscopically reversibly thus,
  \begin{align*}
    (I=O:S) &\leftrightarrow (\overline{I}=\overline{O}:\overline{S}) &&\ruleName{inst--comp$_1$} \\
    \qty\Big{\qty\Big(\frac II=\frac OO:\frac SS)} &\leftrightarrow \{(I=O:S), \langle I\varnothing\varnothing S\varnothing O\rangle\} &&\ruleName{inst--comp$_2$} \\
    (\haltSym~{\tau}) &\leftrightarrow (\haltSym~\overline{\tau}) &&\ruleName{inst--halt$_1$} \\
    \qty\Big{\qty\Big(\haltSym~\frac \tau\tau)} &\leftrightarrow \{(\haltSym~{\tau}), \langle\tau\rangle\} &&\ruleName{inst--halt$_2$}
  \end{align*}

  The one-hole-context rules are nearly trivially microscopically reversible, except for the choice of partition. This can be achieved by a movable marker like so,
  {\def\mark#1{\underset{\hat{}}{#1}}
  \begin{align*}
    c:({x}~\vec\cursivexS) &\leftrightarrow c:\llparenthesis\mark{x}~\vec\cursivexS\rrparenthesis &&\ruleName{ohc$_{11}$} &
    c:[{x}~\vec\cursivexS] &\leftrightarrow c:\llbracket\mark{x}~\vec\cursivexS\rrbracket &&\ruleName{ohc$_{21}$} \\
    c:\llparenthesis\vec\cursivexS~\mark{x}~{y}~\vec\cursiveyS\rrparenthesis &\leftrightarrow c:\llparenthesis\vec\cursivexS~{x}~\mark{y}~\vec\cursiveyS\rrparenthesis &&\ruleName{ohc$_{12}$} &
    c:\llbracket\vec\cursivexS~\mark{x}~{y}~\vec\cursiveyS\rrbracket &\leftrightarrow c:\llbracket\vec\cursivexS~{x}~\mark{y}~\vec\cursiveyS\rrbracket &&\ruleName{ohc$_{22}$} \\
    c:\llparenthesis\vec{\ell}~\mark{t}~\vec{r}\rrparenthesis &\leftrightarrow (c:(\vec{\ell}~{\bullet}~\vec{r})):t &&\ruleName{ohc$_{13}$} &
    c:\llbracket\vec{\ell}~\mark{t}~\vec{r}\rrbracket &\leftrightarrow (c:[\vec{\ell}~{\bullet}~\vec{r}]):t &&\ruleName{ohc$_{23}$}
  \end{align*}}\\[-2.5\baselineskip]

  The rules \ruleName{inp,sub$_1$,sub$_2$} require a microscopically reversible realisation of a bag with random draws. In order to be microscopically reversible, the draw needs to be deterministic at any given time, even if draws at different times are uncorrelated. This is achieved by representing a bag as a string that can be shuffled via swap operations, i.e.
  \begin{align*}
    \{\vec\cursivexS\cdot{x}~\vec\cursiveyS\} &\leftrightarrow \{\vec\cursivexS~{x}\cdot\vec\cursiveyS\} && \ruleName{bag--shuff$_1$} \\
    \{\vec\cursivexS~{x}\cdot{y}~\vec\cursiveyS\} &\leftrightarrow \{\vec\cursivexS~{y}\cdot{x}~\vec\cursiveyS\} && \ruleName{bag--shuff$_2$}
  \end{align*}
  where $\cdot$ is the focus of the swap operation, required for microscopic reversibility.
  Random draws are then implemented by simply picking the first element of the string.
  
  The last rules to demonstrate microscopic reversibility for are the unification semantics. We leave this as an exercise for the reader, with the hint that its realisation is similar to that of the instantiation rules.
\end{proof}

\para{Computability}

The earlier examples give reasonable assurance that \textAleph\ is Turing complete and that programming in it is `easy' in the sense that programs can be readily composed. For avoidance of doubt, however, we prove that \textAleph\ is Turing complete in two ways: the first proves a stronger claim, that \textAleph\ is Reversible-Turing complete, meaning that it can efficiently and faithfully simulate a Reversible Turing Machine without generating garbage; the second proves Turing completeness in a more conventional fashion in order to demonstrate its high level of composability.

\begin{listing}[tb]
  \centering
  \begingroup\centering%
\def\nl{\\[1.5em]}%
\begin{minipage}[t]{.6\linewidth}\begin{align*}
  & {[\atom{Blank}~{x}~{\cdot}~\cursivexS]}~\infix{\atom{Pop}}~\atom{Blank}~{[{x}~{\cdot}~\cursivexS]}{;} \\
  & {[{(\atom{Sym}~{x})}~{\cdot}~\cursivexS]}~\infix{\atom{Pop}}~\atom{Blank}~\cursivexS{;} \\
  & {[]}~\infix{\atom{Pop}}~\atom{Blank}~{[]}{;}
  \nl
  & {(\atom{Tape}~{\ell}~{x}~{r})}~\infix{\atom{Left}}~{(\atom{Tape}~{\ell'}~{x'}~{r'})}{:} \\
  &\qquad {\ell}~\infix{\atom{Pop}}~{x'}~{\ell'}{.} \\
  &\qquad {r'}~\infix{\atom{Pop}}~{x}~{r}{.}
  % \nl 
  % & \cursivexS~\infix{\atom{List$\leftrightarrow$Tape}}~{(\atom{Tape}~{[]}~\atom{Blank}~{\cursivexS'})}{:} \\
  % &\qquad {x}~{\atomLocal{}}~{(\atom{Sym}~{x})}{;} \\
  % &\qquad \cursivexS~\infix{\atom{Map}~{\atomLocal{}}}~{\cursivexS'}{.}
\end{align*}\end{minipage}%
\begin{minipage}[t]{.3\linewidth}\begin{align*}
  & \haltSym~\atom{Tape}~{\ell}~{x}~{r}{;} \\
  & \haltSym~\atom{Sym}~{x}{;} \\
  & \haltSym~\atom{Blank}{;}
  \nl
  & {t}~\infix{\atom{Right}}~{t'}{:} \\
  &\qquad {t'}~\infix{\atom{Left}}~{t}{.}
\end{align*}\end{minipage}%
\endgroup
  \caption[An \textAleph\ program implementing bi-infinite tapes for Reversible Turing Machines.]{This \textAleph\ program defines all the ingredients necessary to simulate any Reversible Turing Machine. We represent a bi-infinite tape as its one-hole-context; that is, a tape is given by the square in the current position, $x$, as well as `all' the squares to the left of it, $\ell$, and `all' the squares to the right of it, $r$. Obviously we can't actually represent \emph{all} the squares to the left and the right. Instead we make use of the fact that, at any one time, only a finite bounded region of the tape is non-blank. As such, $\ell$ contains all the squares to the left up to the last symbol, and similarly for $r$. If we keep going left, past the final symbol, then $\ell$ will be the empty list, and \atom{Blank} squares will be created as needed. The rules \atom{Left} and \atom{Right} are convenience functions for changing the focus of a tape.}
  \label{lst:aleph-tape}
\end{listing}
\begin{listing}
  \centering
  \begingroup%
\def\nl{\\[.8em]}%
\def\Mu#1{\infix{\atom{Mu}~{#1}}}%
\def\Garb#1{(\atom{Garbage}~{#1})}%
\def\Dup#1#2{\infix{\atom{Dup}~{#1}}~{#2}{.}}%
\def\xyz{x\hspace{-0.3ex}y\hspace{-0.3ex}z}%
\begin{minipage}[t]{.68\linewidth}\begin{align*}
  & \cursivexS~\Mu{(\atom{Const}~{n})}~{n'}~\Garb\cursivexS{:} \\
  &\qquad \Dup{n}{n'}
  \nl
  & {[x]}~\Mu{\atom{Succ}}~{(\S x)}~{(\atom{Garbage})}{;}
  \nl
  & \cursivexS~\Mu{(\atom{Proj}~{i})}~{y}~\Garb{\cursiveyS~\cursivezS}{:} \\
  &\qquad \Dup{i}{i'} \\
  % &\qquad\comment{\cmtSLopen~split $\cursivexS$ at position $i'$, with the prefix in reverse order} \\
  &\qquad \llapHalt\atomLocal{Go}~{i'}~{[]}~\cursivexS = \atomLocal{Go}~{\Z}~{[{y}~{\cdot}~\cursiveyS]}~\cursivezS{.} \\
  &\qquad \atomLocal{Go}~{(\S n)}~\cursiveyS~{[{x}~{\cdot}~\cursivexS]} = \atomLocal{Go}~{n}~{[{y}~{\cdot}~\cursiveyS]}~\cursivexS{;}
  \nl
  & \cursivexS~\Mu{(\atom{Sub}~{g}~\cursivefS)}~{z}~\Garb{\cursivexS~{h}~\cursivehS}{:} \\
  &\qquad \cursiveyS~\Mu{g}~{z}~{h}{.} \\
  &\qquad \infix{\atomLocal{Go}~\cursivefS~\cursivexS}~\cursiveyS~\cursivehS{.} \\
  &\qquad \infix{\atomLocal{Go}~{[]}~\cursivexS}~{[]}~{[]}{;} \\
  &\qquad \infix{\atomLocal{Go}~{[{f}~{\cdot}~\cursivefS~\cursivexS]}}~{[{y}~{\cdot}~\cursiveyS]}~{[{h}~{\cdot}~\cursivehS]}{:} \\
  &\qquad\qquad \Dup\cursivexS{\cursivexS'} \\
  &\qquad\qquad {\cursivexS'}~\Mu{f}~{y}~{h}{.} \\
  &\qquad\qquad \infix{\atomLocal[2]{Go}~\cursivefS~\cursivexS}~\cursiveyS~\cursivehS{.}
  \nl
  & {[{n}~{\cdot}~\cursivexS]}~\Mu{(\atom{Rec}~{f}~{g})}~{y}~\Garb{{n'}~\cursivexS~\cursivehS}{:} \\
  &\qquad \Dup\cursivexS{\cursivexS'} \\
  &\qquad {\cursivexS'}~\Mu{f}~{x}~{h}{.} \\
  &\qquad \llapHalt\atomLocal{Go}~{n}~{\Z}~{g}~{x}~\cursivexS~{[h]} =
                     \atomLocal{Go}~{\Z}~{n'}~{g}~{y}~\cursivexS~\cursivehS{.} \\
  &\qquad \atomLocal{Go}~{(\S n)}~{m}~{g}~{x}~\cursivexS~{hs} =
          \atomLocal{Go}~{n}~{(\S m)}~{g}~{y}~\cursivexS~{[{h}~{\cdot}~\cursivehS]}{:} \\
  &\qquad\qquad \Dup{m}{m'}~\Dup\cursivexS{\cursivexS'} \\
  &\qquad\qquad [{m'}~{x}~{\cdot}~{\cursivexS'}]~\Mu{g}~{y}~{h}{.}
  \nl
  & \cursivexS~\Mu{(\atom{Minim}~{f})}~{i}~\Garb{\cursivexS~\cursivehS}{:} \\
  &\qquad \llapHalt\atomLocal{Go}~{f}~{\Z}~\cursivexS~{(\S\Z)}~{[]} =
                     \atomLocal{Go}~{f}~{(\S i)}~\cursivexS~{\Z}~\cursivehS{.} \\
  &\qquad \atomLocal{Go}~{f}~{i}~\cursivexS~{(\S n)}~\cursivehS =
          \atomLocal{Go}~{f}~{(\S i)}~\cursivexS~{y}~{[{n}~{h}~{\cdot}~\cursivehS]}{:} \\
  &\qquad\qquad \Dup{i}{i'}~\Dup\cursivexS{\cursivexS'} \\
  &\qquad\qquad {[{i'}~{\cdot}~{\cursivexS'}]}~\Mu{f}~{y}~{h}{.}
\end{align*}\end{minipage}%
\def\expln#1{\comment{\cmtSLopen~$#1$}}
\begin{minipage}[t]{.3\linewidth}\vspace{-1.5em}%
\begin{equation*}\begin{array}{|rl}
  \qquad&\\[.5em]
  & \dataDef{\atom{Const}~{n}}{;} \\
  & \expln{C_n(\vec\cursivexS) = n} \nl
  & \dataDef{\atom{Succ}}{;} \\
  & \expln{S(x) = x + 1} \nl
  & \dataDef{\atom{Proj}~{i}}{;} \\
  & \expln{P_i(\vec\cursivexS) = x_i} \\
  & \comment{\cmtSLopen~N.B:\ $\vec\cursivexS$\, is 1-indexed.}\nl
  & \dataDef{\atom{Sub}~{g}~\cursivefS}{;} \\
  & \expln{(g\circ\vec\cursivefS)(\vec\cursivexS) = } \\
  & \expln{\quad g(f_1(\vec\cursivexS)\cdots f_n(\vec\cursivexS))} \nl
  & \dataDef{\atom{Rec}~{f}~{g}}{;} \\
  & \expln{\rho(f,g)(0,\vec\cursivexS) = f(\vec\cursivexS)} \\
  & \expln{\rho(f,g)(n+1,\vec\cursivexS) = } \\
  & \expln{\quad g(n,\rho(f,g)(n,\vec\cursivexS),\vec\cursivexS)} \nl
  & \dataDef{\atom{Minim}~{f}}{;} \\
  & \expln{\mu(f)(\vec\cursivexS) = } \\
  & \expln{\quad\min\{n:f(n,\vec\cursivexS)=0\}}
  \\[-.5em]&
\end{array}\end{equation*}
\end{minipage}%
\endgroup
  \caption[An \textAleph\ program implementing the $\mu$-recursive functions.]{An \textAleph\ program$^\ast$ implementing the $\mu$-recursive functions. The $\mu$-recursive functions are generated by the three functions, $C_n$, $S$ and $P_i$, and the three operators, $\circ$, $\rho$ and $\mu$, whose definitions are given in the comments above. The \textAleph\ values corresponding to these functions can be directly composed using the operators (e.g.\ $(\atom{Rec}~(\atom{Const}~0)~(\atom{Proj}~1))$), and then evaluated with the \atom{Mu} atom.}
  ~\\\parbox{\textwidth}{\hangindent=0.3cm $^\ast$\scriptsize Some additional sugar has been used in this program, mainly in the form of nested definitions and the more concise expression of looping constructs. See \Cref{sec:alethe} for an in-depth definition of these sugared forms.}
  \label{lst:aleph-murec}
\end{listing}

\begin{thm}The \textslAleph-calculus is Reversible-Turing Complete, i.e.\ it can simulate any Reversible Turing Machine (RTM) as defined by \textcite{bennett-tm}.\end{thm}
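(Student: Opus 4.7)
The plan is to give an explicit, uniform translation from any RTM into an \textAleph\ program, leveraging the tape machinery already developed in \Cref{lst:aleph-tape}. Recall that Bennett's RTMs are presented in quadruple form, partitioned into \emph{symbol} quadruples $(q, s, s', q')$ (read $s$, write $s'$, no head movement) and \emph{shift} quadruples $(q, /, m, q')$ (no read/write, shift head by $m\in\{L,R,N\}$), with the constraint that both the forward transition relation and its inverse are (partial) functions. This restriction to one operation per transition is precisely what will make the translation local and microscopically reversible in \textAleph.

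First, I would represent an RTM configuration as the top-level term $\atom{Machine}~q~t$, where $q$ ranges over atoms $Q_1, Q_2, \ldots$ corresponding to RTM states and $t$ is a tape in the one-hole-context form of \Cref{lst:aleph-tape} (i.e.\ $\atom{Tape}~\ell~x~r$, with $\ell, r$ the left and right tape halves and $x$ the currently scanned symbol encoded as $\atom{Blank}$ or $\atom{Sym}~s$). Then for each symbol quadruple $(q,s,s',q')$ I would emit the \textAleph\ rule
\begin{align*}
  \atom{Machine}~q~(\atom{Tape}~\ell~(\atom{Sym}~s)~r) &= \atom{Machine}~q'~(\atom{Tape}~\ell~(\atom{Sym}~s')~r){;}
\end{align*}
(with an analogous clause when $s$ or $s'$ is $\atom{Blank}$), and for each shift quadruple $(q,/,L,q')$ the rule
\begin{align*}
  \atom{Machine}~q~t &= \atom{Machine}~q'~t'{:}~~ t~\infix{\atom{Left}}~t'{.}
\end{align*}
and symmetrically for $R$ (reusing $\atom{Right}$) and $N$ (identity sub-rule). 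Initial and final RTM states $q_0$ and $q_f$ are marked by the halting declarations $\haltSym~\atom{Machine}~q_0~(\atom{Tape}~\ell~x~r)$ and $\haltSym~\atom{Machine}~q_f~(\atom{Tape}~\ell~x~r)$. A straightforward induction on the number of RTM steps then shows that one step of the RTM corresponds to exactly one application of the corresponding \textAleph\ rule (modulo the internal small-step semantics), and that the simulation produces no garbage beyond the final tape and halting state since every rule consumes its left-hand side exactly and instantiates its right-hand side exactly.

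The main obstacle, and the place where Bennett's reversibility assumption does genuine work, is verifying that the resulting \textAleph\ program respects the constraints needed for deterministic (i.e.\ unambiguous) forward and backward evolution: no two rules may have overlapping forward input patterns, and no two may have overlapping output patterns. Forward disjointness follows because RTM quadruples are by definition partitioned by their current state $q$, and within a single state $q$ either all quadruples are symbol quadruples (discriminated by the scanned symbol $s$) or all are shift quadruples --- this standard normal form for RTMs is exactly what rules out, for example, a $\atom{Machine}~q~(\atom{Tape}~\ell~x~r)$ pattern competing with a $\atom{Machine}~q~(\atom{Tape}~\ell~(\atom{Sym}~s)~r)$ pattern. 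Backward disjointness follows from the injectivity of the RTM's inverse transition function, after observing that $\atom{Left}$ and $\atom{Right}$ are themselves partial bijections on tapes. Once this disjointness is established, \Cref{thm:mrev} gives microscopic reversibility of the simulation for free, and the absence of auxiliary output variables in every translated rule guarantees that no history or garbage term accumulates, completing the proof.
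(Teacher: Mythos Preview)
Your proposal is correct and follows essentially the same approach as the paper: a direct rule-by-rule translation of the RTM transition table into \textAleph\ rules over the tape representation of \Cref{lst:aleph-tape}, with the initial and final states marked as halting.

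The differences are presentational rather than substantive. The paper works with Bennett's multi-tape formulation, where a single rule may simultaneously read/write on some tapes and shift others, and exhibits one worked example of the translation before asserting that faithfulness ``is obvious from the high level semantics''. You instead specialise to the single-tape quadruple normal form and are considerably more explicit about the point the paper elides: verifying that the translated ruleset passes the forward/backward disjointness check, by tracing this back to the determinism and injectivity hypotheses on the RTM. Your argument buys rigour at the cost of a slightly less direct treatment of multi-tape machines (though the extension to $\atom{Machine}~q~t_1~\cdots~t_k$ is immediate); the paper's buys brevity and handles the general case directly, but leaves the ambiguity check to the reader.
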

\begin{proof}
  An RTM is a collection of one or more bi-infinite tapes divided into squares, each of which can be blank or can contain a symbol, as well as a machine head with an internal state. Symbols are drawn from a finite alphabet, and internal states from a distinct finite alphabet. At any time, the RTM head looks at a single square on each tape and executes one of a finite set of reversible rules. The rule chosen is uniquely determined by the current state of the system. Each rule depends on the current internal state of the machine head, which it may alter, and for each tape it can either ignore its value, possibly moving the tape one square to the left or right, or it can depend on the square taking a certain value $u$, which it can replace with a certain other value $v$.

  For example, a 6-tape RTM with symbol alphabet $\{A,B,C,D,E,F\}$ and state alphabet $\{\atom{Start},S_1,S_2,\atom{Stop}\}$ might have a rule $S_1~C~\varnothing~/~D~/~/~\rightarrow~B~A~-~F~0~+~S_2$. This rule applies if and only if the current internal state is $S_1$, the values of tapes 1 and 4 are the symbols $C$ and $D$ respectively, and tape 2 is blank. If so, it will change the internal state to $S_2$, replace the values of tapes 1, 2 and 4 with the symbols $B$, $A$ and $F$ respectively, and move tapes 3 and 6 one square to the left and right respectively, leaving tape 5 alone. The reverse of the rule is given by $S_2~B~A~/~F~/~/~\rightarrow~C~\varnothing~+~D~0~-~S_1$. The necessary conditions for the ruleset to be deterministic and reversible are that all the domains of the forward rules are mutually exclusive, as are all the domains of the reversed rules. Of course, the domains of the forward and reverse rules will intersect in any useful program.

  It is easy to translate any description of an RTM into \textAleph. Bi-infinite tapes can be easily represented and manipulated, per the program in \Cref{lst:aleph-tape}, and any rule (such as the above) can be mechanically translated as so,
  \begin{align*}
    &\phantom{{}={}} {S_1}~%
        (\atom{Tape}~{\ell_1}~(\atom{Sym}~C)~{r_1})~%
        (\atom{Tape}~{\ell_2}~\atom{Blank}~{r_2})~{t_3}~%
        (\atom{Tape}~{\ell_4}~(\atom{Sym}~D)~{r_4})~{t_5}~{t_6} \\
    &= {S_2}~%
        (\atom{Tape}~{\ell_1}~(\atom{Sym}~B)~{r_1})~%
        (\atom{Tape}~{\ell_2}~(\atom{Sym}~A)~{r_2})~{t_3'}~%
        (\atom{Tape}~{\ell_4}~(\atom{Sym}~F)~{r_4})~{t_5}~{t_6'}{:} \\
    &\phantom{{}={}}\qquad {t_3}~\infix{\atom{Left}}~{t_3'}{.} \\
    &\phantom{{}={}}\qquad {t_6}~\infix{\atom{Right}}~{t_6'}{.}
  \end{align*}
  whilst the special \atom{Start} and \atom{Stop} states are marked thus,
  \begin{align*}
    & \haltSym~\atom{Start}~{t_1}~{t_2}~{t_3}~{t_4}~{t_5}~{t_6}{;} \\
    & \haltSym~\atom{Stop}~{t_1}~{t_2}~{t_3}~{t_4}~{t_5}~{t_6}{;}
  \end{align*}
  That this translation faithfully reproduces the operation of the RTM is obvious from the high level semantics of \textAleph.
\end{proof}
\begin{crl}The \textslAleph-calculus is Turing Complete.\end{crl}
\begin{proof}
  \textcite{bennett-tm} proved that a Reversible Turing Machine can simulate any Turing Machine (and vice-versa), and so the corollary follows immediately from \textAleph's Reversible-Turing completeness. To drive the point home, however, we also implement the $\mu$-Recursive Functions (\Cref{lst:aleph-murec})---three functions, and three composition operators, which together are capable of representing any computable function over the naturals and hence are Turing complete. For example, addition, multiplication and factorials can be defined in terms of each other (in our \textAleph\ realisation) as
  \begin{align*}
    \textit{add} &\mapsto (\atom{Rec}~(\atom{Proj}~1)~(\atom{Sub}~\atom{Succ}~[(\atom{Proj}~2)])) \\
    \textit{mul} &\mapsto (\atom{Rec}~(\atom{Const}~0)~(\atom{Sub}~\textit{add}~[(\atom{Proj}~2)~(\atom{Proj}~3)])) \\
    \textit{fac} &\mapsto (\atom{Rec}~(\atom{Const}~1)~(\atom{Sub}~\textit{mul}~[(\atom{Proj}~2)~(\atom{Sub}~\atom{Succ}~[(\atom{Proj}~1)])]))
  \end{align*}
  and then one can evaluate, e.g., $(\atom{Mu}~\textit{fac})~{[7]}~{\unit}$, obtaining $\unit~5040~\textit{garbage}~(\atom{Mu}~\textit{fac})$.
\end{proof}

\section{Implementation Concerns}
\label{sec:impl}

Although the \textAleph-calculus is microscopically reversible by \Cref{thm:mrev}, \textAleph\ has many degrees of freedom wherein macroscopic reversibility can be violated. There are two sources of entropy generation; the first is ambiguity in rule application, and the second is intrinsic to any useful implementation of concurrency. We already saw in \Cref{sec:ex2} how concurrency leads to entropy generation, but in this section we will expand on the other mechanism of entropy generation and how it can be avoided at the compiler-level. This is important because the primary motivation for reversible programming is to avoid unexpected entropy leaks, and so generally the appearance of ambiguity is a bug rather than intentional. In addition to ambiguity, we will also discuss other aspects of a realistic implementation to avoid or minimise the presence of random walks: optimising the evaluation order of sub-rules (`serialisation') and inferring the direction of computation. A reference implementation of these algorithms (as well as additional sugar) is made available as a language and interpreter, \alethe\ (see \Cref{sec:alethe}).

Nevertheless, in some cases the appearance of ambiguity/non-determinism may be intentional and so we should provide the programmer a mechanism to \emph{explicitly} weaken the ambiguity checker when desired. An example of this might be in implementing a fair coin toss for generating randomness,
\begin{align*}
  & \infix{\atom{Coin}}~\atom{Tails}{;} \\
  & \infix{\atom{Coin}}~\atom{Heads}{;}
\end{align*}
If the computational architecture is Brownian, such as a molecular computer, then this can be used to exploit the thermal noise of the environment to get as close as classical physics allows to a true random number generator (RNG). The way this RNG is used is by constructing a term $\atom{Coin}~{\unit}$. This term then matches both rules, and so depending on which is chosen the term may evolve to ${\unit}~\atom{Tails}~\atom{Coin}$ or to ${\unit}~\atom{Heads}~\atom{Coin}$. Moreover, this process will depend on the probability distribution realised by the implementation; if we assume this is uniform\footnote{A more sophisticated implementation might reify and expose control over the distribution to the programmer, allowing arbitrary probabilities to be assigned. Going further, an interesting research direction would be to what extent \textAleph\ can be augmented quantum mechanically.}, then the coin toss will be fair with each term observed with probability $\frac12$. If the probability distribution is non-uniform, the situation is less clear and depends on the exact rates of the forward and reverse transitions for each rule. The analysis is beyond the scope of this chapter, but if the dynamics takes the form
\begin{align*}
  \ce{ ${\unit}~\atom{Tails}~\atom{Coin}$ <=>[$\lambda_t'$][$\lambda_t$] $\atom{Coin}~{\unit}$ <=>[$\lambda_h$][$\lambda_h'$] ${\unit}~\atom{Heads}~\atom{Coin}$ },
\end{align*}
then the steady-state ratio of heads to tails will be $\frac{\lambda_h/\lambda_h'}{\lambda_t/\lambda_t'}$. By microscopic reversibility, we expect $\lambda_h=\lambda_h'$ and similarly for $\lambda_t$ (unless the system is biased away from equilibrium), and therefore even a non-uniform distribution will have a uniformly distributed steady-state. Note the interesting property that, in reverse, an ambiguous/non-deterministic reversible rule is indistinguishable from irreversibility. Here, for example, the inverse of a fair coin toss is a process that consumes\footnote{There is a subtle point to make here: if the bit being consumed has no computational content and is uniformly distributed, then this consumption does not generate any entropy. That is, drawing a random bit from the environment and then replacing it is a completely isentropic process. For non-uniform distributions, the process can also be isentropic provided that the producer/consumer has a matching distribution. The problem arises when the distribution of the bit does not fit that of the producer/consumer, with deterministic computation being an extreme case wherein the bit can only take on a prescribed value (this is true even if the bit is pseudorandom).}\ a bit.

\para{Ambiguity Checker}

To exclude these sorts of process, we thus introduce an ambiguity checker. Not only does the introduction of an ambiguity checker reassure the programmer that they are writing information-preserving code, but the algorithm also serves as an invaluable debugging tool. In a large codebase, it can be hard to keep track of all the different patterns employed (the standard library has nearly two thousand), let alone ensure that there are no ambiguities. The ambiguity checker in \alethe\ performs this check for the programmer and, most importantly, is able to tell the programmer where the ambiguities lie.

In an unambiguous program, there are only a few valid scenarios for each term. It can match two computational rules (where we think of the forward and reverse directions of each rule as distinct rules for the current purpose), in which case it is an intermediate state of a computation. It can match one computational rule and one or more halting rules, in which case it is a terminal state of the computation. It can match one computational rule, in which case the program has entered an erroneous state and stalled. It can match one or more halting rules, in which case the term has no computational capacity, and most likely represents a data value. Lastly, it can match no rules, in which case it is an invalid term that cannot be constructed under normal conditions.

The remaining possibilities can be summarised two-fold: a term can match two computational rules and one or more halting rules, in which case the computational path from the terminus is ambiguous, or a term can match more than two computational rules (and zero or more halting rules), which is the more obvious ambiguity scenario.

The above cases can be simplified: for a given term, consider all the rules it matches but coalesce all halting rules, if any, into a single rule. Then, a term is unambiguous if it matches at most two such rules, and ambiguous if it matches more than two.

Clearly it is impractical to test this for all possible terms, there being a countable infinity of them; instead, one can determine whether it is possible to construct an ambiguous term. To do so, build a graph $G$ whose nodes are all the patterns occurring on either side of a computational rule---which we will colour white for reasons that will become apparent---and all the patterns occurring in a halting rule---which we will colour black. An edge is drawn between two nodes if it is possible to construct a term satisfying both patterns. This condition can be determined inductively: if either pattern is variable, then draw an edge. If neither pattern is variable, then draw an edge only if they are both the same atom, or they are both composite terms of the same length and their respective sub-patterns matches pairwise. In order to proceed, we first prove a lemma about this graph.

\begin{lem}There exists a term that simultaneously satisfies each of a set of patterns $\Pi$ if and only if the subgraph of $G$ formed from the nodes $\Pi$ is complete.\label{lem:amb-subgraph}\end{lem}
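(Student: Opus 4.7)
The forward direction is immediate from the definition of an edge in $G$: if a single term $t$ satisfies every pattern in $\Pi$, then $t$ is a joint witness for every pair $(\pi_i,\pi_j) \in \Pi \times \Pi$, so every such pair is connected by an edge and the induced subgraph is complete. The substance of the lemma lies in the converse, which I plan to prove by structural induction on the patterns in $\Pi$, using the maximum syntactic depth as the induction measure.

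For the induction, the first observation is that variable patterns are vacuous constraints: if $\pi \in \Pi$ is a variable, any term matches $\pi$, so I can reduce to $\Pi \setminus \{\pi\}$, whose induced subgraph remains complete (removing a node preserves completeness). Without loss of generality, then, every pattern in $\Pi$ is either an atom or a composite. Pairwise compatibility of two non-variable patterns requires either that they are the same atom, or that they are composite patterns of a common length whose sub-patterns are position-wise compatible. Hence the non-variable patterns in $\Pi$ must agree on one of these two options: either they are all a single atom $\alpha$---in which case $\alpha$ itself witnesses joint satisfiability---or they are all composite of some common length $n$.

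For the composite case, write each $\pi_j \in \Pi$ as $(\pi_j^1~\cdots~\pi_j^n)$, and for each position $i \in \{1,\ldots,n\}$ let $\Pi^i = \{\pi_j^i : \pi_j \in \Pi\}$. The inductive definition of an edge guarantees that whenever $\pi_j$ and $\pi_k$ share an edge, so do $\pi_j^i$ and $\pi_k^i$ for every $i$; thus each $\Pi^i$ forms a complete subgraph over its sub-patterns, and the induction hypothesis supplies a term $t_i$ satisfying every pattern in $\Pi^i$. Assembling these into the composite term $(t_1~\cdots~t_n)$ then yields a term satisfying every pattern in $\Pi$.

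The principal obstacle I anticipate is handling non-linearity: a pattern such as $(A~x~x)$ couples the constraints at distinct positions, so the projected subgraphs $\Pi^i$ would no longer be independent and the position-wise recursion would fail to detect the cross-coupling. I expect to either invoke a linearity restriction on \textAleph{} patterns---which the examples in the paper appear to implicitly assume---or, failing that, to augment the inductively constructed terms with a joint system of variable-equality constraints and show these are consistently satisfiable whenever the pairwise compatibility holds. A secondary subtlety concerns halting markers on composite sub-terms required by the unification rules; since any sub-term can be wrapped in a halting marker without disturbing the match, this should not obstruct the construction.
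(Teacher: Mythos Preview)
Your proposal is correct and follows essentially the same route as the paper: dismiss variables as vacuous, observe that the remaining non-variable patterns must either all be a common atom or all be composites of a common length, and in the composite case project position-wise to obtain complete subgraphs on the sub-patterns, recurse, and reassemble. The paper phrases its induction slightly differently (taking $|\Pi|\le 2$ as base cases and then casing on the form of a chosen $\pi\in\Pi$), but the substance is identical to your depth-based induction. Your concern about non-linear patterns is legitimate but is not addressed in the paper's proof either; the edge condition and unification semantics as stated treat sub-pattern positions independently, so the lemma as proved in the paper implicitly relies on the same linearity assumption you identify.
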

\begin{proof}
  The $(\implies)$ case is obvious. We prove the converse by structural induction. As base cases, we note that the statement is trivially true if $|\Pi|=0,1,2$. Now consider the possible forms of some $\pi\in\Pi$:
  \begin{itemize}
    \item[(\bnfPrim{atom})] If $\pi$ is an atom, $a$, then the only term satisfying it is the same atom, $a$. As $\pi$ has edges to every other pattern in $\Pi$, $a$ must satisfy each of these other patterns too.
    \item[(\bnfPrim{var})] Suppose there is a term $t$ that satisfies all the patterns $\Pi\setminus\{\pi\}$. As $t$ satisfies the variable pattern $\pi$ vacuously, $t$ must satisfy all of $\Pi$. Without loss of generality then, we can ignore all variable patterns.
    \item[($\tau_1\cdots\tau_n$)] If $\pi$ is a composite term, then there can only be an edge to the other patterns if these are composite terms of the same length or are the variable pattern. By the previous case, we can ignore these variable patterns. Now, index the patterns by $i=1\ldots m$ and write $\pi_i=(\tau^{(i)}_1\cdots\tau^{(i)}_n)$. Then construct graphs $H_1\cdots H_n$ such that the nodes of $H_j$ are the patterns $\{\tau^{(i)}_j:i=1\ldots m\}$. If we draw edges according to the same rules as for $G$, then by the inductive definition of the edge condition for $G$, each of the graphs $H_i$ will be complete. By inductive assumption, a term $t_i$ exists satisfying all patterns in $H_i$ for each $i$, and therefore the term $(t_1\cdots t_n)$ must satisfy all the patterns $\Pi$.
  \end{itemize}
  The lemma follows.
\end{proof}

Using \Cref{lem:amb-subgraph}, we see that the different cases of unambiguous and ambiguous terms reduce to considering the structures of the complete subgraphs of $G$. Moreover, it suffices to only consider subgraphs formed from three nodes, i.e.\ \emph{triangles}: the first ambiguous case, of two white nodes and one or more black nodes, implies the existence of a triangle with two white nodes and one black node; the second case, of three or more white nodes and zero or more black nodes, implies the existence of a triangle with three white nodes. Conversely, a triangle with two or more white nodes implies one of these ambiguous cases. The only triangles present for the unambiguous cases contain two or more black nodes. Therefore we can enumerate all the triangles of $G$ containing two or more white nodes; if there are any, then there is an ambiguity and the patterns present in the triangles should be reported to the programmer. If there are none, then the program is unambiguous and deterministic.

In fact, there is one more potential source of ambiguity. Suppose that the two patterns of a sub-rule are not orthogonal (a common occurrence), then it is in principle possible that a halting term might match either side, and therefore lead to a 2-way ambiguity. The evaluation rules of our reference interpreter for \alethe\ do not suffer from this ambiguity, as it determines ahead of time the order and directionality of all sub-rules, but it is possible for a faithful implementation of \textAleph\ to suffer so. There are two ways to avoid this: the first is to require sub-rule patterns be orthogonal, though this also forbids many legal unambiguous programs---requiring additional levels of tedious and unnecessary indirection to circumvent the restriction---and so is undesirable; the second is to apply type inference (work towards which is presented in \Cref{app:typing}) to gain the extra knowledge needed to distinguish between legal and ambiguous programs. Namely, type inference allows the compiler to infer what sequences of patterns and terms occur in a program, and hence to determine which halting patterns are computationally connected. With this knowledge, and knowledge of the types of variables in a rule, it can determine what forms the halting terms of a sub-rule will take and therefore whether there is a unique mapping between terms and patterns or not.

\para{Serialisation Heuristics}

The Brownian semantics of \textAleph, particularly in the absence of coupling to a bias source, are not very performant. Specifically, sub-rule transitions execute a random walk in a phase space whose size---in the worst case---scales exponentially with the number of variables present in the current rule. Whilst explicit bias coupling circumvents this by essentially serving to annotate the preferred order and directionality of the sub-rules, it undermines the declarative nature of \textAleph. Instead, it is possible to algorithmically infer an appropriate execution path. Moreover, it can sometimes be the case (such as in the example of fractional addition) that the execution path is non-trivial and so granting the compiler the power to perform this routing automatically makes the job of the programmer easier: the programmer need only specify how all the variables relate to one another, and may even specify this excessively, and the compiler can figure out which relations are sensible to use.

\begingroup
 \def\fr#1#2{#1\!/\!#2}%
 \def\Fr#1#2{(\atom{Frac}~{#1}~{#2})}%
 \def\ruleLabel#1{\llap{\tiny$#1.$}}%
\begin{figure}
  \centering
  \begin{subfigure}{\textwidth}
    \centering
    \input{fig-tg-frac.tikz}
    \vspace{0.5em}
    \caption{The full transition graph. Due to the size of the graph, labels have been suppressed. Halting states are marked black.}
    \label{fig:tg-frac-full}
  \end{subfigure}
  \begin{subfigure}{\textwidth}
  \centering
    \begingroup
 \def\fr#1#2{#1\!/\!#2}%
 \def\Fr#1#2{(\atom{Frac}~{#1}~{#2})}%
 \def\ruleLabel#1{\llap{\tiny$#1.$}}%
  \pgfmathsetseed{2736}
  \begin{tikzpicture}[
    every state/.style={inner sep=-0.5ex},
    decoration={random steps,amplitude=.375ex,segment length=.75ex,
                pre=lineto,pre length=1ex,post=lineto,post length=1ex},
    rounded corners=.15ex
  ]
    \long\def\hag(#1) [#2] #3{\node[state,accepting] (#1) [#2,decorate] {$\begin{array}{c}#3\end{array}$}}
    \long\def\bag(#1) [#2] #3{\node[state] (#1) [#2,decorate] {$\begin{array}{c}#3\end{array}$}}
    \hag (A) [] {\fr ab\\\,p~q};
    \bag (B) [right=of A] {\fr cd~g\\p~q};
    \bag (C1) [above=of B] {\fr cd~g\\p~q~q^2\!\!};
    \bag (D1) [right=of C1] {\fr cd~g\\p~q~g'\!\!};
    \bag (E1) [right=of D1] {\fr ab~g'\!\\p~q};
    \bag (C2) [below=of B] {\fr cd~g\\p'\!~q};
    \bag (D2) [right=of C2] {\fr ab~g'\!\\p'\!~q~g};
    \bag (E2) [right=of D2] {\fr ab~g'\!\\p'\!~q~q^2\!\!};
    \bag (F) [below=of E1] {\fr ab~g'\!\\p'\!~q};
    \bag (G) [right=of F] {\fr cd\\p'\!~q};
    \hag (H) [right=of G] {\fr cd\\\,p~q};
    \path (A) edge[->,above] node {2} (B)
          (B) edge[->,left] node {4} (C1)
              edge[->,left] node {1} (C2)
          (C1) edge[<-,above] node {5} (D1)
          (D1) edge[<-,above] node {2} (E1)
          (E1) edge[->,right] node {1} (F)
          (C2) edge[->,above] node {3} (D2)
          (D2) edge[->,above] node {6} (E2)
          (E2) edge[<-,right] node {4} (F)
          (F) edge[<-,above,shorten >=0.1pt] node {3} (G)
          (G) edge[<-,above,shorten >=0.8pt] node {1} (H);
  \end{tikzpicture}
\endgroup
    \vspace{0.5em}
    \caption{A more practical excerpt of the full transition graph.}
    \label{fig:tg-frac-excerpt}
  \end{subfigure}
  \def\en{$-$}%
  \caption[Two views of the transition graph for the fraction-addition routine.]{Two views of the transition graph for the fraction-addition routine. Nodes correspond to bags of variables in a `known' state. An edge with label $\ell$ is drawn from node $\alpha$ to node $\beta$ if sub-rule $\ell$ can map the knowledge state $\alpha$ to the knowledge state $\beta$ (with possible variable duplication/elision). It can be seen that there are two paths from $\{\fr ab,p,q\}$ to $\{\fr cd,p,q\}$ worth considering: \fwd2\en\fwd4\en\bwd5\en\bwd2\en\fwd1\en\bwd3\en\bwd1, and \fwd2\en\fwd1\en\fwd3\en\fwd6\en\bwd4\en\bwd3\en\bwd1. These are of roughly equal computational complexity, and so either is a viable choice.}
  \label{fig:tg-frac}
\end{figure}
To illustrate the problem, consider the following program excerpt which adds two simplified fractions:
  \begin{align*}
    & {\fr ab}~\infix{{+}~{\Fr pq}}~{\fr cd}{:} \\
    \ruleLabel{1}&\qquad {p}~{-}~{p'}{.} \\
    \ruleLabel{2}&\qquad {\fr ab}~\infix{\atomLocal{}~{\Fr pq}}~{\fr cd}~{g}{.} \\
    \ruleLabel{3}&\qquad {\fr cd}~\infix{\atomLocal{}~{\Fr{p'}q}}~{\fr ab}~{g'}{.} \\
    \ruleLabel{4}&\qquad {(\S q)}~{\square}~{q^2}{.} \\
    \ruleLabel{5}&\qquad {g'}~\infix{{\times}~{g}}~{q^2}{.} \\
    \ruleLabel{6}&\qquad {g}~\infix{{\times}~{g'}}~{q^2}{.}
  \end{align*}
where a fraction is represented by $\frac{p}{q+1} \equiv \Fr pq$ with $p\in\mathbb Z$ and $q\in\mathbb N$. Maintaining the invariant that fractions are in their simplest form is non-trivial, but involves finding the greatest common divisor of the numerators and denominators of two intermediate results, $g$ and $g'$, and then showing that their product $gg'=(q+1)^2$. It turns out that this fact can then be used to eliminate the intermediate garbage. In the unassisted semantics of \textAleph, the graph of all intermediate states generated by applying these rules transitively is given by \Cref{fig:tg-frac-full}. This graph has 117 nodes. It can therefore be seen that a computation is very likely to get stuck in one of the 115 intermediates, and will take a long time to find its way from the input variables, $\{\fr ab,p,q\}$, to the output variables, $\{p,q,\fr cd\}$. Many of these states are not interesting computational, and indeed only a small subgraph of 11 (or even 8) nodes is needed to perform the desired computation as shown in \Cref{fig:tg-frac-excerpt}.

From this transition subgraph, we can see that there are two possible paths to get from the input variables to the output variables, and the goal of the serialisation algorithm is to find these paths and to pick the most optimal. Here both of these paths are of equal complexity, and so either is valid. Having made a choice, the compiler can then direct the computer to perform the prescribed series of transitions, rather than executing a random walk. In a Brownian computational system, for example, this would be achieved by adding `fake' dependencies to the sub-rules to force a linear ordering (or possibly partially parallel, where this makes sense) and automatically coupling the sub-rules to the bias source in the correct direction. As we shall see, automatically determining the cost of a path is non-trivial and sometimes even impossible, and so a perfect solution to the serialisation problem is not generally possible. Nonetheless, for many programs it is possible to find the appropriate path, or to give the compiler enough information to make a better choice, and so further discussion of serialisation is warranted.
\endgroup

\doublepage{%
    \begin{listing}[!p]
      \centering
      \begingroup%
\def\nl{\\[1.5em]}%
\def\nfx#1{\infix{\atom{#1}}}%
\def\infx#1#2{\infix{\atom{#1}~{#2}}}%
\def\ruleLabel#1{\llap{\tiny$#1.$}}%
\begin{sublisting}{\linewidth}
  \centering%\vspace{-.5em}
  \begin{align*}
    & \dataDef{\atom{Tree}~{\alpha}} = \atom{Tree}~{\alpha}~{[\atom{Tree}~{\alpha}]} \\
    & \atom{polish}~{:\kern0.2ex:}~\atom{Tree}~{\alpha}~{\rightarrow}~{[(\atom{Int},{\alpha})]} \\
    & \atom{polish}~{(\atom{Tree}~{x}~\cursivexS)} = (\atom{length}~\cursivexS, x)~{:}~\atom{concatMap}~\atom{polish}~\cursivexS
  \end{align*}
  \caption{A snippet of \haskell\ code for converting an arbitrary tree into Polish notation. Polish notation is an isomorphic linear representation of tree-like structures that is well known for its use for arithmetic expressions. For example, the expression $(3 + 4) \times 7 - 2^5$ can be represented as ${-}~{\times}~{+}~{3}~{4}~{7}~{\raisebox{-0.5ex}{${}^\wedge$}}~{2}~{5}$.}
  \label{lst:serial-polish-hs}
\end{sublisting}
\begin{sublisting}{\linewidth}
  \centering\vspace{-.5em}
  \begin{minipage}[t]{.6\linewidth}
    \begin{align*}
      & (\atom{Tree}~{x}~\cursivetS)~\nfx{Polish}~{[({,}~{x}~{n})~{\cdot}~{p}]}{:} \\
      \ruleLabel{1}&\qquad \infx{Length}\cursivetS~{n}{.} \\
      \ruleLabel{2}&\qquad \cursivetS~\infx{ConcatMap}{\atom{Polish}}~{p}~\cursivelS{..} \\
      \ruleLabel{3}&\qquad {p}~\infx{PolishReads}{n}~\cursivetS~\cursivelS{.}
    \end{align*}\vspace{-1em}
  \end{minipage}%
  \begin{minipage}[t]{.3\linewidth}%
    \begin{equation*}\begin{array}{|rl}
      ~ & \dataDef{\atom{Tree}~{x}~\cursivetS}{;} \\
        & \dataDef{{,}~{a}~{b}}{;} \\
        &\quad \comment{\cmtSLopen~2-tuples}
    \end{array}\end{equation*}
  \end{minipage}
  \caption{An excerpt of the program in (e). This is a rough translation of the \haskell\ implementation into \alethe. Unfortunately, the implementation of \atom{ConcatMap} in \alethe\ has additional garbage in the form of a list of the lengths of the intermediate lists ($\cursivelS$). To eliminate this garbage, we write a function \atom{PolishReads} which takes a string of trees in Polish notation and converts them back into trees, also generating the same garbage value.}
  \label{lst:serial-polish-exc}
\end{sublisting}
\begin{sublisting}{\linewidth}
  \centering\vspace{.5em}
  \pgfmathsetseed{18976}
  \begin{tikzpicture}[
    every state/.style={inner sep=-0.5ex},
    decoration={random steps,amplitude=.375ex,segment length=.75ex,
                pre=lineto,pre length=1ex,post=lineto,post length=1ex},
    rounded corners=.15ex
  ]
    \long\def\hag(#1) [#2] #3{\node[state,accepting] (#1) [#2,decorate] {$\begin{array}{c}#3\end{array}$}}
    \long\def\bag(#1) [#2] #3{\node[state] (#1) [#2,decorate] {$\begin{array}{c}#3\end{array}$}}
    \hag (A) [] {x\\\cursivetS};
    \bag (B) [right=of A] {x~n\\\cursivetS};
    \bag (C) [right=of B] {x~n\\p~\cursivelS};
    \bag (D) [right=of C] {x~n\\\cursivetS~\cursivelS};
    \hag (E) [right=of D] {x~n\\p};
    \path (A) edge[->,above,shorten >=0.3pt] node {1} (B)
          (B) edge[->,above,shorten >=0.5pt] node {2} (C)
          (C) edge[<-,above,bend left,shorten <=0.8pt,shorten >=0pt] node {2} (D)
              edge[->,below,bend right,shorten <=0.2pt] node {3} (D)
          (D) edge[<-,above,shorten >=0.8pt] node {3} (E);
  \end{tikzpicture}\vspace{.5em}
  \def\en{$-$}%
  \caption{The motif employed above---generating the garbage in two different ways---can be used to eliminate it and achieve the desired (partial) bijection, as shown by its transition graph. There are two possible execution paths, \smfwd1\en\smfwd2\en\smbwd2\en\smbwd3 and \smfwd1\en\smfwd2\en\smfwd3\en\smbwd3. As explained in the text, the second route is \emph{strongly} preferred--- hence the cost annotation on \ruleName{2}.}
  \label{lst:serial-polish-tg}
\end{sublisting}
\begin{sublisting}{\linewidth}
  \centering
  \begin{minipage}[t]{.58\linewidth}
    \begin{align*}
      & {t}~\nfx{Polish}~{p}{:} \\
      &\qquad {p}~\nfx{PolishRead}~{[]}~{t}{.}
      \nl
      & {[{({,}~{x}~{n})}~{\cdot}~\cursivexS]}~\nfx{PolishRead}~{\cursivexS'}~{(\atom{Tree}~{x}~\cursivetS)}{:} \\
      &\qquad \cursivexS~{n}~\nfx{PolishReads}~{\cursivexS'}~\cursivetS{.}
    \end{align*}\vspace{-1em}
  \end{minipage}%
  \begin{minipage}[t]{.4\linewidth}
    \begin{align*}
      & \cursivexS~{\Z}~\nfx{PolishReads}~\cursivexS~{[]}{;} \\
      & \cursivexS~{(\S n)}~\nfx{PolishReads}~{\cursivexS''}~{[{t}~{\cdot}~\cursivetS]}{:} \\
      &\qquad \cursivexS~\nfx{PolishRead}~{\cursivexS'}~{t}{.} \\
      &\qquad {\cursivexS'}~{n}~\nfx{PolishReads}~{\cursivexS'}~\cursivetS{.}
    \end{align*}
  \end{minipage}
  \caption{In fact, programming in a `reversible-first' manner permits us to find the far more efficient and naturally bijective program above. This program was written by considering how to read a tree from its Polish representation: In our experience, picking the `harder' direction of a bijective computation to implement helps to suppress one's learned intuition for irreversible programming, and hence makes it less likely to fall into the traps of `shortcuts' with their accompanying garbage.}
  \label{lst:serial-polish-nice}
\end{sublisting}
\endgroup
      \caption[An example of serialisation as applied to the interconversion between trees and Polish notation.]{An example of serialisation as applied to the interconversion between trees and Polish notation... (continued on next page)}
      \label{lst:serial-polish}
    \end{listing}%
}{%
    \begin{listing}[!p]
      \ContinuedFloat%
      \centering%
      \begin{sublisting}{\linewidth}
        \centering
        \begingroup%
\def\nl{\\[1.5em]}%
\def\nfx#1{\infix{\atom{#1}}}%
\def\infx#1#2{\infix{\atom{#1}~{#2}}}%
\begin{minipage}[t]{0.45\linewidth}\begin{align*}
  & \infx{Length}{[]}~{\Z}{;} \\
  & \infx{Length}{[{x}~{\cdot}~\cursivexS]}~{(\S\ell)}{:} \\
  &\qquad \infx{Length}\cursivexS~{\ell}{.}
  \nl
  & \infx{Sum}{[]}~{\Z}{;} \\
  & \infx{Sum}{[\Z~{\cdot}~\cursivenS]}~{\sigma}{:} \\
  &\qquad \infx{Sum}\cursivenS~{\sigma}{.} \\
  & \infx{Sum}{[{(\S n)}~{\cdot}~\cursivenS]}~{(\S\sigma)}{:} \\
  &\qquad \infx{Sum}{[{n}~{\cdot}~\cursivenS]}~{\sigma}{.}
  \nl
  & {[]}~\infx{Map}{f}~{[]}{;} \\
  & {[{x}~{\cdot}~\cursivexS]}~\infx{Map}{f}~{[{y}~{\cdot}~\cursiveyS]}{;} \\
  &\qquad {x}~\infix{f}~{y}{.} \\
  &\qquad \cursivexS~\infx{Map}{f}~\cursiveyS{.}
  \nl
  & \infx{Map}{{f}~{[]}}~{[]}{;} \\
  & \infx{Map}{{f}~{[{x}~{\cdot}~\cursivexS]}}~{[{y}~{\cdot}~\cursiveyS]}{;} \\
  &\qquad \infix{f~x}~{y}{.} \\
  &\qquad \infx{Map}{f~\cursivexS}~\cursiveyS{.}
  \nl
  & {[]}~\nfx{Concat}~{[]}~{[]}{;} \\
  & {[{[]}~{\cdot}~\cursivexSS]}~\nfx{Concat}~\cursiveyS~{[{\Z}~{\cdot}~\cursivelS]}{:} \\
  &\qquad \cursivexSS~\nfx{Concat}~\cursiveyS~\cursivelS{.} \\
  & {[{[x~{\cdot}~\cursivexS]}~{\cdot}~\cursivexSS]}~\nfx{Concat}~{{x}~{\cdot}~\cursiveyS}~{[{(\S\ell)}~{\cdot}~\cursivelS]}{:} \\
  &\qquad {[\cursivexS~{\cdot}~\cursivexSS]}~\nfx{Concat}~\cursiveyS~{[{\ell}~{\cdot}~\cursivelS]}{.}
  \nl
  & \cursivexS~\infx{ConcatMap}{f}~\cursiveyS~\cursivelS{:} \\
  &\qquad \cursivexS~\infx{Map}{f}~\cursiveySS{.} \\
  &\qquad \cursiveySS~\nfx{Concat}~\cursiveyS~\cursivelS{.}
\end{align*}\vspace{-1em}\end{minipage}%
\quad\def\nl{\\[1.805em]}% stretch the right column to match the height of the left...
\begin{minipage}[t]{.45\linewidth}\begin{align*}
  & \dataDef{\atom{Tree}~{x}~\cursivetS}{;} \\
  & \dataDef{{,}~{a}~{b}}{;}
  \nl
  &{n}~\infx{TreeSize'}{(\atom{Tree}~{x}~\cursivetS)}~{(\S n')}{:} \\
  &\qquad {n}~\infix{\atomLocal{Go}~\cursivetS}~{n'}{.} \\
  &\qquad {n}~\infix{\atomLocal{Go}~{[]}}~{n}{;} \\
  &\qquad {n}~\infix{\atomLocal{Go}~{[{t}~{\cdot}~\cursivetS]}}~{n''}{:} \\
  &\qquad\qquad {n}~\infx{TreeSize'}{t}~{n'}{.} \\
  &\qquad\qquad {n'}~\infix{\atomLocal[2]{Go}~\cursivetS}~{n''}{.}
  \nl
  & \infx{TreeSize}{t}~{n}{:} \\
  &\qquad \Z~\infx{TreeSize'}{t}~{n}{.}
  \nl
  & (\atom{Tree}~{x}~\cursivetS)~\nfx{Polish}~{[({,}~{x}~{n})~{\cdot}~{p}]}{:} \\
  &\qquad \infx{Length}\cursivetS~{n}{.} \\
  &\qquad \cursivetS~\infx{ConcatMap}{\atom{Polish}}~{p}~\cursivelS{.} \\
  &\qquad {p}~\infx{PolishReads}{n}~\cursivetS~\cursivelS{.}
  \nl
  & {[{({,}~{x}~{n})}~{\cdot}~\cursivexS]}~\nfx{PolishRead}~{\cursivexS'}~{(\atom{Tree}~{x}~\cursivetS)}~{(\S\ell)}{:} \\
  &\qquad \cursivexS~\infx{PolishReads}{n}~{\cursivexS'}~\cursivetS~\cursivelS{.} \\
  &\qquad \infx{Length}\cursivelS~{n}{.} \\
  &\qquad \infx{Sum}\cursivelS~{\ell}{.} \\
  &\qquad \infx{Map}{\atom{TreeSize}~\cursivetS}~\cursivelS{.}
  \nl
  & \cursivexS~\infx{PolishReads}{\Z}~\cursivexS~{[]}~{[]}{;} \\
  & \cursivexS~\infx{PolishReads}{(\S n)}~{\cursivexS''}~{[{t}~{\cdot}~\cursivetS]}~{[{\ell}~{\cdot}~\cursivelS]}{:} \\
  &\qquad \cursivexS~\nfx{PolishRead}~{\cursivexS'}~{t}~{\ell}{.} \\
  &\qquad {\cursivexS'}~\infx{PolishReads}{n}~{\cursivexS'}~\cursivetS~\cursivelS{.}
\end{align*}\end{minipage}%
\endgroup
        \caption{The full \alethe\ program implementing interconversion between trees and Polish notation, following the \haskell\ snippet in (a). In fact, a little thought and optimisation will reveal that the $\ell$ outputs of the read functions can be eliminated, which then simplifies the definition of \atom{Polish} to that given in (d) and obviates the need for the auxiliary functions \atom{Length}, \atom{Sum}, \atom{Map}, etc. Nevertheless, the point of this is to show that it is not unreasonable to arrive at the program listed here, and it may not be entirely obvious that a simpler implementation with less garbage is possible.}
        \label{lst:supp-polish}
      \end{sublisting}
      \caption[]{(continued) ...and the full \alethe\ program being considered here.}
    \end{listing}%
}

The simplest and least efficient serialisation strategy is as follows: suppose you start with the input variables $\{x,y,z\}$. Enumerate all the sub-rules that can be evaluated given this current `knowledge state', e.g.\ ${x}~\infix{\atom{Foo}}~{y}~\cursivewS$ and ${z}~\atomLocal{}~{x}$ would qualify but $\cursivewS~\infix{\atom{Map}~\atom{Bar}}~\cursivevS$ would not because neither the variable $\cursivewS$ nor the variable $\cursivevS$ are currently available. Evaluate all of these sub-rules in parallel, making sure to duplicate variables as necessary to avoid unlearning any. Here, we would end up with the knowledge state $\{\cursivewS,x,y,z\}$. Now ignore these sub-rules going forward and repeat this process until all sub-rules have been used once (and only once), and all variables are known. If some sub-rule has not been used or a variable remains unknown, then there is a logic error and the programmer should be appropriately chastised. Now, perform this entire process again, but starting from the output variables. If all is well, then we will have obtained two routes: one from the input variables to all variables, and one from the output variables to all variables. As these routes are reversible, we are immediately rewarded with a route from the input to the output variables, using each sub-rule twice. Whilst clearly inefficient, this strategy demonstrates that each sub-rule need be used at most twice (once in each direction).

A better strategy is to construct a `transition graph' as follows: take as nodes the powerset of all variables, i.e.\ $\varnothing$, $\{x\}$, $\{y\}$, $\{x,y\}$, etc. Mark the input and output nodes specially for later. Draw an edge between two nodes if a sub-rule (with possible variable duplication/elision) can be used to map between the two knowledge states, and label the edge with this sub-rule and in which direction it is to be applied. With the transition graph thus constructed, all possible routes between the input and output nodes can be enumerated\footnote{In fact there are infinitely many routes. It is therefore appropriate to restrict the routes under consideration to a sane (and finite) subset; namely, we avoid revisiting any node, and we also exclude routes that make use of a sub-rule more than twice.}. To proceed, a cost heuristic is needed in order to rank routes by preference. For example, if the goal is simply to minimise the number of sub-rules used then Dijkstra's algorithm will suffice. Note that, in the reference interpreter, transition graphs are constructed more restrictively. Namely, variable duplication/elision is not implemented as this always\footnote{The serialisation algorithm in \alethe\ is still subject to an exponential worst case complexity. The size of the transition graph constructed depends on the inter-dependency of the variables: the more dependency, the closer to linear in the number of variables the graph size is. If there is minimal dependency, such as in the automatically generated implemention of \atom{Dup} for $\dataDef{{,}~{a}~{b}~{c}~{d}~{e}}$, then the worst case will be realised.}\ leads to an exponentially large transition graph and, in practice, it is rarely needed. Where it is needed, the programmer must instead explicitly use \atom{Dup} (and create a new variable). This provides dramatically better performance at the expense of a slight inconvenience.

{\def\en{$-$}%
We saw an example of this algorithm earlier in \Cref{fig:tg-frac}. Another example, deserving of further comment, is given in \Cref{lst:serial-polish}. There are two routes of apparent equal cost, \fwd1\en\fwd2\en\bwd2\en\bwd3 and \fwd1\en\fwd2\en\fwd3\en\bwd3, but this is misleading. If the tree has $n$ nodes across $\ell\sim\log n$ levels, a single (non-recursive) step of \atom{Polish} has time complexity $\alpha$, and a single (non-recursive) step of \atom{PolishReads} has time complexity $\beta$, then the first route can be shown to have time complexity $\bigOO{2^\ell n(\alpha+\beta)}$ whilst the second has complexity $\bigOO{n(\alpha+2\ell\beta)}$. That is, if the serialisation algorithm picks a route which repeats a recursive step then consequently there will be an exponential overhead. Moreover, this becomes less obvious in the cases of corecursion, or higher order functions where a function may be passed into itself (recursively or corecursively)---compare, for example, the Y combinator. As such it is not generally possible to algorithmically discriminate this scenario (although the extra knowledge afforded by the type inference algorithm developed in \Cref{app:typing} may help). In fact, the reference interpreter for \alethe\ does not even try to do so. Instead, it exposes a method by which the programmer can adjust the heuristic cost function used in serialisation: each edge in the transition graph has a weight, given by the number of full-stops following the sub-rule in the source; therefore, in this case rule \ruleName{2} should be annotated with a cost of 2 (via two full-stops) in order to penalise its repeated use. To inspect the route chosen, the \code{:p} directive may be issued to the interpreter.}

\para{Directional Evaluation}

Consider an intermediate term along a reversible computation path (\Cref{fig:state-space}). In isolation, one cannot know in which direction computation should proceed. Bias coupling helps with this, but \alethe\ does not make use of this. Moreover, it would be nice to avoid explicit bias-coupling where practical. As such, we need some concept of computational `momentum' to maintain a consistent direction. This requires that the programs are deterministic such that phase space is branchless, which fortunately we have guaranteed through the ambiguity checker. Recall that, treating the forward and reverse directions of a rule as distinct, every term matches at most two rules; if we maintain a consistent direction of computation, then one of these rules will be the converse of the rule that was employed to reach the current state. More concretely, let the terms of the computation be labelled as some contiguous subset of $\{\ket{n}:n\in\mathbb Z\}$. By determinism, there is a unique rule $r(n\mapsto n+1)$ enacting each transition $\ket{n}\mapsto\ket{n+1}$, and by reversibility the unique rule enacting the transition $\ket{n+1}\mapsto\ket{n}$ is $\bar r$, the converse of $r$. Therefore, given a term $\ket{n+1}$, the rule $\bar r(n\mapsto n+1)$ can \emph{only} take us to $\ket{n}$; it can never also be the rule taking us to $\ket{n+2}$. It is, however, possible that the rule taking us to $\ket{n+2}$ is the same as $r(n\mapsto n+1)$. To summarise, all we need do is keep track of which rule we just applied, $r$; then, when considering the new term, we find any rules it matches and exclude $\bar r$ from this set. If the set is empty, we have entered an error state and should report it to the user. Otherwise, our ambiguity check has ensured that it either contains a single computational rule, which we duly apply, or contains one or more halting rules, whence we would halt the computation and report the result. This evaluation logic applies just as well to sub-rules. The one edge case is when evaluating a new term; as only halting terms can be constructed, it will have at most one computational rule to choose from and thus there is no ambiguity.

\section{A Spoonful of Sugar: \alethe}
\label{sec:alethe}

For clarity and convenience, we have introduced a number of syntactic shorthands in \Cref{sec:ex1,sec:ex2}. We now summarise and extend this sugar to construct a programming language, \alethe, the definition of which is given in \Cref{lst:alethe-dfn}. Implementing the additional measures discussed in \Cref{sec:impl}, a reference interpreter for \alethe\ is also made available\footnote{\url{https://github.com/hannah-earley/alethe-repl}} together with a `standard library' and select examples\footnote{\url{https://github.com/hannah-earley/alethe-examples}}. We proceed with the definition of \alethe\ by desugaring each form in turn.

\begin{listing}\centering
\fbox{\begin{minipage}{0.9\textwidth}\vspace{\baselineskip}\begin{equation*}\begin{aligned}
    \ruleName{pattern term} && \tau &::= \alpha ~|~ \bnfPrim{var} ~|~ (\,\tau^*\,) ~|~ \sigma \\
    \ruleName{atom} && \alpha &::= \bnfPrim{atom} ~|~ \sim\alpha ~|~ \#{}^{\backprime\backprime}\,\bnfPrim{char}^\ast\,'' ~|~ \#\alpha \\
    \ruleName{value} && \sigma &::= \mathbb{N} ~|~ {}^{\backprime\backprime}\, \bnfPrim{char}^\ast \,'' ~|~ [\,\tau^\ast\,] ~|~ [\, \tau^+ \,.\, \tau \,] ~|~ \blank \\
    \ruleName{party} && \pi &::= \tau:\tau^\ast ~|~ \bnfPrim{var}':\tau^\ast \\
    \ruleName{parties} && \Pi &::= \pi ~|~ \Pi\,;\,\pi \\
    \ruleName{relation} && \rho &::= \tau^\ast=\tau^\ast ~|~ \tau^\ast~{}^\backprime\tau^\ast{}^\prime~\tau^\ast \\
    \ruleName{definition head} && \delta_h &::= \rho ~|~ \{\,\Pi\,\}=\{\,\Pi\,\} \\
    \ruleName{definition rule} && \delta_r &::= \delta_h \,; ~|~ \delta_h\!: \Delta^+ \\
    \ruleName{definition halt} && \delta_t &::=~ \haltSym\,\tau^\ast; ~|~ \haltSym\,\rho\,; \\
    \ruleName{definition} && \delta &::= \delta_r ~|~ \delta_t \\
    \ruleName{cost annotation} && \xi &::= .^\ast \\
    \ruleName{declaration} && \Delta &::= \delta ~|~ \pi\,.\xi ~|~ \rho\,.\xi ~|~ \haltSym\,\rho\,.\xi \\
    \ruleName{statement} && \Sigma &::= \delta ~|~ \dataDef{\tau^\ast}; ~|~ \kw{import}~ {}^{\backprime\backprime}\bnfPrim{module path}''; \\
    \ruleName{program} && \mathcal P &::= \Sigma^\ast
\end{aligned}\end{equation*}\vspace{\baselineskip}\end{minipage}}
  \caption{Definition of \alethe\ syntax.}
  \label{lst:alethe-dfn}
\end{listing}

{\def\ir#1#2{\needspace{2\baselineskip}\item[\ruleName{#1}] $#2$\\}\begin{itemize}
  \ir{patt.\ term}{\tau ::= \alpha ~|~ \bnfPrim{var} ~|~ (\,\tau^*\,) ~|~ \sigma}
    The definition of a pattern term only differs from its definition in \textAleph\ by the inclusion of sugared values, $\sigma$.
  \ir{atom}{\alpha ::= \bnfPrim{atom} ~|~ \sim\alpha ~|~ \#{}^{\backprime\backprime}\,\bnfPrim{char}^\ast\,'' ~|~ \#\alpha}
    Atoms are fundamentally the same as in \textAleph, but there are four ways of inputting an atom. If the name of an atom begins with an uppercase letter or a symbol, then it can be typed directly. In fact, any string of non-reserved and non-whitespace characters that does not begin with a lowercase letter (as defined by Unicode) qualifies as an atom; if it does begin with a lowercase letter, it qualifies as a variable. If one wishes to use an atom name that does not follow this rule, one can use the form $\#{}^{\backprime\backprime}\,\bnfPrim{char}^\ast\,''$, where $\bnfPrim{char}^\ast$ is any string (using \haskell-style character escapes if needed). Additionally you can prefix a symbol with $\#$ to suppress its interpretation as a relation (e.g.\ $\#+$). Finally, if an atom is prefixed with some number of tildes then it is locally scoped: that is, you can nest rule definitions and make these unavailable outside their scope, and you can refer to locally scoped atoms in outer scopes by using more tildes (there is no scope inheritance). You can even use an empty atom name if preceded by a tilde, which can be useful for introducing anonymous definitions. Example code using this sugar is given in \Cref{lst:ex-sort2}.
  \ir{value}{\sigma ::= \mathbb{N} ~|~ {}^{\backprime\backprime}\, \bnfPrim{char}^\ast \,'' ~|~ [\,\tau^\ast\,] ~|~ [\, \tau^+ \,.\, \tau \,] ~|~ \blank}
    The sugar here for natural numbers ($\mathbb N$) and lists is familiar from earlier, but there are two additional sugared value types: strings\footnote{Once again, using \haskell-style character escapes if needed}, which are lists of character atoms where the character atom for, e.g., the letter `x' is $'\text{x}$, and units, which can be inserted with $\blank$ instead of $\unit$.
  \ir{party}{\pi ::= \tau:\tau^\ast ~|~ \bnfPrim{var}':\tau^\ast}
    Parties are the same as in raw \textAleph. Note, however, that the reference interpreter of \alethe\ has no separate representation for opaque variables: if the context is a variable, then it is assumed to be opaque rather than the variable pattern. When concurrency and contextual evaluation is supported in a future version, this deficiency will need to be addressed.
  \ir{parties}{\Pi ::= \pi ~|~ \Pi\,;\,\pi}
    Bags of parties are delimited by semicolons.
  \ir{relation}{\rho ::= \tau^\ast=\tau^\ast ~|~ \tau^\ast~{}^\backprime\tau^\ast{}^\prime~\tau^\ast}
    Relations are shorthand for where there is a single, variable context. These are familiar from the previous examples.
  \ir{def.\ head}{\delta_h ::= \rho ~|~ \{\,\Pi\,\}=\{\,\Pi\,\}}
    A rule is either a relation, or a mapping between party-bags which are enclosed in braces.
  \ir{def.\ rule}{\delta_r ::= \delta_h \,; ~|~ \delta_h\!: \Delta^+}
    If a rule has no sub-rules, then it is followed by a semicolon, otherwise it is followed by a colon and then its sub-rules/any locally scoped definitions. These declarations must either be in the same line, or should be indented more than the rule head similar to \haskell's off-side rule or \python's block syntax.
  \ir{def.\ halt}{\delta_t ::=~ \haltSym\,\tau^\ast; ~|~ \haltSym\,\rho\,;}
    As well as the canonical halting pattern form, there is also sugar for a relation wherein both sides of the relation each beget a halting pattern, as does the infix term (if any).
  \ir{def.}{\delta ::= \delta_r ~|~ \delta_t}
    This is the same as for \textAleph.
  \ir{cost ann.}{\xi ::= .^\ast}
    Sub-rules can be followed by more than one full-stop, with the number of full-stops used quantifying the cost of the sub-rule for use in the serialisation algorithm heuristics as explained towards the end of \Cref{sec:impl}.
  \ir{decl.}{\Delta ::= \delta ~|~ \pi\,.\xi ~|~ \rho\,.\xi ~|~ \haltSym\,\rho\,.\xi}
    In addition to the party sub-rules present in \textAleph, there are three sugared forms. If a declaration is a definition then, after introducing and resolving fresh anonymous names for any locally scoped atoms, the behaviour is the same as if the definition was written in the global scope. If a declaration takes the form of a relation, then it is the same as if we introduced a fresh context variable and bound each side of the relation to it. If the relation is preceded by $\haltSym$, then it both declares a sub-rule and defines a pair of halting patterns, e.g.
    \begin{align*}
      \haltSym~\atomLocal{Go}~{p}~\cursivexS~{[]}~{[]} =
          \atomLocal{Go}~{p}~\cursivexS~{[{n}~{\cdot}~\cursivenS]}~{\cursiveyS'}{.}
    \end{align*}
    becomes
    \begin{align*}
      & \haltSym~\atomLocal{Go}~{p}~{\cursivexS}~{[]}~{[]}{;} \\
      & \haltSym~\atomLocal{Go}~{p}~{\cursivexS}~{[{n}~{\cdot}~{\cursivenS}]}~{\cursiveyS'}{;} \\
      & \phantom\haltSym~\atomLocal{Go}~{p}~\cursivexS~{[]}~{[]} =
              \atomLocal{Go}~{p}~\cursivexS~{[{n}~{\cdot}~\cursivenS]}~{\cursiveyS'}{.}
    \end{align*}
  \ir{statement}{\Sigma ::= \delta ~|~ \dataDef{\tau^\ast}; ~|~ \kw{import}~ {}^{\backprime\backprime}\bnfPrim{module path}'';}
    A statement is a definition, a data definition, or an import statement. A data definition $\dataDef{\S~{n}}{;}$ is simply sugar for
    \begin{align*}
      & \haltSym~{\S}~{n}{;} \\
      & \infix{\atom{Dup}~({\S}~{n})}~{({\S}~{n'})}{:} \\
      &\qquad \infix{\atom{Dup}~{n}}~{n'}{.}
    \end{align*}
    i.e.\ it marks the pattern as halting and automatically writes a definition of \atom{Dup}. Future versions may automatically write other definitions, such as comparison functions, or include a derivation syntax akin to \haskell's. The $\kw{import}$ statement imports all the definitions of the referenced file, as if they were one file, and supports mutual dependency. Future versions may support partial, qualified and renaming import variants.
  \ir{program}{\mathcal P ::= \Sigma^\ast}
    A program is a series of statements. Additionally, comments can be added in \haskell-style:
    \begin{align*}
      & \comment{\cmtSLopen~this is a comment.} \\
      & \haltSym~{\S}~\comment{\cmtMLopen~so is this...~\cmtMLclose}~{n}{;}~\comment{\cmtSLopen~...and this.}\\
      & \comment{\cmtMLopen~and this is}\\
      & \comment{\cmtMLcont~a multiline comment~\cmtMLclose}
    \end{align*}
\end{itemize}}

{\def\midtilde{\raisebox{-0.23em}{\textasciitilde}}
\noindent We also note that the interpreter for \alethe\ treats the atoms \Z, \S, \Cons, \Nil, \atom{Garbage}, and character atoms specially when printing to the terminal. Specifically, they are automatically recognised and printed according to the sugared representations defined above. That is, except for the \atom{Garbage} atom which, when in the first position of a term, hides its contents---rendering as \code{\{\midtilde GARBAGE\midtilde\}}. This is useful when working with reversible simulations of irreversible functions that generate copious amounts of garbage data. If one assigns the garbage to a variable, then the special \code{:g} directive can be used to inspect its contents. Other directives include \code{:q} to quit, \code{:l file1 ...} to load the specified files as the current program, \code{:r} to reload the current program, \code{:v} to list the currently assigned variables after the course of the interpreter session, and \code{:p} to show all the loaded rules of the current program (and the derived serialisation strategy for each, see later). Computations can be performed in one of two ways; the first, \code{| (+ 3) 4 \raisebox{-0.3em}{-}}, takes as input a halting term, attempts to evaluate it to completion, and returns the output if successful, i.e.\ \code{() 7 (+ 3)}. The second, \code{> 4 `+ 3` y}, takes a relation as input and attempts to get from one side to the other. For example, here we evaluate \code{(+ 3) 4 ()}, obtaining \code{() 7 (+ 3)} which we then unify against \code{() y (+ 3)}, finally resulting in the variable assignment $y\mapsto7$. To run in the opposite direction, use \code{<} instead. Note that, whilst the interpreter does understand concurrent rules, it is not yet able to evaluate them.}

\para{Unification Efficiency}

Beyond the implementation concerns expressed in \Cref{sec:impl}, another issue of practical importance is the efficiency of identifying patterns matching a given term: to date, the standard library alone contains just shy of two thousand patterns, and so searching the known patterns linearly is impractical. Our interpreter constructs a trie-like structure for this purpose such that the time complexity for unification scales as $\bigOO{m+\log n}$ where $m$ is the number of patterns which match and $n$ the total number of patterns. Though asymptotically this is the best complexity possible, there are certainly improvements that could be made to the unification algorithm. In particular, the data structures used are primarily binary trees; hash tables with fixed lookup time for reasonable values of $n$ would yield significant improvements in distinguishing atoms, and random access arrays would yield significant improvements for distinguishing between composite terms of differing length. Notwithstanding these possible improvements, the reference interpreter has proven sufficiently fast for demonstrating non-trivial computations in \alethe. For example, computing $8! = \num{40320}$ takes only a few seconds despite the unary representation of natural numbers.

\begingroup
\def\isot{\mathrel{\ensuremath{:\kern0.11ex:}}}
\section{Towards a Type System}
\label{app:typing}

Type systems are very useful, in that they allow a compiler to statically analyse a program before it is run and identify whether it is ever possible for a value to be passed to a function that does not understand such an input. This is extremely powerful, and leads to the \haskell\ maxim `if it compiles, it runs': that is, although there may be logical errors, the program should not crash.

We may be tempted to try to start with a simple polymorphic type system, such as Hindley-Milner, but unfortunately Hindley-Milner is unsuitable for \textAleph\ being that it does not have objects that can be uniquely assigned a computational role. Consider attempting to type the symbol $\square$ from \Cref{lst:ex-square-def}, for example, which implements squaring/square rooting for natural numbers. One is tempted to say something like $\square\isot\mathbb{N}~\unit\leftrightarrow\unit~\mathbb{N}\isot\square$, but this does not work. Inspecting the definition more closely, we see that $\square$ appears in a number of contexts: ${\square}~{n}~{\unit}$, ${\square}~{n}~{m}~{\square}$, and ${\unit}~{m}~{\square}$. It is the second context where our attempt to type $\square$ breaks down: not only does its apparent arity change, but it appears twice in the same expression! Similar problems are encountered when trying to type other declarative languages, such as \prolog. To proceed, we shall require that any candidate type system treats terms \emph{holistically}, not ascribing any special importance to any particular part of a term. Such a holistic type system will most likely also be declarative in nature.

A fruitful way of thinking of types of terms is by considering their endpoints---their halting states. A well-formed term will be able to eventually transition to one or more halting terms, and the properties of these halting terms are what we are most interested in. For example, we would like to know that if we construct a term from $+$, two natural numbers, and $\unit$, then we will be able to extract two natural numbers---rather than, say, a boolean and a string---once computation finishes. But, the `end' state is free to wander back to the `beginning', and so it is more appropriate to consider the type of the term as an unordered pair, consisting of the two possible halting states. Checking this type will then consist of enumerating which rules may be encountered along a computational path, verifying that all these rules are consistent with the initial type and that the other state is reachable in principle. If the ambiguity checker is turned off, then we see that it is possible to have a term with more than two halting states and so its type should be a set of types corresponding to each possible halting state. We then notice that conventional data---such as a natural number---is of the same ilk, save that it only has one halting state. We call these type-sets \emph{isotypes}, and their inhabitants \emph{isovalues}, as each isovalue has one or more isomorphically equivalent representations.

How might these isotypes look? We begin with conventional data; let us call the isotype of a natural number \atom{Nat}, with inhabitants $\Z$ and $\S n$ where $n$ inhabits \atom{Nat}. Let us now attempt to type addition. As a first approximation, we write $\isot{+}~\atom{Nat}~\atom{Nat}~{\unit} = {\unit}~\atom{Nat}~\atom{Nat}~{+}{;}$ where $\isot$ at the beginning marks this as an isotype. This is in the right spirit, but doesn't readily admit a consistent type theory. In particular, how would we represent the isotype of a continuation in the term $\atom{Walk}~{\beta}~{(\atom{Charlie}'~x)}$, as introduced towards the end of \Cref{sec:ex2}? Being more careful, we can say that an addition isovalue is the unordered pair consisting of ${+}~{a}~{b}~{\unit}$ and ${\unit}~{c}~{d}~{+}$ where $a$, $b$, $c$ and $d$ are all isovalues of isotype \atom{Nat}. We need this explication, because the isotypes of the arguments could be far more complicated. Where it is unambiguous, however, it will be reasonable to write this isotype as $\isot{+}~\atom{Nat}~\atom{Nat}~{\unit} = {\unit}~\atom{Nat}~\atom{Nat}~{+}{;}$. Notice that, as this isotype is holistic, it is completely orthogonal to the isotype $\isot{+}~\atom{Rat}~\atom{Rat}~{\unit} = {\unit}~\atom{Rat}~\atom{Rat}~{+}{;}$ where \atom{Rat} is the isotype of rationals. This reveals that we get overloading for free in this type system. It is nevertheless desirable to introduce a notion similar to \haskell's typeclasses, however, or else (partial) isomorphisms employing addition, but polymorphic in the isotype they are adding, will have much too generic an isotype (in particular, the type inference algorithm would not be able to conclude that an isotype $\isot{+}~{a}~{b}~{\unit} = {\unit}~{c}~{d}~{+}{;}$ does not exist, and so we would have four isotype variables rather than the 1 expected). Typeclasses permit the programmer to specify that an isotype of the form $\isot{+}~{a}~{b}~{\unit}={\unit}~{c}~{d}~{+}{;}$ is only legal if it is of the more restrictive form $\isot{+}~{a}~{a}~{\unit}={\unit}~{a}~{a}~{+}{;}$ for some isotype $a$. Furthermore, the introduction of typeclasses permits the abbreviation of complex (ad-hoc) polymorphic isotypes.

What, then, might a polymorphic isotype look like? A good example is given by the map function, which would have as isotype the unordered pair of $(\atom{Map}~{f})~{\cursivexS}~{\unit}$ and ${\unit}~{\cursiveyS}~(\atom{Map}~{f})$ where $\cursivexS$ is an inhabitant of the isotype $[a]$, i.e.\ the isotype of lists of elements of isotype $a$, and where $\cursiveyS$ is an inhabitant of $[b]$. $f$ is required to be some term which, when used to construct the halting states ${f}~{x}~{\unit}$ and ${\unit}~{y}~{f}$, yields the isotype $\isot{f}~{a}~{\unit} = {\unit}~{b}~{f}$. More concisely, we write this as
\begin{align*}
  \isot{}& {[a]}~\infix{\atom{Map}~{f}}~{[b]}{:} \\
  \isot{}&\qquad {a}~\infix{f}~{b}{.} \\
  & []~\infix{\atom{Map}~{f}}~[]{;} \\
  & [{x}~{\cdot}~\cursivexS]~\infix{\atom{Map}~{f}}~[{y}~{\cdot}~\cursiveyS]{:} \\
  &\qquad {x}~\infix{f}~{y}{.} \\
  &\qquad \cursivexS~\infix{\atom{Map}~{f}}~\cursiveyS{.}
\end{align*}
and we note the similarity of the isotype to the implementation of \atom{Map} at the isovalue level. The composite term $\atom{Map}~{f}$ deserves further attention; this is itself halting, and so can be considered to be an implicitly declared anonymous isotype with one representation.

As briefly mentioned earlier, a type checker (and a type inference algorithm, if it exists) will perform a reachability analysis---starting from one halting pattern, it will enumerate all the accessible rules, finding the most general type consistent with these. The information gleaned this way is very valuable for all sorts of static analyses. We can determine what other halting states are accessible, as well as whether there are any missing cases along the way that could lead to a halting state. We can also resolve the issue of sub-rule ambiguity raised earlier, helping strengthen the phase space restrictions and avoid unintentional generation of entropy. It may also be possible to use this information to refine the cost heuristic used for serialisation, improving runtimes. Furthermore, knowledge of the execution path gives opportunities for identifying code motifs and performing structural transformations, a necessary ingredient for performing the kind of code optimisations a compiler is wont to do.

There is, however, a potential hiccup in the reachability analysis that occurs when a continuation-passing-style is used. In this case, there are rules which are shared between different computations. For example, in $\atom{Walk}~{\beta}~{c}\leftrightarrow\atom{Walk'}~{\alpha}~{c}$, $c$ is a continuation. There will be a number of computations which each pass control to such a \atom{Walk} term, and later accept control back from such a \atom{Walk'} term. Consequently, no definitive isotype can be assigned to these intermediate terms. Moreover, consider these two potential computations:
\begin{align*}
  & \atom{Foo} = \atom{Walk}~{\varphi}~\atom{Foo'}{;} \\
  & \atom{Bar} = \atom{Walk}~{\beta}~\atom{Bar'}{;}
\end{align*}
It is conceivable that the reachability analyser would, starting from \atom{Foo}, (correctly) determine that the pattern $\atom{Walk}~{\beta}~{c}$ is reachable, and then (incorrectly) presume the term \atom{Bar} is reachable from \atom{Foo}. If so, the type checker would then (rightly) complain of a type error, perhaps that the isotypes of $\atom{Foo'}$ and $\atom{Bar'}$ do not unify. The remedy to the first issue, of indefinite isotyping of intermediate terms, is to only assign definitive isotypes to sets of halting terms; intermediate terms are assigned definite isotypes only in the course of inferring isotypes for particular halting sets, and these intermediate isotypes are appropriately scoped such that the parallel process of isotype inference may proceed in spite of conflicting intermediate term isotypes. The second issue is remedied by specialising types as early as possible: in order to pass the ambiguity checker, any computation making use of a shared rule as above must pass control to the shared rule in a way orthogonal to any other computations passing control. In the above example, this holds because \atom{Foo'} and \atom{Bar'} are orthogonal patterns such that no term can be constructed satisfying them both simultaneously. Therefore, by immediately specialising the type of the continuation $c$ in \atom{Walk}, there is no risk of reachability `leaking' into inaccessible rules. One may be concerned that perhaps the reachability graph is much more complicated, engendering other reachability leaks, but each such confluence point between distinct computational paths \emph{must} have this same explicit orthogonality property in order to satisfy the ambiguity checker, and so there is in fact no risk. However, this remedy may be somewhat overzealous; suppose that the example above is part of a larger program,
\begin{align*}
  & \atom{Qux}~\atom{False} = \atom{Foo}{;} \\
  & \atom{Qux}~\atom{True} = \atom{Bar}{;}
\end{align*}
then the eager specialisation of the continuation isotype will yet result in a type-checking error, similar to \haskell's (optional) monomorphism restriction. To circumvent such monomorphism, it would be advantageous to allow the type checker to introduce additional scopes for intermediate types where such branches occur, providing that these alternate paths eventually converge to a single type. If this is not possible, then the solution may simply be to expect the programmer to annotate a `partial isotype' of any shared polymorphic rules in order to give the type checker sufficient information.

We are optimistic that, in programs without shared rules, it is possible to perform automatic type inference without any type annotations as in the Hindley-Milner type system. As for programs with shared rules, we are less certain; nonetheless, the power and generality of type inference algorithms in type systems as sophisticated as \haskell's most recent iterations suggests that it is attainable. We also note that the above is just a sketch of a possible type system, and will require further refinement and formalisation. Finally, we have neglected as yet to consider concurrency. Concurrency severely undermines the power of the type system, as there is no longer any continuity between terms: at any point, a term could be consumed, produced, or merged with another. It may be possible to extend the notion of isotype to include all the terms that might interact with one another, but it is unclear how useful this would be. Alternatively, one could simply ascribe types to single halting patterns, instead of seeking isotypes and an enumeration of all possible isomorphic representations of an isovalue. Again, this severely weakens the utility of the type system. A desirable compromise would be to identify the non-concurrent parts of a program, and apply type checking and inference to just these parts; a more rudimentary level of type checking could then be applied to the concurrent parts to at least verify the consistency of any non-concurrent sub-rules employed.

\endgroup

\section[The \texorpdfstring{$\Sigma$}{Sigma}-Calculus]{The \texorpdfstring{\textbsfSigma}{Sigma}-Calculus}
\label{app:sigma}

An earlier prototype, $\Sigma$ was more functional in nature. Its terms were nested applications of `general permutations', which could be named for convenience (and for achieving recursion without a fixed point combinator). Its definition was
\begin{equation*}\begin{aligned}
  \ruleName{term} && \tau &::= \langle\pi \tau^\ast : \tau^\ast \pi\rangle ~|~ \bnfPrim{var} ~|~ \bnfPrim{ref} ~|~ \tau* \\
\end{aligned}\end{equation*}
where \bnfPrim{ref} is a reference to a named pattern. A generalised permutation is given by $\langle\pi \tau^\ast : \tau^\ast \pi\rangle$, where $\pi$ is a special variable that matches the permutation itself. That is, the term $(\langle\pi~x~y : y~x~\pi\rangle~1~2)$ would transition to $(2~1~\langle\pi~x~y : y~x~\pi\rangle)$. Notice that this is the convention taken in \textAleph, that the atom or term in the first position tends to take the `active' role, and tends to place itself at the end of the term after rule execution. In $\Sigma$, however, this convention is enshrined in the language itself because all the information concerning the transition rules is held within the permutation terms themselves, and so it is required that there be two special locations within a term corresponding to the current and previous permutation. The first position is used for the current permutation in analogy to the $\lambda$-calculus, and the last position is used for the previous permutation for symmetry. Permutations are `general' in the sense that they can alter tree structure, and can copy/elide variables.

The $\Sigma$-calculus can also be proven to be microscopically reversible and Reverse-Turing complete, but the absence of sub-rules renders composition more unwieldy. Additionally, discriminating different cases is tricky. As with the $\lambda$-calculus, where data can be represented with functions via Church encoding, we can represent data in $\Sigma$ with permutations and these permutations will perform case statements. Taking lists as an example, we have
\begin{align*}
  \atom{cons} &\equiv \langle \pi \{nc\}zf : c\{nf\}z\pi \rangle \\
  \atom{nil} &\equiv \langle \pi \{nc\}zf : n\{fc\}z\pi \rangle
\end{align*}
where $\{xyz\}$ is sugar for $(\bot xyz\top)$ used to represent an inert expression ($\top\equiv\bot\equiv\varepsilon\equiv\unit$, not being a permutation, is an inert object that does nothing and is used to represent halting states in $\Sigma$). Using these definitions, list reversal can be implemented thus,
\begin{align*}
  \atom{rev} &\equiv \langle\pi l \top : \atom{nil} \{\lambda\nu\}\{\{\varepsilon\varepsilon\}l\} \pi\rangle \\
  \lambda &\equiv \langle\pi \{\atom{rev}\,\nu\}\{\{r'r''\}\{\tilde ll'l''\}\} \tilde r : \tilde l \{\atom{rev}'\,\nu\}\{\{\tilde rr'r''\}\{l'l''\}\} \pi\rangle \\
  \nu &\equiv \langle\pi \{\atom{rev}'\,\lambda\}\{r\{l'l''\}\}\atom{cons} : \atom{cons}\{\atom{rev}\,\lambda\}\{\{l'r\}l''\} \pi\rangle \\
  \atom{rev}' &\equiv \langle\pi \{\lambda\nu\}\{r\{\varepsilon\varepsilon\}\}\atom{nil} : \bot r\pi\rangle
\end{align*}
which is certainly not the most edifying program in the world! It would be used as so:
\begin{align*}
  (\atom{rev}[1\,4\,6\,2]\top) &\overset\ast\longleftrightarrow (\bot[2\,6\,4\,1]\atom{rev}')
\end{align*}

\para{\texorpdfstring{\textbsfmu}{Mu}-Recursive Functions}

To give a deeper flavour of $\Sigma$, we implement the $\mu$-recursive functions (recall their definition and \alethe\ implementation from \Cref{lst:aleph-murec}). In order to simulate a notion of function composition, we first adopt a `$\Sigma$-function' motif: a function is represented by the triple $F=\{f~z~f'\}$ where $f$ and $f'$ are the initial and terminal permutations, and $z$ is any data we wish to bind to $F$. The permutations $f$ and $f'$ must be such that $(f~z~f'~x~\top)\overset\ast\longleftrightarrow(\bot~y~g~f~z~f')$, where $x$ is the input, $y$ the output and $g$ any garbage data.

The base $\mu$-recursive functions can then be implemented thus:
\begin{align*}
  Z&\equiv \{z\varepsilon z\} & z&\equiv\langle z\varepsilon z~n~\top : \bot~0~n~z\varepsilon z \rangle \\
  S&\equiv \{s\varepsilon s\} & s&\equiv\langle s\varepsilon s~n~\top : \bot~\{\atom{succ} n\}~\varepsilon~s\varepsilon s\rangle \\
  \Pi_i^k&\equiv \{\pi_i^k\varepsilon\pi_i^k\} & \pi_i^k&\equiv\langle\pi_i^k\varepsilon\pi_i^k~\{n_1,\dots,n_k\}~\top : \\
  &&&\qquad\bot~n_i~\{n_1,\dots,n_{i-1},n_{i+1},\dots,n_k\}~\pi_i^k\varepsilon\pi_i^k\rangle
\end{align*}
Note that we have written $\langle z\varepsilon z\cdots$ instead of $\langle\pi\varepsilon z\cdots$ or $\langle\pi\varepsilon\pi\cdots$ for clarity.

The composition operator for $F$ of arity $k$ over functions $G_i$ is given by $\{c_k\{F\vec G\}c_k'\}$, where
\begin{align*}
  c_k &\equiv \langle c_k\{F\vec G\}c_k'~\vec n~\top : c_k''~F~(G_1\wedge\vec n)~\cdots~(G_k\wedge\vec n)~c_k\rangle \\
  c_k'' &\equiv \langle c_k''~F~(m_1~h_1\vee G_1)~\cdots~(m_k~h_k\vee G_k)~c_k : c_k'~\vec h~\vec G~(F\wedge\vec m)~c_k'' \rangle \\
  c_k' &\equiv \langle c_k'~\vec h~\vec G~(y~h_f\vee F)~c_k'' : \bot~y~\{h_f\,\vec h\}~c_k\{F\vec G\}c_k' \rangle
\end{align*}
where we have introduced sugared forms $(F\wedge x)\equiv(fzf'~x~\top)$ and $(y~h\vee F)\equiv(\bot~y~h~fzf')$.

The primitive recursion operator is a little more complicated. To understand the implementation below, it is important to know that the number 0 is represented as $\{\atom{zero}~\varepsilon\}$ and the number $m+1$ is given by $\{\atom{succ}~m\}$. When the form is unknown, we can write a unified representation as $\{\tilde mm\}$ where $\tilde m$ is the constructor and $m$ is either $\varepsilon$ or the predecessor. Furthermore, the constructors are defined as follows:
\begin{align*}
  \atom{zero} &\equiv \langle\pi \{pq\} z r : p\{rq\} z \pi\rangle \\
  \atom{succ} &\equiv \langle\pi \{pq\} z r : q\{pr\} z \pi\rangle
\end{align*}
We write the primitive recursion of $F$ and $G$ (arities $k$ and $k+2$) as $R_{FG}=\{\rho_k\{FG\}\rho_k'\}$, defined by:
\begin{align*}
  \rho_k &= \langle\rho_k\{FG\}\rho_k'~\{\vec n\{\tilde mm\}\}~\top : \tilde m~\{\zeta_k\,\sigma_k\}~\{FG\vec nm\}~\rho_k \rangle \\
  \rho_k' &= \langle \rho_k'~\{\zeta_k'\,\sigma_k''\}~\{yhFG\}~\tilde m : \bot~y~\{\tilde mh\}~\rho_k\{FG\}\rho_k' \rangle %\\
  \\[0.4\baselineskip]
  \zeta_k &= \langle\zeta_k~\{\rho_k\,\sigma_k\}~\{FG\vec n\varepsilon\}~\atom{zero} : \zeta_k'~(F\wedge\vec n)~G~\zeta_k\rangle \\
  \zeta_k' &= \langle\zeta_k'~(y~h\vee F)~G~\zeta_k : \atom{zero}~\{\rho_k'\,\sigma_k''\}~\{yhFG\}~\zeta_k'\rangle
  \\[0.4\baselineskip]
  \sigma_k &= \langle\sigma_k~\{\zeta_k\,\rho_k\}~\{FG\vec nm\}~\atom{succ} : \sigma_k'~(R_{FG}\wedge\vec nm)~\vec nm~\sigma_k \rangle \\
  \sigma_k' &= \langle\sigma_k'~(y~h\vee R_{FG})~\vec nm~\sigma_k : \sigma_k''~(G\wedge\vec nmy)~F~h~\sigma_k'\rangle \\
  \sigma_k'' &= \langle\sigma_k''~(z~h'\vee G)~F~h~\sigma_k' : \atom{succ}~\{\zeta_k'\,\rho_k'\}~\{z\{hh'\}FG\}~\sigma_k''\rangle
\end{align*}
In the above, the $\rho$ permutation switches between the $\zeta$ and $\sigma$ permutations depending on if $\{\tilde mm\}=0$ or $>0$; the $\zeta$ route computes $y=F(\vec n)$, whilst the $\sigma$ route recurses down to compute $y=R_{FG}(\vec n,m)$ and then $z=G(\vec n,m,y)$. Both these routes then put their results into a common representation and merge the control flow via $\tilde m$.

Finally, we implement the minimalisation operator as $M_F^m=\{\mu_k\{Fm\}\mu_k'\}$; $M_F^m$ is a generalisation of the more canonical minimalisation operator (which is equivalent to $M_F^0$),
\[ M_F^m(\vec n) = \min\{ \ell\ge m : f(\vec n; \ell)=0 \} \]
Interestingly, this is much simpler to implement than primitive recursion---only requiring four permutations---yet is the necessary ingredient for universality:
\begin{align*}
  \mu_k &= \langle\mu_k\{Fm\}\mu_k'~\vec n~\top : \mu_k''~m~(F\wedge \vec nm)~(M_F^{\{\atom{succ}\,m\}}\wedge\vec n)~\mu_k\rangle \\
  \mu_k'' &= \langle\mu_k''~m~(\{\tilde yy\}h\vee F)~q~\mu_k : \tilde y~\{\mu_k'\mu_k'''\}~m~(\{\tilde yy\}h\vee F)~q~\mu_k''\rangle \\
  \mu_k''' &= \langle\mu_k'''~\{\mu_k'\mu_k'''\}~m~p~(m'h\vee M_F^{\{\atom{succ}\,m\}})~\atom{succ} : \\
  &\qquad \atom{succ}~\{\mu_k''\mu_k'\}~m'~p~(m'h\vee M_F^{\{\atom{succ}\,m\}})~\mu_k'''\rangle \\
  \mu_k' &= \langle\mu_k'~\{\mu_k''\mu_k'''\}~m'~(F\wedge\vec nm)~(M_F^{\{\atom{succ}\,m\}}\wedge\vec n)~\tilde\mu : \bot~m'~\{\tilde\mu\vec n\}~\mu_k\{Fm\}\mu_k'\rangle
\end{align*}
In the above, we compute $F(\vec n;m)$ and check if it's 0, if so we return $m$ as $m'$, otherwise we compute $M_F^{m+1}(\vec n)$ in a recursive manner and return its result as $m'$. Clearly, if there is no $m$ such that $F(\vec n;m)=0$ then $M_F^m(\vec n)=\bot~\forall m$. For concision, we compute both functions simultaneously, making use of $\Sigma$'s parallelism to approximate laziness. We then conveniently reverse both computations in a Bennett-like manner to produce only a single bit of garbage (the history data we return is $\tilde\mu$, representing whether or not $m=m'$, and the input vector $\vec n$).

\para{The Interpreter} Whilst the implementation of the $\mu$-recursive functions demonstrates that realising meaningful programs in $\Sigma$ is certainly possible, it is inelegant and clunky. Rather than programming in $\Sigma$ feeling like the $\lambda$-calculus, as was intended, it feels much more like programming in assembly. Nevertheless, we release the source code of the interpreter\footnote{\url{https://github.com/hannah-earley/sigma-repl}}, example code\footnote{\url{https://github.com/hannah-earley/sigma-examples}}, and syntax highlighters\footnote{\url{https://github.com/hannah-earley/sigma-syntax}}\ for completeness.

\section{Conclusion}

The examples furnishing this chapter demonstrate the utility of the \textAleph-calculus, as well as its appropriateness as a candidate model targeting reversible molecular and Brownian computational architectures. Though a molecular implementation is as yet unrealised, an interpreter for the programming language \alethe\ serves as a useful testbed for \textAleph. Going forward, we hope to extend the interpreter to support the full concurrent language. Additionally, development of a type system appropriate for a language that is both reversible and declarative would be helpful for improving code analysis and reducing programmer errors, and work towards this is underway. Lastly, it is hoped that \textAleph\ can be realised experimentally in a molecular context.

\para{Towards a Molecular Implementation}

\Cref{fig:ex-plus} and \Cref{lst:ex-square-def-mol} are suggestive of a high level molecular implementation of the \textAleph-calculus.
Nevertheless, much more work is needed to develop a viable molecular implementation.
Translation from \textAleph\ terms and programs to the abstract molecular schemes is fairly routine: one should first eliminate sub-terms by introducing intermediate terms, an example of which can be seen in the orange plus `atoms' in \Cref{fig:ex-plus}.
One should also be careful to conserve mass, as discussed in \Cref{sec:ex2}.
The resulting reaction schemes will in general still be difficult to embed in a realistic chemical system.
The reason for this is that arbitrary structural rearrangements may occur in a single \textAleph\ reaction, but realistic chemical reactions tend to operate in small, local steps.
Therefore, additional compilation steps besides mass conservation and sub-term promotion will be needed.
A further concern is that the arbitrary nesting of terms in \textAleph\ means they may not be easily embeddable in flat three-dimensional space.
Methods to address this issue may consist of extensible `tethers', or terms could be split in two and linked `virtually' via a unique address (akin to pointers in conventional computational architectures).

Before this, however, it is necessary to identify the compilation target.
This will require construction of a lower level reversible molecular computational model that is Turing Complete, dynamic, and term-based.
As discussed in \Cref{chap:intro}, the current predominant methods for molecular computation are unsuitable.
To our knowledge, there are not any existing molecular architectures that meet the combined requirements of \textAleph: reversibility, ability to couple to a common free energy currency, arbitrary structural assemblies, and dynamic reassembly of said structures.
Whilst the absence of an existing architecture satisfying these properties might suggest that none exist, we believe that it is possible to construct one.
Indeed, biological enzymes routinely perform all of these tasks.
The challenge is in designing a simple programmable system to perform particular transitions.
Ideally such a system will be constructed from nucleic acids, as they are much easier to work with and design than polypetides are.

\endgroup

\begingroup
\chapter{Conclusion}
\label{chap:conc}
% \thispagestyle{empty}
% \enlargethispage{2\baselineskip}

In this thesis we explore the intersection between reversible computation and molecular computation. The benefits of reversible computation are primarily from the perspective of energy efficiency, and it has been known for some time (see, e.g., \textcite{frank-thesis}) that the gains in energy efficiency are asymptotic. That is, reversible computers can exhibit a rate of computation that scales much better with system size than an equivalent irreversible computer. In \Cref{chap:revi} we refined these results, proving that it is not possible to do better than adiabatic scaling in any system, classical or quantum. This adiabatic scaling corresponds to a computation rate $\sim\sqrt{PM}$ where $P$ is the rate of supply of free energy and $M$ the computational mass-energy. In contrast, the scaling for irreversible computers is $\sim P$. Asymptotically, $P$ can scale no better than the convex bounding surface area of the system, $A$, and $M$ no better than the volume of the system, $V$, leading to respective scaling laws $\sqrt{AV}$ and $A$. Strictly speaking these apply to the mesoscopic regime; for the microscopic and cosmic scales we further refined these results.

This performance metric of net computation rate ignores the importance of interaction between computer subunits. In \Cref{chap:revii,chap:reviii} we therefore extended this analysis to consider, respectively, computer-computer communication/synchronisation interactions and interactions between computers and shared resources.
These turned out to be expensive in the case of Brownian architectures, and we conjectured about the applicability of these results to any architecture in which individual computational units evolve independently.
If the same aggregate rate of interactions as the rate of computation were to be maintained, then we showed that the net computation rate would have to fall to that of irreversible computation, $A$.
From the point of view of molecular or Brownian computers, which we showed in \Cref{sec:crn} were easily able to saturate the adiabatic scaling bounds, this was because the timescale to perform an interaction was on the order of $\sim1/b^2$ where $b$ quantifies the \emph{computational bias}, a measure of the free energy density of the system.
As $b\ll1$ is vanishing for large systems, this is untenable: ideally, all computational transitions in a Brownian computer should operate at a characteristic timescale $\lesssim1/b$.
Whilst concerning from an engineering perspective, we presented various mitigation steps in their respective conclusions.
Primarily these centred on reducing the instantaneous proportion of computational subunits permitted to participate in such interactions.
Moreover, for current computational scales up to the order of metres it is likely that reasonable values of $b\gtrsim0.1$ could be achieved and thus no mitigation may be necessary at present.

Finally we proposed a model of computation that might be appropriate for programming such a reversible Brownian computer. The motivation for this was the limited attention which this intersectional computation paradigm has received, and thus this model fills a vital gap. It is also novel in being the first (to our knowledge) declarative model of reversible programming. Whilst the \textAleph-calculus is fully functional as a programming language, more work needs to be done to develop it into a viable specification for a reversible molecular computer. The examples in \Cref{sec:ex1,app:seq-klona} serve as an initial blueprint for this work, and we hope to further develop this over the coming years.

\endgroup

\backmatter
\printbibliography[title=References]

\end{document}